\newdimen\Radius 
\newenvironment{compactenum}
  { \begin{enumerate}[nolistsep,noitemsep,label=(\roman*)] }
  { \end{enumerate} }
\newtheorem{theorem}{Theorem}[section]
\newtheorem{proposition}[theorem]{Proposition}
\newtheorem{lemma}[theorem]{Lemma}
\newtheorem{corollary}[theorem]{Corollary}
\newtheorem{definition}[theorem]{Definition}
\theoremstyle{definition}
\newtheorem{notation}[theorem]{Notation}
\newtheorem{remark}[theorem]{Remark}
\newtheorem{example}[theorem]{Example}
\theoremstyle{remark}
\newcommand{\G}{\mathcal{G}}
\newcommand{\He}{\mathcal{H}}
\newcommand{\g}{\mathfrak{g}}
\newcommand{\h}{\mathfrak{h}}
\newcommand{\N}{\mathds N}
\newcommand{\F}{\mathds C}
\newcommand{\R}{\mathds R}
\newcommand{\C}{\mathds{C}}
\newcommand{\Z}{\mathds Z}
\newcommand{\FR}{F\!R}
\newcommand{\0}{\circ}
\newcommand{\ox}{\otimes}
\newcommand{\x}{\times}
\newcommand{\id}{\textnormal{id}}
\DeclareMathOperator{\Forall}{\;\forall\;}
\DeclareMathOperator{\End}{End}
\DeclareMathOperator{\Hom}{Hom}
\DeclareMathOperator{\Hol}{Hol}
\DeclareMathOperator{\Ad}{Ad}
\DeclareMathOperator{\ad}{ad}
\DeclareMathOperator{\Lie}{\mathbf{L}}
\begin{document}
\interfootnotelinepenalty=10000
\allowdisplaybreaks

\title{\texorpdfstring{\vspace{-1.5cm}}{}Poisson-geometric analogues of Kitaev models}
\date{February 24, 2020}
\author{Alexander Spies\\~\\ Department Mathematik \\
Friedrich-Alexander-Universität Erlangen-Nürnberg \\
Cauerstraße 11, 91058 Erlangen, Germany \\
}

\maketitle

\begin{abstract}
  We define Poisson-geometric analogues of Kitaev's lattice models.
  They are obtained from a Kitaev model on an embedded graph $\Gamma$ by replacing its Hopf algebraic data with Poisson data for a Poisson-Lie group $G$.

  Each edge is assigned a copy of the Heisenberg double $\He(G)$.
  Each vertex (face) of $\Gamma$ defines a Poisson action of $G$ (of  $G^*$) on the product of these Heisenberg doubles.
  The actions for a vertex and adjacent face form a Poisson action of the double Poisson-Lie group $D(G)$.
  We define Poisson counterparts of vertex and face \emph{operators} and relate them via the Poisson bracket to the vector fields generating the actions of $D(G)$.

  We construct an isomorphism of Poisson $D(G)$-spaces between this Poisson-geometrical Kitaev model and  Fock and Rosly's Poisson structure for the graph $\Gamma$ and the Poisson-Lie group $D(G)$.
  This decouples the latter and represents it as a product of Heisenberg doubles.
  It also relates the Poisson-geometrical Kitaev model to the symplectic structure on the moduli space of flat $D(G)$-bundles on an oriented surface with boundary constructed from $\Gamma$.
\end{abstract}

\textbf{Acknowledgements.}
I am very grateful to my advisor Catherine Meusburger for suggesting this topic and questions to investigate, for our numerous discussions, and for her comments that greatly improved this manuscript.
  Thomas Voß deserves my thanks for our insightful discussions about the similarities and differences between quantum and Poisson-geometrical Kitaev models.
 I also thank Studienstiftung des deutschen Volkes e.\ V.\ for granting me a scholarship, which made this article possible.

This research was supported in part by Perimeter Institute for Theoretical Physics. Research at Perimeter
Institute is supported by the Government of Canada through the Department of Innovation, Science and
Economic Development and by the Province of Ontario through the Ministry of Research and Innovation.

\section{Introduction}
  \textbf{Motivations.} Kitaev's lattice models were initially introduced to model a quantum memory on a surface that is governed by its topological properties and allows for intrinsically fault-tolerant quantum computation \cite{kitaev03}.
  Besides their role in topological quantum computing, Kitaev models have also gained popularity in other areas of current research as many interesting links have been discovered.
  They were related, for instance, to Levin and Wen's string-net models \cite{buerschaperaguado09}, which were introduced in the context of condensed matter physics \cite{levinwen05}.
  They can also be viewed as Hamiltonian analogues of 3d topological quantum field theories \cite{balsamkirillov12} of Turaev-Viro type \cite{turaev1992state, barrett1996invariants}.

  Recently, Meusburger related Kitaev models to the combinatorial quantization of Chern-Simons theories \cite{meusburger16}, that was obtained by Alekseev, Grosse and Schomerus \cite{alekseevgrosseschomerus94-1, alekseevgrosseschomerus94-2, alekseevschomerus96} and Buffenoir and Roche \cite{buffenoirroche95, buffenoirroche96}, and axiomatized as a Hopf algebra gauge theory by Meusburger and Wise \cite{meusburgerwise15}.
  The combinatorial models have been derived by quantizing a Poisson structure introduced by Fock and Rosly \cite{fockrosly98}.
  It describes the canonical symplectic structure on the moduli space of flat $G$-bundles, which is the (reduced) phase space of Chern-Simons theory.
  This indicates that there should be a Poisson-geometrical counterpart to Kitaev models and that it should be related to Fock and Rosly's Poisson structure in a similar way as Kitaev models are related to combinatorial quantization.
However, Kitaev models were defined ad hoc and not obtained by quantizing a Poisson manifold.

In this article we construct a Poisson analogue of Kitaev models and relate it to Fock and Rosly's Poisson structure and to moduli spaces of flat $G$-bundles.
This offers a new, Poisson-geometrical perspective on Kitaev models and provides an additional relation to Chern-Simons theory already on the classical level.
It also allows one to apply insights gained from Kitaev models to moduli spaces of flat $G$-bundles.
For instance, it follows from the contents of this article that the moduli space for a surface with boundary can be viewed as a Poisson counterpart to a Kitaev model with excitations, or quasi-particles, located at the boundary components.
Our results also allow for a decoupling of the symplectic structure on the moduli space by describing it in terms of a product Poisson manifold associated to a Poisson-geometrical Kitaev model.\footnote{
  These points are discussed in Remark \ref{remark:to_theorem_kitaev_moduli_space}.
}

\textbf{Kitaev models.}
Originally, Kitaev's lattice models were based on an embedded graph whose edges were decorated with elements of the group algebra of $\Z_2$ \cite{kitaev03}.
This was generalized first by Bombin and Martin-Delgado \cite{bombin2008family} to group algebras of finite groups, and  finally to finite-dimensional semi-simple Hopf $*$-algebras by Buerschaper et al. \cite{buerschaper2013hierarchy}.

A Kitaev model on an oriented surface $S$ is defined via an embedded graph $\Gamma$ with edge set $E$.
A copy of a finite-dimensional semi-simple Hopf $*$-algebra $H$ over $\C$ is assigned to each edge to obtain the \emph{extended space} $H^{\ox E}$.
It is known that the endomorphism algebra $\End_\C(H)$ is isomorphic to the Heisenberg double $\He(H)$ \cite{montgomery93}, so that the algebra of endomorphisms $\End_\C(H^{\ox E})$ of the extended space is isomorphic to the tensor product $\He(H)^{\ox E}$ of Heisenberg doubles \cite{meusburger16}.

To the vertices and faces of $\Gamma$ one associates vertex and face \emph{operators}.
The operators for a vertex and adjacent face can be combined to obtain a representation of the Drinfeld double $D(H)$ on $H^{\ox E}$ \cite{kitaev03, buerschaper2013hierarchy} and a $D(H)$-module algebra structure on the endomorphism algebra $\End_\C(H^{\ox E})$ \cite{meusburger16}.
The ground state of the Kitaev model, the \emph{protected space}, is the subspace of $H^{\ox E}$ on which all vertex and face operators act trivially.
Its endomorphism algebra is a subalgebra of the invariants of $\End_\C(H^{\ox E})$ with respect to the $D(H)$-module algebra structures \cite{meusburger16}.
  The protected space depends only on the homeomorphism class of the surface $S$ \cite{kitaev03, buerschaper2013hierarchy}.
Moreover, Balsam and Kirillov have shown in \cite{balsamkirillov12} that it is isomorphic to the vector space that topological quantum field theory of Turaev-Viro type \cite{turaev1992state,barrett1996invariants} (for the category $H$-Mod) assigns to $S$.
Meusburger showed that the topological invariant associated to a Kitaev model for the Hopf algebra $H$, the endomorphism algebra of its protected space, is isomorphic to the quantum moduli algebra obtained from combinatorial quantization of Chern-Simons theories for the Drinfeld double $D(H)$.

\textbf{Poisson analogues of Kitaev models.}
Poisson-Lie groups can be regarded as Poisson analogues of Hopf algebras and many structures and constructions for Hopf algebras have Poisson-Lie group counterparts.
Like Hopf algebras, each Poisson-Lie group $G$ possesses a dual Poisson-Lie group $G^*$ and there are Poisson-geometrical notions of Heisenberg and Drinfeld doubles.
Additionally, a counterpart to a module algebra over a Hopf algebra is given by the Poisson algebra $C^\infty(M, \R)$ of functions on a Poisson $G$-space $M$, that is, a Poisson manifold $M$ together with a Poisson action of the Poisson-Lie group $G$.
(See Table \ref{tab:dictionary}.)

\begin{table}
  \centering
  \begin{tabular}{|p{7.5cm}|p{7.5cm}|}
    \hline
    Hopf algebra $H$ & Poisson-Lie group $G$ \\
    \hline
    dual Hopf algebra  $H^*$ & dual Poisson-Lie group $G^*$\\
    \hline
    Drinfeld double $D(H)$ & classical double $D(G)$\\
    \hline
    Heisenberg double $\mathcal H(H)$ & Heisenberg double $\mathcal H(G)$  \\
    \hline
    module algebra over a Hopf algebra & Poisson algebra $C^\infty(M, \R)$ of functions on a Poisson $G$-space $M$ \\
    \hline
  \end{tabular}
  \caption{Hopf algebra structures and their Poisson-Lie group counterparts}
\label{tab:dictionary}
\end{table}

This correspondence suggests that an analogue of Kitaev models can be formulated with data from a Poisson-Lie group.
We define such an analogue, which we call a Poisson-Kitaev model, by replacing the Hopf-algebraic data of a Kitaev model by its Poisson-Lie counterparts.
For this we consider an embedded graph $\Gamma$ with edge set $E$ and assign a copy of the Heisenberg double $\He(G)$ of a Poisson-Lie group $G$ to every edge to obtain the product Poisson manifold $\He(G)^{\x E}$.
The Poisson algebra of functions $C^\infty(\He(G)^{\x E}, \R)$ is the counterpart to the endomorphism algebra of the extended space $\End_\C(H^{\ox E}) \cong \He(H)^{\ox E}$.

\textbf{Holonomies.} In \cite{meusburger16}, Meusburger defined a holonomy functor on the graph groupoid $\G(\Gamma_D)$ of the \emph{thickening} $\Gamma_D$ of $\Gamma$, which one obtains by replacing every edge of $\Gamma$ by a rectangle and each vertex by a polygon.
These holonomies associate to a path in $\G(\Gamma_D)$ endomorphisms of the extended space $H^{\ox E}$.
The endomorphism algebra of $H^{\ox E}$ is isomorphic to $\He(H)^{\ox E}$ and the Heisenberg double $\He(H)$ is isomorphic to $H \ox H^*$ as a vector space.
Vertex operators are obtained from paths in $\G(\Gamma_D)$ around vertices and are associated with elements of $H$, whereas face operators are obtained from paths around faces and are associated with elements of $H^*$ \cite{kitaev03, bombin2008family, buerschaper2013hierarchy, meusburger16}.

We define an analogous holonomy functor that assigns to paths in $\G(\Gamma_D)$ functions on $\He(G)^{\x E}$.
Poisson counterparts to vertex and face operators are obtained from paths around vertices and faces, respectively.
The Heisenberg double $\He(G)$ is locally diffeomorphic to $G \x G^*$.
In analogy to quantum Kitaev models, the holonomy around a face depends only on the $G$-components of the copies of $\He(G)$ associated to adjacent edges, and vertex holonomies depend only on the $G^*$-components.
This holonomy functor also allows us to define local flatness for Poisson-Kitaev models: an element of $\He(G)^{\x E}$ is flat at a vertex or face if the associated holonomy is trivial.

\textbf{Vertex and face actions.}
We define Poisson actions of $G$ and $G^*$ on $\He(G)^{\x E}$ associated to the vertices and faces of $\Gamma$, respectively.
We show that the actions for a pair of a vertex $v$ and adjacent face $f$ can be combined to obtain a Poisson action of the classical double $D(G)$ that corresponds to the module algebra structure of $\He(H)^{\ox E}$ over the Drinfeld double $D(H)$.
We relate the derivation $\left\{ h, - \right\}$ obtained from the Poisson bracket on $\He(G)^{\x E}$ and a vertex or face operator $h$ to the vector fields that generate the respective vertex or face action.
This is in analogy to the Hopf-algebraic picture, where the $D(H)$-module algebra structure is obtained from vertex and face operators \cite{meusburger16}.

\textbf{Relation to Fock and Rosly's Poisson structure.}
Fock and Rosly's Poisson-structure \cite{fockrosly98} is obtained by decorating the edges of an embedded graph $\Gamma$ with copies of a quasi-triangular Poisson-Lie group $D$.
The Poisson bracket on the manifold $D^{\x E}$ is obtained from the $r$-matrix of $D$, which is used to describe the interaction of edges incident at the same vertex.

We prove that the Poisson manifold $\He(G)^{\x E}$ of a Poisson-Kitaev model for a Poisson-Lie group $G$ is Poisson-isomorphic to Fock and Rosly's Poisson manifold $D(G)^{\x E}$ for the classical double $D(G)$.
The Poisson manifold $D(G)^{\x E}$ is equipped with a Poisson action of $D(G)$ for every vertex $v$ of $\Gamma$.
We show that the Poisson isomorphism intertwines this Poisson action with the one associated by the Poisson-Kitaev model to $v$ and an adjacent face.
As a manifold, $D(G)^{\x E}$ is diffeomorphic to $\He(G)^{\x E}$ but $D(G)^{\x E}$ is equipped with a Poisson bracket that involves non-trivial contributions between different copies of $D(G)$ for edges at a given vertex.
As the Poisson manifold associated with the Poisson-Kitaev model is the product Poisson manifold $\He(G)^{\x E}$, this Poisson isomorphism can be interpreted as decoupling the contributions associated with different edges in the Poisson bracket of $D(G)^{\x E}$.

\clearpage
\textbf{Relation to moduli spaces of flat $D(G)$-bundles.}
The natural symplectic structure on the moduli space of flat $D(G)$-bundles on a  compact oriented surface with boundary can be obtained from Fock and Rosly's Poisson manifold $D(G)^{\x E}$ for an embedded graph by taking the quotient with respect to  the $D(G)$-action for every vertex.
We use this fact to relate the Poisson algebra of functions on the moduli space with Poisson-Kitaev models.
We consider a  compact oriented  surface $S$ with at least one boundary component and an embedded graph $\Gamma$ such that $S$ is obtained by gluing an annulus or disk to each face of $\Gamma$.
To $S$  we associate  a submanifold $\He(G)^{\x E}_{flat}$ of the Poisson manifold $\He(G)^{\x E}$ of the Poisson-Kitaev model for $\Gamma$.
 It consists of elements that  have trivial holonomies around vertices and faces except for each face that corresponds to an annulus and for an adjacent vertex of every such face.
From the set of functions that are invariant under vertex and face actions on $\He(G)^{\x E}_{flat}$ we obtain a Poisson algebra that is isomorphic to the Poisson algebra of functions on the moduli space $\Hom(\pi_1(S), D(G))/D(G)$.

This isomorphism of Poisson algebras can be understood as a decoupling transformation.
It represents the symplectic structure of the moduli space on the direct Poisson product $\He(G)^{\x E}$.
It can be viewed as a generalization of a similar transformation constructed by Alekseev and Malkin \cite{alekseevmalkin95}, which represents $\Hom(\pi_1(S), G)/G$ on the product Poisson manifold $(G^*)^n \x \He(G)^g$, where $n$ is the number of boundary components and $g$ the genus of $S$.
The construction in \cite{alekseevmalkin95} is based on a set of generators of the fundamental group $\pi_1(S)$, whereas our construction works for more general graphs.

  \textbf{Further work.}
  Some constructions associated with Kitaev models are still lacking a Poisson counterpart, or a counterpart exists but deserves further investigation.
  For one thing, we only consider Poisson analogues of Kitaev models with excitations and do not construct an analogue of the ground state.
  More specifically, we relate the Poisson algebra of functions on the moduli space $\Hom(\pi_1(S), D(G))/D(G)$ for a surface $S$ with non-empty boundary to the endomorphism algebra of a state with excitations.\footnote{
    See Remarks \ref{remark:excitations} and \ref{remark:to_theorem_kitaev_moduli_space} (ii).
  }
  The excitations correspond to the boundary components of $S$.
  Therefore, it is plausible that the endomorphism algebra of the ground state of a Kitaev model (that is, the state without excitations) corresponds to the Poisson algebra of functions on the moduli space $\Hom(\pi_1(\tilde S), D(G))/D(G)$ for the surface $\tilde S$ obtained by gluing a disk to every boundary component of $S$.
  However, unlike the moduli space $\Hom(\pi_1(S), D(G))/D(G)$ for a surface with boundary, a precise relationship between $\Hom(\pi_1(\tilde S), D(G))/D(G)$ and Poisson-Kitaev models is yet to be established.

  Also, while we construct a holonomy functor analogous to the one introduced by Meusburger \cite{meusburger16}, which generalizes Kitaev's ribbon operators \cite{kitaev03, bombin2008family}, we do not investigate Poisson analogues of ribbon operators in detail.
  Ribbon operators are endomorphisms $H^{\ox E} \to H^{\ox E}$ of the extended space of a Kitaev model and implement quantum computations by creating excitations, moving them on the surface associated to the Kitaev model, and destroying them.
  As states with excitations correspond to moduli spaces $\Hom(\pi_1(S), D(G))/D(G)$ for surfaces $S$ with a boundary component for every excitation, Poisson analogues of ribbon operators might relate moduli spaces for surfaces with different numbers of boundary components.

  Moreover, we do not define a Poisson analogue of the Hamiltonian of a Kitaev model.
  The latter is obtained from vertex and face operators and the Haar integrals of the Hopf algebras $H$ and $H^*$ \cite{kitaev03, bombin2008family, buerschaper2013hierarchy}.
  As there are Poisson counterparts of vertex and face operators, one could attempt to obtain a Hamiltonian from these together with the Haar measures on the Poisson-Lie groups $G$ and $G^*$ associated with a Poisson-Kitaev model (which correspond to the Hopf algebras $H$ and $H^*$), at least for compact $G$ and $G^*$.

\textbf{Structure of the article.}
In Sections \ref{subsection:embedded_graphs} through \ref{subsection:double_poisson_lie_groups} we introduce the relevant background for the mathematical structures used in this article: graphs on surfaces and Poisson structures associated with Poisson-Lie groups and Poisson actions.
Fock and Rosly's description of the canonical symplectic structure on moduli spaces of flat $G$-bundles is summarized in Section \ref{subsection:fock_and_roslys_poisson_structure}.
We review Kitaev's lattice models in Section \ref{subsection:kitaev_models}.

In Section \ref{subsection:kitaev_models_with_poisson_geometric_data} we define Poisson-Kitaev models.
We introduce notions of holonomies and flatness, define analogues of vertex and face operators, and introduce vertex and face actions.
Section \ref{subsection:graph_trafos_kitaev} is dedicated to graph transformations.
They allow us to relate Poisson-Kitaev models on different lattices.
Poisson actions associated with vertices and faces are investigated in Section \ref{subsection:gauge_invariance_flatness} together with the Poisson subalgebra of invariant functions.
We prove that a pair of actions for a vertex and adjacent face can be combined into a Poisson $D(G)$-action.
In Section \ref{subsection:vertex_face_operators} we discuss the Poisson analogues of vertex and face operators and show that they are related to the Poisson-Lie group actions associated with the respective vertices and faces.

In Sections \ref{subsection:poisson_kitaev_models_fock_rosly_spaces} and \ref{subsection:relation_with_moduli_spaces} we prove our main results.
The isomorphism between Poisson-Kitaev models and Fock and Rosly's Poisson structure is shown in Section \ref{subsection:poisson_kitaev_models_fock_rosly_spaces}.
The relation between Poisson-Kitaev models and moduli spaces of flat $D(G)$-bundles is proven in Section \ref{subsection:relation_with_moduli_spaces}.

\section{Background}

\subsection{Embedded graphs}
\label{subsection:embedded_graphs}

In this section, we introduce the notion of ribbon graphs (also called fat graphs, embedded graphs or maps).
For more background on ribbon graphs see, for instance, \cite{ellismonaghanmoffatt13} or \cite{landozvonkin04}.
First we introduce some general terms for directed graphs.

\begin{definition} Let $\Gamma$ be a directed graph, $V$ its set of vertices and $E$ its set of edges.
  \label{def:graph_basics}
  \begin{enumerate}[noitemsep,nolistsep,label=(\roman*)]
    \item
      For an edge $e \in E$ we write $s(e), t(e) \in V$ for the source and target vertices of $e$, respectively.
      We set $s(e^{\mp}) = t(e^{\pm})$ for the edge $e^{-1}$ with reversed orientation.
      We say that $e$ is \textbf{incoming at} (\textbf{outgoing from}) $v$ if $t(e) =v$ and $s(e) \neq v$ ($t(e) \neq v$ and $s(e)=v$).
    \item
      We denote by $\G(\Gamma)$ the \textbf{graph groupoid} associated to $\Gamma$.
      This is the free groupoid generated by the edges $e \in E$ which we interpret as morphisms $e : s(e) \to t(e)$.
      A morphism $p : v_1 \to v_2$ is called a \textbf{path} from the vertex $v_1 \in V$ to $v_2 \in V$.
    \item
      The \textbf{graph subdivision} of $\Gamma$ is the directed graph $\Gamma_{\0}$ which is obtained by introducing a bivalent vertex on every edge $e \in E$, thus splitting $e$ into the two edges $i_s(e), i_t(e)$.
      The edges $i_s(e), i_t(e)$ are called the \textbf{edge ends} of $e$ and inherit its orientation, so that $s(i_s(e)) = s(e)$ and $t(i_t(e)) = t(e)$.
    \item
      A vertex $v \in V$ is called \textbf{$n$-valent}, if it has $n$ incident edge ends, and \textbf{univalent}, \textbf{bivalent}, \textbf{trivalent}, etc. if $n = 1, 2, 3, \dots$.
  \end{enumerate}
\end{definition}

\begin{definition}~
  \begin{enumerate}[noitemsep,nolistsep,label=(\roman*)]
    \item
      A \textbf{ribbon graph} is a finite directed graph $\Gamma$ together with a cyclic ordering of the edge ends at each vertex.
      If there is a \emph{linear} ordering at all vertices, we call $\Gamma$ a \textbf{ciliated ribbon graph}. 
    \item
      A path $p: v \to v$ in a ribbon graph is called a \textbf{face path}, if it traverses each edge $e \in E$ at most once in each direction and turns maximally right at every vertex.
      That is, it enters each vertex through an edge end $i$ and leaves it through the edge end $i'$ that is directly after $i$ with respect to the cyclic ordering, and the first edge end of the path comes directly after the last one.
    \item
      A \textbf{face} $f$ of the ribbon graph $\Gamma$ is an equivalence class of face paths under cyclic permutations.
      An edge $e$ of $f$ is \textbf{oriented clockwise} (\textbf{oriented counterclockwise}) with regard to $f$ if there is a face path $p = e_n^{\varepsilon_n} \0 \dots \0 e_1^{\varepsilon_1}$ of $f$ with $e = e_j$ and $\varepsilon_j = 1$ ($\varepsilon_j =-1$) for some $j$ and $p$ traverses $e$ exactly once.
  \end{enumerate}
\end{definition}

\begin{notation}
  In the following $\Gamma$ always denotes a ribbon graph.
  We write $V, E$ and $F$ for its sets of vertices, edges and faces, respectively.
  When we depict a piece of $\Gamma$, we will assume that the edge ends at vertices are ordered counterclockwise.
  This implies that face paths are traversed in a clockwise manner.
\end{notation}

\begin{remark}
  Ribbon graphs are in correspondence with graphs embedded into compact oriented surfaces (with boundary).
  To obtain the associated surface, glue an annulus to $\Gamma$ for each face $f \in F$ along an associated face path.

  Conversely, consider a finite directed graph $\Gamma$ that is embedded into  an oriented  surface $S$.
  Then the orientation of the surface $S$ naturally induces a cyclic ordering of the edge ends at every vertex $v \in V$, turning $\Gamma$ into a ribbon graph.
\end{remark}

The name \emph{ribbon graph} is motivated by the fact that ribbon graphs can be \emph{thickened}, thus turning each edge into a ribbon:

\begin{definition} Let $\Gamma$ be a ribbon graph.
  \label{def:thickening}
  \begin{enumerate}[noitemsep,nolistsep,label=(\roman*)]
    \item 
      The \textbf{thickening} of $\Gamma$ is the ribbon graph $\Gamma_D$ where each edge is replaced by a rectangle and each vertex by a polygon.
      An edge $e$ is replaced by the four edges $r(e), l(e), f(e), b(e)$.
      The edges $r(e), l(e)$ stand for the right and left side of $e$, respectively, and are oriented towards the vertex $t(e)$.
      The edges $f(e), b(e)$ stand for the edge ends of $e$ at $t(e)$ and $s(e)$, respectively, and are oriented towards the face left of $e$.
      This is illustrated in Figure \ref{fig:thickening}.
    \item
      The edges $r(e), l(e)$ are called the \textbf{edge sides} of $e$.
      As the edges $f(e), b(e)$ are in bijection to $i_t(e), i_s(e)$, we also refer to them as \textbf{edge ends}.
  \end{enumerate}
\end{definition}

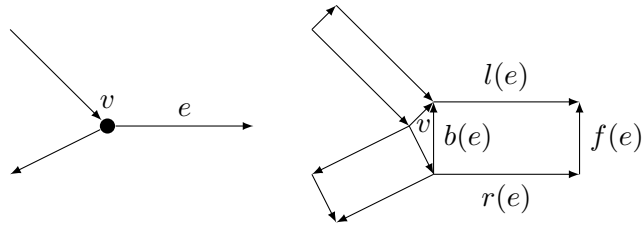
\begin{figure}[h]
\centering
\begin{tikzpicture}[vertex/.style={circle, fill=black, inner sep=0pt, minimum size=2mm}, plain/.style={draw=none, fill=none}, scale=0.8]
  \begin{scope}[scale=0.8]

  \node[vertex, label=90:{$v$}] (m) at (0,0) {};
  \coordinate (1) at (3,0) {};
  \coordinate (2) at (-2,2) {};
  \coordinate (3) at (-2,-1) {};

  \draw [-latex] (m) -- (3);
  \draw [-latex] (2) -- (m);
  \draw [-latex] (m) -- node[above] {$e$} ++ (1);

  \node[plain] (n) at ($ (m) + (6.5,0)$) {$v$};
  \coordinate (n2) at ($ (n) - (0.3,0)$) {};
  \coordinate (n3) at ($ (n2) + (0.5, -1)$) {};
  \coordinate (n1) at ($ (n2) + (0.5, 0.5)$) {};

  \draw [-latex] (n3) -- ($(n3) + (3,0)$) node[midway, below] {$r(e)$};
  \draw [-latex] (n1) -- ($(n1) + (3,0)$) node[midway, above] {$l(e)$};
  \draw [-latex] ($(n3) + (3,0)$) -- ($(n1) + (3,0)$) node[midway, right] {$f(e)$};
  \draw [-latex] (n3) -- (n1) node[midway, right] {$b(e)$};

  \draw [-latex] ($(n1) + (-2,2)$) -- (n1);
  \draw [-latex] ($(n2) + (-2,2)$) -- (n2);
  \draw [-latex] (n2) -- (n1);
  \draw [-latex] ($(n2) + (-2,2)$) -- ($(n1) + (-2,2)$);

  \draw [-latex] (n2) -- ($(n2) + (-2,-1)$);
  \draw [-latex] (n3) -- ($(n3) + (-2,-1)$);
  \draw [-latex] (n2) -- (n3);
  \draw [-latex] ($(n2) + (-2,-1)$) -- ($(n3) + (-2,-1)$);
    
  \end{scope}
\end{tikzpicture}
\caption{An edge $e$ of the ciliated ribbon graph $\Gamma$ and the corresponding edges $r(e), l(e), f(e), b(e)$ of the thickened graph $\Gamma_D$}
\label{fig:thickening}
\end{figure}

Note that the thickening $\Gamma_D$ obtains its ribbon graph structure naturally from that of $\Gamma$.
A face of $\Gamma_D$ corresponds to either a vertex, edge or face of $\Gamma$.
We do not distinguish between vertices, edges and faces of $\Gamma$ and the associated polygons of $\Gamma_D$.

Consider the dual $\Gamma^*$ of a ribbon graph $\Gamma$.
(The vertices of $\Gamma^*$ correspond to the faces of $\Gamma$.
For every edge separating the faces $f, f'$ of $\Gamma$ the graph $\Gamma^*$ has an edge connecting $f$ and $f'$.)
It is also a ribbon graph in a natural way:
the edge ends of $\Gamma^*$ correspond to the edge sides of $\Gamma$ and the cyclic ordering at a vertex of $\Gamma^*$ is obtained from any face path of the corresponding face in $\Gamma$.
The thickening of $\Gamma^*$ coincides with that of $\Gamma$ (up to edge orientation) if we switch the roles of edge ends and sides.
The dual of a \emph{ciliated} ribbon graph $\Gamma$, however, does not inherit a ciliated ribbon graph structure. 
For this reason, we consider ciliated ribbon graphs with the following additional structure.

\begin{definition}
  \label{def:doubly_ciliated_ribbon_graph}
  A \textbf{doubly ciliated ribbon graph} is a ciliated ribbon graph $\Gamma$ together with a choice of a face path for every face.
\end{definition}

The chosen face path for a face of $\Gamma$ equips it with a linear ordering of the adjacent edge  sides in the thickening $\Gamma_D$. 
Graphically, the orderings of a doubly ciliated ribbon graph $\Gamma$ can be expressed in $\Gamma_D$ by adding cilia to the polygons that represent vertices and faces, as is shown in Figure \ref{fig:ordering_and_site}.
\begin{definition} Let $\Gamma$ be a doubly ciliated ribbon graph.
  \label{def:double_graph_basics}
  \begin{enumerate}[nolistsep,noitemsep,label=(\roman*)]
    \item
      Let $v \in V$ and let $e \in E$ be the edge that contains the first edge end $i_v$ of $v$.
      The \textbf{face associated with $v$} is the face of $\Gamma$ incident to the right side $r(e)$ of $e$ if $i_v = b(e)$ (incident to the left side $l(e)$ of $e$ if $i_v=f(e)$).
      It is denoted by $f(v)$.
    \item
      Let $f \in F$ and let $e \in E$ be the edge that contains the first edge side $i_f$ of $f$.
      The \textbf{vertex associated with $f$} is the vertex  $s(e)$ if $i_f = r(e)$ ($t(e)$ if $i_f = l(e)$) and is denoted by $v(f)$.
    \item
      Let $v \in V, f \in F$ with $v(f) = v$ and $f(v) = f$.
      We say that the pair $(v,f)$ is a \textbf{site} if the first edge end and edge side of $v$ and $f$, respectively, are either $b(e)$ and $r(e)$ or $f(e)$ and $l(e)$ for some edge $e$.
    \item
      A doubly ciliated ribbon graph $\Gamma$ is called \textbf{paired} if
      \begin{enumerate}[nolistsep,noitemsep]
        \item 
          for each vertex $v \in V$ and face $f \in F$ the pairs $(v, f(v)), (v(f), f)$ are sites,
        \item
          it has no loops and
        \item
          for each edge $e \in E$ the face to the left of $e$ is different from the face to the right.
      \end{enumerate}
  \end{enumerate}
\end{definition}

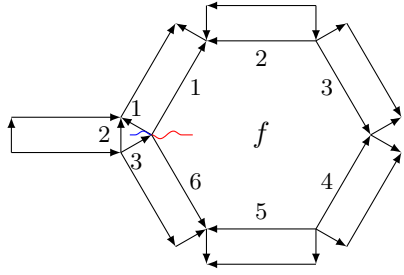
\begin{figure}
\centering
\begin{tikzpicture}[vertex/.style={circle, fill=black, inner sep=0pt, minimum size=2mm}, plain/.style={draw=none, fill=none}, scale=0.8]
  \begin{scope}[scale=0.9]

  \node at (0,0) {$f$};

  \coordinate (n) at (2,0);

  \coordinate (o) at (60:2cm);
  \draw [-latex] (o) -- (n) node[midway,left] {\footnotesize 3};
  \draw [-latex] (o) -- ($(o)!0.65cm!90:(n)$);
  \draw [-latex] (n) -- ($(n) + (o)!0.65cm!90:(n) - (o)$);
  \draw [-latex] ($(o)!0.65cm!90:(n)$) -- ($(n) + (o)!0.65cm!90:(n) - (o)$);

  \coordinate (p) at (120:2cm);
  \draw [-latex] (o) -- (p) node[midway,below] {\footnotesize 2};
  \draw [-latex] ($(o) + (0, 0.65)$) -- (o);
  \draw [-latex] ($(o) + (0, 0.65)$) -- ($(p) + (0, 0.65)$);
  \draw [-latex] ($(p) + (0, 0.65)$) -- (p);

  \coordinate (q) at (180:2cm);
  \draw [-latex] (q) -- (p) node[midway, right] {\footnotesize 1};
  \draw [-latex] (q) -- ($(q) ! 0.65cm ! 90:(p)$) node[midway,above] {\footnotesize $1$};
  \draw [-latex] (p) -- ($(p) + (q)!0.65cm!90:(p) - (q)$);
  \draw [-latex] ($(q)!0.65cm!90:(p)$) -- ($(p) + (q)!0.65cm!90:(p) - (q)$);

\draw [decorate,decoration={snake, amplitude=-0.5mm}, color=blue] (q) -- ($(q)-(0.4,0)$);
  \draw [decorate,decoration={snake, amplitude=-0.5mm}, color=red] (q) -- ($(q)+(0.75,0)$);

  \coordinate (r) at (240:2cm);
  \draw [-latex] (q) -- (r) node[midway, right] {\footnotesize 6};
  \draw [-latex] ($(q)!0.65cm!-90:(r)$) -- (q) node[midway, below] {\footnotesize $3$};
  \draw [-latex] ($(r) + (q)!0.65cm!-90:(r) - (q)$) -- (r);
  \draw [-latex] ($(q)!0.65cm!-90:(r)$) -- ($(r) + (q)!0.65cm!-90:(r) - (q)$);

  \coordinate (s) at (300:2cm);
  \draw [-latex] (s) -- (r) node[midway, above] {\footnotesize 5};
  \draw [-latex] (s) -- ($(s)!0.65cm!90:(r)$);
  \draw [-latex] (r) -- ($(r) + (s)!0.65cm!90:(r) - (s)$);
  \draw [-latex] ($(s)!0.65cm!90:(r)$) -- ($(r) + (s)!0.65cm!90:(r) - (s)$);

  \draw [-latex] (s) -- (n) node[midway, left] {\footnotesize 4};
  \draw [-latex] (s) -- ($(s)!0.65cm!-90:(n)$);
  \draw [-latex] (n) -- ($(n) + (s)!0.65cm!-90:(n) - (s)$);
  \draw [-latex] ($(s)!0.65cm!-90:(n)$) -- ($(n) + (s)!0.65cm!-90:(n) - (s)$);

  \coordinate (q1) at ($(q)!0.65cm!90:(p)$);
  \coordinate (q2) at ($(q)!0.65cm!-90:(r)$);
  \coordinate (t) at ($(q1)!2cm!-90:(q2)$);
  \draw [-latex] (t) -- (q1);
  \draw [-latex] (q2) -- (q1) node[midway,left] {\footnotesize 2};
  \draw [-latex] ($(t) + (q2) - (q1)$) -- (t);
  \draw [-latex] ($(t) + (q2) - (q1)$) -- (q2);

  \end{scope}
\end{tikzpicture}
\caption{ The thickening $\Gamma_D$ of a doubly ciliated ribbon graph $\Gamma$.
  The orderings at the face $f$ of $\Gamma$ and at an adjacent vertex
  are indicated by the cilia (red and blue, respectively) }
\label{fig:ordering_and_site}
\end{figure}

In Figure \ref{fig:ordering_and_site} the polygon that contains the blue cilium corresponds to the vertex $v(f)$ of $\Gamma$.
Moreover, the pair $(v(f), f)$ is a site.
Graphically, the condition that $(v(f), f)$ is a site means that the cilia of $v(f)$ and $f$ are attached to the same vertex in $\Gamma_D$.

\subsection{Poisson-Lie groups and Poisson \texorpdfstring{$G$}{G}-spaces}

In this section we introduce the data for Fock and Rosly's Poisson structure \cite{fockrosly98} and for Poisson-Kitaev models:
Poisson-Lie groups and manifolds with Poisson actions of Poisson-Lie groups.
We mostly follow the presentation in \cite{charipressley94}.

\begin{definition} \cite{drinfeld83}
  \begin{compactenum}
    \item
      A Lie group $G$ that is also a Poisson manifold is called a \textbf{Poisson-Lie group} if the multiplication map $\mu : G \x G \to G$ is Poisson with respect to the product Poisson structure on $G \x G$.
    \item
      A \textbf{homomorphism} $\Phi : G \to H$ of Poisson-Lie groups is a homomorphism of Lie groups that is also a Poisson map.
    \item
      A \textbf{Poisson-Lie subgroup} $H$ of $G$ is a Lie subgroup that is also a Poisson submanifold: $H$ is a Poisson-Lie group together with an injective immersion $\iota : H \to G$ that is a group homomorphism and a Poisson map.
  \end{compactenum}
\end{definition}

(Poisson-)Lie groups and smooth manifolds are always assumed finite dimensional.
We do not require submanifolds to be embedded.

The Poisson structure of a Poisson-Lie group $G$ equips its Lie algebra $\Lie(G)$ with the structure of a \emph{Lie bialgebra}.

\clearpage
\begin{definition}~
  \begin{compactenum}
    \item
    A Lie algebra $(\g, \left[ \, , \right])$ together with a skew symmetric linear map $\delta: \g \to \g \ox \g$, called the \textbf{cocommutator} of $\g$, is called a \textbf{Lie bialgebra} if $\delta^* : \g^* \ox \g^* \to \g^*$ defines a Lie bracket on $\g^*$ and
    $\delta$ is a 1-cocycle of $\g$ with values in $\g \ox \g$:
    \[
      \delta [X,Y] = (\ad_X \ox \, \id_\g + \id_\g \ox \ad_X) \, \delta Y - (\ad_Y \ox \, \id_\g + \id_\g \ox \ad_Y) \, \delta X \qquad \Forall X,Y \in \g \, .
    \]
    \item
      A \textbf{homomorphism of Lie bialgebras} $\varphi : \g \to \h$ is a homomorphism of Lie algebras such that $\varphi^* : \h^* \to \g^*$ is also a homomorphism of Lie algebras.
  \end{compactenum}
\end{definition}

Let $(G, \left\{ \, , \right\})$ be a Poisson-Lie group with associated Lie algebra $\g := \Lie(G)$.
We express the Poisson bracket $\left\{ \, , \right\}: C^\infty(G, \R)^{2} \to C^\infty(G, \R)$ by the corresponding Poisson bivector $w \in \Gamma(\bigwedge^2 TG)$:
\[
  \left\{ f_1, f_2 \right\}(g) = \langle df_1 \ox df_2, w(g) \rangle \quad \Forall g \in G \, .
\]
Denote by $L_g: G \to G$ the left multiplication $h \mapsto g \cdot h$ with an element $g \in G$ and by $R_g : G \to G$ the right multiplication.
We write $TL_g$ and $TR_g$ for the associated tangent maps.

\begin{theorem} \cite{drinfeld83}
  \label{theorem:poisson_lie_group_lie_bialgebra}
  \begin{compactenum}
  \item
    The Lie algebra $\g$ is a Lie bialgebra with cocommutator 
    \[
      \delta := T_1 w^R: \g \to \g \ox \g \quad \text{ with } \quad
      w^R : G \to T_1G^{\ox 2} = \g^{\ox 2} \quad g \mapsto TR_{g^{-1}}^{\ox 2} \, w(g) \, .
    \]
    It is called the \textbf{tangent Lie bialgebra} associated to $G$.

    If $H$ is a connected and simply connected Lie group, then every Lie bialgebra structure $(\h, \delta)$ on its Lie algebra $\h$ equips $H$ with a unique structure of a Poisson-Lie group so that $\h$ is the tangent Lie bialgebra $of H$.

  \item
    If $\Phi: G \to H$ is a homomorphism of Poisson-Lie groups, then the derivative at the unit $T_1 \Phi: \Lie(G) \to \Lie(H)$ is a homomorphism of Lie bialgebras.

    If $G$ is connected and simply connected, then for every homomorphism $\varphi: \Lie(G) \to \Lie(H)$ of Lie bialgebras there is a unique homomorphism $\Phi: G \to H$ of Poisson-Lie groups such that $\varphi = T_1 \Phi$.
  \end{compactenum}
\end{theorem}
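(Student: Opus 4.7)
For part (i), I would start from the fact that the Poisson bivector $w \in \Gamma(\bigwedge^2 TG)$ is multiplicative: since $\mu : G \x G \to G$ is Poisson, unpacking the definition of the product Poisson structure on $G \x G$ yields
\[
  w(gh) = (TL_g)^{\ox 2} w(h) + (TR_h)^{\ox 2} w(g), \quad \text{and in particular} \quad w(1) = 0.
\]
Translating this via the right trivialisation $w^R(g) = (TR_{g^{-1}})^{\ox 2} w(g) \in \g^{\ox 2}$, multiplicativity rewrites as the group $1$-cocycle identity $w^R(gh) = w^R(g) + \Ad_g^{\ox 2} w^R(h)$ for $G$ acting on $\g^{\ox 2}$ by the adjoint representation. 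Differentiating at the identity yields exactly the $1$-cocycle condition for $\delta = T_1 w^R$ in the theorem, and skew-symmetry of $\delta$ is immediate from $w \in \bigwedge^2 TG$.

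To obtain the Lie bracket on $\g^*$, I would translate the Jacobi identity for $\{\cdot,\cdot\}$ into the vanishing of the Schouten--Nijenhuis bracket $[w,w] = 0$, and then evaluate this at the unit using $w(1) = 0$ together with the second-order Taylor expansion of $w$. For $\xi_1, \xi_2, \xi_3 \in \g^*$ interpreted as differentials at $1$ of right-invariant functions, this identity reduces precisely to the Jacobi identity for $\delta^*$. For the converse, with $H$ connected and simply connected, the Lie algebra $1$-cocycle $\delta$ integrates uniquely to a group $1$-cocycle $\eta : H \to \bigwedge^2 \h$; setting $w(g) := (TR_g)^{\ox 2} \eta(g)$ produces a multiplicative bivector whose Schouten bracket vanishes thanks to the co-Jacobi identity for $\delta^*$, so $w$ endows $H$ with a Poisson-Lie structure whose tangent Lie bialgebra is $(\h, \delta)$. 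Uniqueness is forced because the Poisson bivector is determined by $\eta$, which in turn is determined by $\delta$.

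For part (ii), if $\Phi : G \to H$ is a Poisson-Lie group homomorphism, then $T_1 \Phi : \g \to \h$ is automatically a Lie algebra homomorphism. The Poisson property $T\Phi^{\ox 2} \co w_G = w_H \co \Phi$, rewritten in right trivialisations, yields $(T_1 \Phi)^{\ox 2} \co w_G^R = w_H^R \co \Phi$; differentiating at $1$ gives the intertwining relation $(T_1 \Phi)^{\ox 2} \co \delta_G = \delta_H \co T_1 \Phi$, whose dual is precisely the statement that $(T_1 \Phi)^* : \h^* \to \g^*$ is a Lie algebra homomorphism. The converse combines two integration steps: first, the Lie algebra homomorphism $\varphi : \g \to \h$ integrates uniquely to a Lie group homomorphism $\Phi : G \to H$ by simple-connectedness of $G$; second, the compatibility of $\varphi$ with the cocommutators---equivalent to $\varphi^*$ being a Lie algebra homomorphism---implies that $T\Phi^{\ox 2} \co w_G = w_H \co \Phi$ holds infinitesimally at the unit, and multiplicativity of $w_G$ and $w_H$ together with connectedness propagates this identity across all of $G$.

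The main obstacle is the careful bookkeeping in the Schouten bracket computation that identifies $[w,w] = 0$ at the unit with the co-Jacobi identity for $\delta^*$; everything else reduces either to differentiation of the multiplicativity cocycle or to standard Lie-theoretic integration results for simply connected groups (integrating Lie algebra cocycles and Lie algebra homomorphisms), which I would invoke from the classical literature rather than reprove.
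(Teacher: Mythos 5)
The paper does not prove this statement: it is quoted verbatim as background from Drinfeld's work (following the exposition in Chari--Pressley), so there is no in-paper argument to compare against. Your outline is the standard proof of Drinfeld's theorem and is correct in its essentials: multiplicativity of $w$ and $w(1)=0$, the right-trivialised group $1$-cocycle $w^R(gh)=w^R(g)+\Ad_g^{\ox 2}w^R(h)$ and its derivative at the unit, the Schouten-bracket computation at $1$ identifying Jacobi for $\{\,,\}$ with co-Jacobi for $\delta$, and the two integration steps (of cocycles, and of Lie algebra homomorphisms) on simply connected groups for the converse directions. Two small points of care. First, the phrase ``differentials at $1$ of right-invariant functions'' is off, since a right-invariant function on a connected group is constant; what you want is to extend $\xi_1,\xi_2,\xi_3\in\g^*$ to right-invariant $1$-forms -- harmless, because the value of the trivector field $[w,w]$ at $1$ depends only on the covectors at $1$. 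Second, in the converse of (i) the assertion that the integrated bivector ``has vanishing Schouten bracket thanks to the co-Jacobi identity'' hides the one step that genuinely needs an argument: one must first show that the right-trivialisation of $[w,w]$ is itself an $\Ad$-group-cocycle with values in $\bigwedge^3\h$ (this uses multiplicativity of $w$ again), whose derivative at $1$ is the co-Jacobiator of $\delta$; only then does uniqueness of cocycle integration on a connected group force $[w,w]=0$. With that step made explicit, the sketch is a faithful account of the classical proof the paper delegates to the literature.
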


For any finite-dimensional Lie bialgebra $(\g, \left[ \, , \right], \delta)$ the dual $\g^*$ has a canonical Lie bialgebra structure with commutator given by $\delta^*: \g^* \ox \g^* \cong (\g \ox \g)^* \to \g^*$ and cocommutator $\left[ \, , \right]^* : \g^* \to \g^* \ox \g^*$.

\begin{definition}~
  \begin{compactenum}
  \item
    The Lie bialgebra $(\g^*, \delta^*, \left[ \, , \right]^*)$ is called the \textbf{dual Lie bialgebra} of $\g$.
  \item
    Let $G$ be a Poisson-Lie group with associated Lie bialgebra $(\g, \left[ \, , \right], \delta)$.
      Any Poisson-Lie group with tangent Lie bialgebra $(\g^*, \delta^*, \left[ \, , \right]^*)$ is called \textbf{dual} to $G$.
      The unique connected and simply connected Poisson-Lie group $G^*$ dual to $G$ is called the \textbf{dual Poisson-Lie group} of $G$.
  \end{compactenum}
\end{definition}

\begin{notation}
  For an element $r \in \g \ox \g$ we write $r = r_{(1)} \ox r_{(2)}$, suppressing the sum.
  Denote by $r_{21} := r_{(2)} \ox r_{(1)}$ the flipped element.
  We write $r_a := \frac{1}{2} (r - r_{21})$ and $r_s := \frac{1}{2}(r + r_{21})$ for the antisymmetric and symmetric components of $r$, respectively.
\end{notation}

\begin{definition}~
  \begin{compactenum}
  \item
    Let $\g$ be a Lie algebra.
    The \textbf{classical Yang-Baxter equation}, or \textbf{CYBE}, for an element $r \in \g \ox \g$ is the equation
    \begin{equation}
      \label{eq:cybe}
      \left[ \left[ r,r \right] \right] := [r_{12}, r_{13}] + [r_{12}, r_{23}] + [r_{13}, r_{23}] = 0 \, ,
    \end{equation}
    where we use the following notation for $r, s \in \g \ox \g$:
    \begin{align*}
      [r_{12}, s_{13}] &:= [r_{(1)}, s_{(1)}] \ox r_{(2)} \ox s_{(2)} \\
      [r_{12}, s_{23}] &:= r_{(1)} \ox [r_{(2)}, s_{(1)}] \ox s_{(2)} \\
      [r_{13}, s_{23}] &:= r_{(1)} \ox s_{(1)} \ox [r_{(2)}, s_{(2)}] \, .
    \end{align*}
    A solution of the CYBE is called a \textbf{classical $r$-matrix}.

  \item
    A Poisson-Lie group $G$ is called \textbf{coboundary} if its Poisson bivector is of the form
    \begin{equation}
      w(g) = (TL_g^{\ox 2} - TR_g^{\ox 2}) \; r
      \label{eq:SklyaninBivector}
    \end{equation}
    for an element $r \in \g \ox \g$.
    It is called \textbf{quasi-triangular} if $r$ is a classical $r$-matrix, and \textbf{triangular} if in addition $r$ is antisymmetric.
  \end{compactenum}
\end{definition}

\begin{theorem} \cite{semenovtianshansky85}
  Equation \eqref{eq:SklyaninBivector} defines the structure of a Poisson-Lie group on $G$ if and only if both the symmetric component $r_s$ and the element $ \left[ \left[ r,r \right] \right] $ are invariant under the adjoint representation of $G$ on $\g \ox \g$ and $\g^{\ox 3}$, respectively.
\end{theorem}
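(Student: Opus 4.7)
The plan is to verify directly the three conditions that equip $G$ with the structure of a Poisson-Lie group: (a) the bivector $w$ is genuinely antisymmetric (so that $\{f_1,f_2\} := \langle df_1 \otimes df_2, w\rangle$ is a skew-symmetric bracket); (b) the Jacobi identity, equivalent to the vanishing of the Schouten--Nijenhuis bracket $[w,w]=0$; and (c) multiplicativity, i.e.\ that $\mu: G \times G \to G$ is Poisson for the product bracket.

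First I would dispense with (c) and (a), which are quick. Computing $w^R(g) = TR_{g^{-1}}^{\otimes 2} w(g) = (\mathrm{Ad}_g^{\otimes 2} - \id)\, r$, one sees that $w^R$ is a principal $1$-coboundary on $G$ with values in $\g \otimes \g$, so the $1$-cocycle identity $w^R(gh) = w^R(g) + \mathrm{Ad}_g^{\otimes 2} w^R(h)$ holds automatically; this is exactly the condition that translates to $\mu$ being Poisson, so (c) is free. For (a), flipping tensor factors gives $w(g) + w(g)_{21} = 2(TL_g^{\otimes 2} - TR_g^{\otimes 2})\, r_s$, which vanishes for all $g$ iff $\mathrm{Ad}_g^{\otimes 2} r_s = r_s$ for all $g \in G$, i.e.\ iff $r_s$ is $\mathrm{Ad}$-invariant. (On a connected $G$ this can equivalently be tested infinitesimally by $\mathrm{ad}_X \otimes \id + \id \otimes \mathrm{ad}_X$.)

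The heart of the proof is (b). I would express $w$ as a sum of left- and right-invariant bivector fields: writing $L(X)$ and $R(X)$ for the left- and right-invariant extensions of $X \in \g$, one has $w = L(r_{(1)}) \otimes L(r_{(2)}) - R(r_{(1)}) \otimes R(r_{(2)})$. Using the standard identities $[L(X), L(Y)] = L([X,Y])$, $[R(X), R(Y)] = -R([X,Y])$, $[L(X), R(Y)] = 0$, together with the Leibniz rule for the Schouten bracket, I would compute $[w,w]$ and rearrange it so that each of the two resulting trivector fields is expressed through the universal trivector $[[r,r]] \in \g^{\otimes 3}$ and its cyclic permutations. After using the already-established $\mathrm{Ad}$-invariance of $r_s$ to cancel mixed terms, one obtains a formula of the form $[w,w](g) = \bigl(TL_g^{\otimes 3} - TR_g^{\otimes 3}\bigr)\,\mathrm{Alt}\bigl([[r,r]]\bigr)$ up to rearrangement of tensor slots, equivalently $TR_{g^{-1}}^{\otimes 3}[w,w](g) = (\mathrm{Ad}_g^{\otimes 3} - \id)\,\mathrm{Alt}([[r,r]])$. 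This vanishes for all $g$ iff $[[r,r]]$ is $\mathrm{Ad}$-invariant (its totally antisymmetric projection suffices because $r_s$-involving pieces were already killed).

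The main obstacle is precisely the combinatorial bookkeeping in the Schouten-bracket computation: one must carefully track six kinds of cross-terms (three positions times the $L/R$ split) and verify that the $LR$-mixed contributions either vanish by the commutativity $[L(X),R(Y)]=0$ or cancel after imposing the $\mathrm{Ad}$-invariance of $r_s$ established in step (a). Once this is done, the theorem follows by combining (a), (b), (c), since the two stated invariance conditions are then simultaneously necessary and sufficient.
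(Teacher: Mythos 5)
The paper does not prove this statement --- it is quoted from \cite{semenovtianshansky85} as background --- so there is no in-paper argument to compare against; your proposal has to stand on its own, and in outline it is the standard proof and is essentially correct. Parts (a) and (c) are fine as written: $w^R(g) = (\Ad_g^{\ox 2} - \id)\,r$ is a group coboundary, hence automatically a cocycle, which is exactly multiplicativity; and $w(g)+w(g)_{21} = 2(TL_g^{\ox 2}-TR_g^{\ox 2})\,r_s$ vanishes for all $g$ precisely when $r_s$ is $\Ad$-invariant. For (b), note that once $r_s$ is $\Ad$-invariant the bivector depends on $r$ only through $r_a$, so the Schouten computation is really for $w = r_a^L - r_a^R$; the $L$/$R$ cross terms vanish outright and one lands on $[w,w](g) = \pm\bigl(TL_g^{\ox 3} - TR_g^{\ox 3}\bigr)\left[ \left[ r_a,r_a \right] \right]$, so that Jacobi holds iff $\left[ \left[ r_a,r_a \right] \right]$ is $\Ad$-invariant. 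The one place you are too quick is the return from $\left[ \left[ r_a,r_a \right] \right]$ to $\left[ \left[ r,r \right] \right]$: the appeal to ``$\mathrm{Alt}$'' and to mixed terms being ``killed'' should be replaced by the precise lemma that, when $r_s$ is $\Ad$-invariant, the cross terms in the bilinear expansion of $\left[ \left[ r,r \right] \right]$ cancel, yielding $\left[ \left[ r,r \right] \right] = \left[ \left[ r_a,r_a \right] \right] + \left[ \left[ r_s,r_s \right] \right]$ with $\left[ \left[ r_s,r_s \right] \right]$ itself $\Ad$-invariant and totally antisymmetric. Only with that identity in hand is $\Ad$-invariance of $\left[ \left[ r,r \right] \right]$ (which is how the theorem is phrased) equivalent to $\Ad$-invariance of $\left[ \left[ r_a,r_a \right] \right]$, i.e.\ to $[w,w]=0$; this is a routine but genuinely necessary computation, not a formality, and it is the only substantive item missing from your sketch.
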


In particular, if $r_s$ is $\Ad$-invariant and $r$ is a classical $r$-matrix, Equation \eqref{eq:SklyaninBivector} defines the structure of a quasi-triangular Poisson-Lie group on $G$.

\begin{definition}
  \label{def:poisson_g_space}
  Let $G$ be a Poisson-Lie group.
  \begin{compactenum}
  \item
    A Poisson manifold $M$ together with a group action $\rhd : G \x M \to M$ that is also a Poisson map is called a \textbf{Poisson $G$-space}.
  \item
    A map $\varphi: M \to N$ between Poisson $G$-spaces is called a \textbf{  homomorphism  of Poisson $G$-spaces} if it is Poisson and intertwines the $G$-actions:
  \[
    \varphi(g \rhd m) = g \rhd \varphi(m) \quad \Forall m \in M, g\in G \, .
  \]
  \end{compactenum}
\end{definition}

Any Poisson-Lie group $G$ together with the multiplication map $\mu: G \x G \to G$ is a Poisson $G$-space.
If $G$ is coboundary, then there is another Poisson $G$-space structure on $G$. Its Poisson bivector is also obtained from the element $r \in \g \ox \g$ from \eqref{eq:SklyaninBivector}: 
  \begin{equation}
    \label{eq:bivector_heisenberg_double}
    w_{\He} : g \mapsto -(TL_g^{\ox 2} + TR_g^{\ox2}) \, r_a \, .
  \end{equation}
\begin{theorem} \cite{semenovtianshansky85,  semenovtianshansky92}
  \label{theorem:heisenberg_double_inversion_poisson_actions}
  Let $G$ be a coboundary Poisson-Lie group. 
  \begin{compactenum}
    \item
      The pair $G_\He := (G, w_\He)$ is a Poisson manifold. 
    \item
      The Poisson manifold  $G_\He$  becomes a Poisson $G$-space when equipped with any of the following Poisson actions:
      \begin{align*}
        \rhd : G \x  G_\He \to G_\He  \quad & \quad (g, h) \mapsto gh \\
        \rhd' : G \x  G_\He \to G_\He  \quad & \quad (g,h) \mapsto h g^{-1} \, .
      \end{align*}
      The inversion map  $\eta : (G_\He, \rhd) \to (G_\He, \rhd'), g \mapsto g^{-1}$  is an isomorphism of Poisson $G$-spaces.
  \end{compactenum}
\end{theorem}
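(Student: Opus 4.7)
The plan is to verify all three parts by direct computation, exploiting the observation that $w_\He$ and the Sklyanin bivector $w_G = (TL^{\ox 2} - TR^{\ox 2})r_a$ differ only by the sign of the right-invariant piece. Setting $\pi_L(g) := TL_g^{\ox 2} r_a$ and $\pi_R(g) := TR_g^{\ox 2} r_a$, we have $w_G = \pi_L - \pi_R$ and $w_\He = -\pi_L - \pi_R$. Since left-invariant and right-invariant vector fields on $G$ commute, their Schouten--Nijenhuis bracket vanishes, $[\pi_L, \pi_R] = 0$. Hence
\[
[w_\He, w_\He] = [\pi_L, \pi_L] + 2[\pi_L, \pi_R] + [\pi_R, \pi_R] = [\pi_L, \pi_L] + [\pi_R, \pi_R] = [w_G, w_G] = 0,
\]
where the final equality uses that $w_G$ is already known to be Poisson by the hypothesis that $G$ is a coboundary Poisson-Lie group. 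This establishes (i).

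For (ii), I would first check directly that $\mu : G \x G_\He \to G_\He$, $(g, h) \mapsto gh$, is Poisson. Using $T_{(g,h)}\mu(X, Y) = TR_h X + TL_g Y$ and $TL_g TR_h = TR_h TL_g$, pushing forward the product bivector yields
\[
T\mu^{\ox 2}\bigl(w_G(g) \oplus w_\He(h)\bigr) = (TL_g TR_h)^{\ox 2} r_a - TR_{gh}^{\ox 2} r_a - TL_{gh}^{\ox 2} r_a - (TL_g TR_h)^{\ox 2} r_a = w_\He(gh),
\]
the two mixed terms cancelling. An analogous computation handles $\rhd'$, or one can deduce it from the Poisson property of $\rhd$ once $\eta$ is known to be a Poisson intertwiner.

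The intertwining property $\eta(g \rhd h) = g \rhd' \eta(h)$ is immediate from $(gh)^{-1} = h^{-1} g^{-1}$. To verify that $\eta$ is Poisson, I would use $T_h \eta = -TL_{h^{-1}} TR_{h^{-1}}$ together with the identities $TL_{h^{-1}} TR_{h^{-1}} TL_h = TR_{h^{-1}}$ and $TL_{h^{-1}} TR_{h^{-1}} TR_h = TL_{h^{-1}}$ to compute
\[
T\eta^{\ox 2}\bigl(w_\He(h)\bigr) = -TR_{h^{-1}}^{\ox 2} r_a - TL_{h^{-1}}^{\ox 2} r_a = w_\He(h^{-1}).
\]
The main obstacle is purely clerical---careful bookkeeping of signs and of the order of left and right translations in the tangent maps. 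Conceptually, all three assertions reduce to the commutativity of left and right translations on $G$ combined with the fact that $G$ is already coboundary Poisson-Lie, which supplies the Jacobi identity for $w_G$ at no cost.
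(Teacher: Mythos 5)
The paper offers no proof of this theorem---it is imported from Semenov-Tian-Shansky via the citation---so there is no in-paper argument to compare yours against; what you have written fills that gap with the standard direct verification, and it is correct. The key point in (i) is exactly as you state: with $\pi_L(g)=TL_g^{\ox 2}\,r_a$ and $\pi_R(g)=TR_g^{\ox 2}\,r_a$ one has $[\pi_L,\pi_R]=0$ because the Schouten bracket of a sum of wedges of left-invariant vector fields with one of right-invariant vector fields expands into Lie brackets of left- with right-invariant fields, all of which vanish; hence $[w_\He,w_\He]=[\pi_L,\pi_L]+[\pi_R,\pi_R]=[w_G,w_G]=0$, using only that $G$ is coboundary (no CYBE needed), consistent with the hypothesis of the theorem. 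The computations for $\rhd$, $\rhd'$ and $\eta$ reduce, as you say, to $TL_aTR_b=TR_bTL_a$ together with $TL_aTL_b=TL_{ab}$ and $TR_aTR_b=TR_{ba}$; your cancellation of the two mixed terms $(TL_gTR_h)^{\ox 2}r_a$ is the whole content of the $\rhd$ case, and deducing $\rhd'$ from $\rhd$ via $\rhd'=\eta\co\rhd\co(\id\x\eta)$ is legitimate once $\eta$ is shown to be Poisson. One presentational point: in the $\eta$ computation the sign in $T_h\eta=-TL_{h^{-1}}\co TR_{h^{-1}}$ enters squared through $T\eta^{\ox 2}$, which is why no extra sign appears in your final display---worth saying explicitly so the bookkeeping is visibly closed.
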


\begin{remark}
  Note that while the right multiplication on the Poisson-Lie group $G$ is Poisson, the inversion map $\eta : G \to G$ is \emph{anti-Poisson}.
  Hence, the action $\rhd' : G \x G \to G$ from Theorem \ref{theorem:heisenberg_double_inversion_poisson_actions} \emph{does not equip} $G$ with a Poisson $G$-space structure.
  It does, however, equip $G$ with a Poisson $(G, -w_G)$-space structure for the negative Poisson bivector $-w_G$ on $G$.
\end{remark}

\subsection{Double Poisson-Lie groups}
\label{subsection:double_poisson_lie_groups}

Similarly to the Drinfeld double of a Hopf algebra, to any Poisson-Lie group $G$ we can associate a quasi-triangular Poisson-Lie group $D(G)$, the \emph{double} of $G$.

We define the double of a Lie bialgebra as in \cite{charipressley94}.
Let $(\g, [,], \delta)$ be a Lie bialgebra with dual Lie bialgebra $(\g^*, [,]_{\g^*}, \delta_{\g^*})$, where $[,]_{\g^*} = \delta^*$ and $\delta_{\g^*} = [,]^*$.
Denote by $\g^{*cop}$ the Lie bialgebra $\g^*$ with opposite cocommutator.
Consider the symmetric bilinear form $(\, ,) : (\g \oplus \g^*)^{\ox 2} \to \R$ given by
\begin{equation}
  \label{eq:symmetric_bilinear_form}
  (x,y) = (\alpha, \beta) = 0 \qquad
  (x,\alpha) = \alpha(x) 
  \qquad \text{ for } \quad x,y \in \g, \alpha,\beta \in \g^* \, .
\end{equation}

\begin{theorem}\cite{drinfeld83}
  \label{theorem:double_lie_bialgebra}
  There is a unique Lie bialgebra structure on the vector space $\g \oplus \g^*$ such that the inclusions $\g \to \g \oplus \g^*, \g^{*cop} \to \g \oplus \g^*$ are homomorphisms of Lie bialgebras and $(\, ,)$ is invariant under the adjoint representation.
  The Lie bracket is given by
  \[
    [x, \alpha] = (\alpha \ox \id_\g) \, \delta(x) + \ad_x^*(\alpha)  \quad \text{ for } x \in \g, \alpha \in \g^* \, ,
  \]
  where $\ad_x^*(\alpha) = - \alpha \0 \ad_x$ denotes the coadjoint representation.
  This Lie bialgebra structure on $\g \oplus \g^*$ is quasi-triangular with classical $r$-matrix $r = \id_\g$, where we view $\id_\g$ as an element of $\g \ox \g^* \subseteq (\g \oplus \g^*)^{\ox 2}$.
\end{theorem}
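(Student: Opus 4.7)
The plan is to prove uniqueness and existence of the bracket separately, and then upgrade to the quasi-triangular Lie bialgebra statement.

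For \textbf{uniqueness}, suppose a compatible bracket exists. The brackets on $\g$ and on $\g^*$ are fixed by the requirement that the two inclusions be Lie bialgebra morphisms, so the only freedom is the mixed bracket $[x,\alpha]$ with $x \in \g$ and $\alpha \in \g^*$. Writing $[x,\alpha] = z + \gamma$ with $z \in \g$ and $\gamma \in \g^*$, I would exploit $\ad$-invariance of $(\, ,)$ by pairing against arbitrary $y \in \g$ and $\beta \in \g^*$:
\[
\gamma(y) = ([x,\alpha], y) = -(\alpha, [x,y]) = -\alpha([x,y]), \qquad
\beta(z) = ([x,\alpha], \beta) = -(x, [\alpha,\beta]).
\]
Reading off $\gamma = \ad_x^*\alpha$ and identifying $z$ via the cocommutator $\delta$ of $\g$ (which is where the $\g^{*cop}$ versus $\g^*$ sign conventions get pinned down) recovers exactly the formula in the statement.

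For \textbf{existence}, I would take this formula as the definition of the mixed bracket and then verify skew-symmetry and the Jacobi identity. The pure $\g$ and pure $\g^*$ cases reduce to Jacobi on the respective factors, so all the work sits in the mixed cases $[x,[y,\alpha]] + \mathrm{cyc.}$ and $[\alpha,[\beta,x]] + \mathrm{cyc.}$ This is the main obstacle: each expands into several terms involving $\delta$, the coadjoint action, and the bracket on $\g^*$, and the vanishing is precisely the 1-cocycle condition on $\delta$ (combined with Jacobi on $\g$ and on $\g^*$). Once Jacobi is established, $\ad$-invariance of $(\, ,)$ holds automatically, since the mixed bracket was constructed to enforce it.

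For the \textbf{quasi-triangular bialgebra} part, I would set $r := \id_\g \in \g \ox \g^* \subset (\g \oplus \g^*)^{\ox 2}$. In dual bases $(e_i)$ of $\g$ and $(e^i)$ of $\g^*$ one computes $r_s = \tfrac{1}{2}\sum_i (e_i \ox e^i + e^i \ox e_i)$, the canonical symmetric invariant tensor; its $\ad$-invariance is a direct consequence of $\ad$-invariance of $(\, ,)$. Verifying the CYBE $[[r,r]] = 0$ unpacks to an identity that again encodes the cocycle property of $\delta$ — structurally the same calculation underlying the mixed Jacobi check. The cocommutator on $\df := \g \oplus \g^*$ is then forced by the coboundary formula $\delta_{\df}(z) = [z \ox 1 + 1 \ox z,\, r]$, and a short computation shows $\delta_{\df}|_\g = \delta$ and $\delta_{\df}|_{\g^*} = -\delta_{\g^*}$, matching the cocommutator of $\g^{*cop}$. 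This confirms that both inclusions are Lie bialgebra morphisms and closes the argument.
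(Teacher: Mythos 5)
The paper does not prove this statement -- it is quoted from \cite{drinfeld83} as background, so there is no in-paper argument to compare against; I can only judge your sketch on its own merits. Your plan is the standard construction of the classical double (as in Chari--Pressley) and is structurally sound: non-degeneracy of $(\, ,)$ together with $\ad$-invariance forces the mixed bracket, the only non-trivial Jacobi identities are the mixed ones and these are equivalent to the $1$-cocycle condition on $\delta$ combined with Jacobi on $\g$ and on $\g^*$, and the cocommutator is recovered as the coboundary of $r=\id_\g$, whose restrictions to $\g$ and $\g^*$ one checks to be $\delta$ and $-\delta_{\g^*}$. Two details deserve correction. First, the sign in your second pairing identity is off: $\ad_\alpha$-invariance gives $([\alpha,x],\beta)+(x,[\alpha,\beta])=0$, hence $\beta(z)=([x,\alpha],\beta)=+(x,[\alpha,\beta])=(\alpha\ox\beta)\,\delta(x)$, which is what produces $z=(\alpha\ox\id_\g)\,\delta(x)$ with the sign of the statement; your minus sign would yield $-(\alpha\ox\id_\g)\,\delta(x)$, and the ``$cop$'' cannot absorb this, since it only reverses the cocommutator of $\g^*$, not its Lie bracket. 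Second, the CYBE for $r=\id_\g$ does not in fact encode the cocycle condition: writing $r=\sum_i e_i\ox e^i$ in dual bases and inserting the mixed bracket, the $\g\ox\g\ox\g^*$ contributions of $[r_{12},r_{23}]$ and $[r_{13},r_{23}]$ cancel, and the $\g\ox\g^*\ox\g^*$ contributions of $[r_{12},r_{13}]$ and $[r_{12},r_{23}]$ cancel, using only antisymmetry of the structure constants of $\g$ and $\g^*$; the cocycle condition is consumed entirely in the mixed Jacobi identities. Neither point invalidates the approach, but both would need to be fixed in a written-out proof.
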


\begin{definition}~
  \begin{compactenum}
  \item
    The Lie bialgebra structure from Theorem \ref{theorem:double_lie_bialgebra} is called the \textbf{double} of $\g$ and is denoted by $D(\g)$.
    Lie bialgebras of this form are called \textbf{double Lie bialgebras}.
  \item
    A Poisson-Lie group $G$ is called a \textbf{double Poisson-Lie group} if its tangent Lie bialgebra is a double Lie bialgebra.
  \item
    For a Poisson-Lie group $H$ with Lie bialgebra $\h$ we call the unique connected and simply connected Poisson-Lie group with tangent Lie bialgebra $D(\h)$ the \textbf{double} $D(H)$ of $H$.
  \end{compactenum}
\end{definition}

\begin{notation}
  For a double Lie bialgebra $\g = D(\h)$ of a Lie bialgebra $\h$ we denote the Lie subbialgebras $\h, \h^{*cop}$ by $ \g_- := \h $ and $ \g_+ := \h^{*cop} $.
\end{notation}

The Heisenberg double of a Hopf algebra also has a Poisson-geometrical counterpart:
\begin{definition}\cite{alekseevmalkin94}
  \label{def:classical_heisenberg_double}
  Consider a double Poisson-Lie group $G$.
  \begin{compactenum}
  \item
    The Poisson manifold $G_\He = (G, w_{\He})$ with the Poisson bivector $w_\He$ from \eqref{eq:bivector_heisenberg_double} is called a \textbf{Heisenberg double}.
  \item
    If $G = D(H)$ is the double of a Poisson-Lie group $H$, we call $\He(H) := G_\He = D(H)_\He$ the \textbf{Heisenberg double} of $H$.
  \end{compactenum}
\end{definition}

For any quasi-triangular Poisson-Lie group $G$ with tangent Lie bialgebra $\g$ the  classical $r$-matrix $r \in \g \ox \g$ of $G$  defines two linear maps
\[
  \sigma_+, \sigma_- : \g^* \to \g \qquad \sigma_+(\alpha) := (\alpha \ox \id_\g) \, r \qquad \sigma_-(\alpha) := -(\id_\g \ox \alpha) \, r \, .
\]
These maps are used to relate the Poisson-Lie group $G$ with its dual $G^*$ if the Lie bialgebra $\g$  is \emph{factorizable}, that is, if the symmetric component $r_s$ of its $r$-matrix is non-degenerate.
(See, for instance, \cite{reshetikhinsts88} or \cite{weinsteinxu92}.)
This is the case if $G$ is a double Poisson-Lie group.
The CYBE implies that the maps $\sigma_\pm$ are homomorphisms of Lie algebras, where $\g^*$ is equipped with the commutator $\delta^*$.
Thus, they can be integrated to homomorphisms of Lie groups $S_+, S_- : G^* \to G$.

For the remainder of Section \ref{subsection:double_poisson_lie_groups} we assume that $G$ is a double Poisson-Lie group.
Then $\g = D(\h)$ for some Lie bialgebra $\h$ and the images $\sigma_+(\g^*), \sigma_-(\g^*)$ coincide with the Lie subbialgebras $\g_+ = \h^{*cop}$ and $\g_- = \h$, respectively.
The CYBE also implies that $\sigma_\pm$ are anti-Lie coalgebra homomorphisms.
Therefore, the maps $S_\pm : G^* \to G$ are anti-Poisson and the images $G_+ := S_+(G^*), G_- := S_-(G^*)$ are Poisson-Lie subgroups.
 If the double Poisson-Lie group $G$ is connected and simply connected, then it is the double $D(G_-)$ of $G_-$.
If $G_+$ is simply connected, then it is the dual Poisson-Lie group of $G_-$ with negative Poisson structure.

The map $t : \g^* \to \g, \alpha \mapsto \sigma_+(\alpha) - \sigma_-(\alpha)$ is a linear isomorphism as the symmetric component $r_s$ of $r$ is non-degenerate.
Differentiating  the map $S : G^* \to G, \beta \mapsto S_-(\beta)^{-1} S_+(\beta)$ at the unit yields $T_1 S = t$, so that $S$ is a local diffeomorphism in a neighbourhood of $1 \in G^*$.
Computing the pullback of the Poisson bivector of $G^*$ along the local inverse of $S$ allows us to (locally) express the Poisson structure of $G^*$ on $G$.

\begin{lemma}
  \label{lemma:dual_poisson_lie_group}
  The bivector
  \begin{equation}
    \label{eq:bivector_dual_poisson_lie_group}
    w_{G^*} (g) := -(TL_g^{\ox 2} + TR_g^{\ox 2}) \, r_a - TL_g \ox TR_g \, r_{21} + TR_g \ox TL_g \, r
  \end{equation}
  defines a Poisson structure on $G$.
  The smooth map $S: G^* \to (G, w_{G^*})$ is Poisson and a local diffeomorphism in a neighbourhood of the unit $1 \in G^*$.
\end{lemma}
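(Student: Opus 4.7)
The plan is to derive \eqref{eq:bivector_dual_poisson_lie_group} as the pushforward of the Poisson-Lie bivector of $G^*$ along $S$, then address global Jacobi for $w_{G^*}$.

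First I would verify that $S$ is a local diffeomorphism at $1 \in G^*$. From $S(\beta) = S_-(\beta)^{-1} \cdot S_+(\beta)$ together with $S_\pm(1) = 1$, the product rule gives $T_1 S = T_1 S_+ - T_1 S_- = \sigma_+ - \sigma_- = t$. Since $G$ is a double Poisson-Lie group, $r_s$ is non-degenerate, hence $t: \g^* \to \g$ is a linear isomorphism and the inverse function theorem yields a neighbourhood $V$ of $1$ on which $S$ is a diffeomorphism onto its image.

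Next, because being Poisson is a pointwise condition, I would compute $T_\beta S$ at an arbitrary $\beta \in G^*$ by combining the differentials $T_\beta S_\pm = TL_{S_\pm(\beta)} \co \sigma_\pm \co TL_{\beta^{-1}}$ (which follow from $S_\pm$ being Lie group homomorphisms with derivative $\sigma_\pm$ at the unit) via the Leibniz rule for the product $S_-(\beta)^{-1} \cdot S_+(\beta)$. Pushing the canonical Poisson-Lie bivector $(TL_\beta^{\ox 2} - TR_\beta^{\ox 2}) r_{G^*}$ of $G^*$ through $(T_\beta S)^{\ox 2}$ and expanding — using that on $\g = D(\h)$ the $r$-matrix $r_{G^*}$ for $G^*$ corresponds to $r \in \g \ox \g$ via the isomorphism $t = \sigma_+ - \sigma_-$ — should produce exactly the three summands of \eqref{eq:bivector_dual_poisson_lie_group} at $g = S(\beta)$. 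This proves $S$ Poisson onto $(G, w_{G^*})$ and, since pushforward of a Poisson bivector by a local diffeomorphism is Poisson, also gives Jacobi for $w_{G^*}$ on the open set $S(V) \subseteq G$.

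To extend Jacobi to all of $G$, I would verify $\left[ w_{G^*}, w_{G^*} \right]_{SN} = 0$ pointwise by a direct Schouten-Nijenhuis computation on the three summands in \eqref{eq:bivector_dual_poisson_lie_group}. Factoring out the left- and right-translation tangent maps reduces the condition to an algebraic identity in $\g^{\ox 3}$ that cancels by the CYBE $\left[ \left[ r, r \right] \right] = 0$ and $\Ad$-invariance of $r_s$ — both guaranteed by Theorem \ref{theorem:double_lie_bialgebra}, since $r = \id_\g$ on $D(\h)$. A conceptual shortcut is to observe that $\left[ w_{G^*}, w_{G^*} \right]_{SN}$ transforms equivariantly under left and right translation by elements of $G_\pm$, so its vanishing on the open set $S(V)$ already forces vanishing everywhere.

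The main obstacle I expect is the sign and factor bookkeeping in the pushforward step: the anti-Poisson nature of $S_\pm$, the inversion inside $S_-^{-1}$, and the decomposition $r = r_a + r_s$ all contribute signs and factors that must conspire to yield the asymmetric combination of $r$, $r_{21}$, and $r_a$ appearing in \eqref{eq:bivector_dual_poisson_lie_group}. Once that bookkeeping is carried out cleanly, the remaining steps are fairly mechanical.
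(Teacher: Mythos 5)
The paper does not actually prove this lemma itself --- it refers the reader to \cite[Section 14.7]{babelon03} --- so there is no internal proof to match; your plan is essentially the strategy of that reference and of Semenov--Tian-Shansky's original construction. Your first step is correct and is already spelled out in the paper immediately before the lemma: $T_1 S = \sigma_+ - \sigma_- = t$, and $t$ is invertible because $r_s$ is non-degenerate, so $S$ is a local diffeomorphism near $1$.

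There are, however, two genuine gaps. The more serious one is in the pushforward step: you take for granted that $G^*$ is coboundary with Sklyanin bivector $(TL_\beta^{\ox 2} - TR_\beta^{\ox 2})\, r_{G^*}$ for an $r$-matrix obtained by transporting $r$ along $t$. That $\g^*$ is quasi-triangular at all, and that its $r$-matrix is $(t^{-1} \ox t^{-1})\, r$ rather than some flip or sign variant, is a substantive claim not contained in the data you are given, and proving it is roughly as hard as the lemma itself. The standard argument avoids this by realizing $G^*$ locally inside the double, i.e.\ working with the pair $(S_+(\beta), S_-(\beta)) \in G \x G$, where the bivector of $G^*$ --- and in particular the mixed $\sigma_+ \ox \sigma_-$ contributions that produce exactly the two asymmetric summands $-TL_g \ox TR_g \, r_{21} + TR_g \ox TL_g \, r$ --- is explicit. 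The anti-Poisson property of $S_\pm$ alone does not determine $TS^{\ox 2}$ applied to the bivector of $G^*$, because the cross term between the two factors of $S_-(\beta)^{-1}S_+(\beta)$ is precisely the part you have no handle on without such a realization. The second gap is the ``conceptual shortcut'' for globalizing Jacobi: the Schouten bracket of $w_{G^*}$ with itself is \emph{not} equivariant under left or right translations (the bivector is neither left- nor right-invariant, nor multiplicative), so vanishing on $S(V)$ does not propagate by equivariance; at best one could invoke analyticity on the identity component. The clean fix is the route you list first anyway: the direct pointwise Schouten computation reduces to $\left[ \left[ r,r \right] \right] = 0$ and $\Ad$-invariance of $r_s$ and holds at every $g \in G$, which makes the local argument superfluous for the Jacobi identity.
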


A proof can be found in \cite[Section 14.7]{babelon03}.
In a neighbourhood $U \subseteq G$ of the unit we obtain from $S$ two local projections $\pi_\pm : U \to G_\pm$:
\[
  \pi_- \left( S_-(\beta)^{-1} S_+(\beta) \right) := S_-(\beta)^{-1} \qquad \pi_+ \left( S_-(\beta)^{-1} S_+(\beta) \right) := S_+(\beta) \, .
\]
As shown by Lu and Weinstein \cite[Theorem 3.14]{luweinstein1990}, the induced (local) right action of $G_+$ on $G_-$ defined by $ (g_-, g_+) \mapsto \pi_-(g_+^{-1} g_-) $ coincides with the right dressing action of the Poisson-Lie group $(G_+, -w_{G_+})$ with negative Poisson bivector on $G_-$.
Therefore, the projection $\pi_-: U \to G_-$ is Poisson and the same can be shown for $\pi_+: U \to G_+$.
In terms of the Poisson bivector $w_G$ of $G$ this means $T\pi_\pm^{\ox 2} \, w_G|_U = w_G \0 \pi_\pm$.
A direct computation shows that $T\pi_\pm^{\ox 2} \, w_{\He}|_U $ and $T\pi_\pm^{\ox 2} \, w_{G^*}|_U$ coincide up to a sign with $T\pi_\pm^{\ox 2} \, w_G|_U$ for the Poisson bivectors $w_{\He}$ from \eqref{eq:bivector_heisenberg_double} and $w_{G^*}$ from \eqref{eq:bivector_dual_poisson_lie_group}.
The following lemma summarizes these results.

\begin{lemma}[Poisson properties of $\pi_\pm$]
  \label{lemma:projectionLocallyPoisson}
  There is an open neighbourhood $U \subseteq G$ of the unit and
  \begin{compactenum}
    \item
      all $g \in U$ can be uniquely factorized as $g = g_- g_+$ with $g_\pm \in G_\pm$,
    \item
      the map $\pi_- : U \to G_- \subseteq G, g \mapsto g_-$ is Poisson if $U$ is equipped with the Poisson bivector $w$ from \eqref{eq:SklyaninBivector}, $w_{G^*}$ from \eqref{eq:bivector_dual_poisson_lie_group} or $w_{\He}$ from \eqref{eq:bivector_heisenberg_double},
    \item
      the map $\pi_+ : U \to G_+ \subseteq G, g \mapsto g_+$ is Poisson if $U$ is equipped with $w$ and anti-Poisson for the Poisson bivectors $w_{G^*}$ and $w_{\He}$.
  \end{compactenum}
\end{lemma}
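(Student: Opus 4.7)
The plan is to prove (i) via the local diffeomorphism from Lemma \ref{lemma:dual_poisson_lie_group} and then to establish (ii) and (iii) one bivector at a time: the case $w$ is essentially the Lu--Weinstein result already invoked in the paragraph preceding the lemma, whereas $w_{\He}$ and $w_{G^*}$ require a direct computation.

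For (i), the map $S : G^* \to G$, $\beta \mapsto S_-(\beta)^{-1} S_+(\beta)$ is a local diffeomorphism at $1 \in G^*$ by Lemma \ref{lemma:dual_poisson_lie_group} and sends $1 \mapsto 1$. Taking $U$ to be a sufficiently small neighbourhood of $1 \in G$ contained in the image of $S$, each $g \in U$ has a unique preimage $\beta$, yielding the decomposition $g = g_- g_+$ with $g_- = S_-(\beta)^{-1} \in G_-$ and $g_+ = S_+(\beta) \in G_+$; uniqueness is ensured by shrinking $U$ so that $G_- \cap G_+ \cap U = \{1\}$.

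For (ii) and (iii) with respect to $w$: the Lu--Weinstein result identifies the (local) right action $(g_-, g_+) \mapsto \pi_-(g_+^{-1} g_-)$ with the right dressing action of $(G_+, -w_{G_+})$ on $G_-$. As dressing actions are Poisson and left-multiplication by $G_-$ on $(G,w)$ is Poisson, combining the two gives $T\pi_-^{\ox 2} w = w_{G_-} \0 \pi_-$, so $\pi_-$ is Poisson for $w$; the statement for $\pi_+$ follows by swapping the roles of $\pm$.

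For $w_{\He}$ and $w_{G^*}$, the strategy is to compute the differences $w_{\He} - w$, $w_{G^*} - w$ (respectively $w_{\He} + w$, $w_{G^*} + w$) and verify that they are annihilated by $T\pi_-^{\ox 2}$ (respectively $T\pi_+^{\ox 2}$). Two ingredients enter: first, since $\g = D(\h)$ the classical $r$-matrix satisfies $r \in \g_- \ox \g_+$ and hence $r_{21} \in \g_+ \ox \g_-$, which pins down the subspaces in which $TL_g^{\ox 2} r$, $TR_g^{\ox 2} r$, $TL_g \ox TR_g \, r_{21}$ etc.\ live; second, at $g = g_- g_+ \in U$ the tangent map $T_g \pi_-$ annihilates $TL_g(\g_+)$, because right-multiplying $g$ by $\exp(tX)$ with $X \in \g_+$ leaves $g_-$ unchanged, while symmetrically $T_g \pi_+$ annihilates $TR_g(\g_-)$. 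Together these kill the cross terms $-TL_g \ox TR_g \, r_{21}$ and $TR_g \ox TL_g \, r$ appearing in $w_{G^*}$ outright, reducing the two cases to a comparison of $T\pi_\pm^{\ox 2} w$ with $T\pi_\pm^{\ox 2} w_{\He}$.

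The main obstacle is the algebraic bookkeeping: the remaining identity $T\pi_-^{\ox 2}(w_{\He} - w) = 0$ and its analogue $T\pi_+^{\ox 2}(w_{\He} + w) = 0$ unfold into several summands once $r_a = \frac{1}{2}(r - r_{21})$ is expanded, and one must track which tensor factors end up in the kernel of $T\pi_\pm$ and which in its image. A convenient reformulation is to push the three bivectors through the local isomorphism $\mu : G_- \x G_+ \to U$ and compare their images pointwise; there the product structure makes the $\g_- \oplus \g_+$ decomposition transparent, and the required identities reduce to formal manipulations in the Lie bialgebra $\g = \g_- \oplus \g_+$.
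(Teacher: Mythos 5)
Your proposal matches the paper's own argument essentially step for step: the paper also obtains (i) from the local diffeomorphism $S$ of Lemma \ref{lemma:dual_poisson_lie_group}, deduces the case of $w$ from the Lu--Weinstein identification of $\pi_-$ with a dressing action, and handles $w_{\He}$ and $w_{G^*}$ by exactly the direct computation you describe, resting on $r \in \g_- \ox \g_+$ and the facts that $T_g\pi_-$ annihilates $TL_g(\g_+)$ and $T_g\pi_+$ annihilates $TR_g(\g_-)$ (the identities recorded in the proof of Lemma \ref{lemma:global_double_projections_r_matrix}). The final bookkeeping is even lighter than you anticipate, since $w_{\He} - w = -2\,TL_g^{\ox 2}\,r_a$ and $w_{\He} + w = -2\,TR_g^{\ox 2}\,r_a$ are each annihilated outright by $T\pi_-^{\ox 2}$ and $T\pi_+^{\ox 2}$, respectively.
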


\begin{definition}
  \label{def:global_double_poisson_lie_group}
  A double Poisson-Lie group $G$ is called a \textbf{global double Poisson-Lie group} if there are Poisson-Lie subgroups $G_+, G_- \subseteq G$ with tangent Lie bialgebras $\g_+, \g_- \subseteq \g$, respectively, and the map $\mu : G_+ \x G_- \to G, (g_+, g_-) \mapsto g_- g_+ $ is a diffeomorphism.
\end{definition}

\begin{example}\cite[Theorem 3.7]{luweinstein1990}
  Let $G$ be a connected and simply connected double Poisson-Lie group and $G_+, G_-$ the connected Poisson-Lie subgroups with tangent Lie bialgebra $\g_+$ and $\g_-$, respectively.
  If $G_-$ is compact and $G_+$ is closed in $G$, then the map $\mu: G_+ \x G_- \to G$ from Definition \ref{def:global_double_poisson_lie_group} is a diffeomorphism.
\end{example}

We conclude this section by deriving certain identities for the projections $\pi_\pm$ that will be frequently used in the following.

\begin{lemma}[Computation rules for $\pi_\pm$]
  Let $G$ be a global double Poisson-Lie group. Then:
  \label{lemma:global_double_projections_properties}
  \begin{compactenum}
  \item
    $\displaystyle \qquad\qquad \pi_-(g \, \pi_-(h)) = \pi_-(gh) \quad \text{and} \quad \pi_+(\pi_+(g) \, h) = \pi_+(gh) \Forall g, h \in G $,
  \item
    $\displaystyle \qquad\qquad \pi_-(xg) = x \pi_-(g) \quad \text{and} \quad \pi_+(g\alpha) = \pi_+(g) \alpha \Forall g \in G, x \in G_-, \alpha \in G_+ $.
  \end{compactenum}
\end{lemma}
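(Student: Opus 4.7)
The plan is to exploit the global factorization $G \cong G_- \times G_+$ directly: since $\mu : G_+ \times G_- \to G$, $(g_+, g_-) \mapsto g_- g_+$ is a diffeomorphism, every $g \in G$ admits a unique expression $g = g_- g_+$ with $g_\pm \in G_\pm$, and by construction $\pi_-(g) = g_-$ and $\pi_+(g) = g_+$. Both statements of the lemma then reduce to playing off this uniqueness against the fact that $G_+$ and $G_-$ are subgroups.

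For part (ii), I would start with the simpler statement. Given $g = g_- g_+$ and $x \in G_-$, observe that $xg = (xg_-) g_+$ with $xg_- \in G_-$ (since $G_-$ is a subgroup) and $g_+ \in G_+$. By uniqueness of the factorization this is the canonical decomposition of $xg$, so $\pi_-(xg) = xg_- = x\pi_-(g)$. The identity $\pi_+(g\alpha) = \pi_+(g)\alpha$ for $\alpha \in G_+$ follows by the symmetric argument using that $G_+$ is a subgroup.

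For part (i), I would apply (ii) twice. Writing $h = h_- h_+$, one has $g \pi_-(h) = g h_-$. Decomposing $g = g_- g_+$ and then letting $g_+ h_- = (g_+ h_-)_- (g_+ h_-)_+$ be the canonical factorization, one gets
\begin{equation*}
gh = g_- g_+ h_- h_+ = g_- (g_+ h_-)_- (g_+ h_-)_+ h_+,
\end{equation*}
and since $g_- (g_+ h_-)_- \in G_-$ and $(g_+ h_-)_+ h_+ \in G_+$, uniqueness forces $\pi_-(gh) = g_- (g_+ h_-)_-$. The same decomposition computed for $g h_- = g_- g_+ h_-$ gives $\pi_-(g h_-) = g_- (g_+ h_-)_-$, and thus $\pi_-(gh) = \pi_-(g h_-) = \pi_-(g\pi_-(h))$. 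The second identity $\pi_+(\pi_+(g) h) = \pi_+(gh)$ is obtained analogously by factoring $g$ first on the left.

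There is no real obstacle here; the only subtlety is bookkeeping the order of factors correctly, since the diffeomorphism $\mu$ in Definition \ref{def:global_double_poisson_lie_group} uses the convention $g = g_- g_+$ (with $G_-$ on the left). Once that convention is fixed, both parts follow purely from the uniqueness of the decomposition and the subgroup property of $G_\pm$; no Poisson-geometric input is needed for this lemma.
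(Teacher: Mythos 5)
Your proof is correct and takes exactly the approach the paper intends: the paper's own proof is the one-line remark that the lemma ``follows directly from the unique decomposition $g = g_- g_+$,'' and your argument simply spells out that uniqueness-plus-subgroup reasoning in detail, with the factor ordering convention handled correctly.
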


\begin{proof}
  This follows directly from the unique decomposition $g = g_- g_+$ for all elements $g \in G$.
\end{proof}

\begin{lemma}[Relations of $\pi_\pm$ with $w_{\He}$  and the $r$-matrix]
  Let $G$ be a global double Poisson-Lie group with classical $r$-matrix $r \in \g \ox \g$.
  \label{lemma:global_double_projections_r_matrix}
  \begin{compactenum}
  \item
    For all $d,g \in G$ the following equations hold:
    \begin{align}
      \label{eq:global_double_projections_r_matrix_i_1}
      (\id \ox T(\pi_+ \0 R_d)) \, w_{\He}  (g) &= - (\id \ox T(\pi_+ \0 R_d)) \0 TL_g^{\ox 2} \, r \\
      \label{eq:global_double_projections_r_matrix_i_2}
      (T(\pi_- \0 L_d) \ox \id) \, w_{\He}  (g) &= - (T(\pi_- \0 L_d) \ox \id) \0 TR_g^{\ox 2} \, r \, .
    \end{align}
  \item
    For all $x \in G_-, \alpha \in G_+$ one has:
    \begin{align}
      \label{eq:global_double_projections_r_matrix_ii_1}
      (\id \ox T(\pi_+ \0 R_x)) \, r &= (\Ad_x \ox \id) \, r \\
      \label{eq:global_double_projections_r_matrix_ii_2}
      (T(\pi_- \0 L_\alpha) \ox \id) \, r &= (\id \ox \Ad_{\alpha^{-1}}) \, r \, .
    \end{align}
  \end{compactenum}
\end{lemma}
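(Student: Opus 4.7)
The plan is to prove (ii) first by direct computation and then to reduce (i) to a clean algebraic consequence of $\Ad$-invariance of $r_s$ via Lemma~\ref{lemma:global_double_projections_properties}(i).

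For (ii), first compute the tangent maps at the unit. For $x \in G_-$ and $X \in \g_+$, write $\exp(tX) \cdot x = g_-(t) g_+(t)$ with $g_-(0) = x$, $g_+(0) = 1$; differentiating at $t = 0$ and conjugating by $x^{-1}$ decomposes $\Ad_{x^{-1}} X \in \g$ into its $\g_\pm$ components, yielding $T_1(\pi_+ \0 R_x)(X) = (\Ad_{x^{-1}} X)_+$. For $X \in \g_-$, the curve $\exp(tX) \, x$ stays in $G_-$, so $\pi_+$ is constant and the tangent map vanishes. The identity \eqref{eq:global_double_projections_r_matrix_ii_1} then amounts to
\[
  \sum_i e_i \ox (\Ad_{x^{-1}} e^i)_+ \;=\; \sum_i \Ad_x(e_i) \ox e^i,
\]
which I would verify by pairing with $\beta \ox W \in \g_+ \x \g_-$: using that elements of $\g_+$ pair trivially with $\g_+$ and the $\Ad$-invariance of the bilinear form \eqref{eq:symmetric_bilinear_form}, both sides reduce to $\beta(\Ad_x W)$. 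The proof of \eqref{eq:global_double_projections_r_matrix_ii_2} is symmetric, giving $T_1(\pi_- \0 L_\alpha)(X) = (\Ad_\alpha X)_-$ for $X \in \g_-$ and verifying both sides of (ii.2) reduce to $\beta((\Ad_\alpha W)_-)$ under the same pairing.

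For \eqref{eq:global_double_projections_r_matrix_i_1}, the key observation is that Lemma~\ref{lemma:global_double_projections_properties}(i) gives $\pi_+ \0 R_d = \pi_+ \0 R_d \0 \pi_+$, whose derivative is the factorization
\[
  T_g(\pi_+ \0 R_d) \;=\; T_{\pi_+(g)}(\pi_+ \0 R_d) \0 T_g \pi_+ \, .
\]
Hence, writing $Z := w_\He(g) + TL_g^{\ox 2} \, r$, the equation $(\id \ox T_g(\pi_+ \0 R_d)) \, Z = 0$ follows from $(\id \ox T_g \pi_+) \, Z = 0$, i.e., from the case $d = 1$. Now $\Ad$-invariance of $r_s$ together with $TL_g = TR_g \0 \Ad_g$ gives $TL_g^{\ox 2} \, r_s = TR_g^{\ox 2} \, r_s$; combined with $r = r_s + r_a$ and the formula for $w_\He$ this collapses $Z$ to $TR_g^{\ox 2} \, r_{21}$. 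Since $r_{21} = \sum_i e^i \ox e_i$, the second slot of $(\id \ox T_g \pi_+) \, Z$ is $T_g \pi_+(TR_g e_i)$ with $e_i \in \g_-$; but by Lemma~\ref{lemma:global_double_projections_properties}(i), $\pi_+(\exp(t e_i) \, g) = \pi_+(\pi_+(\exp(t e_i)) \, g) = \pi_+(g)$ (using $\pi_+(G_-) = \{1\}$), so this derivative vanishes. Equation \eqref{eq:global_double_projections_r_matrix_i_2} is proved by the mirror argument, using $\pi_-(gh) = \pi_-(g \, \pi_-(h))$ and $\pi_-(G_+) = \{1\}$.

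The delicate step is obtaining the explicit formulas for $T_1(\pi_\pm \0 R_x/L_\alpha)$ from the differentiated unique factorization, and setting up the reduction to $d = 1$ via the derivative of $\pi_+ \0 R_d = \pi_+ \0 R_d \0 \pi_+$. Once these are in hand, the rest of the argument is a routine manipulation using $\Ad$-invariance of $r_s$ and the basic identities in Lemma~\ref{lemma:global_double_projections_properties}.
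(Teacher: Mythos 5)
Your proof is correct and uses exactly the two ingredients the paper's own (very terse) proof cites: the $\Ad$-invariance of $r_s$ (to collapse $w_\He(g)+TL_g^{\ox 2}r$ to $TR_g^{\ox 2}r_{21}$, and via the dual bilinear form in part (ii)) and the vanishing of $T(\pi_+\0 R_d)$ on $\g_-$ and $T(\pi_-\0 L_d)$ on $\g_+$, which is the paper's identity coming from $r\in\g_-\ox\g_+$. The explicit tangent-map formulas $T_1(\pi_+\0 R_x)=(\Ad_{x^{-1}}\,\cdot\,)_+$, $T_1(\pi_-\0 L_\alpha)=(\Ad_\alpha\,\cdot\,)_-$ and the reduction to $d=1$ via $\pi_+\0 R_d=\pi_+\0 R_d\0\pi_+$ are correct elaborations of the same argument rather than a different route.
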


\begin{proof}
  Both statements can be shown using the $\Ad$-invariance of $r_s = \frac{1}{2}(r + r_{21})$ and the identities
  \[
    (\id \ox T(\pi_- \0 L_d)) \, r = (T(\pi_+ \0 R_d) \ox \id) \, r = 0 \Forall d \in G
  \]
  that follow from $r \in \g_- \ox \g_+$.
\end{proof}

\subsection{Fock and Rosly's Poisson structure}
\label{subsection:fock_and_roslys_poisson_structure}

In \cite{atiyahbott83}, Atiyah and Bott showed that the moduli space $\mathcal M = \Hom(\pi_1(S), G)/G$ of flat $G$-bundles on a compact oriented surface $S$ has a natural symplectic structure if the Lie group $G$ is equipped with a non-degenerate symmetric $\Ad$-invariant bilinear form.
A combinatorial description of this symplectic structure in terms of intersection points of curves has been given by Goldman \cite{goldman1984, goldman1986}.
Fock and Rosly have shown in \cite{fockrosly98} that on a compact oriented surface $S$ with boundary the symplectic structure on $\mathcal M$ can be represented as a graph gauge theory on a ciliated ribbon graph $\Gamma$ embedded into $S$.
In this section we briefly present these results.

Let $G$ be a quasi-triangular Poisson-Lie group with tangent Lie bialgebra $\g$ and $\Gamma$ a ciliated ribbon graph.
Consider the product manifold $\FR := G^{\x E}$ obtained by associating a copy of $G$ to every edge $e \in E$.
Denote by $\pi_e: \FR \to G$ the projection associated with the edge $e \in E$.
Write $I_v$ for the ordered set of edge ends at a vertex $v \in V$ and denote the corresponding edges by $e_i$ for $i \in I_v$.
Define for $v \in V$ and $i, j \in I_v$ the element $s_v^{ij} \in \left\{ 0, \pm 1 \right\}$ by
\[
  s_v^{ij} =
  \begin{cases}
    1 & \text{ if } i > j \\
    0 & \text{ if } i = j \\
    -1 & \text{ if } i < j \, .
  \end{cases}
\]
For $i \in I_v$ let $V_i : \g \to \Gamma(T \FR)$ be the map that associates to the Lie algebra element $X \in \g$ the vector field $V_i(X)$ defined by
\[
  T\pi_e (V_i\, X \; (\gamma)) =
  \begin{cases}
    0 & \text{ if } e \neq e_i  \\
    -TR_{\pi_e(\gamma)} X & \text{ if } e = e_i \text{ and $i$ is incoming at $v$} \\
    TL_{\pi_e(\gamma)} X & \text{ if } e = e_i \text{ and $i$ is outgoing from $v$.}
  \end{cases}
\]
Choose for every vertex $v \in V$ a classical $r$-matrix of the form $r(v) = r_a(v) + r_s$ with \emph{fixed} $\Ad$-invariant symmetric component $r_s \in \g \ox \g$.
Define the bivector
\begin{equation}
  w_{\FR} := \sum_{v \in V} w_v
  \quad \text{with} \quad
  w_v := - \sum_{i,j \in I_v} V_i \ox V_j \, (r_a(v) + s_v^{ij} r_s) \, .
  \label{eq:fockrosly_vertex_bivector}
\end{equation}

To a vertex $v \in V$ we associate the action $\rhd_{v}^{\FR} : G \x \FR \to \FR$ with
\begin{equation}
  \label{eq:fock_rosly_vertex_action}
  \pi_e( h \rhd_v^{\FR} \gamma) =
  \begin{cases}
    h \pi_e(\gamma) h^{-1} & \text{if $e$ is a loop at $v$} \\
    h \pi_e(\gamma) & \text{if $e$ is incoming at $v$} \\
    \pi_e(\gamma) h^{-1} & \text{if $e$ is outgoing from $v$} \\
    \pi_e(\gamma) & \text{otherwise.} \\
  \end{cases}
\end{equation}

Let $G_v$ be the quasi-triangular Poisson-Lie group with the Poisson bivector from \eqref{eq:SklyaninBivector} for the classical $r$-matrix $r(v)$.

\begin{proposition}\cite[Proposition 3]{fockrosly98}
  \label{proposition:fockrosly_poisson_g_space}
  The bivector $w_{\FR}$ \eqref{eq:fockrosly_vertex_bivector} defines a Poisson structure on $\FR$.
  For each $v \in V$ the action $\rhd_v^{\FR}: G_v \x \FR \to \FR$ is Poisson, so that $\FR$ is a Poisson $G_v$-space.\footnote{
    In \cite{fockrosly98}, the authors require the fixed symmetric component $r_s$ to be non-degenerate.
    Proposition \ref{proposition:fockrosly_poisson_g_space} holds without this assumption -- non-degeneracy is needed only later for Proposition \ref{proposition:fockrosly_goldman}.
  }
\end{proposition}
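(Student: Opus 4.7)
The proof has two parts: verifying the Jacobi identity $[w_{\FR}, w_{\FR}] = 0$ for the Schouten bracket, and verifying the Poisson condition for each vertex action $\rhd_v^{\FR}$. Both rely on the decomposition $w_{\FR} = \sum_{v \in V} w_v$ and split naturally into same-vertex versus different-vertex contributions.

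\textbf{Jacobi identity.} The plan is to expand $[w_{\FR}, w_{\FR}] = \sum_{v, v'} [w_v, w_{v'}]$ and first dispose of the off-diagonal terms $[w_v, w_{v'}]$ with $v \neq v'$. The vector field $V_i$ for $i \in I_v$ acts nontrivially only on the $G$-factor of the edge $e_i$, so for $v \neq v'$ the fields entering $w_v$ and $w_{v'}$ either act on disjoint factors (and commute) or act on a common factor $G_e$ for an edge $e$ joining $v$ and $v'$. In the latter case $e$ is not a loop, so one of its ends is incoming at one vertex and outgoing at the other, hence the relevant $V_i$ and $V_j$ are a right-invariant and a left-invariant vector field on $G_e$, respectively, which commute. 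For the diagonal terms $[w_v, w_v]$, Leibniz for the Schouten bracket reduces the calculation, after collecting coefficients of triple tensor products $V_i \otimes V_j \otimes V_k$, to a combination of the CYBE for $r(v) = r_a(v) + r_s$ (for distinct indices) and the $\Ad$-invariance of $r_s$ (for repeated indices), weighted by the ordering coefficients $s_v^{ij}$. Both vanish because $r(v)$ is a classical $r$-matrix and $r_s$ is $\Ad$-invariant.

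\textbf{Poisson action.} Fix $v \in V$ and split $w_{\FR} = w_v + \sum_{v' \neq v} w_{v'}$. For $v' \neq v$, the action $\rhd_v^{\FR}$ multiplies on the side at $v$ of any edge shared between $v$ and $v'$, which is the opposite side from where $V_j$ for $j \in I_{v'}$ acts. If such an edge $e$ is outgoing from $v$, then the action is by right multiplication and the relevant fields at $v' = t(e)$ are right-invariant, hence invariant under that multiplication; the incoming case is symmetric. Edges disjoint from $v$ are unaffected, so $w_{v'}$ is $G_v$-invariant and contributes equally on both sides of the Poisson map condition. The essential computation therefore concerns $w_v$: the task is to verify
\[
  (T\rhd_v^{\FR} \otimes T\rhd_v^{\FR}) \bigl( w_{G_v}(h) + w_v(\gamma) \bigr) = w_v(h \rhd_v^{\FR} \gamma).
\]
Expanding $w_{G_v}(h) = (TL_h^{\otimes 2} - TR_h^{\otimes 2}) \, r(v)$ and using the definition of $w_v$, this identity collapses to the $\Ad$-invariance of $r_s$ together with the standard transformation of left- and right-invariant vector fields under one-sided translation.

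\textbf{Main obstacle.} The delicate bookkeeping lies in the ordering numbers $s_v^{ij}$ and their asymmetric treatment of $r_s$ versus $r_a(v)$: the antisymmetric component is summed symmetrically in $i,j$, while the symmetric component is weighted by the skew-symmetric $s_v^{ij}$. In the Jacobi computation, one must distinguish whether the three indices $i, j, k \in I_v$ are strictly ordered or coincide, and loops at $v$ (where a single edge contributes two elements of $I_v$) require separate verification because the action is then conjugation rather than one-sided multiplication. Once these cases are carefully tracked, everything reduces to the $r$-matrix identities for $r(v)$ and the invariance properties of left- and right-invariant vector fields.
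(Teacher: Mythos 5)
The paper offers no proof of this proposition: it is quoted verbatim from \cite{fockrosly98} (Proposition 3 there), with only a footnote observing that non-degeneracy of $r_s$ is not needed. Your sketch reconstructs the standard argument of that reference correctly — the off-diagonal Schouten brackets $[w_v,w_{v'}]$ vanish because the constituent fields are either supported on disjoint factors or are a left- and a right-invariant field on a shared non-loop edge, the diagonal term reduces to the CYBE for $r(v)$ and $\Ad$-invariance of $r_s$, and for the action one uses equivariance of $w_{v'}$ for $v'\neq v$ together with $(L_h)_*X^R=(\Ad_hX)^R$, $(R_{h^{-1}})_*X^L=(\Ad_hX)^L$ to match $T(\cdot\rhd\gamma)^{\ox 2}w_{G_v}(h)$ against $(1-\Ad_h^{\ox 2})$ applied to $r(v)$ — and you rightly single out the $s_v^{ij}$ bookkeeping and loops as the only delicate points.
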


\begin{definition}
  \label{def:fock_rosly_space}
  We call a Poisson manifold of the type $(\FR, w_{\FR})$ a \textbf{Fock-Rosly space} associated to the graph $\Gamma$.
\end{definition}

Consider the compact oriented manifold $S$ that is obtained by gluing annuli to the faces of $\Gamma$ (or, equivalently, by thickening $\Gamma$ as in Definition \ref{def:thickening}) and let
\[
  C^\infty(\FR, \R)^{inv} := \left\{ f \in C^\infty(\FR, \R) \; \middle| \; f(g \rhd^{\FR}_v \gamma) = f(\gamma) \Forall \gamma \in \FR , v \in V, g \in G_v \right\} \, .
\]

\begin{proposition}\cite[Proposition 5]{fockrosly98}
  \label{proposition:fockrosly_goldman}
  The set of invariant functions $C^\infty(\FR, \R)^{inv}$ is a Poisson subalgebra of $C^\infty(\FR, \R)$.
  If the symmetric component $r_s$ is non-degenerate, this Poisson subalgebra is isomorphic to the Poisson algebra of functions on the moduli space $\mathcal M = \Hom(\pi_1 (S), G)/G$ for the non-degenerate symmetric Ad-invariant bilinear form dual to $r_s$.\footnote{
    $G$ is required to be a reductive complex Lie group in \cite{fockrosly98}.
    However, the proof presented there is applicable to any (real) Lie group with a non-degenerate symmetric $\Ad$-invariant bilinear form.
  }
\end{proposition}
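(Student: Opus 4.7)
The plan is to separate the statement into its algebraic half (the bracket closes on invariants) and its geometric half (the identification with functions on $\mathcal M$).

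For the subalgebra claim I would apply the general principle that the invariants of a Poisson action form a Poisson subalgebra. Let $f_1, f_2 \in C^\infty(\FR, \R)^{inv}$ and fix $v \in V$. Invariance means $(\rhd_v^{\FR})^* f_i = \pi_{\FR}^* f_i$, where $\pi_{\FR} : G_v \x \FR \to \FR$ is the Cartesian projection (which is Poisson for the product Poisson structure). Proposition \ref{proposition:fockrosly_poisson_g_space} guarantees that $\rhd_v^{\FR}$ is Poisson, hence
\begin{align*}
  (\rhd_v^{\FR})^* \{f_1, f_2\}_{\FR} &= \{(\rhd_v^{\FR})^* f_1, (\rhd_v^{\FR})^* f_2\}_{G_v \x \FR} \\
  &= \{\pi_{\FR}^* f_1, \pi_{\FR}^* f_2\}_{G_v \x \FR} = \pi_{\FR}^* \{f_1, f_2\}_{\FR}.
\end{align*}
Thus $\{f_1, f_2\}_{\FR}$ is again invariant under $\rhd_v^{\FR}$, and running this for every $v \in V$ closes $C^\infty(\FR, \R)^{inv}$ under the Poisson bracket.

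For the identification with $C^\infty(\mathcal M, \R)$ I would fix a basepoint $v_0 \in V$ and define a holonomy map $\mathrm{Hol} : \FR \to \Hom(\pi_1(S, v_0), G)$. To the class of a loop $e_n^{\varepsilon_n} \co \dots \co e_1^{\varepsilon_1}$ in the graph groupoid $\G(\Gamma)$ starting and ending at $v_0$ it assigns the product $\pi_{e_n}(\gamma)^{\varepsilon_n} \cdots \pi_{e_1}(\gamma)^{\varepsilon_1} \in G$. Since $S$ deformation-retracts onto $\Gamma$ this descends to $\pi_1(S, v_0) \cong \pi_1(\Gamma, v_0)$, which is free. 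A spanning tree of $\Gamma$ exhibits $\mathrm{Hol}$ as a surjective submersion whose fibres are the orbits of the actions $\rhd_v^{\FR}$ for $v \neq v_0$, while the residual $\rhd_{v_0}^{\FR}$-action is intertwined with the conjugation action of $G$ on $\Hom(\pi_1(S, v_0), G)$. Pullback along $\mathrm{Hol}$ then identifies $C^\infty(\mathcal M, \R)$ with $C^\infty(\FR, \R)^{inv}$ as commutative algebras.

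The main obstacle is to show that this algebra isomorphism intertwines the brackets, which is where non-degeneracy of $r_s$ enters: its dual form $r_s^{-1}$ is the $\Ad$-invariant bilinear form underlying the Atiyah-Bott symplectic structure on $\mathcal M$, whose combinatorial description in terms of curve intersections is Goldman's formula. One must verify that the local vertex bivectors $w_v$ from \eqref{eq:fockrosly_vertex_bivector}, when paired with differentials of gauge-invariant holonomies, reproduce precisely Goldman's intersection pairing at each vertex. Two simplifications help: first, the antisymmetric contributions $r_a(v)$ must drop out of the bracket on invariants, since their bivector contribution lies in directions tangent to the $G_v$-orbits; this also explains why the Poisson algebra does not depend on the $r_a(v)$ left free by Proposition \ref{proposition:fockrosly_poisson_g_space}. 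Second, the sign $s_v^{ij}$ combined with the cyclic order at $v$ encodes the orientations of the intersections produced by the cilium at $v$, so only the $r_s$-piece of $w_v$ survives and reproduces Goldman's signed pairing. Matching these contributions and controlling the sign combinatorics is the technically demanding step, carried out in detail in \cite{fockrosly98}.
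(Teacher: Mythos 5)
The paper does not prove this proposition; it is quoted verbatim from Fock and Rosly and used as a black box, so there is no in-paper argument to compare against. Judged on its own terms, your proposal reconstructs the standard proof correctly. The subalgebra half is complete: the identity $(\rhd_v^{\FR})^*f_i=\pi_{\FR}^*f_i$ characterizing invariance, combined with the Poisson property of $\rhd_v^{\FR}$ (Proposition \ref{proposition:fockrosly_poisson_g_space}) and of the projection $\pi_{\FR}$ for the product structure, is exactly the general argument that invariants of a Poisson action close under the bracket, and it applies here without modification. The second half is a correct outline of Fock and Rosly's route: since $S$ retracts onto $\Gamma$, the based holonomy map $\FR\to\Hom(\pi_1(\Gamma,v_0),G)\cong G^{E\setminus T}$ (for a spanning tree $T$, using connectedness of $\Gamma$) has the orbits of the actions at $v\neq v_0$ as fibres and intertwines $\rhd_{v_0}^{\FR}$ with conjugation, which gives the algebra isomorphism; and your observation that the $r_a(v)$-part of $w_v$ contracts differentials of invariant functions against fundamental vector fields of $\rhd_v^{\FR}$ and therefore vanishes is both correct and the right explanation for why the answer is independent of the choices $r_a(v)$. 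What remains unproven in your write-up is the one genuinely computational step — that the surviving $r_s$-contributions, summed over vertices, reproduce Goldman's intersection formula for the bilinear form dual to $r_s$ — which you correctly isolate and defer to \cite{fockrosly98}; since the proposition is itself a citation, this is an acceptable place to stop, but be aware that this computation, not the structural setup, is the actual content of the theorem.
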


To relate Fock-Rosly spaces on different graphs, one uses graph transformations and constructs associated Poisson maps.
In this article we only need a subset of the transformations from \cite{fockrosly98}:
\begin{compactenum}
\item \textbf{ Reversing  an edge.}
  Consider the graph $\Gamma'$ where we replace an edge $h \in E$ with an edge $h'$ with opposite orientation (while keeping the ordering at the vertices intact).
  Let $\FR'$ be the Fock-Rosly space for $\Gamma'$ (with  identical choice of $r$-matrices).
  To the transformation $\Gamma \to \Gamma'$ we associate the map $\phi : \FR \to \FR'$  given by
  \[
    \pi'_e (\phi(\gamma)) =
    \begin{cases}
      \pi_{h}(\gamma)^{-1} & \text{ if } e = h' \\
      \pi_e(\gamma) & \text{ else,}
    \end{cases}
  \]
  where $\pi'_e : \FR' \to G$ is the edge projection map associated to $e \in E'$.
\item \textbf{Gluing two edges along a bivalent vertex.}\footnote{
  This is a special case of the transformation from \cite{fockrosly98} that contracts an edge.}
  Let $h_1, h_2$ be distinct edges incident at the bivalent vertex $v := t(h_1) = s(h_2)$.
  In the transformed graph $\Gamma'$, replace these edges by the edge $h': s(h_1) \to t(h_2)$ and erase $v$.
  The associated map $\phi : \FR \to \FR'$ is defined by:
  \begin{equation}
    \label{eq:fock_rosly_gluing_edges}
    \pi'_e (\phi(\gamma)) =
    \begin{cases}
      \pi_{h_2}(\gamma) \, \pi_{h_1}(\gamma) & \text{ if } e = h' \\
      \pi_e(\gamma) & \text{ else.}
    \end{cases}
  \end{equation}
\item \textbf{Erasing an edge.}
  Choose an edge $h \in E$ and let $\Gamma'$ be the graph obtained by removing $h$ from $\Gamma$.
  Define the associated map $\phi: \FR \to \FR'$ by
  \begin{equation}
    \label{eq:fock_rosly_erasing_edge}
    \pi'_e (\phi(\gamma)) = \pi_e(\gamma) \qquad \Forall e \in E \setminus \left\{ h \right\} \, .
  \end{equation}
\end{compactenum}

Denote the vertex set of the graph $\Gamma'$ obtained by a graph transformation by $V'$.
\begin{proposition}\cite[Proposition 4]{fockrosly98}
  \label{proposition:fockrosly_graph_trafos}
  For all $v \in V'$, the maps associated to these graph transformations are homomorphisms  of Poisson $G_v$-spaces with respect to the actions $\rhd^{\FR}_v : G_v \x \FR \to \FR$ from Equation \eqref{eq:fock_rosly_vertex_action}.
\end{proposition}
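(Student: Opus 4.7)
The plan is to verify, for each of the three graph transformations and each $v \in V'$, that the associated map $\phi : \FR \to \FR'$ is (a) a Poisson map and (b) intertwines the actions $\rhd_v^\FR$ and $\rhd_v^{\FR'}$. Equivariance is purely bookkeeping: for each $e \in E'$ I would apply $\pi'_e$ to both sides of $\phi(g \rhd_v^\FR \gamma) = g \rhd_v^{\FR'} \phi(\gamma)$ and split into cases according to whether $e$ is incoming, outgoing, a loop at $v$, or not incident at $v$, in $\Gamma$ and in $\Gamma'$. The only algebraic inputs needed are $(gh)^{-1} = h^{-1} g^{-1}$ (for reversal) and the relations $s(h') = s(h_1), t(h') = t(h_2)$ together with associativity (for gluing); loops pose no additional difficulty.

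For the Poisson property I would push forward the bivector $w_\FR = \sum_{v \in V} w_v$ by $T\phi^{\ox 2}$ and match it with $w_{\FR'} \circ \phi$. Since each $w_v$ is built from vector fields $V_i(X)$ for $i \in I_v$, this reduces to describing $T\phi \circ V_i$. For reversal of an edge $h$ to $h'$, the differentiated inversion identities $T_g\eta \circ TL_g = -TR_{g^{-1}}$ and $T_g\eta \circ TR_g = -TL_{g^{-1}}$ (where $\eta(g) = g^{-1}$) immediately give $T\phi \circ V_{i_s(h)} = V_{i_t(h')}$ and $T\phi \circ V_{i_t(h)} = V_{i_s(h')}$, so that each $w_v$ maps vertex-by-vertex to the corresponding bivector in $\FR'$ (the cyclic ordering and the choice of $r_a(v)$ being preserved). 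For erasure of $h$, $T\phi \circ V_{i_s(h)} = T\phi \circ V_{i_t(h)} = 0$ while all other $V_i$ push forward to their analogues in $\FR'$; hence the $h$-legs of $w_{s(h)}, w_{t(h)}$ vanish and the remainder reassembles $w'_{s(h)}, w'_{t(h)}$.

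The main obstacle is the gluing case, where the contributions at the erased vertex $v = t(h_1) = s(h_2)$ must cancel after push-forward. Writing $g_1 = \pi_{h_1}(\gamma), g_2 = \pi_{h_2}(\gamma)$ and $g = g_2 g_1$, the Leibniz rule for multiplication yields on the $h'$-slot
\[
  T(\pi'_{h'} \circ \phi) V_{i_s(h_1)}(X) = TL_g X, \qquad T(\pi'_{h'} \circ \phi) V_{i_t(h_2)}(X) = -TR_g X,
\]
so the surviving-vertex contributions at $s(h_1) = s(h')$ and $t(h_2) = t(h')$ reassemble $w'_{s(h')}$ and $w'_{t(h')}$. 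At the erased vertex itself one computes
\[
  T(\pi'_{h'} \circ \phi) V_{i_t(h_1)}(X) = -TL_{g_2} TR_{g_1} X = -T(\pi'_{h'} \circ \phi) V_{i_s(h_2)}(X).
\]
Denoting this common map by $A$ and writing $1 = i_t(h_1), 2 = i_s(h_2)$ with $s_v^{12} = -s_v^{21} \in \{\pm 1\}$, substitution into $w_v = -\sum_{i,j \in I_v} V_i \ox V_j \,(r_a(v) + s_v^{ij} r_s)$ produces on the $h' \ox h'$-slot
\[
  A \ox A\,\bigl[-r_a(v) + (r_a(v) - r_s) + (r_a(v) + r_s) - r_a(v)\bigr] = 0,
\]
where the off-diagonal $r_s$-terms cancel by $s_v^{12} = -s_v^{21}$ and the four $r_a$-terms telescope. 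This cancellation, in which both the antisymmetric and symmetric halves of $r(v)$ drop out independently of their choice, is the crux; combined with the previous paragraph it completes $T\phi^{\ox 2} w_\FR = w_{\FR'} \circ \phi$.
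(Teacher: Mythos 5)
Your proof is correct. Note first that the paper itself gives no argument for this statement: it is quoted verbatim as Proposition 4 of Fock and Rosly, so there is no in-paper proof to compare against, and a direct verification like yours is exactly what is called for. Your computation is the standard one and all the essential points check out: the equivariance is indeed pure bookkeeping; for reversal and erasure the vector fields $V_i$ push forward to their counterparts (or to zero), so $w_v \mapsto w'_v$ vertex by vertex since the orderings, and hence the signs $s_v^{ij}$, are preserved; and the crux of the gluing case is handled correctly. At the erased bivalent vertex the two edge ends push forward to $\mp TL_{g_2}TR_{g_1}$, the four $r_a(v)$-terms cancel in pairs, and the two off-diagonal $r_s$-terms cancel because $s_v^{12} = -s_v^{21}$, so the whole contribution $T\phi^{\ox 2} w_v$ vanishes regardless of the choice of $r_a(v)$ — which is precisely why the transformed bivector needs no term for the deleted vertex. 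The only cases you leave implicit (the glued edges sharing a second vertex, so that $h'$ becomes a loop, or $h$ being a loop in the erasure case) go through by the same computation, since the loop contributions in $w'_{\FR}$ are built from the same two edge ends.
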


There is a natural notion of flatness for Fock and Rosly's graph gauge theory: an element $\gamma \in \FR$ is flat at a face $f$ if the holonomy around $f$ is trivial.
We express this by a more general functor on graph groupoid $\G(\Gamma_D)$ (Definition \ref{def:graph_basics} (ii)) of the thickening $\Gamma_D$ (Definition \ref{def:thickening}).
Consider the set of smooth maps $C^\infty(\FR, G)$ as a groupoid with a single object $*$ so that $\Hom(*,*) = C^\infty(\FR, G)$ and composition is given by pointwise multiplication.

\begin{definition}
  \label{def:fock_rosly_holonomy_functor}
  The functor $\Hol_{FR}: \G(\Gamma_D) \to C^\infty(\FR,G)$ is the unique functor defined by
  \begin{equation}
    \label{eq:fock_rosly_holonomy_functor}
    \Hol_{\FR} ( r(e) ) := \Hol_{\FR} (l(e)) := \pi_{e} \qquad \Hol_{\FR} (f(e)) := \Hol_{\FR} (b(e)) := (\gamma \mapsto 1_G) \, .
  \end{equation}
\end{definition}
This determines the functor $\Hol_{\FR}$ uniquely as $\G(\Gamma_D)$ is freely generated by the edges of $\Gamma_D$.

\subsection{Kitaev models}
\label{subsection:kitaev_models}

We summarize the construction and some important properties of Kitaev models, mostly following the presentation in \cite{meusburger16}.
First we introduce some Hopf-algebraic notions.

\begin{notation}
  Denote the multiplication and unit of $H$ by $m : H \ox H \to H$ and $1_H$, respectively, and the comultiplication, counit and antipode by $\Delta: H \to H \ox H$, $\epsilon : H \to \F$ and $S: H \to H$.
  We use Sweedler notation for the $(n-1)$-fold comultiplication $ \Delta^{(n-1)} (k) = \sum_{(k)} k_{(1)} \ox \dots \ox k_{(n)}$ for $k \in H $.
  The dual Hopf algebra $H^*$ of a finite-dimensional Hopf algebra is equipped with the multiplication $\Delta^* : (H \ox H)^* \cong H^* \ox H^* \to H^*$, unit $\epsilon^*$, comultiplication $m^* : H^* \to H^* \ox H^*$, counit $H^* \to \F, \alpha \mapsto \alpha(1_H)$ and antipode $S^* : H^* \to H^*$.
  By $H^{*cop}$ we denote the Hopf algebra $H^*$ with opposite comultiplication.
\end{notation}

\begin{theorem}
  \label{theorem:drinfeld_double}
  \cite{drinfeld1988}
  Let $H$ be a finite-dimensional Hopf algebra.
  There is a unique quasi-triangular Hopf algebra structure on the vector space $H^* \ox H$ such that $H \cong 1 \ox H$ and $H^{*cop} \cong H^{*cop} \ox 1$ are Hopf subalgebras and the $R$-matrix is given by $R = \id_H$ viewed as an element of $\in H \ox H^* \subseteq (H^* \ox H)^{\ox 2}$.
  Its multiplication is given by
  \[
    (\alpha \ox h) \cdot (\alpha' \ox h') := \sum_{(\alpha')} \sum_{(h)} \alpha'_{(3)} (h_{(1)}) \, \alpha'_{(1)}( S^{-1}(h_{(3)})) \, \alpha \alpha'_{(2)} \ox h_{(2)} h' \, .
  \]
  This Hopf algebra is denoted by $D(H)$ and called the \textbf{Drinfeld double} of $H$.
\end{theorem}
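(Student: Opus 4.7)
The statement has two parts: the existence of the asserted quasi-triangular Hopf algebra structure, and its uniqueness given the two sub-Hopf-algebra conditions and the prescribed $R$-matrix. The plan is to handle existence by constructing all structure maps explicitly on $H^*\otimes H$ and verifying the Hopf algebra and quasi-triangularity axioms, and then to obtain uniqueness by showing that the hypotheses force the stated multiplication formula on the natural linear basis of products $(\alpha\otimes 1)(\epsilon\otimes h)$.

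For existence, I would declare unit $\epsilon\otimes 1_H$, counit $\alpha\otimes h\mapsto\alpha(1_H)\,\epsilon(h)$, comultiplication that of the tensor Hopf algebra $H^{*cop}\otimes H$, and multiplication given by the formula in the statement. The antipode is forced to be $S_D(\alpha\otimes h)=(\epsilon\otimes S(h))\cdot((S^*)^{-1}(\alpha)\otimes 1_H)$, which one expands via the multiplication formula. The associativity of the multiplication is a Sweedler-calculus computation that rests on the fact that $H^*$ is a Hopf algebra dual to $H$ (so iterated comultiplications in $H^*$ pair with iterated multiplications in $H$), together with the antipode identity $\sum S^{-1}(h_{(2)})h_{(1)}=\epsilon(h)1_H$; compatibility with $\Delta_D$, the counit and antipode axioms reduce similarly to standard Hopf algebra identities applied in $H$ and $H^*$. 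Picking dual bases $\{e_i\}\subset H$ and $\{e^i\}\subset H^*$ one sets $R:=\sum_i(\epsilon\otimes e_i)\otimes(e^i\otimes 1_H)$; the axiom $R\Delta_D(x)=\Delta_D^{op}(x)R$ follows from applying the multiplication formula to each side on generators $x=\epsilon\otimes h$ and $x=\alpha\otimes 1_H$, while $(\Delta_D\otimes\id)R=R_{13}R_{23}$ and $(\id\otimes\Delta_D)R=R_{13}R_{12}$ reduce to the identity $\sum_i\Delta(e_i)\otimes e^i=\sum_{i,j}e_i\otimes e_j\otimes e^je^i$ dual to the associativity of the multiplication of $H$.

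For uniqueness, the requirement that $H^{*cop}\otimes 1$ and $1\otimes H$ be Hopf subalgebras determines the comultiplication, counit and antipode on $H^*\otimes H$ as those of the tensor Hopf algebra $H^{*cop}\otimes H$, and forces $(\alpha\otimes 1)(\alpha'\otimes 1)$ and $(\epsilon\otimes h)(\epsilon\otimes h')$ to be the products in $H^{*cop}$ and $H$. Every element of $H^*\otimes H$ can be written as a sum of products $(\alpha\otimes 1)(\epsilon\otimes h)=\alpha\otimes h$, so it remains to pin down the cross product $(\epsilon\otimes h)(\alpha\otimes 1_H)$. For this one evaluates the quasi-triangularity axiom $R\Delta_D(\epsilon\otimes h)=\Delta_D^{op}(\epsilon\otimes h)R$ with the canonical $R$ stated in the theorem and pairs the first tensor factor with $\alpha$; a direct computation reproduces exactly the summand $\alpha'_{(3)}(h_{(1)})\alpha'_{(1)}(S^{-1}(h_{(3)}))\,\alpha\alpha'_{(2)}\otimes h_{(2)}h'$ of the formula, and linearity then extends it to all of $H^*\otimes H$.

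The main obstacle is the two large Sweedler-calculus verifications in the existence step: the associativity of the multiplication formula, and the quasi-triangularity identity $R\Delta_D(x)=\Delta_D^{op}(x)R$. Both require careful bookkeeping of the convention that the comultiplication in $H^{*cop}$ is opposite to that of $H^*$ and of the dualities $(S^*)^{-1}\!=(S^{-1})^*$ and $\langle\alpha_{(1)}\otimes\cdots\otimes\alpha_{(n)},h_1\otimes\cdots\otimes h_n\rangle=\alpha(h_1\cdots h_n)$; once these are set up correctly, everything reduces to the Hopf algebra axioms of $H$ applied iteratively.
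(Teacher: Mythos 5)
The paper does not actually prove this statement --- it is quoted from Drinfeld's work as background, with a citation in place of a proof --- so there is no in-paper argument to compare yours against. Your existence half is the standard construction: declare the structure maps explicitly, verify the Hopf axioms by Sweedler calculus, and check the three quasi-triangularity identities for the canonical element $R=\sum_i(\epsilon\otimes e_i)\otimes(e^i\otimes 1_H)$. As a sketch that is sound, up to the $H^{*cop}$ bookkeeping you correctly flag.

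The uniqueness half, however, has a genuine gap. You assert that ``every element of $H^*\otimes H$ can be written as a sum of products $(\alpha\otimes 1)(\epsilon\otimes h)=\alpha\otimes h$'', and earlier that the two subalgebra conditions ``determine the comultiplication, counit and antipode on $H^*\otimes H$ as those of the tensor Hopf algebra $H^{*cop}\otimes H$''. Neither claim follows from the hypotheses as stated. Knowing that $H^{*cop}\otimes 1$ and $1\otimes H$ are Hopf subalgebras fixes the products and coproducts \emph{inside} each factor, and the relation $R\,\Delta_D(x)=\Delta_D^{op}(x)\,R$ then yields a commutation rule exchanging $(\epsilon\otimes h)(\alpha\otimes 1)$ with $(\alpha\otimes 1)(\epsilon\otimes h)$; but nothing so far identifies the abstract product $(\alpha\otimes 1)(\epsilon\otimes h)$ with the vector $\alpha\otimes h$ of the underlying space. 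Indeed, transporting the Drinfeld double structure along any linear automorphism of $H^*\otimes H$ that fixes the subspace $(H^*\otimes 1_H)+(\epsilon\otimes H)$ pointwise produces a second quasi-triangular Hopf algebra structure satisfying every hypothesis literally as stated (the two subalgebras and $R$ are untouched), so uniqueness fails without an extra normalization. The standard fix, implicit in the theorem statement and in the textbook treatments, is to require in addition that the multiplication map $(H^{*cop}\otimes 1)\otimes(1\otimes H)\to H^*\otimes H$ be the canonical identification $\alpha\otimes h\mapsto\alpha\otimes h$ (equivalently, that $\Delta_D$ be the tensor-coalgebra coproduct of $H^{*cop}\otimes H$). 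Once that condition is made explicit, your derivation of the cross relation from $R\,\Delta_D(\epsilon\otimes h)=\Delta_D^{op}(\epsilon\otimes h)\,R$, followed by a convolution with the antipode to solve for $(\epsilon\otimes h)(\alpha\otimes 1)$, does pin down the stated multiplication formula, and the remainder of your argument goes through.
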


\begin{definition}
  \label{def:heisenberg_double}
  Let $H$ be a Hopf algebra.
  The \textbf{Heisenberg double} $\He(H)$ of $H$ is the associative algebra structure on the vector space $H \ox H^*$ defined by the multiplication
  \[
    (h \ox \alpha) \cdot (h' \ox \alpha') := \sum_{(\alpha)} \sum_{(h')} \alpha_{(1)} (h'_{(2)}) \, h h'_{(1)} \ox \alpha_{(2)} \alpha' \, .
  \]
\end{definition}

Kitaev models are constructed from a doubly ciliated ribbon graph $\Gamma$ and a semi-simple\footnote{
  The authors of \cite{buerschaper2013hierarchy} additionally require $H$ to be a Hopf $*$-algebra.
  However, this is not required to define the structures needed in this article.
}
finite-dimensional Hopf algebra $H$ over $\C$.
To the graph $\Gamma$ one associates the \textbf{extended space} $H^{\ox E}$ by assigning a copy of $H$ to every edge of $\Gamma$.
For $k \in H, \alpha \in H^*$ consider the linear maps $L^k_\pm, T^\alpha_\pm: H \to H$
\begin{equation}
  \label{eq:kitaev_triangle_operators}
\begin{alignedat}{3}
  L^k_+ \, l &= k \cdot l \qquad&\qquad L^k_- \, l &= l \cdot k \\
  T^\alpha_+ \, l &=  (\id_H \ox \alpha) \, \Delta (l) & T^\alpha_- \, l &=  (\alpha \ox \id_H) \, \Delta (l) \, .
\end{alignedat}
\end{equation}
We then define for each edge $e \in E$ the \textbf{triangle operators} $L^k_{e\pm}, T^{\alpha}_{e\pm}: H^{\ox E} \to H^{\ox E}$ for $k \in H, \alpha \in H^*$ by extending $L^k_\pm, T^\alpha_\pm : H \to H$ to linear maps $H^{\ox E} \to H^{\ox E}$ in the obvious way.

For the remainder of this section assume that the graph $\Gamma$ is paired (Definition \ref{def:double_graph_basics} (iv)).

Consider a vertex $v \in V$ with incident edges $e_1 < \dots < e_n$ numbered according to the ordering of their edge ends at $v$ (counterclockwise).
Let $\varepsilon_1, \dots, \varepsilon_n \in \left\{ \pm 1 \right\}$ such that $e_i^{\varepsilon_i}$ is incoming at $v$.
The \textbf{vertex operator} $A^k_v : H^{\ox E} \to H^{\ox E}$ for $k \in H$ is the linear map
\begin{equation}
  \label{eq:quantum_kitaev_vertex_operator}
  A^k_v := \sum_{(k)} L_{e_1 \varepsilon_1}^{S^{\tau_1} (k_{(1)})} \0 \dots \0 L_{e_n \varepsilon_n}^{S^{\tau_n}(k_{(n)})} \qquad \text{with } \tau_i := \frac{1}{2}(1- \varepsilon_i) \, ,
\end{equation}
where we set $S^0 = \id_H$.
Similarly, let $e_1 < \dots < e_n$ be the edges adjacent to a face $f \in F$, numbered according to the ordering of their edge sides at $f$ (clockwise).
Let $\varepsilon_i \in \left\{ \pm 1 \right\}$ such that $e_i^{\varepsilon_i}$ is oriented clockwise for all $i$.
The \textbf{face operator} $B^\alpha_f : H^{\ox E} \to H^{\ox E}$ for $\alpha \in H^*$ is defined as
\begin{equation}
  \label{eq:quantum_kitaev_face_operator}
  B^\alpha_f := \sum_{(\alpha)} T_{e_n \varepsilon_n}^{S^{\tau_n}(\alpha_{(n)})} \0 \dots \0 T_{e_1 \varepsilon_1}^{S^{\tau_1}(\alpha_{(1)})} \qquad \text{with } \tau_i := \frac{1}{2}(1- \varepsilon_i) \, .
\end{equation}

Vertex and face operators are subject to the following commutation relations:

\begin{lemma}[Commutation relations of vertex and face operators] \cite{kitaev03, buerschaper2013hierarchy}
  \label{lemma:quantum_kitaev_commutation_relations}
  \begin{compactenum}
  \item The vertex operators for distinct vertices $v, v' \in V$ commute: $[A^k_v, A^{k'}_{v'}] = 0 \Forall k, k' \in H$.
  \item The face operators for distinct faces $f, f' \in F$ commute: $[B^{\alpha}_f, B^{\alpha'}_{f'}] = 0 \Forall \alpha, \alpha' \in H^*$.
  \item If the pair $(v,f) \in V \x F$ satisfies $v(f) \neq v$ and $f(v) \neq f$ (see Definition \ref{def:double_graph_basics} (i) and (ii)), the associated operators commute: $[A^k_v, B^\alpha_f] = 0 \Forall k \in H, \alpha \in H^*$
  \item For every site $(v,f)$ one obtains an injective algebra homomorphism
    \[
      \tau: D(H) \to \End_\F(H^{\ox E}) \qquad (\alpha \ox k) \mapsto B^\alpha_f \0 A^k_v \, .
    \]
  \end{compactenum}
\end{lemma}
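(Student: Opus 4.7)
The plan is to reduce each of (i)--(iv) to commutation behaviour of the single-edge triangle operators $L^k_\pm, T^\alpha_\pm$ of \eqref{eq:kitaev_triangle_operators}. As preparation I would prove by direct Sweedler-index computation that $[L^k_+, L^{k'}_-] = 0$ (associativity of multiplication in $H$), $[T^\alpha_+, T^{\beta}_-] = 0$ (coassociativity of $\Delta$), $[L^k_\pm, T^\alpha_\mp] = 0$ (from the Hopf-algebra compatibility of $m$ and $\Delta$), and the non-trivial same-sign crossing relation $T^\alpha_+ \circ L^k_+ = \sum_{(\alpha),(k)} \alpha_{(1)}(k_{(2)}) \, L^{k_{(1)}}_+ \circ T^{\alpha_{(2)}}_+$ together with its minus-analogue, obtained from the same compatibility.

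For (i) and (ii) I would argue by cases on shared edges. Since $\Gamma$ is paired and in particular loop-free, any edge $e$ shared by two distinct vertices $v, v'$ has exactly one end at each; hence $A^k_v$ acts on the tensor factor of $e$ via an $L^\cdot_{e\pm}$ and $A^{k'}_{v'}$ via an $L^\cdot_{e\mp}$, and the two commute by the first preparatory relation. Unshared edges contribute to only one of the operators. The argument for (ii) is symmetric: the two sides of a shared edge lie on distinct faces, so $B^\alpha_f$ and $B^{\alpha'}_{f'}$ act via $T^\cdot_{e\pm}$ and $T^\cdot_{e\mp}$, which commute.

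For (iii) I would carry out an edge-by-edge analysis for each edge $e$ simultaneously incident to $v$ and adjacent to $f$. The assumption $v(f) \neq v$ and $f(v) \neq f$, together with the conventions of Definition \ref{def:double_graph_basics}(i)--(ii), forces the edge end of $e$ at $v$ and the edge side of $e$ on $f$ to be of opposite type in the thickening $\Gamma_D$ (in the notation of Definition \ref{def:thickening}, one of $\{b(e), r(e)\}$ paired with the other of $\{f(e), l(e)\}$). Consequently $A^k_v$ contributes an $L^\cdot_{e\pm}$ and $B^\alpha_f$ a $T^\cdot_{e\mp}$ on $e$, which commute by the third preparatory relation; unshared edges again are harmless.

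For (iv), I would first show that $k \mapsto A^k_v$ and $\alpha \mapsto B^\alpha_f$ are algebra homomorphisms from $H$ and $H^{*cop}$, respectively, into $\End_\C(H^{\otimes E})$, using coassociativity and iterated application of the $L$--$L$ and $T$--$T$ commutators on the distinct edges at $v$ and $f$. The core computation is then the crossing relation
\[
  A^k_v \circ B^\alpha_f \;=\; \sum_{(k),(\alpha)} \alpha_{(3)}(k_{(1)}) \, \alpha_{(1)}(S^{-1}(k_{(3)})) \; B^{\alpha_{(2)}}_f \circ A^{k_{(2)}}_v,
\]
derived by applying the non-trivial same-sign crossing relation at the unique edge realising the site $(v,f)$ and the argument of (iii) on any remaining shared edges. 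Comparing with the Drinfeld-double multiplication of Theorem \ref{theorem:drinfeld_double} then shows that $\tau$ is multiplicative, and unit preservation is immediate. Injectivity follows because the restrictions $\tau|_{H^{*cop} \otimes 1}$ and $\tau|_{1 \otimes H}$ are injective (triangle operators at the site edge already act faithfully), and the crossing relation forces $\tau$ to be injective on all of $D(H) \cong H^{*cop} \otimes H$. The principal obstacle is the bookkeeping in the crossing relation: tracking the antipode powers $S^{\tau_i}$ that appear in the definitions of $A^k_v$ and $B^\alpha_f$ so that exactly the factor $S^{-1}(k_{(3)})$ and the specific Sweedler indices prescribed by the $D(H)$ multiplication appear.
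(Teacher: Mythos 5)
First, note that the paper does not prove this lemma at all: it is quoted from \cite{kitaev03, buerschaper2013hierarchy} as background, so your argument has to stand on its own. Parts (i) and (ii) of your proposal are sound: on an edge shared by two distinct vertices the two vertex operators act through $L^{\cdot}_{e+}$ and $L^{\cdot}_{e-}$, which commute by associativity, and on an edge shared by two distinct faces the face operators act through $T^{\cdot}_{e+}$ and $T^{\cdot}_{e-}$, which commute by coassociativity; unshared edges are inert. (This is exactly the content of the Poisson analogue in Lemma \ref{lemma:omegaHGbothsideszero}, which concerns opposite edge ends and opposite edge sides of a \emph{single} edge.)

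The proof of (iii), however, rests on a false preparatory relation: $L^k_{\pm}$ and $T^\alpha_{\mp}$ do \emph{not} commute. From \eqref{eq:kitaev_triangle_operators} one computes
\[
  L^k_+ \bigl( T^\alpha_- \, l \bigr) = \sum_{(l)} \alpha(l_{(1)}) \, k \, l_{(2)} \, , \qquad
  T^\alpha_- \bigl( L^k_+ \, l \bigr) = \sum_{(k),(l)} \alpha(k_{(1)} l_{(1)}) \, k_{(2)} l_{(2)} \, ,
\]
and already at $l = 1_H$ these agree for all $\alpha$ only if $\Delta(k) = 1_H \otimes k$, which forces $H = \C 1_H$. (For $H = \C[G]$ the discrepancy is the familiar fact that multiplying one edge variable by $g$ changes the holonomy around an adjacent face.) Hence there is no edge-by-edge cancellation in (iii); the hypotheses $v(f) \neq v$, $f(v) \neq f$ instead guarantee that the edges shared by $v$ and $f$ occur in \emph{consecutive pairs} $(e_i, e_{i+1})$ — one pair for each wedge of $v$ lying in $f$, consecutive both in the ordering at $v$ and in the face path of $f$, with no pair split by a cilium — and the nontrivial exchange terms produced at $e_i$ and $e_{i+1}$ cancel against each other via $\sum_{(k)} S(k_{(1)}) k_{(2)} = \epsilon(k) 1_H$ and its variants. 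This two-edge mechanism is precisely what the paper mirrors on the Poisson side: Lemma \ref{lemma:kitaev_reidemeister_ii} is a statement about a pair of consecutive edges at a vertex, and the proof of Proposition \ref{proposition:kitaev_commutation_relations} (ii) explicitly groups shared edges into such pairs. The gap propagates into (iv), where you invoke "the argument of (iii)" for the remaining shared edges and where the boundary contributions producing the factors $\alpha_{(3)}(k_{(1)})$ and $\alpha_{(1)}(S^{-1}(k_{(3)}))$ come from \emph{two} distinguished shared edges (the first and last in the orderings of $v$ and of $f$), not a unique one. Finally, injectivity of $\tau$ does not follow from injectivity on the two generating subalgebras $H^{*cop}\otimes 1$ and $1 \otimes H$; one should instead verify directly that the operators $B^\alpha_f \circ A^k_v$ are linearly independent, e.g.\ by restricting attention to the tensor factors of the two distinguished edges at the site, where $\tau$ essentially reproduces the faithful $\He(H)$-type action of Lemma \ref{lemma:quantum_kitaev_triangle_algebra}.
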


Due to the semi-simplicity of $H$, one can project onto  a subspace of $H^{\otimes E}$ that is invariant under the action of vertex and face operators. 
Denote by $\eta \in H^*, l \in H$ the normalized Haar integrals of $H^*$ and $H$, respectively.
Let $S$ be the oriented  surface obtained by gluing a disk to every face of the graph $\Gamma$.
\begin{proposition}
  \label{proposition:quantum_kitaev_protected_space}
  \cite{kitaev03, buerschaper2013hierarchy}
    The set $\{ A_v^l \, | \, v \in V \} \cup \{ B_f^\eta \, | \, f \in F \}$ consists of commuting projectors that do not depend on the cilia of $\Gamma$.
    The \textbf{protected space}
    \begin{equation}
      \label{eq:quantum_kitaev_protected_space}
      \He_{pr} := \left\{ \gamma \in H^{\ox E} \, \big| \, A^l_v \, \gamma = B^\eta_f \, \gamma = \gamma \Forall v \in V, f \in F \right\} \subseteq H^{\ox E} \, .
    \end{equation}
    of the Kitaev model is a topological invariant:
    It depends only on the homeomorphism class of the oriented surface  $S$.
\end{proposition}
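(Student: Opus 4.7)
The plan is to break the statement into three sub-claims and verify them in order: that the listed operators are mutually commuting, that they are idempotent, that they are independent of the choice of cilia, and finally that the resulting protected space depends only on the homeomorphism type of $S$.

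For the commutation and projector properties, I would rely heavily on Lemma \ref{lemma:quantum_kitaev_commutation_relations}. Commutation among distinct vertex operators and among distinct face operators is given directly by parts (i) and (ii). For a pair $(v,f)$ with $v(f)\neq v$ and $f(v)\neq f$, part (iii) applies. The only nontrivial case is a pair with $v=v(f)$ or $f=f(v)$, i.e. a site or near-site; here the defining property of the Haar integrals, $k\cdot l=\epsilon(k)l=l\cdot k$ for all $k\in H$ and the dual identity $\alpha\eta=\eta\alpha=\alpha(1_H)\eta$ for all $\alpha\in H^*$, collapses the obstruction terms appearing when one tries to commute $A^l_v$ past $B^\eta_f$ at their common corner. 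Idempotence then follows because the formulas \eqref{eq:quantum_kitaev_vertex_operator} and \eqref{eq:quantum_kitaev_face_operator} give $A_v^k\circ A_v^{k'}=A_v^{kk'}$ and $B_f^\alpha\circ B_f^{\alpha'}=B_f^{\alpha\alpha'}$ (the orderings at $v$ and $f$ match by construction), together with the normalizations $l^2=l$ and $\eta^2=\eta$.

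For cilium independence, I would argue that shifting the cilium at $v$ amounts to a cyclic permutation of the factors in \eqref{eq:quantum_kitaev_vertex_operator}. By the cyclic invariance of the iterated coproduct of a two-sided integral (equivalently, by $(S\otimes\id)\Delta(l)=(\id\otimes S^{-1})\Delta^{op}(l)$ combined with $kl=\epsilon(k)l$), the sum over $\Delta^{(n-1)}(l)$ is unchanged under such a cyclic shift, so $A^l_v$ is well-defined independently of the cilium at $v$. The analogous argument using $\eta\in H^*$ handles $B^\eta_f$.

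For topological invariance of $\He_{pr}$, my strategy is to reduce to a finite list of local graph moves on paired doubly ciliated ribbon graphs that preserve the glued surface $S$: reversing an edge, contracting a bivalent vertex (or its inverse, edge subdivision), sliding a cilium, and the ``fusion/fission'' moves that change two graphs embedding in the same $S$ into one another. Any two such graphs yielding homeomorphic $S$ are connected by a finite sequence of these moves, so it suffices to check that each move induces an explicit linear isomorphism between the corresponding protected spaces intertwining the projector systems $\{A^l_v\}\cup\{B^\eta_f\}$. For edge reversal the map is the antipode on the relevant tensor factor; for contraction one uses multiplication $H\otimes H\to H$ on the two factors attached to the eliminated bivalent vertex, and cilium-independence from the previous step handles the reordering; the Haar integral identities ensure that invariance under the vertex/face operators on one side matches invariance on the other.

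The main obstacle is the last step: assembling a complete set of moves connecting any two paired doubly ciliated ribbon graphs that give the same oriented surface and verifying the matching of protected spaces for each. This is essentially the content of Theorem 1 of \cite{buerschaper2013hierarchy} (generalizing \cite{kitaev03}), and in the write-up I would invoke those references for the move-by-move verifications rather than redoing them, while carefully recording how the specific paired doubly ciliated structure used in Definition \ref{def:doubly_ciliated_ribbon_graph} is preserved (or canonically adjusted) under each move.
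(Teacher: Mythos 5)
This proposition appears in the paper's background section (Section \ref{subsection:kitaev_models}) as a quoted result with citations to \cite{kitaev03, buerschaper2013hierarchy}; the paper itself supplies no proof, so there is nothing in-text to compare your argument against line by line. Your outline is a faithful reconstruction of the standard argument from those references: commutation via Lemma \ref{lemma:quantum_kitaev_commutation_relations} plus the Haar-integral identities at sites, idempotence from $A_v^k \0 A_v^{k'} = A_v^{kk'}$ and $l^2 = l$, cilium independence from the cyclic invariance of $\Delta^{(n-1)}(l)$ for a two-sided integral, and topological invariance by reduction to local graph moves. Since you, like the paper, ultimately defer the move-by-move verification of topological invariance to \cite{buerschaper2013hierarchy}, your proposal does not go beyond what the paper itself assumes, but it is correct as far as it goes and identifies the right ingredients at each step.
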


As introduced by Kitaev in \cite{kitaev03}, \textbf{topological excitations} are obtained by choosing a number of disjoint sites $(v_1, f_1) , \dots, (v_n, f_n)$ and imposing invariance under $A_v^l, B_f^\eta$ for all other vertices and faces.
Writing $L := (V \setminus \left\{ v_1, \dots, v_n \right\}) \dot\cup (F \setminus \left\{ f_1, \dots, f_n \right\})$ we obtain the subspace
\begin{equation}
  \label{eq:quantum_kitaev_protected_space_with_excitations}
  \mathcal L_L := \left\{ \gamma \in H^{\ox E} \, \big| \, A^l_v \, \gamma = B^\eta_f \, \gamma = \gamma \Forall v \in V \cap L, f \in F \cap L \right\} \subseteq H^{\ox E} \, .
\end{equation}

By Lemma \ref{lemma:quantum_kitaev_commutation_relations} (iv) there is a $D(H)$-module structure on $\mathcal L_L$ at each of the sites $(v_i, f_i), i= 1, \dots, n$.
As the Hopf algebra $D(H)^{\ox n}$ is semi-simple, we can decompose the $D(H)^{\ox n}$-module $\mathcal L_L$ into irreducible representations.
Denote by $Irr(D(H))$ a set of representatives of equivalence classes of irreducible representations of $D(H)$.
As presented by Balsam and Kirillov in \cite{balsamkirillov12}, one has
\begin{equation}
  \label{eq:quantum_kitaev_excited_space_decomposition}
  \mathcal L_L \cong \bigoplus_{Y_1, \dots, Y_n} (Y_1^* \ox \dots \ox Y_n^*) \ox \mathcal M(Y_1, \dots, Y_n) \, ,
\end{equation}
where the summation is over $Y_i \in Irr(D(H))$ and $Y_i^*$ denotes the dual representation for all $i$.
The vector space $\mathcal M(Y_1, \dots, Y_n)$, on which $D(H)^{\ox n}$ acts trivially, is called a  \textbf{protected space} for the Kitaev model with excitations \cite{kitaev03, balsamkirillov12}.
Let $S'$ be the oriented  surface obtained by gluing annuli to the faces $f_1, \dots, f_n$ of $\Gamma$ and disks to all other faces.

\begin{theorem} \cite[Theorem 6.1]{balsamkirillov12}
  The vector space $\mathcal M(Y_1, \dots, Y_n)$ is isomorphic to the vector space $Z_{TV} (S', Y_1, \dots, Y_n)$ that extended Turaev-Viro TQFT \cite{balsamkirillov2010} for the category $H$-Mod assigns to $S'$, where the boundary component of $S'$ corresponding to $f_i$ is labeled with $Y_i$ for $i = 1, \dots, n$.
\end{theorem}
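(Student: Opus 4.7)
The plan is to prove the isomorphism by first establishing that $\mathcal M(Y_1, \ldots, Y_n)$ is a topological invariant of the surface $S'$ with its labeled boundary components, and then explicitly matching it against the Turaev--Viro state sum for the category $H\text{-Mod}$. This follows the methodology of Balsam--Kirillov, and I would structure it in three stages.

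First I would establish graph-independence: show that if $\Gamma'$ is obtained from $\Gamma$ by a local move that preserves the topological data $(S', f_1, \ldots, f_n)$, then there is a canonical linear isomorphism between the corresponding excited spaces $\mathcal L_L$ intertwining the $D(H)^{\ox n}$-actions at the marked sites. The required local moves are the Hopf-algebraic analogues of edge reversal, contraction along a bivalent vertex, and erasure of an edge bounding a trivial (disk) face. For each move one writes down an explicit linear map built from the multiplication, comultiplication, antipode, and Haar integrals of $H$ and $H^*$, and checks that it commutes with $A_v^l$ and $B_f^\eta$ for every $v \in V \cap L$, $f \in F \cap L$, and intertwines the algebra homomorphism $\tau: D(H) \to \End_\F(H^{\ox E})$ of Lemma \ref{lemma:quantum_kitaev_commutation_relations} (iv) at each site $(v_i, f_i)$. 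By the isotypic decomposition \eqref{eq:quantum_kitaev_excited_space_decomposition}, this descends to an isomorphism on the multiplicity spaces $\mathcal M(Y_1, \ldots, Y_n)$.

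Second, I would reduce to a convenient triangulated graph for $S'$: choose $\Gamma$ so that each excitation face $f_i$ is realized as an annular face bounded by a single edge with $v_i$ as a boundary vertex, while the remaining faces are disks triangulated by triangles. Using graph-independence it suffices to identify $\mathcal M(Y_1, \ldots, Y_n)$ with $Z_{TV}(S', Y_1, \ldots, Y_n)$ for this model. The Turaev--Viro state sum for $H\text{-Mod}$ is built by assigning $6j$-symbols to tetrahedra and summing over labelings of edges by irreducibles of $H\text{-Mod}$. I would rewrite the total projector $\prod_{v \in V \cap L} A_v^l \prod_{f \in F \cap L} B_f^\eta$ as such a state sum: the integrals $l \in H$ and $\eta \in H^*$ decompose via the Peter--Weyl-type isomorphism $H \cong \bigoplus_X X^* \ox X$ into sums over $X \in \mathrm{Irr}(H\text{-Mod})$ weighted by quantum dimensions, and the contraction pattern dictated by \eqref{eq:quantum_kitaev_vertex_operator} and \eqref{eq:quantum_kitaev_face_operator} reproduces precisely the combinatorics of $6j$-symbols glued along a dual triangulation.

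The main obstacle is the correct treatment of the boundary labels: one must argue that a label $Y_i \in \mathrm{Irr}(D(H))$ on an excitation site corresponds, under the isomorphism $D(H)\text{-Mod} \cong Z(H\text{-Mod})$ with the Drinfeld center, to the boundary datum used by extended Turaev--Viro TQFT on the annular component of $\partial S'$ associated to $f_i$. Establishing this compatibility requires tracking how the $D(H)$-action at $(v_i, f_i)$, built from $A_v^k$ and $B_f^\alpha$ on the edges surrounding the annular face, is transported by the state-sum isomorphism to the action on boundary labels in $Z_{TV}$. Once this dictionary and the matching of normalization factors (quantum dimensions and integrals) are in place, the state sum produced by the projectors coincides with the one defining $Z_{TV}(S', Y_1, \ldots, Y_n)$, yielding the claimed isomorphism of vector spaces.
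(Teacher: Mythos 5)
This statement is not proved in the paper at all: it is quoted verbatim from Balsam and Kirillov \cite[Theorem 6.1]{balsamkirillov12} as background material in the section reviewing quantum Kitaev models, so there is no internal proof to compare your proposal against. What you have written is a plausible reconstruction of the strategy of the cited source rather than of anything in this paper.

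As a reconstruction it is broadly on target --- topological invariance of $\mathcal L_L$ under local graph moves intertwining the $D(H)^{\otimes n}$-actions, reduction to a standard triangulated model, rewriting the product of projectors $A_v^l$, $B_f^\eta$ as a Turaev--Viro state sum via the decomposition of the Haar integrals over irreducibles, and matching boundary labels through $D(H)\text{-Mod} \cong Z(H\text{-Mod})$ are indeed the ingredients Balsam and Kirillov use. However, be aware that your outline defers exactly the steps that carry the mathematical content: the explicit intertwiners for each local move, the verification that the contraction pattern of \eqref{eq:quantum_kitaev_vertex_operator} and \eqref{eq:quantum_kitaev_face_operator} really reproduces the $6j$-symbol combinatorics with the correct quantum-dimension normalizations, and above all the compatibility of the $D(H)$-action at a site with the boundary datum of the \emph{extended} theory of \cite{balsamkirillov2010} (which assigns vector spaces to surfaces with labeled boundary by a specific construction, not merely by a closed state sum). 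None of these is routine, and asserting them does not constitute a proof. If your goal is to understand this paper, the efficient course is to treat the theorem as an imported black box, as the author does; if your goal is to actually prove it, you should work directly from \cite{balsamkirillov12}, where these three points occupy most of the argument.
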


In particular, the space $\mathcal M(Y_1, \dots, Y_n)$ only depends on the homeomorphism class of the oriented  surface $S'$ and the irreducible representations $Y_1, \dots, Y_n$ of $D(H)$.
Excited states $\gamma \in \mathcal L_L$ can be interpreted as quasi-particles, or anyons, located at the sites $(v_1, f_1), \dots, (v_n, f_n)$ \cite{kitaev03}.

In \cite{meusburger16}, Meusburger characterizes the protected space $\He_{pr}$ by its endomorphism algebra, a subalgebra of the algebra $\End_\C(H^{\ox E})$ of endomorphisms of the extended space.
The latter can be described using the Heisenberg double $\He(H)$, which is known to be isomorphic to $\End_\C(H)$ \cite[Corollary 9.4.3]{montgomery93}:
\begin{lemma}
  \label{lemma:quantum_kitaev_triangle_algebra}
  \cite{meusburger16}
  The map $\phi: \He(H) \to \End_\F(H), k \ox \alpha \mapsto L^k_+ \0 T^\alpha_+$ is an isomorphism of algebras.
  It induces an algebra isomorphism $\Phi: \End_\C(H^{\ox E}) \to \He(H)^{\ox E}$.
\end{lemma}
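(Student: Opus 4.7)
The plan is to verify that $\phi$ preserves multiplication by direct calculation, then establish bijectivity via a dimension argument, and finally lift the resulting isomorphism to tensor powers.

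First I would compute both sides of $\phi((k \otimes \alpha)(k' \otimes \alpha')) = \phi(k \otimes \alpha) \circ \phi(k' \otimes \alpha')$. The key step is to straighten $T^\alpha_+ \circ L^{k'}_+$ into a sum of operators of the form $L^{?}_+ \circ T^{?}_+$. For $l \in H$, one has
\[
T^\alpha_+(k' l) = (\id_H \otimes \alpha)\Delta(k'l) = \sum_{(k'),(l),(\alpha)} \alpha_{(1)}(k'_{(2)}) \, k'_{(1)} l_{(1)} \, \alpha_{(2)}(l_{(2)}),
\]
using that $\Delta$ is an algebra homomorphism and expanding $\alpha$ via $m^*$. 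Hence
\[
L^k_+ \circ T^\alpha_+ \circ L^{k'}_+ \circ T^{\alpha'}_+ (l) = \sum_{(k'),(\alpha)} \alpha_{(1)}(k'_{(2)}) \; L^{kk'_{(1)}}_+ \circ T^{\alpha_{(2)}\alpha'}_+ (l),
\]
which matches $\phi$ applied to the Heisenberg double product from Definition \ref{def:heisenberg_double}. This establishes that $\phi$ is an algebra homomorphism.

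Next I would show injectivity. Fix a basis $(k_i)_i$ of $H$ and write any element of $\He(H)$ as $\sum_i k_i \otimes \alpha_i$. If $\sum_i L^{k_i}_+ T^{\alpha_i}_+ = 0$, then for every $l \in H$ the element $\sum_i k_i \cdot T^{\alpha_i}_+(l) \in H$ vanishes, so linear independence of the $k_i$ forces $T^{\alpha_i}_+(l) = 0$ for all $i$ and all $l$. Applying the counit and using $(\epsilon \otimes \alpha_i)\Delta = \alpha_i$ yields $\alpha_i = 0$, so $\phi$ is injective. Since $\dim_\C \He(H) = (\dim H)^2 = \dim_\C \End_\C(H)$, injectivity upgrades to bijectivity, and $\phi$ is an algebra isomorphism.

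For the extension to $\Phi$, I would combine $\phi$ with the canonical algebra isomorphism $\End_\C(H^{\otimes E}) \cong \End_\C(H)^{\otimes E}$ that holds for any finite tensor product of finite-dimensional vector spaces. Applying $\phi^{\otimes E}$ on the right-hand side gives the desired isomorphism $\Phi : \End_\C(H^{\otimes E}) \to \He(H)^{\otimes E}$. The main obstacle is purely bookkeeping in the first step: one has to be careful with the placement of Sweedler indices from both $\Delta(k')$ and $m^*(\alpha)$ and keep track of the convention that $T^\alpha_+$ multiplies $\alpha$ on the second tensor factor, so that the resulting formula matches the multiplication of $\He(H)$ verbatim rather than, say, that of its opposite.
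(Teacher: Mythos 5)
The paper states this lemma without proof, citing \cite{meusburger16} and \cite{montgomery93}, so there is no internal argument to compare against; I am assessing your proposal on its own merits. Your verification that $\phi$ is an algebra homomorphism is correct: the straightening identity $T^\alpha_+ \circ L^{k'}_+ = \sum_{(k'),(\alpha)} \alpha_{(1)}(k'_{(2)})\, L^{k'_{(1)}}_+ \circ T^{\alpha_{(2)}}_+$, combined with $L^{k}_+ \circ L^{k'}_+ = L^{kk'}_+$ and $T^{\alpha}_+ \circ T^{\alpha'}_+ = T^{\alpha\alpha'}_+$, reproduces exactly the multiplication of Definition \ref{def:heisenberg_double}, and the passage to $\Phi$ via $\End_\C(H^{\otimes E}) \cong \End_\C(H)^{\otimes E}$ is routine (up to the harmless point that $\Phi$ is $(\phi^{-1})^{\otimes E}$ composed with that identification, since your $\phi$ points the other way).

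The injectivity step, however, contains a genuine gap. From $\sum_i k_i \cdot T^{\alpha_i}_+(l) = 0$ you conclude that each summand vanishes ``by linear independence of the $k_i$''. Linear independence controls linear combinations $\sum_i \lambda_i k_i$ with scalar coefficients, not sums of products $\sum_i k_i\, m_i$ with varying $m_i \in H$: already in a group algebra one has $1\cdot g + g\cdot(-1) = 0$ with both factors nonzero, and nothing in the specific form $m_i = T^{\alpha_i}_+(l) = \sum_{(l)} l_{(1)}\alpha_i(l_{(2)})$ is invoked to exclude such cancellations. (The intermediate claim you assert is in fact essentially equivalent to the injectivity you are trying to prove.) A correct route that preserves your dimension count: with a basis $\{e_j\}$ of $H$ and dual basis $\{e^j\}$ of $H^*$ one has $\phi(k\otimes\alpha)(l) = \sum_j (e^j\alpha)(l)\, k e_j$, so $\phi$ is the composite of the linear map $k\otimes\alpha \mapsto \sum_j k e_j \otimes e^j\alpha$ with the canonical isomorphism $H\otimes H^* \to \End_\F(H)$, $h\otimes\beta \mapsto (l \mapsto \beta(l)\,h)$; the first map is invertible, with inverse $k\otimes\alpha \mapsto \sum_j k\, S^{-1}(e_j) \otimes e^j\alpha$ by the antipode axioms, so $\phi$ is bijective. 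Alternatively one can prove surjectivity directly or invoke the simplicity of the smash product $H\# H^*$ as in \cite{montgomery93}. With such a replacement the remainder of your argument goes through.
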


The representations of $D(H)$ associated to sites of $\Gamma$ (see Lemma \ref{lemma:quantum_kitaev_commutation_relations} (iv)) induce $D(H)$-module algebra structures on $\End_\C(H^{\ox E}) \cong \He(H)^{\ox E}$.
They are given in terms of vertex and face operators, which can be viewed as elements of $\He(H)^{\ox E}$ by Lemma \ref{lemma:quantum_kitaev_triangle_algebra}.

\begin{theorem}[Module algebra structure on $\End_\C(H^{\ox E})$]
  \label{theorem:quantum_kitaev_module_algebra}
  \cite{meusburger16}
  For every site $(v,f)$ the map $\lhd_v : \He(H)^{\ox E} \ox D(H) \to \He(H)^{\ox E}$ given by
  \begin{equation}
    \label{eq:quantum_kitaev_right_action}
    \gamma \lhd_v (\alpha \ox k) = \sum_{(\alpha)} \sum_{(k)} A_v^{S(k_{(2)})} \cdot B_{f}^{S(\alpha_{(1)})} \cdot \gamma \cdot B_{f}^{\alpha_{(2)}} \cdot A_v^{k_{(1)}}
  \end{equation}
  equips $\He(H)^{op \ox E}$ with the structure of a $D(H)$-right module algebra.
\end{theorem}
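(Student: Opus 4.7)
The plan is to derive $\lhd_v$ as a conjugation action induced by the algebra homomorphism
\[
  \tau_v : D(H) \to \End_\C(H^{\ox E}), \qquad \alpha \ox k \mapsto B_f^\alpha A_v^k,
\]
furnished by Lemma \ref{lemma:quantum_kitaev_commutation_relations} (iv), and then to transport the resulting module-algebra structure to $\He(H)^{op\ox E}$ using the algebra isomorphism $\Phi$ of Lemma \ref{lemma:quantum_kitaev_triangle_algebra}. The ``op'' will appear naturally because the appropriate conjugation formula yields a right $D(H)$-module algebra structure on $\End_\C(H^{\ox E})^{op}$ rather than on $\End_\C(H^{\ox E})$ itself, and $\Phi$ identifies composition of endomorphisms with the tensor product multiplication on $\He(H)^{\ox E}$.

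First I prove a general lemma: for any Hopf algebra $K$ with antipode $S$, any algebra $A$, and any algebra homomorphism $\pi : K \to A$, the formula
\[
  a \lhd x \;:=\; \sum_{(x)} \pi(S(x_{(2)})) \, a \, \pi(x_{(1)})
\]
equips $A^{op}$ with the structure of a right $K$-module algebra. The right-action axiom $(a \lhd x) \lhd y = a \lhd (xy)$ follows from $\pi$ being a homomorphism, $S$ an anti-homomorphism and $\Delta$ a homomorphism. The module-algebra compatibility $(a \cdot^{op} b) \lhd x = \sum (a \lhd x_{(1)}) \cdot^{op} (b \lhd x_{(2)})$ is equivalent, in $A$, to $(ba) \lhd x = \sum (b \lhd x_{(2)})(a \lhd x_{(1)})$ and reduces by coassociativity to the identity $\sum_{(x)} x_{(1)} \ox x_{(3)} S(x_{(2)}) \ox x_{(4)} = \sum_{(x)} x_{(1)} \ox 1 \ox x_{(2)}$ in $K^{\ox 3}$, which holds when $S^2 = \id$. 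The unit axiom $1_A \lhd x = \epsilon(x) \cdot 1_A$ follows from $\sum S(x_{(2)}) x_{(1)} = \epsilon(x) 1$, again using $S^2 = \id$. Semi-simplicity of $H$ (and hence of $D(H)$) ensures the hypothesis $S^2 = \id$ throughout.

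Second, I apply this lemma with $K = D(H)$ and $\pi = \tau_v$. Since $D(H) \cong H^{*cop} \ox H$ as a coalgebra, and $\Delta_{H^{*cop}}(\alpha) = \sum \alpha_{(2)} \ox \alpha_{(1)}$ in natural $H^*$-Sweedler notation, the coproduct in $D(H)$ reads
\[
  \Delta_D(\alpha \ox k) = \sum_{(\alpha),(k)} (\alpha_{(2)} \ox k_{(1)}) \ox (\alpha_{(1)} \ox k_{(2)}).
\]
Since the inclusions $H, H^{*cop} \hookrightarrow D(H)$ are Hopf-algebra maps and $S_D$ is an anti-homomorphism, with $S_{H^{*cop}} = S_{H^*}^{-1} = S_{H^*}$ in the semi-simple case, one obtains $S_D(\beta \ox h) = (1 \ox S(h))\cdot(S(\beta) \ox 1)$ in $D(H)$. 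Applying the algebra homomorphism $\tau_v$ yields $\tau_v(S_D(\alpha_{(1)} \ox k_{(2)})) = A_v^{S(k_{(2)})} B_f^{S(\alpha_{(1)})}$ and $\tau_v(\alpha_{(2)} \ox k_{(1)}) = B_f^{\alpha_{(2)}} A_v^{k_{(1)}}$. Substituting these into the lemma's formula recovers precisely Equation \eqref{eq:quantum_kitaev_right_action}.

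The main obstacle lies in the Sweedler bookkeeping of the second step: the cocommutator twist between $H^{*cop}$ and $H^*$ reorders the indices of $\alpha$ in $\Delta_D$, and this must be combined with the anti-homomorphism property of $S_D$ and the semi-simplicity identity $S^2 = \id$ to obtain the clean factorization of $S_D(\alpha_{(1)} \ox k_{(2)})$ as a product in $D(H)$ amenable to $\tau_v$ as an algebra homomorphism. A brief verification using Lemma \ref{lemma:quantum_kitaev_triangle_algebra} then confirms that $\Phi$ transports $\End_\C(H^{\ox E})^{op}$ to $\He(H)^{op \ox E}$, completing the proof.
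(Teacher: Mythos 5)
The paper does not prove this theorem itself---it is quoted from \cite{meusburger16}---so there is no internal proof to compare against. Your argument is correct and is essentially the expected derivation: the conjugation action $a \mapsto \sum_{(x)} \pi(S(x_{(2)}))\, a \, \pi(x_{(1)})$ attached to the algebra homomorphism $\tau$ of Lemma \ref{lemma:quantum_kitaev_commutation_relations} (iv), with the opposite product and the hypothesis $S^2 = \id$ (guaranteed by semi-simplicity over $\C$, which also makes $D(H)$ semi-simple) entering exactly where you say they do, and the Sweedler bookkeeping for $\Delta_{D(H)}$ and $S_{D(H)}$ on $H^{*cop}\ox H$ correctly reproducing Equation \eqref{eq:quantum_kitaev_right_action}.
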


As the protected space $\He_{pr}$ is a topological invariant, so is the algebra $\End_\C(\He_{pr})$ of its endomorphisms.
Meusburger has shown that it is a subalgebra of the algebra
\[
  \He(H)^{\ox E}_{inv} := \left\{ \gamma \in \He(H)^{\ox E} \, \middle| \, \gamma \lhd_v (\alpha \ox k) = \epsilon(k) \alpha(1_H) \, \gamma \Forall (\alpha \ox k) \in D(H), v \in V \right\} \, ,
\]
which we identify with a subalgebra of $\End_\C(H^{\ox E})$ using Lemma \ref{lemma:quantum_kitaev_triangle_algebra}.
More specifically, for each site $(v, f)$ of $\Gamma$ define the element $G_v := A_v^l \cdot B_{f}^\eta \in \He(H)^{\ox E}$ using the Haar integrals $l \in H, \eta \in H^*$.
Then one has:

\begin{proposition}
  \label{proposition:quantum_kitaev_flatness}
  \cite{meusburger16}
  The map $ Q_{flat} : \He(H)^{\ox E} \to \He(H)^{\ox E}, \gamma \mapsto \gamma \cdot \prod_{v \in V} G_v $ is a projector.
  Its restriction to $\He(H)^{\ox E}_{inv}$ is an algebra homomorphism with image $\End_\C(\He_{pr})$.
\end{proposition}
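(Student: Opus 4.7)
The plan is to verify the projector property and then the algebra-homomorphism/image statement in turn. Throughout, a crucial observation is that since $\Gamma$ is paired (Definition \ref{def:double_graph_basics} (iv)), the assignments $v \mapsto f(v)$ and $f \mapsto v(f)$ are mutually inverse bijections $V \leftrightarrow F$, so
\[
P := \prod_{v \in V} G_v = \prod_{v \in V} A_v^l \cdot \prod_{f \in F} B_{f}^\eta ,
\]
and by Proposition \ref{proposition:quantum_kitaev_protected_space} the factors are pairwise commuting projectors. Hence $P$ is a projector, and under Lemma \ref{lemma:quantum_kitaev_triangle_algebra} it is precisely the projector onto $\He_{pr}$ inside $\End_\C(H^{\ox E})$. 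Since $Q_{flat}(\gamma) = \gamma P$ is right multiplication by $P$, we immediately get $Q_{flat}^2 = Q_{flat}$.

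For the homomorphism property on $\He(H)^{\ox E}_{inv}$, the decisive step is the identity $P \gamma P = \gamma P$ for every invariant $\gamma$. Once this holds, the fact that invariants form a subalgebra (a consequence of the $D(H)$-module algebra structure of Theorem \ref{theorem:quantum_kitaev_module_algebra}) gives $Q_{flat}(\gamma_1) Q_{flat}(\gamma_2) = \gamma_1 (P\gamma_2 P) = \gamma_1 \gamma_2 P = Q_{flat}(\gamma_1 \gamma_2)$. I would prove $P\gamma P = \gamma P$ site by site. Specializing the invariance condition at $(v, f)$ to $\alpha = \eta$, $k = l$ and right-multiplying the resulting identity by $G_v = A_v^l B_f^\eta = B_f^\eta A_v^l$, the left-hand side collapses to $G_v \gamma G_v$ and the right-hand side to $\gamma G_v$ after repeated use of the algebra-homomorphism properties of $k \mapsto A_v^k$ and $\alpha \mapsto B_f^\alpha$, the integral absorption identities $kl = \epsilon(k)l$ and $\alpha \eta = \alpha(1_H)\eta$, and the counit/antipode relations $\sum \epsilon(l_{(1)}) l_{(2)} = l$, $\sum \eta_{(1)} \eta_{(2)}(1_H) = \eta$, $S(l) = l$, $S(\eta) = \eta$ for semisimple $H$. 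Iterating the resulting $G_v \gamma G_v = \gamma G_v$ over the mutually commuting $G_v$ yields $P\gamma P = \gamma P$.

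For the identification of the image, $P \He(H)^{\ox E} P$ is canonically isomorphic to $\End_\C(\He_{pr})$ via Lemma \ref{lemma:quantum_kitaev_triangle_algebra}. The inclusion $Q_{flat}(\He(H)^{\ox E}_{inv}) \subseteq P\He(H)^{\ox E} P$ is immediate from $\gamma P = P\gamma P$. For surjectivity, I would invoke the algebra homomorphism $\tau : D(H) \to \He(H)^{\ox E}$ of Lemma \ref{lemma:quantum_kitaev_commutation_relations} (iv), under which $G_v = \tau(\eta \ox l)$ is the image of the (normalized) Haar integral of $D(H)$. Centrality of this Haar integral translates via $\tau$ into the absorption rules $A_v^k G_v = G_v A_v^k = \epsilon(k) G_v$ and $B_{f}^\alpha G_v = G_v B_{f}^\alpha = \alpha(1_H) G_v$. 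Combined with the fact that $G_{v'}$ with $v' \neq v$ commutes with both $A_v^k$ and $B_f^\alpha$ (Lemma \ref{lemma:quantum_kitaev_commutation_relations} (i)--(iii), applied using the bijection $v \mapsto f(v)$), a direct expansion of $(P\zeta P) \lhd_v (\alpha \ox k)$ collapses to $\alpha(1_H)\epsilon(k) \, P\zeta P$ for every $\zeta$. Hence $P\zeta P \in \He(H)^{\ox E}_{inv}$ and $Q_{flat}(P\zeta P) = P\zeta P$, so every element of $P \He(H)^{\ox E} P$ lies in the image.

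The main obstacle will be the Sweedler-notation bookkeeping underlying both collapses $G_v \gamma G_v = \gamma G_v$ and $(P\zeta P) \lhd_v (\alpha \ox k) = \alpha(1_H) \epsilon(k) P\zeta P$. Both identities are structurally forced by the centrality of the Haar integral of $D(H)$, but the explicit computations require careful interplay between the coproducts of $l$ and $\eta$, the antipode, and the ordered product structure of the vertex and face operators at a given site.
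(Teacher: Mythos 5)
The paper does not actually prove this proposition -- it is quoted from \cite{meusburger16} as background, so there is no in-text argument to compare yours against. Judged on its own, your proof is correct and complete in outline. The bijection $V \leftrightarrow F$ for a paired graph, the reduction of the homomorphism property to $P\gamma P = \gamma P$, and the corner-algebra identification $P\,\He(H)^{\ox E}P \cong \End_\C(\He_{pr})$ are all sound. Concerning the Sweedler bookkeeping you flag as the main obstacle: it does close, but only if you route every computation through the algebra homomorphism $\tau$ of Lemma \ref{lemma:quantum_kitaev_commutation_relations} (iv) rather than trying to commute $A_v^{k}$ past $B_f^{\beta}$ at the \emph{same} site (they do not commute there). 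Concretely, writing $G_v = \tau(\eta \ox l)$ and using that $\eta \ox l$ is the normalized two-sided integral of $D(H)$, one gets $\tau(\beta \ox k)\,G_v = \epsilon_{D(H)}(\beta\ox k)\,G_v = G_v\,\tau(\beta\ox k)$; applying this to the factor $B_f^{\eta_{(2)}}A_v^{l_{(1)}} \cdot G_v = \tau\bigl((\eta_{(2)}\ox l_{(1)})(\eta\ox l)\bigr) = \eta_{(2)}(1_H)\epsilon(l_{(1)})\,G_v$ and then contracting with $S(l)=l$, $S(\eta)=\eta$ yields exactly $G_v\gamma G_v = \gamma G_v$, and the analogous absorption on both sides gives $(P\zeta P)\lhd_v(\alpha\ox k) = \alpha(1_H)\epsilon(k)\,P\zeta P$. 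The only facts you should cite explicitly are that $\eta\ox l$ is the two-sided integral of $D(H)$ (which needs $H$ semisimple \emph{and} cosemisimple; automatic over $\C$ by Larson--Radford) and that $S$ fixes normalized integrals in that case.
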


\section{Poisson analogues of Kitaev models}
\label{section:kitaev_analogues}

In this chapter we define Poisson analogues of Kitaev models by exchanging the data from the representation theory of a Hopf algebra with Poisson-Lie counterparts, that is, certain Poisson $G$-spaces.

Our goals are:
\begin{compactenum}
\item
  to give a Poisson analogue of Kitaev models that has close structural similarities to (quantum) Kitaev models and
\item
  to show that this Poisson analogue is related to the moduli space of flat $G$-bundles for a surface with boundary.
\end{compactenum}

\subsection{Kitaev models with Poisson-geometric data}
\label{subsection:kitaev_models_with_poisson_geometric_data}

To construct an analogue of Kitaev models, we replace the Hopf-algebraic data in the Kitaev model by Poisson-geometric counterparts as outlined  in Table \ref{tab:dictionary}.
We define Poisson analogues  of vertex and face operators.
We also introduce Poisson actions associated with vertices and faces, and a  notion of flatness.

Consider the extended space $H^{\ox E}$ of a Kitaev model for the Hopf algebra $H$.
By Lemma \ref{lemma:quantum_kitaev_triangle_algebra}, the endomorphism algebra of $H^{\ox E}$  is isomorphic to $\He(H)^{\ox E}$ for the Heisenberg double $\He(H)$ of $H$.
A  Poisson analogue should replace the algebra of operators on $H^{\ox E}$ by a suitable Poisson algebra.
Therefore, we associate a copy of a  \emph{ Poisson-geometrical  Heisenberg double} $G_\He$  (Definition \ref{def:classical_heisenberg_double}) to  every edge $e \in E$.
The Poisson algebra of functions $C^\infty(G_\He^{\x E}, \R)$ on the resulting product Poisson manifold $G_\He^{\x E}$ is the analogue of the endomorphism algebra $\He(H)^{\ox E} \cong \End_\C(H^{\ox E})$. 

A (quantum) Heisenberg double admits the factorization $\He(H) \cong H^* \ox H$ (as a vector space).
This has a direct geometrical interpretation in terms of the triangle algebra from \eqref{eq:kitaev_triangle_operators}:
in the thickening $\Gamma_D$ of the graph $\Gamma$ (see Definition \ref{def:thickening}) the triangle operators $L^k_{e \pm}$ for an edge $e$ are parameterized by elements $k \in H$ and are  associated with the edge ends $f(e), b(e)$, whereas the operators $T^\alpha_{e \pm}$ for $\alpha \in H^*$ are associated with the edge sides $r(e), l(e)$.
(See, for instance, \cite[Section 3.1]{buerschaper2013hierarchy}.)

For this reason, we require $G$ to be a global double Poisson-Lie group (Definition \ref{def:global_double_poisson_lie_group}).
  (We briefly discuss the case of a non-global double at the end of Section \ref{subsection:poisson_kitaev_models_fock_rosly_spaces}.)
  Then the Poisson-geometrical Heisenberg double $G_\He$ is diffeomorphic to $G_+ \x G_-$ (as a manifold).
  The Poisson-Lie groups $G_-$ and $(G_+, -w_{G_+})$ are dual to each other, where $G_+$ is equipped with the negative Poisson bivector $-w_{G_+}$.
  They take on the roles of the Hopf algebras $H, H^*$, respectively.
  If $G$ is connected and simply connected, then one has $G = D(G_-)$, the Poisson manifold $G_\He$ is the Heisenberg double $\He(G_-)$, and $(G_+, -w_{G_+}) = G_-^*$ holds.
  Table \ref{tab:dictionary_with_precise_notation} summarizes the correspondence between the Hopf-algebraic data used for a Kitaev model and the Poisson counterparts.

\begin{table}
  \centering
  \begin{tabular}{|p{6cm}|p{6cm}|}
    \hline
    \textbf{Data for a Kitaev model} & \textbf{Poisson counterpart} \\
    \hline
    Hopf algebras $H, H^*$ & Poisson-Lie groups $G_-, (G_+, -w_{G_+})$ \\
    \hline
    Drinfeld double $D(H) \cong H^* \ox H$ & global double $G \cong G_+ \x G_-$ \\
    \hline
    Heisenberg double $\He(H) \cong H^* \ox H$ & Heisenberg double $G_\He \cong G_+ \x G_-$  \\
    \hline
    endomorphism algebra $\He(H)^{\ox E}$ on the extended space $H^{\ox E}$ & Poisson algebra of functions $C^\infty(G_\He^{\x E}, \R)$ \\
    \hline
  \end{tabular}
  \caption{ Data for Hopf-algebraical and Poisson-geometrical Kitaev models }
  \label{tab:dictionary_with_precise_notation}
\end{table}

\begin{definition} Let $G$ be a global double Poisson-Lie group.
  A \textbf{Poisson-Kitaev model} is a pair $(K, \Gamma)$ consisting of
  \begin{itemize}[noitemsep,nolistsep]
    \item a doubly ciliated ribbon graph $\Gamma$ (see Definition \ref{def:doubly_ciliated_ribbon_graph}) and
    \item the product Poisson manifold $K := G_\He^{\x E}$, where a copy of the Heisenberg double $G_\He$ is assigned to every edge $e \in E$ of $\Gamma$.
  \end{itemize}
\end{definition}

We will now define counterparts of the triangle operators and the vertex and face operators as functions on the Poisson manifold $K$ with values in $G$. 
The triangle operators $L^k_\pm$ and $T^\alpha_\pm$ are replaced by the projections $\pi_\pm: G\to G_\pm \subset G$ from Lemma \ref{lemma:projectionLocallyPoisson}.
For this we introduce a holonomy functor on the graph groupoid $\G(\Gamma_D)$ (Definition \ref{def:graph_basics} (ii)) of the thickened graph $\Gamma_D$ (Definition \ref{def:thickening}).
It is defined similarly to the functor $\Hol_{\FR}$ from \eqref{eq:fock_rosly_holonomy_functor}.
Consider the set of smooth maps $C^\infty(K, G)$ as a groupoid with a single object with composition given by pointwise multiplication.

\begin{definition}
  The \textbf{holonomy functor} $\Hol : \G(\Gamma_D) \to C^\infty(K, G)$ is the functor defined by
  \begin{equation}
    \label{eq:kitaev_holonomy_functor}
    \begin{alignedat}{2}
      &\Hol (r(e)) := \pi_+ \0 \pi_e && \qquad \Hol (f(e)) := \pi_- \0 \pi_e \\
      &\Hol (l(e)) := \eta \0 \pi_+ \0 \eta \0 \pi_e && \qquad \Hol (b(e)) := \eta \0 \pi_- \0 \eta \0 \pi_e
    \end{alignedat}
  \end{equation}
  for all edges $e \in E$ , where $\pi_e : K \to G$ stands for the projection on the component associated with the edge $e$, and $\eta : G \to G$ is the inversion map.
\end{definition}

\begin{figure}[h]
\centering
\begin{tikzpicture}[vertex/.style={circle, fill=black, inner sep=0pt, minimum size=2mm}, plain/.style={draw=none, fill=none}, scale=0.8]
  \draw [-latex] (-6,0) -- (-3,0) node[midway, above] {$d$};

  \coordinate (n3) at (0.5, -0.5);
  \coordinate (n1) at (0.5, 0.5);

  \draw [-latex] (n3) -- ($(n3) + (3,0)$) node[midway, below] {$\pi_+(d)$};
  \draw [-latex] (n1) -- ($(n1) + (3,0)$) node[midway, above] {$\pi_+(d^{-1})^{-1}$};
  \draw [-latex] ($(n3) + (3,0)$) -- ($(n1) + (3,0)$) node[midway, right] {$\pi_-(d)$};
  \draw [-latex] (n3) -- (n1) node[midway, left] {$\pi_-(d^{-1})^{-1}$};
\end{tikzpicture}
\caption{An edge of $\Gamma$ labelled with $d \in G$ and the corresponding holonomies on $\Gamma_D$}
\label{fig:kitaev_holonomy}
\end{figure}

Because of the factorization $d = \pi_-(d) \pi_+(d)$, one has $\Hol (f(e) \circ r(e)) = \pi_e$.
As $d = (d^{-1})^{-1} = (\pi_-(d^{-1}) \pi_+(d^{-1}))^{-1}$, the equation $\Hol(l(e) \0 b(e)) = \pi_e$ also holds.
In fact, $\pi_e$ can be computed from any holonomy along an edge end and adjacent edge side of $e$.

The functor $\Hol$ is an analogue of the Hopf algebra valued holonomy functor in \cite{meusburger16}.
The latter is a functor $\G(\Gamma_D) \to \Hom_\C(D(H)^*, D(H)^{* \ox E})$, where $D(H)^*$ is the dual of the Drinfeld double $D(H)$ (see Theorem \ref{theorem:drinfeld_double}) and $\Hom_\C(D(H)^*, D(H)^{* \ox E})$ is a $\C$-linear category with a single object and the convolution product as composition of morphisms.
One has $\He(H) \cong H \ox H^*$ as a vector space (see Definition \ref{def:heisenberg_double}  ) which in turn is isomorphic to $D(H)^*$.
Therefore, the vector space $D(H)^{* \ox E} \cong \He(H)^{\ox E}$ corresponds to the manifold $K \cong  G_\He ^{\x E}$ of a Poisson-Kitaev model.
The single object category $C^\infty(K, G)$ corresponds to the $\C$-linear category $\Hom_\C( D(H)^*, D(H)^{* \ox E})$.

The holonomy functor $\G(\Gamma_D) \to \Hom_\C(D(H)^*, D(H)^{* \ox E})$ generalizes Kitaev's ribbon operators \cite{kitaev03, bombin2008family, buerschaper2013hierarchy}.
It assigns to the edge ends of an edge $e$ the triangle operators $L^k_{e \pm}$ from Equation \eqref{eq:kitaev_triangle_operators}, indexed by $k \in H$.
To its edge sides it assigns the operators $T^\alpha_{e \pm}$ with $\alpha \in H^*$.
In analogy to the quantum case, the holonomy functor of a Poisson-Kitaev model assigns elements of $G_+$ to paths along edge sides and elements of the dual Poisson-Lie group $G_-$ to paths along edge ends.
The edge side $r(e)$ connects vertices of $\Gamma$, whereas the edge end $f(e)$ connects the faces left and right of $e$ or, equivalently, the corresponding vertices of the dual graph $\Gamma^*$.
We can thus view $r(e)$ as an edge of $\Gamma$, that is decorated with an element of $G_+$, and $f(e)$ as the corresponding edge of the dual graph, which is labeled with an element of the dual Poisson-Lie group $G_-$.
The graphs $\Gamma$ and $\Gamma^*$ are combined in the thickened graph $\Gamma_D$.
The interaction of an edge and its dual edge is described by the Heisenberg double.

\begin{remark}[Edge reversal]
  \label{remark:edge_reversal}
  While a Poisson-Kitaev model $(K, \Gamma)$ and the associated holonomy functor $\Hol$ depend on the orientation of the edges of $\Gamma$, this dependence is only formal.
  Consider the Poisson-Kitaev model $(K', \Gamma')$ where the edge $e$ has been replaced by the reversed edge $e'$.
  Define the functor $I_e : \G(\Gamma_D) \to \G(\Gamma_D')$ by
  \begin{align*}
    I_e (r(e)) := l(e')^{-1} \quad I_e (l(e)) := r(e')^{-1} \quad I_e(f(e)) := b(e')^{-1} \quad I_e(b(e)) := f(e')^{-1}
  \end{align*}
  and $I_e (p) = p$ for all other edge sides and edge ends.
  Let $\eta_e^{*} : C^\infty(K, G) \to C^\infty(K', G), \phi \mapsto \phi \0 \eta_e$, where $\eta_e: K' \to K$ is the involution that inverts the group element at the edge $e'$ and leaves the elements at all other edges invariant.
  It satisfies $\Hol' \0 I_e = \eta_e^* \0 \Hol$, where $\Hol'$ is the holonomy functor for $(K', \Gamma')$.
  By Theorem \ref{theorem:heisenberg_double_inversion_poisson_actions}, the map $\eta_e : K' \to K$ is Poisson and intertwines the two different actions $\rhd, \rhd': G \x  G_\He  \to  G_\He $ on the copy of $ G_\He $ at $e'$.
\end{remark}

The functor $I_e$ describes a rotation of the edge $e$ by 180 degrees.
Hence, reversing an edge in $\Gamma$ is the same as rotating an edge by 180\textdegree{} in $\Gamma_D$ and corresponds to inverting the associated element of $G$.
As edge reversal is compatible with the Poisson structures of $K$ and $K'$ and with the Poisson actions $\rhd, \rhd'$, the orientation of edges is irrelevant.

All constructions introduced in the following are compatible with edge reversal.
The compatibility will not be mentioned explicitly unless it is not obvious.

Paths around vertices and faces play an important role in the following.
We define such paths and the associated holonomies.

\begin{definition}~
  \label{def:vertex_face_paths_and_holonomies}
  \begin{compactenum}
    \item
      Consider a vertex $v \in V$ and let $i_1 < \dots < i_n$ be the linearly ordered edge ends incident at $v$.
      For $k=0, \dots, n$ we define the \textbf{partial vertex path}
      \begin{equation}
        \label{eq:partial_vertex_path}
        p_k(v) := i_1^{\varepsilon_1} \0 \dots \0 i_k^{\varepsilon_k} \, ,
      \end{equation}
      where $\varepsilon_j = 1$ if $i_j = f(e)$ for an edge $e \in E$ and $\varepsilon_j = -1$ if $i_j = b(e)$.
      The path $p_0(v)$ is the empty path $s(i_1^{\varepsilon_1}) \to s(i_1^{\varepsilon_1})$.
      We call $p(v) := p_n(v)$ the \textbf{vertex path} around $v$.
    \item
      For a face $f \in F$ with linearly ordered edge sides $i_1 <  \dots < i_n$ define for $k=0,\dots,n$ the \textbf{partial face path}
      \begin{equation}
        \label{eq:partial_face_path}
        p_k(f) := i_k^{\varepsilon_k} \circ \dots \circ i_1^{\varepsilon_1} \, ,
      \end{equation}
      where $\varepsilon_j = 1$ if $i_j = r(e)$ for an edge $e \in E$ and $\varepsilon_j = -1$ for $i_j = l(e)$.
      The path $p_0(f)$ is the empty path $s(i_1^{\varepsilon_1}) \to s(i_1^{\varepsilon_1})$.
      The path $p(f) := p_n(f)$ is called the \textbf{face path} of $f$ in the following.
    \item
      To a vertex $v$ we associate the \textbf{vertex holonomy} $\Hol^v := \Hol (p(v))$ and to a face $f$ the \textbf{ face holonomy} $\Hol^f := \Hol(p(f))$.
      For a site $(v,f)$ we call the product $\Hol^{(v,f)} := \Hol^v \cdot \Hol^f = \Hol(p(v) \0 p(f))$ the associated \textbf{site holonomy}.
  \end{compactenum}
\end{definition}

These paths are illustrated in Figure \ref{fig:vertex_face_paths}.
Note that the face path $p(f)$ of a face $f$ coincides with the selected face path from Definition \ref{def:doubly_ciliated_ribbon_graph}.

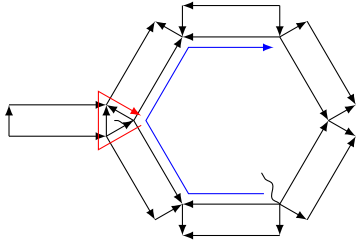
\begin{figure}
\centering
\begin{tikzpicture}[vertex/.style={circle, fill=black, inner sep=0pt, minimum size=2mm}, plain/.style={draw=none, fill=none}, scale=0.8]
  \begin{scope}[scale=0.8]
    
  \coordinate (n) at (2,0);

  \coordinate (o) at (60:2cm);
  \draw [-latex] (o) -- (n);
  \draw [-latex] (o) -- ($(o)!0.65cm!90:(n)$);
  \draw [-latex] (n) -- ($(n) + (o)!0.65cm!90:(n) - (o)$);
  \draw [-latex] ($(o)!0.65cm!90:(n)$) -- ($(n) + (o)!0.65cm!90:(n) - (o)$);

  \coordinate (p) at (120:2cm);
  \draw [-latex] (o) -- (p);
  \draw [-latex] ($(o) + (0, 0.65)$) -- (o);
  \draw [-latex] ($(o) + (0, 0.65)$) -- ($(p) + (0, 0.65)$);
  \draw [-latex] ($(p) + (0, 0.65)$) -- (p);

  \coordinate (q) at (180:2cm);
  \draw [-latex] (q) -- (p);
  \draw [-latex] (q) -- ($(q) ! 0.65cm ! 90:(p)$);
  \draw [-latex] (p) -- ($(p) + (q)!0.65cm!90:(p) - (q)$);
  \draw [-latex] ($(q)!0.65cm!90:(p)$) -- ($(p) + (q)!0.65cm!90:(p) - (q)$);

  \coordinate (r) at (240:2cm);
  \draw [-latex] (q) -- (r);
  \draw [-latex] ($(q)!0.65cm!-90:(r)$) -- (q);
  \draw [-latex] ($(r) + (q)!0.65cm!-90:(r) - (q)$) -- (r);
  \draw [-latex] ($(q)!0.65cm!-90:(r)$) -- ($(r) + (q)!0.65cm!-90:(r) - (q)$);

  \coordinate (s) at (300:2cm);
  \draw [-latex] (s) -- (r);
  \draw [-latex] (s) -- ($(s)!0.65cm!90:(r)$);
  \draw [-latex] (r) -- ($(r) + (s)!0.65cm!90:(r) - (s)$);
  \draw [-latex] ($(s)!0.65cm!90:(r)$) -- ($(r) + (s)!0.65cm!90:(r) - (s)$);

  \draw [-latex] (s) -- (n);
  \draw [-latex] (s) -- ($(s)!0.65cm!-90:(n)$);
  \draw [-latex] (n) -- ($(n) + (s)!0.65cm!-90:(n) - (s)$);
  \draw [-latex] ($(s)!0.65cm!-90:(n)$) -- ($(n) + (s)!0.65cm!-90:(n) - (s)$);

  \coordinate (q1) at ($(q)!0.65cm!90:(p)$);
  \coordinate (q2) at ($(q)!0.65cm!-90:(r)$);
  \coordinate (t) at ($(q1)!2cm!-90:(q2)$);
  \draw [-latex] (t) -- (q1);
  \draw [-latex] (q2) -- (q1);
  \draw [-latex] ($(t) + (q2) - (q1)$) -- (t);
  \draw [-latex] ($(t) + (q2) - (q1)$) -- (q2);

  \draw [decorate,decoration={snake, amplitude=0.5mm}] (q) -- ($(q)-(0.4,0)$);
  \draw [decorate,decoration={snake, amplitude=0.5mm}] (s) -- ($(s)-(300:0.75)$);

  \draw [-latex, color=blue] ($(300:1.75cm) - (0.2cm,0)$) -- (240:1.75cm) -- (180:1.75cm) -- (120:1.75cm) -- (60:1.75cm);

  \coordinate (q123) at ($0.333*(q) + 0.333*(q1) + 0.333*(q2)$);
  \coordinate (q_outer) at ($ (q123)!0.7cm!(q)$);
  \coordinate (q1_outer) at ($ (q123)!0.7cm!(q1)$);
  \coordinate (q2_outer) at ($ (q123)!0.7cm!(q2)$);
  \draw [-latex, color=red] ($(q_outer)!0.2cm!(q2_outer)$) -- (q2_outer) -- (q1_outer) -- ($(q_outer)!0.2cm!(q1_outer)$);

  \end{scope}
\end{tikzpicture}
\caption{A vertex path (red) and a partial face path $p_4(f)$ (blue)}
\label{fig:vertex_face_paths}
\end{figure}

Under the holonomy functor from \cite{meusburger16} the vertex and face operators $A^k_v, B^\alpha_f$ from Equations \eqref{eq:quantum_kitaev_vertex_operator} and \eqref{eq:quantum_kitaev_face_operator} correspond to paths around the vertex $v$ and the face $f$, respectively.
This allows us to define Poisson counterparts by applying our holonomy functor \eqref{eq:kitaev_holonomy_functor} instead.

\begin{definition}
  \label{def:vertex_face_operators}
    A function $a \in C^\infty(K, \R)$ of the form $a = g \0 \Hol^v$ for some $v \in V$ and $g \in C^\infty(G, \R)$ is called a \textbf{vertex operator}.
    Likewise, a function of the form $b = g \0 \Hol^f$ for an $f \in F$ is called a \textbf{face operator}.
\end{definition}

Note that both vertex and face paths turn clockwise around the associated vertex or face.
Vertex and face operators of quantum Kitaev models (Equations \eqref{eq:quantum_kitaev_vertex_operator} and \eqref{eq:quantum_kitaev_face_operator}) are also defined in a clockwise order.

We define the notion of flatness of a vertex or face by requiring the associated holonomies to be trivial.

\begin{definition}
  \label{def:flatness}
  For $\gamma \in K$ we say that $\gamma \in K$ is \textbf{flat at} a vertex $v \in V$ (face $f \in F$) if $\Hol^v (\gamma) = 1_G$ ($\Hol^f (\gamma) = 1_G$).
  Denote the subsets of elements flat at $v$ (respectively $f$) by
  \[
    K_v := \left\{ \gamma \in K \, \middle| \, \gamma \text{ flat at } v \right\} \qquad K_f := \left\{ \gamma \in K \, \middle| \, \gamma \text{ flat at } f \right\}
  \]
  and the set of flat elements on a subset $L \subseteq V \dot \cup F$ by
  \begin{equation}
    \label{eq:flat_elements}
    K_L := \bigcap_{l \in L} K_l \, .
  \end{equation}
  An element $\gamma \in K_L$ for $L \neq \emptyset$ is called \textbf{flat at} $L$.
\end{definition}

Note that flatness at a vertex or face only depends on the respective cyclic ordering because vertex and face paths for different linear orderings are related by cyclic permutations.

Definition \ref{def:flatness} corresponds to the notion of flatness for Kitaev models introduced in \cite{meusburger16}.
The map $Q_{flat}$ from Proposition \ref{proposition:quantum_kitaev_flatness} projects onto the flat elements of $\He(H)^{\ox E}$.
The corresponding subset of $K$ is $K_L$ for $L = V \dot \cup F$.

In this article we relate Poisson-Kitaev models to moduli spaces $\Hom(\pi_1(S), G)/G$ of flat $G$-bundles for compact oriented  surfaces $S$ with boundary.
For this we choose a number of sites $(v_1, f_1), \dots, (v_n, f_n)$ of the graph $\Gamma$ and construct the associated oriented  surface $S$ by gluing annuli to $f_1, \dots, f_n$ and disks to all other faces.
This corresponds to the subset $K_L$ of elements that are flat at all vertices and faces except $v_1, \dots, v_n$ and $f_1, \dots, f_n$.
We will see at the end of Section \ref{subsection:kitaev_models_with_poisson_geometric_data}  that the vertices and faces at which flatness is not imposed correspond to  excitations in Kitaev models.
In this article we only consider the case $n \geq 1$ or, equivalently, require that $S$ has at least one boundary component.

We now introduce a Poisson counterpart of the $D(H)$-right module structure $\lhd_v$ from Equation \eqref{eq:quantum_kitaev_right_action}.
In a quantum  Kitaev model, the vertex and face operators for a site $(v,f)$ give rise to, respectively, an action of the Hopf algebra $H$ and its dual $H^*$ on $\He(H)^{\ox E}$ that combine into an action of the Drinfeld double $D(H)$.
We associate to every vertex a Poisson $G_+$-action and to every face a Poisson $G_-$-action that take the role of, respectively, the $H$- and $H^*$-module algebra structures on $\He(H)^{\ox E}$.
We will show later that the actions for a vertex and adjacent face combine into a Poisson action of the double Poisson-Lie group $G$.

Consider a vertex $v$ with incident edge ends $i_1 < \dots < i_m$.
Set $c_k(\gamma) := \Hol (p_{k-1}(v)) (\gamma)$, where $p_k(v)$ is the $k$-th partial vertex path at $v$ from Equation \eqref{eq:partial_vertex_path}.
For $k = 1,\dots, m$ and $\alpha \in G_+$ define the maps $\phi_k (\alpha) : K \to K$ by
\begin{equation}
  \label{eq:vertex_action_explicit}
  (\pi_e \0 \phi_k (\alpha)) \, (\gamma) =
  \begin{cases}
    \pi_+(\alpha \, c_k(\gamma)) \, \pi_e(\gamma) & \text{ if } i_k = f(e) \\
    \pi_e(\gamma) \, \pi_+(\alpha \, c_k(\gamma))^{-1} & \text{ if } i_k = b(e) \\
    \pi_e(\gamma) & \text{ otherwise,  }
  \end{cases}
\end{equation}
where $\pi_e : K \to G_\He$ is the projection onto the copy of $G_\He$ associated with the edge $e$.
Similarly, for a face $f$ with adjacent edge sides $i'_1 < \dots < i'_n$ set $d_k(\gamma) := \Hol(p_{k-1}(f))(\gamma)$ and define for $x \in G_-$ the maps $\psi_k (x) : K \to K$:
\begin{equation}
  \label{eq:face_action_explicit}
  (\pi_e \0 \psi_k (x)) \, (\gamma) =
  \begin{cases}
    \pi_e(\gamma) \, \pi_- (d_k(\gamma) \, x^{-1}) & \text{ if } i_k' = r(e) \\
    \pi_- (d_k(\gamma) \, x^{-1})^{-1} \, \pi_e(\gamma)  & \text{ if } i_k' = l(e) \\
    \pi_e(\gamma) & \text{ otherwise.}
  \end{cases}
\end{equation}

\begin{definition}~
  \label{def:vertex_face_actions}
  \begin{compactenum}
  \item
    The \textbf{vertex action} associated with the vertex $v$ is the map
    \begin{equation}
    \label{eq:vertex_action}
      \rhd_v : G_+ \x K \to K \qquad \alpha \rhd_v \gamma = (\phi_1 (\alpha) \0 \dots \0 \phi_m (\alpha)) \, (\gamma) \, .
    \end{equation}
  \item
    The \textbf{face action} for the face $f$ is given by
    \begin{equation}
    \label{eq:face_action}
      \rhd_f : G_- \x K \to K \qquad x \rhd_f \gamma = (\psi_1 (x) \0 \dots \0 \psi_n (x)) \, (\gamma) \, .
    \end{equation}
  \end{compactenum}
\end{definition}

\begin{remark}[Relation to dressing actions]
  If the incident edges $e_1 < \dots < e_m$ at a vertex $v$ are all incoming, the action $\rhd_v$ can be written more simply as
  \begin{equation}
    \label{eq:kitaev_vertex_action_easy}
    \pi_e (\alpha \rhd_v \gamma) =
    \begin{cases}
      \pi_+(\alpha c_k(\gamma)) \, \pi_e(\gamma) & \text{ if } e = e_k \\
      \pi_e(\gamma) & \text{ else.}
    \end{cases}
  \end{equation}
  Likewise, for a face $f$ where all adjacent edges $e_1 < \dots < e_n$ are ordered clockwise one has:
  \begin{equation}
    \label{eq:kitaev_face_action_easy}
    \pi_e (x \rhd_f \gamma) =
    \begin{cases}
      \pi_e(\gamma) \, \pi_-(d_k(\gamma) x^{-1}) & \text{ if } e = e_k \\
      \pi_e(\gamma) & \text{ else.}
    \end{cases}
  \end{equation}
  The map $G_- \x G_+ \to G_-, (g_- , g_+) \mapsto \pi_-(g_+^{-1} g_-)$ is the right dressing action of the Poisson-Lie group $(G_+, -w_{G_+})$ with negative Poisson structure on $G_-$ \cite[Theorem 3.14]{luweinstein1990}.
  Therefore, the factor $\pi_-(d_k(\gamma) x^{-1})$ in the first line of \eqref{eq:kitaev_face_action_easy} is given by the dressing action of the partial face holonomy $d_k(\gamma)$ on $x^{-1}$.
  Likewise, Equation \eqref{eq:kitaev_vertex_action_easy} involves the dressing action of $G_-$ on $G_+$.
\end{remark}

\begin{lemma}
  The smooth maps $\rhd_v : G_+ \x K \to K$ and $\rhd_f: G_- \x K \to K$ for $v \in V$ and $f \in F$ are group actions.
\end{lemma}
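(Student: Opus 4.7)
The plan is to verify, for each of $\rhd_v$ and $\rhd_f$, that (i) the identity element acts trivially and (ii) the action is compatible with group multiplication.

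For (i) applied to $\rhd_v$: definition \eqref{eq:kitaev_holonomy_functor} shows that every holonomy of a composition of edge ends $f(e), b(e)$ takes values in the Poisson-Lie subgroup $G_- \subseteq G$, so each partial vertex holonomy $c_k(\gamma) \in G_-$, whence $\pi_+(1_{G_+} \cdot c_k(\gamma)) = \pi_+(c_k(\gamma)) = 1_G$ by Lemma \ref{lemma:global_double_projections_properties}. Formula \eqref{eq:vertex_action_explicit} then yields $\phi_k(1_{G_+}) = \id_K$ for each $k$, so $1_{G_+} \rhd_v \gamma = \gamma$. The argument for $\rhd_f$ is dual, using that partial face holonomies lie in $G_+$.

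For (ii), the technical heart is the transformation rule
\begin{equation*}
  c_k(\alpha \rhd_v \gamma) \;=\; \pi_-\bigl(\alpha \cdot c_k(\gamma)\bigr) \qquad (\alpha \in G_+,\ \gamma \in K,\ k=1,\dots,m),
\end{equation*}
which I would prove by induction on $k$. The case $k = 1$ is immediate since $c_1 \equiv 1_G$. For the step, write $c_{k+1} = c_k \cdot y_k$ with $y_k := \Hol(i_k^{\varepsilon_k})$, and use the two cocycle identities
\[
  \pi_-(gh) = \pi_-(g)\,\pi_-\bigl(\pi_+(g)\pi_-(h)\bigr),\qquad \pi_+(gh) = \pi_+\bigl(g\,\pi_-(h)\bigr)\,\pi_+(h),
\]
which follow directly from the factorization $G = G_- \cdot G_+$ together with Lemma \ref{lemma:global_double_projections_properties}. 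The first, applied to $g = \alpha c_k(\gamma)$ and $h = y_k(\gamma) \in G_-$, reduces the inductive step to $y_k(\alpha \rhd_v \gamma) = \pi_-\bigl(\pi_+(\alpha c_k(\gamma)) \cdot y_k(\gamma)\bigr)$, which is verified by substituting \eqref{eq:vertex_action_explicit} and using that $\pi_-$ is invariant under right-multiplication by $G_+$. Once the transformation rule is established, associativity reduces edge-by-edge to the second cocycle identity with $g = \alpha,\ h = \alpha' c_k(\gamma)$. The proof for $\rhd_f$ is the exact dual, interchanging $G_\pm$ and $\pi_\pm$, replacing $\alpha$ by $x^{-1}$, and swapping left with right multiplications.

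The principal technical obstacle is the combinatorial bookkeeping at vertices carrying loops (and dually at faces bounded on both sides by the same edge), where a single factor $\pi_e(\gamma)$ is modified by two different $\phi_k$'s corresponding to the edge ends $f(e)$ and $b(e)$. The saving observation is that the $b(e)$-modification is a right-multiplication by a $G_+$-factor and hence invisible to $\pi_-$; only the $f(e)$-modification, a left-multiplication, can affect any subsequent partial holonomy, so the inductive proof of the transformation rule accommodates this with routine additional bookkeeping.
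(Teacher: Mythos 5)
Your proposal is correct and is essentially the paper's own proof: the paper disposes of this lemma with the single line ``direct computation using the computation rules for $\pi_\pm$ from Lemma \ref{lemma:global_double_projections_properties}'', and your argument is exactly that computation carried out in full, organized around the transformation rule $c_k(\alpha \rhd_v \gamma) = \pi_-(\alpha\, c_k(\gamma))$ and the two factorization identities for $\pi_\pm$. The only quibble is in your loop discussion: neither modification actually disturbs the intervening partial vertex holonomies, since the $f(e)$-modification is a left $G_+$-multiplication that becomes a right one after the inversion built into $\Hol(b(e)^{-1})$ and is therefore also invisible to $\pi_-$, so the bookkeeping is even more routine than you suggest.
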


\begin{proof}
  Direct computation using the computation rules for $\pi_\pm : G \to G_\pm$ from Lemma \ref{lemma:global_double_projections_properties}.
\end{proof}

That $\rhd_v, \rhd_f$ are also Poisson maps will be shown in Proposition \ref{proposition:vertex_face_actions_are_poisson}.

We now define the functions invariant under vertex and face actions.
These are analogues of the elements of $\He(H)^{\ox E}$ that are invariant under the $D(H)$-right module structure from Equation \eqref{eq:quantum_kitaev_right_action}.

\begin{definition}[Invariant functions and flatness]~
  \label{def:functions_on_flat_elements}
  \begin{compactenum}
  \item
    Let $v \in V, f \in F$ and $L \subset V \dot \cup F$.
    We set
    \begin{align*}
      C^\infty(K, \R)^v_L &:= \left\{ g \in C^\infty(K, \R) \, \middle| \, g (\alpha \rhd_v \gamma) = g(\gamma) \Forall \alpha \in G_+, \gamma \in K_L \right\} \\
      C^\infty(K, \R)^f_L &:= \left\{ g \in C^\infty(K, \R) \, \middle| \, g (x \rhd_f \gamma) = g(\gamma) \Forall x \in G_-, \gamma \in K_L \right\} \\
      C^\infty(K, \R)^{inv}_L &:= (\bigcap_{v \in V} C^\infty(K, \R)_L^v) \cap (\bigcap_{f \in F} C^\infty(K, \R)_L^f) \, ,
    \end{align*}
    where $K_L$ is the set of elements flat at $L$ from \eqref{eq:flat_elements}.
    For $L= \emptyset$ we simply write, respectively, $C^\infty(K, \R)^v, C^\infty(K, \R)^f$ and $C^\infty(K, \R)^{inv}$ for $C^\infty(K, \R)^v_L, C^\infty(K, \R)^f_L$ and $C^\infty(K, \R)^{inv}_L$.
  \item
    Denote by
    \[
      \mathcal A(\Gamma, L) := C^\infty(K, \R)^{inv}_L / \sim
    \]
    the set of equivalence classes with respect to the relation $f \sim f' \Longleftrightarrow f|_{K_L} = f'|_{K_L}$.
  \end{compactenum}
\end{definition}

\begin{remark}
  \label{remark:excitations}
  The set $\mathcal A(\Gamma, L)$ is related to Kitaev models with excitations.
  Excited states in a Kitaev model are described by the subspace $\mathcal L_L$ of the extended space $H^{\ox E}$ from Equation \eqref{eq:quantum_kitaev_protected_space_with_excitations}, where $L \subseteq V \dot \cup F$ contains all vertices and faces except for a number of selected sites $(v_1, f_1), \dots, (v_n, f_n)$ that correspond to the excitations.
  According to Meusburger \cite[following Theorem 8.3]{meusburger16}, the endomorphism algebra of $\mathcal L_L$ can be obtained from the endomorphism algebra $\He(H)^{\ox E}$ of the extended space: one has to consider the invariants under the $D(H)$-module algebra structure $\lhd_v$ from \eqref{eq:quantum_kitaev_right_action} for all vertices $v \neq v_1, \dots, v_n$ and project these invariants onto the elements of $\He(H)^{\ox E}$ flat at $L$ using a projector similar to $Q_{flat}$ from Proposition \ref{proposition:quantum_kitaev_flatness}.
  
  The Poisson algebra $C^\infty(K, \R)$ corresponds to the endomorphism algebra $\He(H)^{\ox E}$ and vertex and face actions together are analogues of the $D(H)$-module algebra structures.
  One can thus define an analogue of the endomorphism algebra of $\mathcal L_L$: it is obtained from the set $\bigcap_{l \in L} C^\infty(K, \R)^l_L$ of invariants under the vertex actions $\rhd_v$ for $v \neq v_1, \dots, v_n$ and face actions $\rhd_f$ for  $f \neq f_1, \dots, f_n$.
  The projection onto elements flat at $L$ is implemented by identifying these invariant functions if they coincide on the subspace $K_L \subseteq K$.

  In the definition of the set $\mathcal A(\Gamma, L)$ we instead impose invariance under \emph{all} vertex and face actions before taking the quotient, so that $\mathcal A(\Gamma, L)$ corresponds to a subalgebra of $\End_\C(\mathcal L_L)$.
 It should hence form a Poisson algebra.
We will show this in Proposition \ref{proposition:poisson_subalgebra_reduce_to_flat_subspace} and prove in Theorem \ref{theorem:kitaev_moduli_space} that it is isomorphic to the Poisson algebra of functions on the moduli space of flat $G$-bundles for the oriented  surface obtained by gluing disks to the faces in $L$ and annuli to all other faces.
Elements of $\mathcal A(\Gamma, L)$ appear to correspond to Kitaev's \emph{topological operators} \cite{kitaev03} on $\mathcal L_L$. 
\end{remark}

\subsection{Graph transformations}
\label{subsection:graph_trafos_kitaev}

In this section we present certain transformations between ribbon graphs and associate to them Poisson maps that relate the Poisson-Kitaev model $(K, \Gamma)$ with the model $(K', \Gamma')$ on the transformed graph $\Gamma'$.
These transformations do not change the homeomorphism class of the oriented  surface with boundary associated to $\Gamma$.
The associated Poisson maps induce bijections between the sets $\mathcal A(\Gamma, L)$ and $\mathcal A(\Gamma', L')$.

First we relate graphs that only differ by a cyclic permutation of the linear ordering at a vertex or face.
It turns out that the sets of invariant functions $C^\infty(K, \R)^v_L$ with respect to different orderings at the vertex $v$ coincide if $v \in L$, and an analogous statement holds for faces.

Consider a vertex $v \in V$ with ordered edge ends $i_1 < \dots < i_m$.
Denote by $\rhd_v : G_+ \x K \to K$ the vertex action at $v$ from Equation \eqref{eq:vertex_action} and denote by $\rhd_v' : G_+ \x K \to K$ the vertex action for the cyclically permuted linear ordering $i_2 < \dots < i_m < i_1$.
Similarly, consider the face action $\rhd_f: G_- \x K \to K$ from \eqref{eq:face_action} for a face with adjacent edge sides $i'_1 < \dots < i'_n$ and the action $\rhd_f' : G_- \x K \to K$ for the ordering $i'_2 < \dots < i'_n < i'_1$.

\begin{lemma}[Vertex and face actions for cyclically permuted orderings]~
  \label{lemma:kitaev_trafo_moving_cilium}
  \begin{compactenum}
    \item
      Let $p_1(v)$ be the partial vertex path for the ordering $i_1 < \dots < i_m$.
      One has:
      \[
        \alpha \rhd'_v \gamma = \pi_+(\alpha \, \Hol(p_1(v)) (\gamma)^{-1}) \rhd_v \gamma \qquad \Forall \gamma \in K_v, \alpha \in G_+ \, .
      \]
    \item
      Similarly, let $p_1(f)$ be the partial face path for the ordering $i'_1 < \dots < i'_n$.
      We obtain:
      \[
        x \rhd'_f d = \pi_-( (x \Hol(p_1(f))(\gamma))^{-1} )^{-1} \rhd_f d \qquad \Forall \gamma \in K_f, x \in G_- \, .
      \]
    \item
      For $v\in V \cap L$ or $f\in F \cap L$ the sets $C^\infty(K,R)^v_L$ or $C^\infty(K,R)^f_L$ depend only on the ribbon graph structure of $\Gamma$, but not on the choice of cilia for $v$ or $f$. 
  \end{compactenum}
\end{lemma}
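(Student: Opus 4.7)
The plan is to prove (i) and (ii) by direct computation from the edge-wise formulas \eqref{eq:vertex_action_explicit}--\eqref{eq:face_action_explicit}, exploiting the identities $\pi_+(\pi_+(g)h) = \pi_+(gh)$ and $\pi_-(g\,\pi_-(h)) = \pi_-(gh)$ from Lemma \ref{lemma:global_double_projections_properties}, and then to read off (iii) as a formal consequence.

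For (i), abbreviate $h_j := \Hol(i_j^{\varepsilon_j})(\gamma)$, so that the old partial holonomies are $c_k(\gamma) = h_1 h_2 \cdots h_{k-1}$ and, after the cyclic permutation, the new ones are $c'_k(\gamma) = h_2 \cdots h_k$ for $k \geq 2$, with $c'_1 = 1$, while $c'_m(\gamma) = h_2 \cdots h_m$ at the new position of $i_1$. Since each $c_k$ depends only on the edges $e_1, \dots, e_{k-1}$, the composition $\phi_1(\alpha) \0 \dots \0 \phi_m(\alpha)$ decouples into the edge-wise modifications of \eqref{eq:vertex_action_explicit} with $c_k$ evaluated at the original $\gamma$ (loops at $v$ reduce to the same formula by tracking the order of modifications on the shared edge). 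Writing $\alpha' := \pi_+(\alpha h_1^{-1})$, Lemma \ref{lemma:global_double_projections_properties} gives, for every old position $k \geq 2$,
\begin{equation*}
  \pi_+\bigl(\alpha' \, c_k(\gamma)\bigr) = \pi_+\bigl(\pi_+(\alpha h_1^{-1}) \cdot h_1 h_2 \cdots h_{k-1}\bigr) = \pi_+(\alpha h_2 \cdots h_{k-1}) = \pi_+\bigl(\alpha \, c'_{k-1}(\gamma)\bigr),
\end{equation*}
which matches the new action at new position $k-1$. For $k=1$ (the edge carrying $i_1$, which has new position $m$) one has $c_1 = 1$, so under $\rhd_v$ with parameter $\alpha'$ the modifying factor is $\alpha' = \pi_+(\alpha h_1^{-1})$; using the flatness $h_1 h_2 \cdots h_m = 1_G$ one obtains $h_2 \cdots h_m = h_1^{-1}$, so under $\rhd'_v$ with parameter $\alpha$ the factor at position $m$ is $\pi_+(\alpha c'_m(\gamma)) = \pi_+(\alpha h_1^{-1}) = \alpha'$, matching again.

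Statement (ii) is proved by the dual argument. Setting $g_j := \Hol(i'_j{}^{\varepsilon_j})(\gamma)$ and $x' := \pi_-\bigl((x g_1)^{-1}\bigr)^{-1} \in G_-$, Lemma \ref{lemma:global_double_projections_properties} gives $\pi_-(d_k(\gamma) \, x'^{-1}) = \pi_-(d'_{k-1}(\gamma) \, x^{-1})$ for $k \geq 2$, while the wrap-around at $k = 1$ requires the face-flatness $g_n g_{n-1} \cdots g_1 = 1_G$ to identify $\pi_-(d_1 x'^{-1}) = \pi_-(g_1^{-1} x^{-1}) = \pi_-(d'_n(\gamma) x^{-1})$.

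Part (iii) is then formal. For $v \in L \cap V$ one has $K_L \subseteq K_v$, so (i) expresses $\alpha \rhd'_v \gamma$ as $\alpha' \rhd_v \gamma$ with some $\alpha' \in G_+$, for every $\alpha \in G_+$ and $\gamma \in K_L$; hence every $\rhd_v$-invariant function on $K_L$ is $\rhd'_v$-invariant. Iterating the same step $m$ times cycles through all cilia at $v$ and returns to the original one, so the sets of invariants coincide for every cyclic ordering of the edge ends, and $C^\infty(K,\R)^v_L$ depends only on the ribbon graph structure at $v$. The case $f \in L \cap F$ is analogous via (ii). The only subtle point in the proof is the wrap-around at the cyclically displaced edge end, which is exactly where the flatness assumption enters; everywhere else the computation reduces to routine applications of the two projection identities of Lemma \ref{lemma:global_double_projections_properties}.
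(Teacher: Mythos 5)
Your proposal is correct and follows essentially the same route as the paper: a direct edge-wise computation using the projection identities of Lemma \ref{lemma:global_double_projections_properties}, with the wrap-around at the displaced edge end handled by the flatness identity $h_2\cdots h_m = h_1^{-1}$, which is exactly the paper's key fact $\Hol(p_1(v))(\gamma)^{-1} = \Hol(p'_{m-1}(v))(\gamma)$ for $\gamma \in K_v$. Part (iii) is likewise deduced as a formal consequence in both arguments; your write-up merely spells out the index bookkeeping and the iteration of cyclic shifts that the paper leaves implicit.
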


\begin{proof}
  One can show (i) by direct computation using Lemma \ref{lemma:global_double_projections_properties} and the fact that
  \[
    \Hol(p_1(v)) (\gamma)^{-1} = \Hol(p'_{m-1}(v))(\gamma) \qquad \Forall \gamma \in K_v \, ,
  \]
  where $p'_{m-1}(v)$ is the partial vertex path from \eqref{eq:partial_vertex_path} in the transformed ordering.
  The proof for Statement (ii) is analogous and (iii) is a direct consequence of (i) and (ii).
\end{proof}

In the following, denote by $E', V', F'$ the sets of edges, vertices and faces, respectively, of the graph $\Gamma'$ obtained from a graph transformation.
Let $K' =   G_\He ^{\x E'}$ be the associated product Poisson manifold and $\Hol'$ the holonomy functor for the Poisson-Kitaev model $(K', \Gamma')$.

Next, we transform $\Gamma$ by gluing two distinct edges $e_1, e_2$ along a bivalent vertex $v_m$.
We require that there are no faces $f \in F$ with $v(f) = v_m$.
Orient $e_1, e_2$ so that $t(e_1) = v_m = s(e_2)$ and 
suppose that the ordering at $v_m$ is given by $f(e_1) < b(e_2)$, as shown in Figure \ref{fig:kitaev_trafo_glue_vertex}.

In the transformed graph $\Gamma'$ replace the edges $e_1, e_2$ by an edge $e'$ with $s(e') = s(e_1)$ and $t(e') = t(e_2)$.
The ordering of the transformed faces left and right of $e'$ is obtained by replacing the consecutive edge sides $r(e_1) < r(e_2)$ by $r(e')$ and $l(e_2) < l(e_1)$ by $l(e')$.

We define the map $\psi_{v_m}: K \to K'$ associated to this graph transformation by assigning to $e'$ the holonomy along $f(e_2) \0 r(e_2) \0 r(e_1)$:
\begin{equation}
  \label{eq:map_gluing_vertex}
  (\pi_e \0 \psi_{v_m}) (\gamma) :=
  \begin{cases}
    \pi_{e_2}(\gamma) \, \pi_+ (\pi_{e_1}(\gamma)) & \text{for } e = e'\\
    \pi_{e}(\gamma) & \text{otherwise.}
  \end{cases}
\end{equation}

\begin{figure}[h]
\centering
\begin{tikzpicture}[vertex/.style={circle, fill=black, inner sep=0pt, minimum size=2mm}, plain/.style={draw=none, fill=none}, scale=0.8]
  \coordinate (m) at (0,0);
  \draw [latex-] (m) -- ($(m) + (4,0)$);
  \draw [latex-] (m) -- ($(m) + (0,1)$);
  \draw [latex-] ($(m) + (0,1)$) -- ($(m) + (4,1)$);
  \draw [latex-] ($(m) + (4,0)$) -- ($(m) + (4,1)$);
  \node [plain] at ($(m) + (2,0.5)$) {$e'$};

  \draw [latex-] ($(m) + (5,0.5)$) -- ($(m) + (6,0.5)$);

  \coordinate (n) at ($(m) + (7,0)$);
  \draw [latex-] (n) -- ($(n) + (2,0)$);
  \draw [latex-] (n) -- ($(n) + (0,1)$);
  \draw [latex-] ($(n) + (0,1)$) -- ($(n) + (2,1)$);
  \draw [latex-] ($(n) + (2,0)$) to [bend left=45] ($(n) + (2,1)$);
  \node [plain] at ($(n) + (1,0.5)$) {$e_2$};

  \coordinate (o) at ($(n) + (2,0)$);
  \draw [latex-] (o) -- ($(o) + (2,0)$);
  \draw [latex-] (o) to [bend right=45] ($(o) + (0,1)$);
  \draw [latex-] ($(o) + (0,1)$) -- ($(o) + (2,1)$);
  \draw [latex-] ($(o) + (2,0)$) -- ($(o) + (2,1)$);
  \node [plain] at ($(o) + (1,0.5)$) {$e_1$};

  \draw [decorate,decoration={snake, amplitude=0.5mm}] (o) -- ($(o) + (0,0.4)$);

  \draw [-latex, color=red] ($(o) + (2, 1.25)$) -- ($(n) + (-0.25, 1.25)$) -- ($(n) + (-0.25,0)$);
\end{tikzpicture}
\caption{The map $\psi_{v_m}$  glues together the edges $e_1, e_2$ and assigns the value of the holonomy along $f(e_2) \0 r(e_2) \0 r(e_1)$ (red) to the resulting edge $e'$}
\label{fig:kitaev_trafo_glue_vertex}
\end{figure}
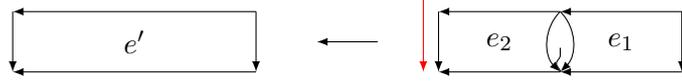

There is an analogous gluing transformation for faces.
Let $f_m \in F$ be a face with exactly two distinct adjacent edges $e_1, e_2$.
We require that there is no vertex $v \in V$ with $f(v) = f_m$.
Choose the cilium of $f_m$ such that the associated face path is given by $p(f_m) = l(e_2)^{-1} \0 r(e_1)$.
This implies that $t(e_1) = t(e_2), s(e_1) = s(e_2)$ and that the edge sides are ordered $r(e_1) < l(e_2)$ as in Figure \ref{fig:kitaev_trafo_glue_face}.

In the transformed graph $\Gamma'$ replace the edges $e_1, e_2$ by a single edge $e' : s(e_1) \to t(e_1)$.
Choose the natural ordering at the faces and vertices adjacent to $e'$ by replacing edge ends and sides of $e_1, e_2$ by the edge ends and sides of $e'$.

The associated map $\psi_{f_m}: K \to K'$ assigns to $e'$ the holonomy along $f(e_1) \0 f(e_2) \0 r(e_2)$:
\begin{equation}
  \label{eq:map_gluing_face}
  (\pi_e \0 \psi_{f_m}) (\gamma) :=
  \begin{cases}
    \pi_-(\pi_{e_1}(\gamma)) \, \pi_{e_2}(\gamma) & \text{for } e = e'\\
    \pi_{e}(\gamma) & \text{otherwise.}
  \end{cases}
\end{equation}

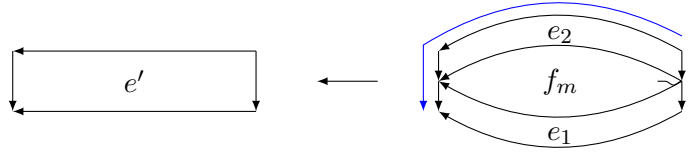
\begin{figure}[h]
\centering
\begin{tikzpicture}[vertex/.style={circle, fill=black, inner sep=0pt, minimum size=2mm}, plain/.style={draw=none, fill=none}, scale=0.8]
  \coordinate (m) at (0,0);
  \draw [latex-] (m) -- ($(m) + (4,0)$);
  \draw [latex-] (m) -- ($(m) + (0,1)$);
  \draw [latex-] ($(m) + (0,1)$) -- ($(m) + (4,1)$);
  \draw [latex-] ($(m) + (4,0)$) -- ($(m) + (4,1)$);
  \node [plain] at ($(m) + (2,0.5)$) {$e'$};

  \draw [latex-] ($(m) + (5,0.5)$) -- ($(m) + (6,0.5)$);

  \coordinate (n) at ($(m) + (7,0)$);
  \draw [latex-] (n) to [bend right=30] node [midway, above=-0.1cm] {$e_1$} ($(n) + (4,0)$);
  \draw [latex-] (n) -- ($(n) + (0,0.5)$);
  \draw [latex-] ($(n) + (0,0.5)$) to [bend right=30] ($(n) + (4,0.5)$);
  \draw [latex-] ($(n) + (4,0)$) -- ($(n) + (4,0.5)$);

  \coordinate (o) at ($(m) + (7,0.5)$);
  \draw [latex-] (o) to [bend left=30] node [midway, above=-0.1cm] {$e_2$} ($(o) + (4,0)$);
  \draw [latex-] (o) -- ($(o) + (0,0.5)$);
  \draw [latex-] ($(o) + (0,0.5)$) to [bend left=30] ($(o) + (4,0.5)$);
  \draw [latex-] ($(o) + (4,0)$) -- ($(o) + (4,0.5)$);

  \draw [decorate,decoration={snake, amplitude=0.5mm}] ($(o) + (4,0)$) -- ($(o) + (3.6,0)$);

  \draw [latex-, color=blue] ($(n) + (-0.25, 0)$) -- ($(o) + (-0.25,0.6)$) to [bend left=30] ($(o) + (4,0.75)$);

  \node at ($(o) + (2,0)$) {$f_m$};
\end{tikzpicture}
\caption{The edges $e_1, e_2$ and are glued together. The resulting edge $e'$ is decorated with the holonomy along $f(e_1) \0 f(e_2) \0 r(e_2)$ (blue)}
\label{fig:kitaev_trafo_glue_face}
\end{figure}

As a face with just two edge sides corresponds to a bivalent vertex in the dual graph $\Gamma^*$, this transformation can be viewed as gluing two edges along a bivalent vertex in $\Gamma^*$ (up to exchanging the roles of $G_-$ and $G_+$).

\begin{lemma}[Properties of the gluing transformations] Let $l \in \left\{ v_m, f_m \right\}$.
  \label{lemma:kitaev_gluing_trafos}
  \begin{compactenum}
  \item
    The map $\psi_{l}$ is Poisson.
  \item
    The map $\psi_{v_m}$ ($\psi_{f_m}$) is invariant under the vertex action at $v_m$ (face action at $f_m$):
    \[
      \psi_{v_m} (\alpha \rhd_{v_m} \gamma) = \psi_{v_m}(\gamma) 
      \qquad
      \psi_{f_m} (x \rhd_{f_m} \gamma) = \psi_{f_m}(\gamma)
      \qquad
      \Forall \gamma \in K, x \in G_-, \alpha \in G_+ \, .
    \]
  \item
    For $\gamma \in K_l$ the map $\psi_l$ intertwines vertex and face actions:
    \begin{equation}
      \label{eq:kitaev_gluing_trafos_compatible_with_actions}
      \psi_{l} (x \rhd_f \gamma) = x \rhd_f \psi_{l}(\gamma) \quad \psi_{l} (\alpha \rhd_v \gamma) = \alpha \rhd_v \psi_{l}(\gamma) \quad \Forall f \in F' \, , x \in G_- \, , v \in V' \, , \alpha \in G_+ \, .
    \end{equation}
  \item
    For $\gamma \in K_l$ the map $\psi_l$ intertwines vertex and face holonomies:
    \begin{equation}
      \label{eq:kitaev_gluing_trafos_compatible_with_holonomies}
      \Hol^v (\gamma) = \Hol'^v(\psi_l(\gamma)) \qquad \Hol^f (\gamma) = \Hol'^f(\psi_l(\gamma))
      \qquad \Forall v \in V', f \in F' \, .
    \end{equation}
  \item
    The map $\psi_l$ preserves flatness:
    one has $\psi_l(K_L) = K'_{L'}$ for all $L' \subseteq V' \dot \cup F'$ and $L := L' \cup \left\{ l \right\}$.
  \end{compactenum}
\end{lemma}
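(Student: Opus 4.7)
In each of the five statements, $\psi_l$ acts as the identity on all edges except the new edge $e'$, so the analysis reduces to the single map
\[
  \mu_+(g_1, g_2) = g_2 \cdot \pi_+(g_1), \qquad \mu_-(g_1, g_2) = \pi_-(g_1) \cdot g_2
\]
on $G_\He \times G_\He \to G_\He$, together with the combinatorics of the gluing on the ciliated ribbon graph. I would prove the five items in the order stated, since each later item uses the previous ones.

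For (i), factor $\mu_\pm$ through $\id \times \pi_\pm : G_\He \times G_\He \to G_\He \times G_\pm$ followed by group multiplication. By Lemma \ref{lemma:projectionLocallyPoisson}, $\pi_\pm : G_\He \to G_\pm$ is anti-Poisson, so the first map sends the product Heisenberg bivector to the product of $w_\He$ and $\mp w_{G_\pm}$. The remaining step is a direct $r$-matrix computation: substituting \eqref{eq:bivector_heisenberg_double} and using $r \in \g_- \otimes \g_+$ together with Lemma \ref{lemma:global_double_projections_r_matrix}, the cross-terms produced by the multiplication cancel and one recovers $w_\He \circ \mu_\pm$. This is the one genuinely computational step.

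For (ii), expand $\pi_{e'}(\psi_{v_m}(\alpha \rhd_{v_m} \gamma))$ using \eqref{eq:vertex_action_explicit} and the ordering $f(e_1) < b(e_2)$, which gives
\[
  \pi_{e_2}(\gamma) \cdot \pi_+(\alpha\, \pi_-(\pi_{e_1}(\gamma)))^{-1} \cdot \pi_+(\alpha\, \pi_{e_1}(\gamma)).
\]
Factor $\pi_{e_1}(\gamma) = \pi_-(\pi_{e_1}(\gamma)) \, \pi_+(\pi_{e_1}(\gamma))$ and apply Lemma \ref{lemma:global_double_projections_properties}(ii) to split the last factor into $\pi_+(\alpha\, \pi_-(\pi_{e_1}(\gamma))) \cdot \pi_+(\pi_{e_1}(\gamma))$; the telescoping leaves $\pi_{e_2}(\gamma)\, \pi_+(\pi_{e_1}(\gamma)) = \pi_{e'}(\psi_{v_m}(\gamma))$. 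The face case is dual, using the analogous identity for $\pi_-$.

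For (iii) and (iv), inspect the vertex and face paths of $\Gamma'$: any path in $\Gamma'$ passing through $e'$ corresponds in $\Gamma$ to the concatenation of paths through $e_1$ and $e_2$, plus an extra excursion around the bivalent vertex $v_m$ (or the bigon face $f_m$). The contribution of this excursion to the relevant holonomy is exactly $\Hol^{v_m}(\gamma)$ (resp.\ $\Hol^{f_m}(\gamma)$), which equals $1_G$ for $\gamma \in K_l$; this proves (iv). For (iii) the same path identification, combined with (ii) to rewrite the partial holonomies appearing in the actions as $\pi_{e'}\circ\psi_l$, yields the equivariance edge by edge from \eqref{eq:vertex_action_explicit}--\eqref{eq:face_action_explicit}. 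Finally, (v) follows from (iv) together with the fact that the unique factorization $G \cong G_- G_+$ at the erased vertex or face provides a section of $\psi_l|_{K_L} : K_L \to K'_{L'}$, upgrading the inclusion $\psi_l(K_L) \subseteq K'_{L'}$ to an equality. The main obstacle is the $r$-matrix bookkeeping in (i); all other items reduce to algebraic manipulations with Lemma \ref{lemma:global_double_projections_properties} and the graph combinatorics.
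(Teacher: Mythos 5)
Your proposal is correct and follows the paper's proof in all essentials: (ii) is the same telescoping computation with Lemma \ref{lemma:global_double_projections_properties}, (iii)--(iv) reduce to the single identity $\Hol'(l(e')\0 b(e'))(\psi_{v_m}(\gamma)) = \Hol(l(e_2)\0 l(e_1)\0 b(e_1))(\gamma)$, which is equivalent to flatness at $v_m$ (your ``excursion around the erased vertex contributes $\Hol^{v_m}$'' is exactly this), and (v) uses the explicit section $\pi_{e_1}\mapsto g'$, $\pi_{e_2}\mapsto \pi_-(g')$ built from the global factorization. The one place you genuinely diverge is (i): the paper avoids all $r$-matrix computation by factoring $\psi_{v_m}$ through $\id\x\pi_+$ and then invoking Theorem \ref{theorem:heisenberg_double_inversion_poisson_actions}~(ii), which already asserts that $(g,h)\mapsto hg^{-1}$ is a Poisson action of $G$ on $G_\He$; your plan of verifying the cross-term cancellation by hand amounts to re-proving that theorem, which works but is more labour. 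If you do take that route, two points need care. First, it is not ``group multiplication $G_\He\x G_\He\to G_\He$'' that is Poisson (the Heisenberg double is not a Poisson-Lie group); what is Poisson is multiplication by $G_\pm$ equipped with the correct sign of its Poisson-Lie bivector, and the cancellation you describe is precisely the Poisson-action property. Second, your blanket statement that ``$\pi_\pm:G_\He\to G_\pm$ is anti-Poisson'' is imprecise: by Lemma \ref{lemma:projectionLocallyPoisson}, $\pi_-$ is Poisson and only $\pi_+$ is anti-Poisson for $w_\He$ (your subsequent $\mp w_{G_\pm}$ bookkeeping is consistent with this, the preceding sentence is not). Finally, a small slip in (iii): the ingredient there is not item (ii) but the invariance of the \emph{partial} vertex and face holonomies $c_k$, $d_k$ under $\psi_l$ on $K_l$, which follows from the same key identity as (iv).
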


\begin{proof}
  Statement (i) follows from Lemma \ref{lemma:projectionLocallyPoisson} and Theorem \ref{theorem:heisenberg_double_inversion_poisson_actions} (ii).
  Statement (ii) is a direct computation using Lemma \ref{lemma:global_double_projections_properties}.

  We prove (iii) and (iv) for the map $\psi_{v_m}$ as the proof for $\psi_{f_m}$ is completely analogous.
  Note that the vertex holonomy at $t(e_2) = t(e')$ and the face holonomy to the right of $e'$ are invariant under $\psi_{v_m}$ per construction.
  To show (iv), it remains to show that the holonomy functor $\Hol'$ for $\Gamma_D'$ satisfies
  \begin{equation}
    \label{eq:kitaev_gluing_vertex_proof0}
    \Hol' (l(e') \0 b(e')) (\psi_{v_m}(\gamma)) = \Hol (l(e_2) \0 l(e_1) \0 b(e_1)) (\gamma) \, ,
  \end{equation}
  as this is equivalent to saying that the vertex holonomy at $s(e') = s(e_1)$ and the face holonomy to the left of $e_1$ and $e_2$ are invariant under $\psi_{v_m}$.
  Direct computation using Lemma \ref{lemma:global_double_projections_properties} shows that Equation \eqref{eq:kitaev_gluing_vertex_proof0} is equivalent to $\gamma \in K_{v_m}$, which proves (iv).

  Equation \eqref{eq:kitaev_gluing_vertex_proof0} in particular implies that for $\gamma \in K_{v_m}$ the partial vertex and face paths from Equations \eqref{eq:partial_vertex_path} and \eqref{eq:partial_face_path} are invariant under $\psi_{v_m}$.
  Therefore, so are the elements $c_k(\gamma)$ and $d_k(\gamma)$ in Equations \eqref{eq:vertex_action_explicit} and \eqref{eq:face_action_explicit}, that are used to define vertex and face actions.
  This implies (iii).

  It remains to prove Statement (v).
  That $\psi_{l} (K_L) \subseteq K'_{L'}$ follows from Statement (iv).
  To show that $\psi_{l}(K_L) \supseteq K'_{L'}$, we construct a right inverse $\phi_l: K' \to K$ to $\psi_l$.
  First let $l = v_m$.
  As illustrated in Figure \ref{fig:kitaev_trafo_new_vertex}, the map $\phi_{v_m}$ is defined by:
  \begin{equation}
    \label{eq:trafo_new_vertex}
    (\pi_e \0 \phi_{v_m}) (\gamma') =
    \begin{cases}
      \pi_{e'} (\gamma') & \text{if } e = e_1 \\
      \pi_-(\pi_{e'} (\gamma')) & \text{if } e = e_2 \\
      \pi_{e} (\gamma') & \text{otherwise.}
    \end{cases}
  \end{equation}

  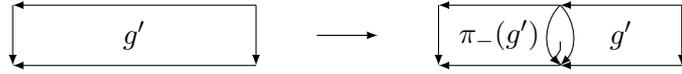
\begin{figure}[h]
  \centering
  \begin{tikzpicture}[vertex/.style={circle, fill=black, inner sep=0pt, minimum size=2mm}, plain/.style={draw=none, fill=none}, scale=0.8]
    \coordinate (m) at (0,0);
    \draw [latex-] (m) -- ($(m) + (4,0)$);
    \draw [latex-] (m) -- ($(m) + (0,1)$);
    \draw [latex-] ($(m) + (0,1)$) -- ($(m) + (4,1)$);
    \draw [latex-] ($(m) + (4,0)$) -- ($(m) + (4,1)$);
    \node [plain] at ($(m) + (2,0.5)$) {$g'$};

    \draw [-latex] ($(m) + (5,0.5)$) -- ($(m) + (6,0.5)$);

    \coordinate (n) at ($(m) + (7,0)$);
    \draw [latex-] (n) -- ($(n) + (2,0)$);
    \draw [latex-] (n) -- ($(n) + (0,1)$);
    \draw [latex-] ($(n) + (0,1)$) -- ($(n) + (2,1)$);
    \draw [latex-] ($(n) + (2,0)$) to [bend left=45] ($(n) + (2,1)$);
    \node [plain] at ($(n) + (1,0.5)$) {$\pi_-(g')$};

    \coordinate (o) at ($(n) + (2,0)$);
    \draw [latex-] (o) -- ($(o) + (2,0)$);
    \draw [latex-] (o) to [bend right=45] ($(o) + (0,1)$);
    \draw [latex-] ($(o) + (0,1)$) -- ($(o) + (2,1)$);
    \draw [latex-] ($(o) + (2,0)$) -- ($(o) + (2,1)$);
    \node [plain] at ($(o) + (1,0.5)$) {$g'$};

    \draw [decorate,decoration={snake, amplitude=0.5mm}] (o) -- ($(o) + (0,0.4)$);
  \end{tikzpicture}
  \caption{The map $\phi_{v_m}$ corresponds to creating a flat bivalent vertex}
  \label{fig:kitaev_trafo_new_vertex}
  \end{figure}
  Direct computations using Lemma \ref{lemma:global_double_projections_properties} show that $\psi_{v_m} \0 \phi_{v_m} = \id_{K'}$ and that for $\gamma' \in K'_{L'}$ one has $\phi_{v_m} (\gamma') \in K_L$.
  This implies $\psi_{v_m}(K_L) \supseteq K'_{L'}$.

  For $l = f_m$ one proceeds analogously, but the right inverse $\phi_{f_m}: K' \to K$ is given by:
  \begin{equation}
    \label{eq:trafo_new_face}
    (\pi_e \0 \phi_{f_m}) (\gamma') =
    \begin{cases}
      \pi_{e'} (\gamma') & \text{if } e = e_1 \\
      \pi_+(\pi_{e'} (\gamma')) & \text{if } e = e_2 \\
      \pi_{e} (\gamma') & \text{otherwise.}
    \end{cases}
  \end{equation}

  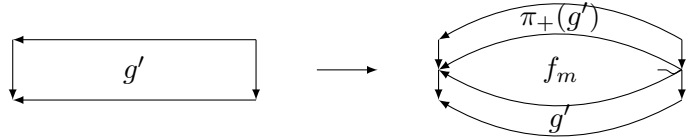
\begin{figure}[h]
  \centering
  \begin{tikzpicture}[vertex/.style={circle, fill=black, inner sep=0pt, minimum size=2mm}, plain/.style={draw=none, fill=none}, scale=0.8]
    \coordinate (m) at (0,0);
    \draw [latex-] (m) -- ($(m) + (4,0)$);
    \draw [latex-] (m) -- ($(m) + (0,1)$);
    \draw [latex-] ($(m) + (0,1)$) -- ($(m) + (4,1)$);
    \draw [latex-] ($(m) + (4,0)$) -- ($(m) + (4,1)$);
    \node [plain] at ($(m) + (2,0.5)$) {$g'$};

    \draw [-latex] ($(m) + (5,0.5)$) -- ($(m) + (6,0.5)$);

    \coordinate (n) at ($(m) + (7,0)$);
    \draw [latex-] (n) to [bend right=30] node [midway, above=-0.1cm] {$g'$} ($(n) + (4,0)$);
    \draw [latex-] (n) -- ($(n) + (0,0.5)$);
    \draw [latex-] ($(n) + (0,0.5)$) to [bend right=30] ($(n) + (4,0.5)$);
    \draw [latex-] ($(n) + (4,0)$) -- ($(n) + (4,0.5)$);

    \coordinate (o) at ($(m) + (7,0.5)$);
    \draw [latex-] (o) to [bend left=30] node [midway, above=-0.1cm] {$\pi_+(g')$} ($(o) + (4,0)$);
    \draw [latex-] (o) -- ($(o) + (0,0.5)$);
    \draw [latex-] ($(o) + (0,0.5)$) to [bend left=30] ($(o) + (4,0.5)$);
    \draw [latex-] ($(o) + (4,0)$) -- ($(o) + (4,0.5)$);

    \draw [decorate,decoration={snake, amplitude=0.5mm}] ($(o) + (4,0)$) -- ($(o) + (3.6,0)$);

    \node at ($(o) + (2,0)$) {$f_m$};
  \end{tikzpicture}
  \caption{The map $\phi_{f_m}$ corresponds to creating a new flat face}
  \label{fig:kitaev_trafo_new_face}
  \end{figure}
\end{proof}

We express the graph transformations from Lemmas \ref{lemma:kitaev_trafo_moving_cilium} and \ref{lemma:kitaev_gluing_trafos} in terms of the set $\mathcal A(\Gamma, L)$ of invariant functions on $K_L$ from Definition \ref{def:functions_on_flat_elements} (ii).
Let $l \in \left\{ v_m, f_m \right\}$ and consider the map $\psi_l$ from Equation \eqref{eq:map_gluing_vertex} or \eqref{eq:map_gluing_face} that corresponds to a graph transformation $\Gamma \to \Gamma'$ that glues two edges together along a bivalent vertex $v_m$ or a face $f_m$ consisting of two edges.

\begin{proposition}[Graph transformations, invariance and flatness]~
\label{proposition:graph_trafos}
  \begin{compactenum}
  \item
    The set $\mathcal A(\Gamma, L)$ does not depend on the choice of cilium for the vertices and faces in $L$.
  \item
    The map $\psi_l^* : C^\infty(K', \R) \to C^\infty(K, \R), h \mapsto h \0 \psi_l$ is an injective homomorphism of Poisson algebras.
    For $L' \subseteq V' \dot \cup F'$ and $L:= L' \cup \left\{ l \right\}$ it satisfies:
    \[
      \psi_l^* (C^\infty(K', \R)_{L'}^{inv}) \subseteq C^\infty(K, \R)_{L}^{inv} \, .
    \]
  \item
    It induces a bijection $\psi^*_{l /\sim} : \mathcal A(\Gamma', L') \to \mathcal A(\Gamma, L), [h'] \mapsto [\psi_l^*(h')]$.
  \end{compactenum}
\end{proposition}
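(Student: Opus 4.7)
The plan is straightforward except for one computational core. For part~(i), both ingredients of $\mathcal A(\Gamma,L)$ are cilium-independent at any $l\in L$: the invariants $C^\infty(K,\R)^{v}_L$ and $C^\infty(K,\R)^{f}_L$ are independent of the cilia at $v,f\in L$ by Lemma~\ref{lemma:kitaev_trafo_moving_cilium}\,(iii), and the subset $K_L$ depends only on the cyclic ordering at each $l\in L$ since vertex and face paths for distinct linear refinements of the same cyclic ordering differ by a cyclic permutation and hence share the same triviality locus. Consequently $C^\infty(K,\R)^{inv}_L$ and the equivalence relation $\sim$ induced by $K_L$ are both intrinsic.

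For part~(ii), $\psi_l^{*}$ is a Poisson algebra homomorphism because $\psi_l$ is Poisson by Lemma~\ref{lemma:kitaev_gluing_trafos}\,(i), and injective because $\psi_l$ is surjective, as witnessed by the section $\phi_l\colon K'\to K$ from the proof of Lemma~\ref{lemma:kitaev_gluing_trafos}\,(v) satisfying $\psi_l\0\phi_l=\id_{K'}$. For the invariance statement, I take $h'\in C^\infty(K',\R)^{inv}_{L'}$ and $\gamma\in K_L$ and verify invariance of $h'\0\psi_l$ at $\gamma$ under each vertex and face action: for $v\in V'\subseteq V$, Lemma~\ref{lemma:kitaev_gluing_trafos}\,(iii) gives $\psi_l(\alpha\rhd_v\gamma)=\alpha\rhd_v\psi_l(\gamma)$ (using $\gamma\in K_l$), while $\psi_l(\gamma)\in K'_{L'}$ by Lemma~\ref{lemma:kitaev_gluing_trafos}\,(v), so invariance of $h'$ transfers; and for $l$ itself (if $l\in V$), the stronger pointwise identity $\psi_l(\alpha\rhd_l\gamma)=\psi_l(\gamma)$ of Lemma~\ref{lemma:kitaev_gluing_trafos}\,(ii) makes invariance under $\rhd_l$ automatic. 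The face cases are entirely analogous.

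For part~(iii), well-definedness of $\psi^{*}_{l/\sim}$ is immediate from $\psi_l(K_L)\subseteq K'_{L'}$ and its injectivity from the equality $\psi_l(K_L)=K'_{L'}$, both provided by Lemma~\ref{lemma:kitaev_gluing_trafos}\,(v). For surjectivity I would construct a preimage of $[h]$ as $h':=h\0\phi_l\in C^\infty(K',\R)$ and check two things. First, $h'\in C^\infty(K',\R)^{inv}_{L'}$: for $v\in V'$ and $\gamma'\in K'_{L'}$, the identity $\psi_l\0\phi_l=\id_{K'}$ combined with Lemma~\ref{lemma:kitaev_gluing_trafos}\,(iii) shows that $\phi_l(\alpha\rhd_v\gamma')$ and $\alpha\rhd_v\phi_l(\gamma')$ have the same image under $\psi_l$, so by the key lemma below they differ by a $\rhd_l$-action, and invariance of $h$ under both $\rhd_v$ and $\rhd_l$ yields $h'(\alpha\rhd_v\gamma')=h'(\gamma')$. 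Second, $[h'\0\psi_l]=[h]$: for $\gamma\in K_L$ the same key lemma puts $\phi_l(\psi_l(\gamma))$ in the $\rhd_l$-orbit of $\gamma$, and $\rhd_l$-invariance of $h$ gives $h(\phi_l(\psi_l(\gamma)))=h(\gamma)$.

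The main obstacle is this key lemma: for $\gamma_1,\gamma_2\in K_L$ with $\psi_l(\gamma_1)=\psi_l(\gamma_2)$, the elements $\gamma_1,\gamma_2$ lie in the same orbit of $\rhd_{v_m}$ (if $l=v_m$) or of $\rhd_{f_m}$ (if $l=f_m$). For $l=v_m$, writing $g_i:=\pi_{e_1}(\gamma_i)$, $h_i:=\pi_{e_2}(\gamma_i)$ and $g_{i\pm}:=\pi_{\pm}(g_i)$, the hypothesis reduces to $h_1g_{1+}=h_2g_{2+}$ (on all other edges $\gamma_1$ and $\gamma_2$ agree automatically), and the natural candidate $\alpha:=g_2g_1^{-1}$ satisfies $\alpha\rhd_{v_m}\gamma_1=\gamma_2$ by direct calculation with Lemma~\ref{lemma:global_double_projections_properties}. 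The nontrivial step is verifying $\alpha\in G_+$: a short computation rewrites $\alpha=(g_{2-})\,\beta\,g_{1-}^{-1}$ with $\beta:=g_{2+}g_{1+}^{-1}\in G_+$, and the flatness constraints $g_{i-}=\pi_-(h_i^{-1})^{-1}$ force this expression, via the dressing action of $G_+$ on $G_-$ implicit in Lemma~\ref{lemma:global_double_projections_properties}, to collapse to an element of $G_+$. The case $l=f_m$ is treated analogously by interchanging $G_\pm$ and $\pi_\pm$.
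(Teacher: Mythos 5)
Your proposal is correct and follows essentially the same route as the paper: parts (i) and (ii) are read off from Lemmas \ref{lemma:kitaev_trafo_moving_cilium} and \ref{lemma:kitaev_gluing_trafos}, and for (iii) your ``key lemma'' (fibers of $\psi_l$ over flat elements are $\rhd_l$-orbits) is a repackaging of the paper's identity $(\phi_{v_m} \0 \psi_{v_m})(\gamma) = \alpha(\gamma) \rhd_{v_m} \gamma$ with $\alpha(\gamma) = \pi_+(\pi_{e_2}(\gamma)\,\pi_-(\pi_{e_1}(\gamma))^{-1})$, and your verification that $g_2 g_1^{-1} \in G_+$ under the flatness constraint at $v_m$ matches the computation needed there. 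The one step you elide is that in $h\bigl(\beta \rhd_{v_m} (\alpha \rhd_v \phi_l(\gamma'))\bigr)$ the two invariances can only be invoked at points of $K_L$, so one must first commute $\rhd_{v_m}$ past $\rhd_v$ (since $K_L$ is not known to be stable under $\rhd_v$ for arbitrary $L'$) — a commutativity the paper states and uses explicitly at this point.
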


\begin{proof}
  Statement (i) follows directly from Lemma \ref{lemma:kitaev_trafo_moving_cilium} and (ii) from Lemma \ref{lemma:kitaev_gluing_trafos}.

  In the proof of (iii), we first consider the case $l = v_m$.
  By Lemma \ref{lemma:kitaev_gluing_trafos} (v), one has $\psi_{v_m}(K_L) \subseteq K'_{L'}$.
  Together with Statement (ii) this implies that the map $[h'] \mapsto [\psi_{v_m}^*(h')]$ is well-defined.

  To see that it is bijective, we consider the right inverse $\phi_{v_m}: K' \to K$ of $\psi_{v_m}$ from Equation \eqref{eq:trafo_new_vertex}.
  A direct computation using Lemma \ref{lemma:global_double_projections_properties} shows that $\phi_{v_m}(K'_{L'}) \subseteq K_L$ and therefore the map
  \[
    \phi^*_{v_m/ \sim} : \quad \psi^*_{v_m /\sim} (\mathcal A(\Gamma', L')) \to \mathcal A(\Gamma', L') \qquad [h] \mapsto [h \0 \phi_{v_m}]
  \]
  is a well-defined left inverse of $\psi^*_{v_m /\sim}$.
  In particular, $\psi^*_{v_m /\sim}$ is injective.

  To prove that $\psi^*_{v_m /\sim}$ is also surjective, we have to show that for all $h \in C^\infty(K, \R)^{inv}_L$ there is a function $h' \in C^\infty(K', \R)^{inv}_{L'}$ such that
  \begin{equation}
    \label{eq:graph_trafos_proof0}
    \psi_{v_m}^* (h')(\gamma) = h (\gamma) \Forall \gamma \in K_L \, .
  \end{equation}

  Set $h' := h \0 \phi_{v_m}$.
  The map $\phi_{v_m}$ is no left inverse of $\psi_{v_m}$.
  However, for $\gamma \in K_{v_m}$ the map $\phi_{v_m} \0 \psi_{v_m}$ can be expressed by a vertex action at $v_m$:
  \begin{equation}
    \label{eq:graph_trafos_proof1}
    (\phi_{v_m} \0 \psi_{v_m}) (\gamma) = \alpha(\gamma) \rhd_{v_m} \gamma \Forall \gamma \in K_{v_m} \quad \text{ with } \quad \alpha(\gamma) := \pi_+(\pi_{e_2}(\gamma) \, \pi_-(\pi_{e_1}(\gamma))^{-1}) \, .
  \end{equation}
  We therefore obtain for $\gamma \in K_L$
  \[
    \psi_{v_m}^*(h') (\gamma) = (h \0 \phi_{v_m} \0 \psi_{v_m})(\gamma) = h( \alpha(\gamma) \rhd_{v_m} \gamma) = h(\gamma) \, ,
  \]
  which is \eqref{eq:graph_trafos_proof0}.
  It remains to show that $h' \in C^\infty(K', \R)^{inv}_{L'}$.
  A direct computation using Lemma \ref{lemma:global_double_projections_properties} shows that
  \[
    \phi_{v_m} \0 \rhd_f = \rhd_f \0 (\id_{G_-} \x \phi_{v_m}) \Forall f \in F'
    \quad \text{ and } \quad \phi_{v_m} \0 \rhd_v = \rhd_v \0 (\id_{G_+} \x \phi_{v_m}) \Forall v \in V' \setminus \left\{ t(e') \right\} \, .
  \]
  Together with the fact that $\phi_{v_m}(K'_{L'}) \subseteq K_L$ this implies $h' = h \0 \phi_{v_m} \in C^\infty(K', \R)_{L'}^f$ for all $f \in F'$ and $h' \in C^\infty(K', \R)_{L'}^v$ for all $v \in V' \setminus \left\{ t(e') \right\}$.

  To conclude the proof, we have to show that $h' \in C^\infty(K', \R)^{v}_{L'}$ for $v = t(e') = t(e_2)$.
  Let $\beta \in G_+, \gamma' \in K'_{L'}$.
  By Lemma \ref{lemma:kitaev_gluing_trafos} (v) there is a $\gamma \in K_L$ with $\psi_{v_m}(\gamma) = \gamma'$.
  We compute
  \[
    h'(\beta \rhd_v \gamma') =  h \0 \phi_{v_m}(\beta \rhd_v \psi_{v_m}(\gamma)) \stackrel{\eqref{eq:kitaev_gluing_trafos_compatible_with_actions}}= h \0 \phi_{v_m} \0 \psi_{v_m} (\beta \rhd_v \gamma) \, .
  \]
  It is easy to see that $\beta \rhd_v \gamma \in K_{v_m}$, so that Equation \eqref{eq:graph_trafos_proof1} can be applied to obtain:
  \[
    h'(\beta \rhd_v \gamma') = h (\alpha(\beta \rhd_v \gamma) \rhd_{v_m} (\beta \rhd_v \gamma)) \, .
  \]
  One can check directly using Lemma \ref{lemma:global_double_projections_properties} that the actions $\rhd_v$ and $\rhd_{v_m}$ commute.
  This implies:
  \[
    h'(\beta \rhd_v \gamma') = h (\beta \rhd_v (\alpha(\beta \rhd_v \gamma) \rhd_{v_m} \gamma)) \, .
  \]
  A computation using the properties of $\pi_\pm$ from Lemma \ref{lemma:global_double_projections_properties} shows that $\gamma \in K_L$ implies $\alpha \rhd_{v_m} \gamma \in K_L$ for all $\alpha \in G_+$.
  We use the invariance of $h \in C^\infty(K, \R)^{inv}_L$ under $\rhd_v$ and $\rhd_{v_m}$ on $K_L$ to obtain
  \[
    h'(\beta \rhd_v \gamma') = h (\alpha(\beta \rhd_v \gamma) \rhd_{v_m} \gamma) = h (\gamma) = h (\alpha(\gamma) \rhd_{v_m} \gamma) 
    \stackrel{\eqref{eq:graph_trafos_proof1}}=
    h \0 \phi_{v_m} \0 \psi_{v_m} (\gamma) = h' (\gamma') \, .
  \]

  For the gluing transformation of a face $l = f_m$ one proceeds in the same way using the right inverse $\phi_{f_m}$ from Equation \eqref{eq:trafo_new_face}.
  Again, $\phi_{f_m}$ is not a left inverse to $\psi_{f_m}$, but for $\gamma \in K_{f_m}$ it satisfies $(\phi_{f_m} \0 \psi_{f_m}) (\gamma) = x(\gamma) \rhd_{f_m} \gamma$ with $ x(\gamma) = \pi_-(\pi_+(\pi_{e_1}(\gamma))^{-1} \, \pi_{e_2}(\gamma))^{-1} $.
\end{proof}

Proposition \ref{proposition:graph_trafos} allows us to relate a Poisson-Kitaev model with one on a paired doubly ciliated ribbon graph (Definition \ref{def:double_graph_basics} (iv)).

\begin{corollary}[Transformation into a paired graph]
  \label{corollary:kitaev_transform_into_paired_graph}
  Let $(K, \Gamma)$ be a Poisson-Kitaev model and $(v_1, f_1), \dots, (v_n, f_n)$ pairwise distinct sites of $\Gamma$.
  There is a Poisson-Kitaev model $(K', \Gamma')$ on a paired doubly-ciliated ribbon graph $\Gamma'$ and pairwise distinct sites $(v_1', f_1'), \dots, (v_n', f_n')$ such that:
  \begin{compactenum}
  \item
    There is a Poisson map $\psi: K' \to K$.
  \item
    The map $\psi$ induces a bijection $\psi^*_{/\sim} : \mathcal A(\Gamma, L) \to \mathcal A(\Gamma', L') \, , [h] \mapsto [h \0 \psi]$ for the sets $L, L'$ given by $L = (V \setminus \left\{ v_1, \dots, v_n \right\}) \dot \cup (F \setminus \left\{ f_1, \dots, f_n \right\})$ and $L' = (V' \setminus \left\{ v'_1, \dots, v'_n \right\}) \dot \cup (F' \setminus \left\{ f'_1, \dots, f'_n \right\})$.
  \item
    The oriented  surfaces associated to $\Gamma$ and $\Gamma'$ by gluing annuli to $f_1, \dots, f_n$ and $f'_1, \dots, f'_n$, respectively, and disks to all other faces are homeomorphic.
  \end{compactenum}
\end{corollary}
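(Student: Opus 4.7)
The plan is to construct $\Gamma'$ from $\Gamma$ by a finite sequence of ``refinement'' transformations read in reverse from the gluing transformations of Equations \eqref{eq:map_gluing_vertex} and \eqref{eq:map_gluing_face}, combined with cilia adjustments permitted by Lemma \ref{lemma:kitaev_trafo_moving_cilium}(iii). Each refinement is either an edge subdivision by a bivalent vertex or an edge doubling that inserts a bivalent face between two parallel copies of an edge. Read as a map from the refined (larger) graph to the coarser one, each refinement is precisely one of the gluing maps $\psi_{v_m}$ or $\psi_{f_m}$, and so is Poisson and induces a bijection on $\mathcal{A}$ by Proposition \ref{proposition:graph_trafos}.

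First I would eliminate loops: for every loop $e$ at a vertex $v$ of $\Gamma$, subdivide $e$ by a new bivalent vertex $w$, replacing $e$ by two non-loop edges between $v$ and $w$; no chosen vertex $v_i$ or its cilium is touched, so the sites $(v_i,f_i)$ persist. Second, for every edge of the resulting graph whose left and right faces coincide, I would double the edge by inserting a bivalent face $f_m$ as in Figure \ref{fig:kitaev_trafo_new_face}; the two resulting parallel edges then have distinct left and right faces, and the chosen faces $f_i$ are left alone. After these two stages, conditions (b) and (c) of Definition \ref{def:double_graph_basics}(iv) hold, and the sites $(v_i,f_i)$ have canonical images $(v'_i,f'_i)$ in the refined graph.

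To realize condition (a), which forces $v \mapsto f(v)$ to be a bijection $V' \to F'$ induced by mutually compatible cilia, I would perform further edge subdivisions and doublings to equalize $|V'|$ and $|F'|$ and create a bivalent vertex (respectively face) beside every cilium whose matching partner would otherwise be overloaded, then apply Lemma \ref{lemma:kitaev_trafo_moving_cilium}(iii) to move the cilia at vertices and faces in $L'$ into paired positions. At the preserved sites $(v'_i,f'_i)$ the cilia are inherited from $(v_i,f_i)$ and are kept fixed, so the refinements must leave enough combinatorial freedom for a globally consistent cilia assignment honouring these constraints. Composing the finitely many Poisson maps coming from the successive refinements produces the map $\psi:K' \to K$ in (i); iterated use of Proposition \ref{proposition:graph_trafos}(iii) together with Lemma \ref{lemma:kitaev_trafo_moving_cilium}(iii) for cilia moves yields the bijection $\psi^*_{/\sim}$ in (ii); and claim (iii) is immediate, because subdividing an edge and inserting a bivalent face between two parallel edges are both trivial refinements of the cell decomposition of the oriented surface obtained from $\Gamma$ by gluing annuli to the $f_i$ and disks to all other faces.

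The hard part will be the combinatorial bookkeeping for condition (a): realizing a bijective matching $v \mapsto f(v)$ by compatible cilia while simultaneously honouring the fixed cilia at $(v'_i,f'_i)$, and checking that the refinement moves leave every remaining cilium with sufficient freedom under Lemma \ref{lemma:kitaev_trafo_moving_cilium}(iii) to be placed in its paired corner. Once the refined graph and a compatible cilia assignment are in hand, the Poisson-theoretic content of the corollary reduces to a direct composition argument based on Proposition \ref{proposition:graph_trafos} and Lemma \ref{lemma:kitaev_trafo_moving_cilium}.
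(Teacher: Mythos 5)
Your construction follows the paper's proof almost exactly in its first two stages (subdivide every loop by a bivalent vertex, double every edge whose left and right faces coincide, each step being the reverse of a gluing transformation so that Lemma \ref{lemma:kitaev_gluing_trafos} and Proposition \ref{proposition:graph_trafos} apply), and your identification of the remaining difficulty --- arranging condition (a) of Definition \ref{def:double_graph_basics} (iv) --- is accurate. However, you leave that step genuinely unresolved: you propose to ``equalize $|V'|$ and $|F'|$'' and then solve a global matching problem for compatible cilia, which is both vaguer and harder than necessary. The paper's resolution is purely local and sidesteps any matching: for \emph{every} vertex $v \notin \{v_1,\dots,v_n\}$ one doubles the first edge of $v$, thereby manufacturing a brand-new bivalent face $f_v$ whose only possible partner is $v$, and cyclically permutes the cilia of $v$ and $f_v$ so that $(v,f_v)$ is a site; dually, for every face $f \notin \{f_1,\dots,f_n\}$ not created in the previous step one subdivides the first edge of $f$ to create a fresh bivalent vertex $v_f$ and makes $(v_f,f)$ a site. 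Since each unpaired vertex or face receives a freshly created partner used by nothing else, the bijection $v \mapsto f(v)$ on the complement of the selected sites exists by construction, the cilia at $(v_i',f_i')=(v_i,f_i)$ are never touched, and Lemma \ref{lemma:kitaev_trafo_moving_cilium} (iii) guarantees that the cyclic permutations of cilia at elements of $L'$ do not change $\mathcal A(\Gamma',L')$. With that replacement for your third stage, the rest of your argument (composing the gluing maps to get $\psi$, invoking Proposition \ref{proposition:graph_trafos} (iii) for the bijection on $\mathcal A$, and the homeomorphism argument for (iii)) is exactly the paper's.
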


\begin{proof}
  To obtain the paired graph $\Gamma'$, we perform the following steps, as illustrated in Figure \ref{fig:kitaev_moduli_space_proof}:
  \begin{enumerate}[nolistsep, noitemsep]
    \item 
      Split every loop into two edges by creating a bivalent vertex.
    \item
      Double each edge for which the left and right face coincide.
    \item
      For every vertex $v \notin \left\{ v_1, \dots, v_n  \right\}$ create an adjacent face $f_v$ by doubling the first edge of $v$.
      Cyclically permute the ordering at $v$ and $f_v$ so that $(v, f_v)$ becomes a site.
    \item
      For every face $f \notin \left\{ f_1, \dots, f_n  \right\}$ that has not been created in step 3 
      split the first edge of $f$ into two by creating a bivalent vertex $v_f$.
      Cyclically permute the orderings of $f$ and $v_f$ so that $(v_f, f)$ becomes a site.
  \end{enumerate}

  \begin{figure}[h]
  \centering
  \begin{tikzpicture}[vertex/.style={circle, fill=black, inner sep=0pt, minimum size=2mm}, plain/.style={draw=none, fill=none}, scale=0.8]
    \begin{scope}[scale=0.9]

    \node at (-3.5, 0.5) {1.};
    \node at (-3.5, -3.5) {3.};
    \node at (7.5, 0.5) {2.};
    \node at (7.5, -3.5) {4.};

    \begin{scope}[shift={(-1,0)}]
      \coordinate (q) at (0,0);
      \coordinate (q1) at ($(q) + (0:0.375)$);
      \coordinate (q2) at ($(q) + (120:0.375)$);
      \coordinate (q3) at ($(q) + (240:0.375)$);

      \draw [-latex] (q1) to [out=-60, in=-30, looseness=1.3] ($(q)!2.5cm!(q3)$) to  [out=150, in =-180, looseness=1.3] (q2);
      \draw [-latex] (q3) to [out=-60, in=-30, looseness=1.3] ($(q)!1.85cm!(q3)$) to [out=150, in =-180, looseness=1.3] (q3);
      \draw [-latex] (q1) -- (q3);
      \draw [-latex] (q3) -- (q2);

      \draw [-latex] (0.9,-1) -- (1.9,-1);

      \coordinate (q) at ($(q) + (4,0)$);
      \coordinate (q1) at ($(q) + (0:0.375)$);
      \coordinate (q2) at ($(q) + (120:0.375)$);
      \coordinate (q3) at ($(q) + (240:0.375)$);

      \draw [-latex] (q1) -- (q3);
      \draw [-latex] (q1) to [out=-60, in=-30, looseness=1.3] ($(q)!2.5cm!(q3)$);
      \draw [-latex] (q3) to [out=-60, in=-30, looseness=1.3] ($(q)!1.85cm!(q3)$);
      \draw [-latex]  ($(q)!2.5cm!(q3)$) to [in=-45, out=-15, looseness=1.3] ($(q)!1.85cm!(q3)$);

      \draw [-latex] (q3) -- (q2);
      \draw [-latex] ($(q)!2.5cm!(q3)$) to  [out=150, in =-180, looseness=1.3] (q2);
      \draw [-latex] ($(q)!1.85cm!(q3)$) to [out=150, in =-180, looseness=1.3] (q3);
      \draw [-latex]  ($(q)!2.5cm!(q3)$) to [in=-195, out=-225, looseness=1.3] ($(q)!1.85cm!(q3)$);
    \end{scope}

    \begin{scope}[shift={(1.75,0)}]
      \coordinate (m) at (7.5,-2);
      \coordinate (m') at ($(m) + (0.325, -0.5)$);

      \draw [latex-] (m') to [out=0, in=270] ($(m) + (0.65, 0)$) -- ($(m) + (0.65, 2.5)$);
      \draw [latex-] (m') to [out=180, in=270] (m) -- ($(m) + (0,2.5)$);
      \draw [-latex] ($(m)+(0,2.5)$) -- ($(m)+(0.65,2.5)$);
      \draw [-latex] (m') to [in=210, out=180] ($(m) + (0.325,0)$);
      \draw ($(m) + (0.325,0)$) to [in=30,out=0] (m');

      \draw [-latex] ($(m) + (1.25, 1)$) -- ($(m) + (2.5, 1)$);

      \coordinate (m) at (11,-2);
      \coordinate (m') at ($(m) + (0.5, -0.5)$);

      \draw [latex-] ($(m) + (1, 2.5)$) -- ($(m) + (0.5, 2.5)$);
      \draw [latex-] (m') to [out=0, in=-45] ($(m) + (1, 2.5)$);
      \draw [latex-] ($(m') + (0,0.5)$) to [out=0, in=-45, looseness=0.9] ($(m) + (0.5, 2.5)$);
      \draw [-latex] ($(m) + (0.5,0)$) to [in=30,out=0] (m'); 

      \draw [-latex] ($(m)+(0,2.5)$) -- ($(m)+(0.5,2.5)$);
      \draw [latex-] (m') to [out=180, in=225] ($(m) + (0,2.5)$);
      \draw [latex-] ($(m') + (0, 0.5)$) to [out=180, in=225, looseness=0.9] ($(m) + (0.5,2.5)$);
      \draw [-latex] (m') to [in=210, out=180] ($(m) + (0.5,0)$);

      \node at ($(7.5,-2) + (0.325, 1)$) {$e$};
    \end{scope}

    \begin{scope}[shift={(-2.5, -5.5)}, scale=0.8]
    \coordinate (p) at (0,0);
    \coordinate (p1) at ($(p) + (36:0.553)$);
    \coordinate (p2) at ($(p) + (108:0.553)$);
    \coordinate (p3) at ($(p) + (180:0.553)$);
    \coordinate (p4) at ($(p) + (252:0.553)$);
    \coordinate (p5) at ($(p) + (-36:0.553)$);

    \coordinate (q) at ($(p) + (2,0) + (0.553,0) + (0.375,0)$);
    \coordinate (q1) at ($(q) + (0:0.375)$);
    \coordinate (q2) at ($(q) + (120:0.375)$);
    \coordinate (q3) at ($(q) + (240:0.375)$);

    \draw [-latex] (p1)--(q2);
    \draw [-latex] (q3)--(q2);
    \draw [-latex] (p5)--(p1);
    \draw [-latex] (p5)--(q3);

    \draw [-latex] (p2) -- (p1);
    \draw [-latex] ($(p1)!2cm!-90:(p2)$) -- (p1);
    \draw [-latex] ($(p2)!2cm!90:(p1)$) -- (p2);
    \draw [-latex] ($(p2)!2cm!90:(p1)$) -- ($(p1)!2cm!-90:(p2)$);

    \draw [-latex] (p3) -- (p2);
    \draw [-latex] ($(p2)!2cm!-90:(p3)$) -- (p2);
    \draw [-latex] ($(p3)!2cm!90:(p2)$) -- (p3);
    \draw [-latex] ($(p3)!2cm!90:(p2)$) -- ($(p2)!2cm!-90:(p3)$);

    \draw [-latex] (p3) -- (p4);
    \draw [-latex] (p3) -- ($(p3)!2cm!-90:(p4)$);
    \draw [-latex] (p4) -- ($(p4)!2cm!90:(p3)$);
    \draw [-latex] ($(p3)!2cm!-90:(p4)$) -- ($(p4)!2cm!90:(p3)$);

    \draw [-latex] (p5) -- (p4);
    \draw [-latex] ($(p4)!2cm!-90:(p5)$) -- (p4);
    \draw [-latex] ($(p5)!2cm!90:(p4)$) -- (p5);
    \draw [-latex] ($(p5)!2cm!90:(p4)$) -- ($(p4)!2cm!-90:(p5)$);

    \draw [decorate,decoration={snake, amplitude=0.5mm}] (p2) -- ($(p2) + (p)!-0.5cm!(p2)$);
    \end{scope}

    \draw [-latex] (0, -5.5) -- (1, -5.5);

    \begin{scope}[shift={(3, -5.5)}, scale=0.8]
    \coordinate (p) at (0,0);
    \coordinate (p1) at ($(p) + (36:0.553)$);
    \coordinate (p2) at ($(p) + (108:0.553)$);
    \coordinate (p3) at ($(p) + (180:0.553)$);
    \coordinate (p4) at ($(p) + (252:0.553)$);
    \coordinate (p5) at ($(p) + (-36:0.553)$);

    \coordinate (q) at ($(p) + (2,0) + (0.553,0) + (0.375,0)$);
    \coordinate (q1) at ($(q) + (0:0.375)$);
    \coordinate (q2) at ($(q) + (120:0.375)$);
    \coordinate (q3) at ($(q) + (240:0.375)$);

    \draw [-latex] (p1)--(q2);
    \draw [-latex] (q3)--(q2);
    \draw [-latex] (p5)--(p1);
    \draw [-latex] (p5)--(q3);

    \draw [-latex] (p2) -- (p1);
    \draw [-latex] ($(p1)!2cm!-90:(p2)$) -- (p1);
    \draw [-latex] ($(p2)!2cm!90:(p1)$) -- (p2);
    \draw [-latex] ($(p2)!2cm!90:(p1)$) -- ($(p1)!2cm!-90:(p2)$);

    \coordinate (p23) at ($(p3)!0.5!(p2)$);
    \draw [-latex] (p3) -- (p23);
    \draw [-latex] ($(p3)!2cm!90:(p23)$) to [bend right=30] (p3);
    \draw [latex-] (p23) to [bend right=-30] ($(p23)!2cm!-90:(p3)$);
    \draw [-latex] ($(p3)!2cm!90:(p23)$) -- ($(p23)!2cm!-90:(p3)$);

    \draw [latex-] (p2) -- (p23);
    \draw [-latex] ($(p2)!2cm!-90:(p23)$) to [bend right=-30] (p2);
    \draw [latex-] (p23) to [bend right=30] ($(p23)!2cm!90:(p2)$);
    \draw [latex-] ($(p2)!2cm!-90:(p23)$) -- ($(p23)!2cm!90:(p2)$);

    \draw [-latex] (p3) -- (p4);
    \draw [-latex] (p3) -- ($(p3)!2cm!-90:(p4)$);
    \draw [-latex] (p4) -- ($(p4)!2cm!90:(p3)$);
    \draw [-latex] ($(p3)!2cm!-90:(p4)$) -- ($(p4)!2cm!90:(p3)$);

    \draw [-latex] (p5) -- (p4);
    \draw [-latex] ($(p4)!2cm!-90:(p5)$) -- (p4);
    \draw [-latex] ($(p5)!2cm!90:(p4)$) -- (p5);
    \draw [-latex] ($(p5)!2cm!90:(p4)$) -- ($(p4)!2cm!-90:(p5)$);

    \draw [decorate,decoration={snake, amplitude=-0.5mm}] (p23) -- ($(p23)!0.75cm!180:(p)$);
    \draw [decorate,decoration={snake, amplitude=0.5mm}] (p23) -- ($(p23)!-0.4cm!180:(p)$);

    \end{scope}

    \begin{scope}[shift={(10,-5.5)}, scale=0.8]
      \coordinate (pl) at (0,-1);
      \coordinate (pr) at (0.65,-1);
      \coordinate (ql) at (0,1);
      \coordinate (qr) at (0.65,1);

      \draw [-latex] (pr) -- (qr);
      \draw [-latex] (pl) -- (ql);
      \draw [-latex] (pr) -- (pl);
      \draw [-latex] (qr) -- (ql);

      \coordinate (r) at ($(1.732, 0) + (0.65, 0)$);
      \coordinate (r1) at ($(r) + (90:2cm)$);
      \coordinate (r2) at ($(r) + (30:2cm)$);
      \coordinate (r3) at ($(r) + (-30:2cm)$);
      \coordinate (r4) at ($(r) + (-90:2cm)$);

      \coordinate (l) at (-1.376,0);
      \coordinate (l1) at ($(l) + (-108:1.701cm)$);
      \coordinate (l2) at ($(l) + (-180:1.701cm)$);
      \coordinate (l3) at ($(l) + (108:1.701cm)$);

      \draw [-latex] (pl) -- (l1);
      \draw [-latex] (l1) -- (l2);
      \draw [-latex] (l2) -- (l3);
      \draw [-latex] (l3) -- (ql);

      \draw [decorate,decoration={snake, amplitude=0.5mm}] (l2) -- ($(l2)!0.75cm!(l)$);

      \coordinate (ql') at ($(ql)!1cm!-90:(l3)$);
      \coordinate (qr') at ($(qr)!1cm!90:(r1)$);
      \coordinate (q') at (intersection of ql--ql' and qr--qr');

      \draw [-latex] (ql) -- (q');
      \draw [-latex] (l3) -- ($(l3) + (q') - (ql)$);
      \draw [-latex] ($(l3) + (q') - (ql)$) -- (q');

      \draw [-latex] (l3) -- ($(l3) + (ql)!1!72:(q') - (ql)$);
      \draw [-latex] (l2) -- ($(l2) + (ql)!1!72:(q') - (ql)$);
      \draw [-latex] ($(l2) + (ql)!1!72:(q') - (ql)$) -- ($(l3) + (ql)!1!72:(q') - (ql)$);

      \draw [-latex] (l2) -- ($(l2) + (ql)!1!144:(q') - (ql)$);
      \draw [-latex] (l1) -- ($(l1) + (ql)!1!144:(q') - (ql)$);
      \draw [-latex] ($(l1) + (ql)!1!144:(q') - (ql)$) -- ($(l2) + (ql)!1!144:(q') - (ql)$);

      \draw [-latex] (l1) -- ($(l1) + (ql)!1!216:(q') - (ql)$);
      \draw [-latex] (pl) -- ($(pl) + (ql)!1!216:(q') - (ql)$);
      \draw [-latex] ($(pl) + (ql)!1!216:(q') - (ql)$) -- ($(l1) + (ql)!1!216:(q') - (ql)$);
    \end{scope}

    \draw [-latex] (10.75, -5.5) -- (11.75, -5.5);

    \begin{scope}[shift={(15,-5.5)}, scale=0.8]
      \coordinate (pl) at (0,-1);
      \coordinate (pr) at (0.65,-1);
      \coordinate (ql) at (0,1);
      \coordinate (qr) at (0.65,1);

      \draw [-latex] (pr) -- (qr);
      \draw [-latex] (pl) -- (ql);
      \draw [-latex] (pr) -- (pl);
      \draw [-latex] (qr) -- (ql);

      \coordinate (r) at ($(1.732, 0) + (0.65, 0)$);
      \coordinate (r1) at ($(r) + (90:2cm)$);
      \coordinate (r2) at ($(r) + (30:2cm)$);
      \coordinate (r3) at ($(r) + (-30:2cm)$);
      \coordinate (r4) at ($(r) + (-90:2cm)$);

      \coordinate (l) at (-1.376,0);
      \coordinate (l1) at ($(l) + (-108:1.701cm)$);
      \coordinate (l2) at ($(l) + (-180:1.701cm)$);
      \coordinate (l3) at ($(l) + (108:1.701cm)$);

      \draw [-latex] (pl) -- (l1);
      \draw [-latex] (l1) -- (l2);
      \draw [-latex] (l2) -- ($(l2)!0.5!(l3)$);
      \draw [-latex] ($(l2)!0.5!(l3)$) -- (l3);
      \draw [-latex] (l3) -- (ql);

      \coordinate (ql') at ($(ql)!1cm!-90:(l3)$);
      \coordinate (qr') at ($(qr)!1cm!90:(r1)$);
      \coordinate (q') at (intersection of ql--ql' and qr--qr');

      \draw [-latex] (ql) -- (q');
      \draw [-latex] (l3) -- ($(l3) + (q') - (ql)$);
      \draw [-latex] ($(l3) + (q') - (ql)$) -- (q');

      \coordinate (l2') at ($(l2) + (ql)!1!72:(q') - (ql)$);
      \coordinate (l3') at ($(l3) + (ql)!1!72:(q') - (ql)$);
      \coordinate (l23) at ($(l2)!0.5!(l3)$);
      \coordinate (l23') at ($(l2')!0.5!(l3')$);
      \draw [-latex] (l3) -- (l3');
      \draw [-latex] (l2) -- (l2');
      \draw [-latex] (l2') -- (l23');
      \draw [latex-] (l3') -- (l23');
      \draw [-latex] (l23) to [bend left=45, looseness=1.2] (l23');
      \draw [-latex] (l23) to [bend right=45, looseness=1.2] (l23');

      \draw [decorate,decoration={snake, amplitude=0.5mm}] (l23) -- ($(l23)!-0.75cm!(l23')$);
      \draw [decorate,decoration={snake, amplitude=0.5mm}] (l23) -- ($(l23)!0.4cm!(l23')$);

      \draw [-latex] (l2) -- ($(l2) + (ql)!1!144:(q') - (ql)$);
      \draw [-latex] (l1) -- ($(l1) + (ql)!1!144:(q') - (ql)$);
      \draw [-latex] ($(l1) + (ql)!1!144:(q') - (ql)$) -- ($(l2) + (ql)!1!144:(q') - (ql)$);

      \draw [-latex] (l1) -- ($(l1) + (ql)!1!216:(q') - (ql)$);
      \draw [-latex] (pl) -- ($(pl) + (ql)!1!216:(q') - (ql)$);
      \draw [-latex] ($(pl) + (ql)!1!216:(q') - (ql)$) -- ($(l1) + (ql)!1!216:(q') - (ql)$);
    \end{scope}

    \end{scope}
  \end{tikzpicture}
  \caption{Examples for transformations 1. through 4. The edge $e$ in 2.  is incident to a univalent vertex and therefore both its sides are on the same face path}
  \label{fig:kitaev_moduli_space_proof}
  \end{figure}
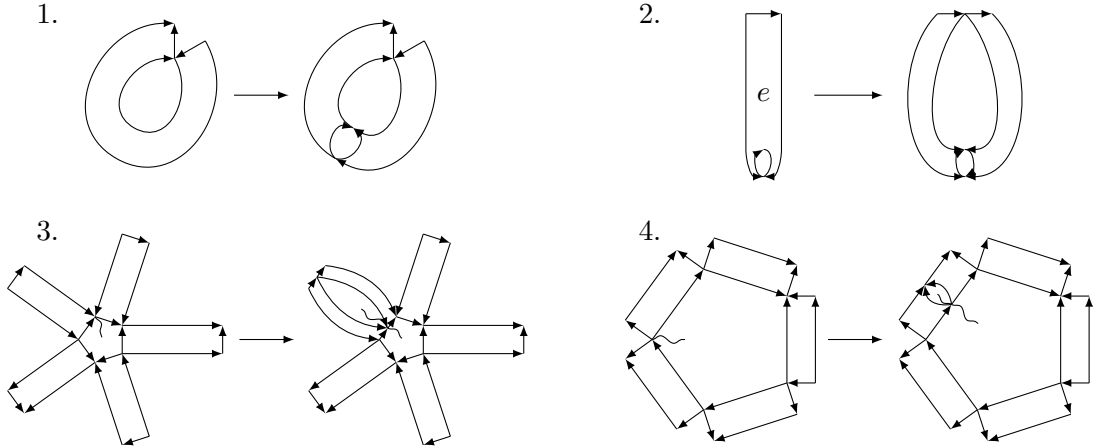

  The resulting graph $\Gamma'$ has no loops because of Step  1.
  Step 2. ensures that the faces left and right of every edge differ.
  The selected sites $(v_1, f_1), \dots, (v_n, f_n)$ are left intact and become the sites $(v_1', f_1'), \dots, (v_n', f_n')$ of the transformed graph.
  Steps 3. and 4. guarantee that all remaining vertices and faces can be combined into sites, so that $\Gamma'$ is a paired graph.

  The map $\psi: K' \to K$ is the map associated to gluing the pairs of edges created in Steps 1. through 4. back together in reverse order.
  It is Poisson by Lemma \ref{lemma:kitaev_gluing_trafos}, which proves (i).
  Statement (ii) follows from
  Proposition \ref{proposition:graph_trafos} (iii).
  
  We obtain the  oriented  surface associated to $\Gamma'$ by gluing annuli to $f_i' = f_i$ for all $i$ and disks to all other faces, which includes the faces created in Steps 2. and 3.
  Doubling an edge and gluing a disk to the new face $f_m$ and an annulus or disk to an adjacent face $f$ yields a homeomorphic topological space as only gluing an annulus or disk to $f$.
  Thus,  the oriented  surfaces obtained from $\Gamma$ and $\Gamma'$ by gluing annuli to $f_1, \dots, f_n$ and disks to all other faces are homeomorphic, which proves (iii).
\end{proof}

\subsection{Invariance and flatness}
\label{subsection:gauge_invariance_flatness}

In this section, we show how the vertex and face actions from Definition \ref{def:vertex_face_actions} interact with the vertex and face holonomies $\Hol^v, \Hol^f$ from Definition \ref{def:vertex_face_paths_and_holonomies}.
We prove commutation relations of vertex and face actions that are Poisson counterparts of the commutation relations for quantum vertex and face operators in Lemma \ref{lemma:quantum_kitaev_commutation_relations}.
We then prove that the Poisson bracket on $K$ induces a Poisson algebra structure on the set $\mathcal A(\Gamma, L)$ from Definition \ref{def:functions_on_flat_elements} (ii).

\begin{lemma}[Relations between actions and holonomies] Let $\gamma \in K, x \in G_-, \alpha \in G_+$.
  \label{lemma:kitaev_actions_and_holonomies}
  \begin{compactenum}
    \item
      For $v \neq v' \in V$ ($f \neq f' \in F$), the associated actions and holonomies satisfy:
      \[
        \Hol^v(\alpha \rhd_{v'} \gamma) = \Hol^v(\gamma) \qquad (\Hol^f(x \rhd_{f'} \gamma) = \Hol^f(\gamma)) \, .
      \]
    \item
      For all pairs $v \in V, f \in F$ with $v(f) \neq v$ and $f(v) \neq f$ (see Definition \ref{def:double_graph_basics}) we have:
      \[
        \Hol^v(x \rhd_f \gamma) = \Hol^v(\gamma) \qquad \Hol^f(\alpha \rhd_v \gamma) = \Hol^f(\gamma) \, .
      \]
    \item
      For each site $(v,f)$ we obtain:
      \begin{equation}
        \label{eq:holonomy_and_action_for_a_site}
        \Hol^{(v,f)} (x \rhd_f (\alpha \rhd_v \gamma)) = (x\alpha) \, \Hol^{(v,f)}(\gamma) \, (x\alpha)^{-1} \, .
      \end{equation}
  \end{compactenum}
\end{lemma}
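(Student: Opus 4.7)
All three claims will be proven by direct computation from the explicit action formulas \eqref{eq:vertex_action_explicit}--\eqref{eq:face_action_explicit}, the definition of the holonomy functor \eqref{eq:kitaev_holonomy_functor}, and the computation rules for $\pi_\pm$ from Lemma \ref{lemma:global_double_projections_properties}. The elementary observation used throughout is that the unique factorization $g = \pi_-(g)\pi_+(g)$ together with the fact that $G_+$ is a subgroup yields $\pi_-(g\alpha) = \pi_-(g)$ for every $\alpha \in G_+$, and dually $\pi_+(xg) = \pi_+(g)$ for every $x \in G_-$.

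For part (i), the action $\rhd_{v'}$ modifies $\pi_e(\gamma)$ only for edges $e$ incident to $v'$, and the modification is either left multiplication by some $\alpha' \in G_+$ when the $v'$-end of $e$ is $f(e)$, or right multiplication by ${\alpha'}^{-1}$ when it is $b(e)$. If $v \neq v'$ shares the edge $e$ with $v'$, then the factor of $\Hol^v$ coming from $e$ reads the opposite end of $e$ through an outer $\pi_-$, either as $\pi_-(\pi_e(\gamma))^{\pm 1}$ or $\pi_-(\pi_e(\gamma)^{-1})^{\pm 1}$. A short case analysis on the orientation shows that in each case the $v'$-modification enters the inside of the outermost $\pi_-$ as a right multiplication by a $G_+$-element, and is therefore erased. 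The statement for face holonomies follows by the completely dual argument.

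Part (ii) is the main technical obstacle, since $\rhd_f$ modifies $\pi_e$ by factors $\pi_-(d_k(\gamma)x^{-1})^{\pm 1} \in G_-$ which genuinely alter $\pi_-$ projections. The strategy is a telescoping argument over pairs of consecutive edges of $v$ that are both adjacent to $f$: each such pair corresponds to a corner of $v$'s polygon in $\Gamma_D$ shared with $f$'s polygon, and appears in the face ordering as two consecutive edge sides whose partial face holonomies $d_k, d_{k+1}$ are coupled by $d_{k+1}(\gamma) = d_k(\gamma) \cdot \Hol(i_k^{\varepsilon_k})(\gamma)$. This recursion together with Lemma \ref{lemma:global_double_projections_properties} forces the two corresponding factors in $\Hol^v$ to change in a cancelling way. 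The assumption $v(f) \neq v$ ensures that $f$'s cilium does not lie at $v$, so that every $G_-$-modification along an edge incident to $v$ has its telescoping partner available; $f(v) \neq f$ excludes the dual failure at $v$'s cilium, where the cyclic product defining $\Hol^v$ would otherwise be cut open. The symmetric identity $\Hol^f(\alpha \rhd_v \gamma) = \Hol^f(\gamma)$ is proven by exchanging the roles of $v$ and $f$.

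For part (iii), a short induction on $k$ using $\pi_-(xh) = x\pi_-(h)$ for $x \in G_-$ gives $c_k(\alpha \rhd_v \gamma) = \pi_-(\alpha c_k(\gamma))$ for the partial vertex holonomies, and hence
\[
\Hol^v(\alpha \rhd_v \gamma) = \pi_-(\alpha \, \Hol^v(\gamma)) = \alpha \cdot \Hol^v(\gamma) \cdot \pi_+(\alpha \, \Hol^v(\gamma))^{-1}.
\]
The site assumption places the last edge end of $p(v)$ and the first edge side of $p(f)$ on a common edge $e$ at the shared cilium; together with a telescoping argument as in (ii) for any further edges at $v$ that are adjacent to $f$, a direct computation then yields $\Hol^f(\alpha \rhd_v \gamma) = \pi_+(\alpha \Hol^v(\gamma)) \cdot \Hol^f(\gamma) \cdot \alpha^{-1}$, so that the intermediate $G_+$-factor cancels when multiplying with the previous identity to give $\Hol^{(v,f)}(\alpha \rhd_v \gamma) = \alpha \, \Hol^{(v,f)}(\gamma) \, \alpha^{-1}$. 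The dual identity $\Hol^{(v,f)}(x \rhd_f \gamma) = x \, \Hol^{(v,f)}(\gamma) \, x^{-1}$ is obtained by the symmetric argument, and applying the face identity to $\alpha \rhd_v \gamma$ yields the claim.
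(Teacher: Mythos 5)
Your overall strategy — direct computation from the explicit action formulas, the holonomy functor, and the computation rules for $\pi_\pm$ in Lemma \ref{lemma:global_double_projections_properties} — is exactly what the paper does, and your key intermediate identities check out: for a vertex with all edges incoming one indeed gets $c_k(\alpha\rhd_v\gamma)=\pi_-(\alpha\,c_k(\gamma))$ inductively, hence $\Hol^v(\alpha\rhd_v\gamma)=\pi_-(\alpha\,\Hol^v(\gamma))=\alpha\,\Hol^v(\gamma)\,\pi_+(\alpha\,\Hol^v(\gamma))^{-1}$, and the $G_+$-factor cancels against the one produced in $\Hol^f(\alpha\rhd_v\gamma)$ precisely because the site condition aligns the two cilia. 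Parts (i) and (ii) likewise match the paper's (unelaborated) ``direct computations''; your observation that right multiplication by $G_+$ is erased by an outer $\pi_-$, and the telescoping over consecutive edge pairs shared by $v$ and $f$, are the correct mechanisms.

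The one genuine omission is that your computation for (iii) implicitly assumes the generic configuration: no loops at $v$ and no edge both of whose sides belong to the face path $p(f)$. The lemma is stated for arbitrary doubly ciliated ribbon graphs (pairedness is only assumed later, in Section \ref{subsection:poisson_kitaev_models_fock_rosly_spaces}), and in the degenerate cases the simplified formulas \eqref{eq:kitaev_vertex_action_easy} and \eqref{eq:kitaev_face_action_easy} do not apply: the composite $\phi_1\0\dots\0\phi_m$ modifies the same copy of $G_\He$ twice, and the partial holonomies $c_k$, $d_k$ are evaluated on intermediate, already-modified elements, so the clean recursion you use needs re-justification. The paper sidesteps this by first proving (iii) in the generic case and then reducing the general case to it: it splits every loop at $v$ by a bivalent vertex and doubles every edge traversed twice by $p(f)$, and transports the identity back along the gluing map $\psi\colon K'\to K$ of Lemma \ref{lemma:kitaev_gluing_trafos}, using its compatibility with actions \eqref{eq:kitaev_gluing_trafos_compatible_with_actions} and holonomies \eqref{eq:kitaev_gluing_trafos_compatible_with_holonomies} and the surjectivity of $\psi|_{K'_{L'}}$. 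You should either add this reduction step or carry out the (more delicate) direct verification that the two modifications of a loop edge, respectively of a doubly traversed edge, do not interfere; without one of these the argument only covers the generic case.
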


\begin{proof}
  (i) and (ii): Direct computations using Lemma \ref{lemma:global_double_projections_properties}.

  For (iii), one can first assume that there are no loops at $v$ and that every edge is traversed at most once by the face path $p(f)$.
  In this case, (iii) follows from a computation using Lemma \ref{lemma:global_double_projections_properties}.

  In the general case, we transform the graph $\Gamma$ by introducing a bivalent vertex on every loop of $v$ and doubling every edge of $f$ that occurs twice in the face path $p(f)$, similar to Steps 1. and 2. in the proof of Corollary \ref{corollary:kitaev_transform_into_paired_graph}.
  Write $\Gamma'$ for the resulting graph and let $L'$ be the set of vertices and faces of $\Gamma'$ that have been created by this procedure.
  From Lemma \ref{lemma:kitaev_gluing_trafos} we obtain a corresponding Poisson map $\psi: K' \to K$ that implements gluing the new pairs of edges back together.

  Denote by $\Hol'^{(v,f)}$ the combined holonomy from Definition \ref{def:vertex_face_paths_and_holonomies} (iii) for the site $(v,f)$ in $\Gamma'$.
  We use Lemma \ref{lemma:kitaev_gluing_trafos} (iii) and (iv) to compute for $\gamma' \in K'_{L'}, x \in G_-, \alpha \in G_+$:
  \begin{equation*}
    \Hol^{(v,f)} (x \rhd_f (\alpha \rhd_v \psi(\gamma'))) \stackrel{\eqref{eq:kitaev_gluing_trafos_compatible_with_actions}}{=} \Hol^{(v,f)} (\psi(x \rhd_f (\alpha \rhd_v \gamma'))) \stackrel{\eqref{eq:kitaev_gluing_trafos_compatible_with_holonomies}}{=} \Hol'^{(v,f)} (x \rhd_f (\alpha \rhd_v \gamma')) \, .
  \end{equation*}
  In these equations we used that $G_+ \rhd_v K'_{L'} \subseteq K'_{L'}$ and $G_- \rhd_f K'_{L'} \subseteq K'_{L'}$, which follows from a direct computation using Lemma \ref{lemma:global_double_projections_properties}.
  As $\Gamma'$ has no loops at $v$ and no edge of $f$ occurs twice in its face path, we have already proven Statement (iii) for $\Gamma'$.
  So we can apply it to obtain:
  \begin{equation*}
    \Hol^{(v,f)} (x \rhd_f (\alpha \rhd_v \psi(\gamma'))) = (x \alpha) \, \Hol'^{(v,f)} (\gamma') \, (x \alpha)^{-1} \stackrel{\eqref{eq:kitaev_gluing_trafos_compatible_with_holonomies}}{=} (x \alpha) \, \Hol^{(v,f)} (\psi(\gamma')) \, (x \alpha)^{-1} \, .
  \end{equation*}
  As $\psi|_{K'_{L'}} : K'_{L'} \to K$ is surjective by Lemma \ref{lemma:kitaev_gluing_trafos} (v), this concludes the proof.
\end{proof}

\begin{lemma}[Stability of $K_L$ under vertex and face actions]
  \label{lemma:kitaev_flat_subspace_stable}
  Let $L \subseteq V \dot \cup F$.
  If $v \in V \cap L$ ($f \in F \cap L$) or $f(v) \notin L$ ($v(f) \notin L$), the subspace $K_L$ is stable under the vertex action $\rhd_v$ (face action $\rhd_f$) from Definition \ref{def:vertex_face_actions}:
  \[
    G_+ \rhd_v K_L = K_L \qquad (G_- \rhd_f K_L = K_L) \, .
  \]
\end{lemma}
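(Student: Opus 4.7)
The plan is to verify, for $\gamma \in K_L$ and $\alpha \in G_+$, that $\Hol^l(\alpha \rhd_v \gamma) = 1_G$ for every $l \in L$; stability under $\rhd_f$ then follows by the same argument with the roles of $G_+, G_-$ and of $r(e), l(e) \leftrightarrow f(e), b(e)$ interchanged (this symmetry is built into the graph structure and is also visible via the edge-reversal compatibility of Remark \ref{remark:edge_reversal}). The bulk of the case analysis is already provided by Lemma \ref{lemma:kitaev_actions_and_holonomies}:

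\begin{itemize}[nolistsep,noitemsep]
\item For $l = v' \in V \cap L$ with $v' \neq v$, part (i) gives $\Hol^{v'}(\alpha \rhd_v \gamma) = \Hol^{v'}(\gamma) = 1_G$.
\item For $l = f' \in F \cap L$ with $f' \neq f(v)$ and $v(f') \neq v$, part (ii) gives $\Hol^{f'}(\alpha \rhd_v \gamma) = \Hol^{f'}(\gamma) = 1_G$.
\end{itemize}

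The remaining targets are $l = v$ itself and $l \in \{f(v)\} \cup \{f' \in F : v(f') = v\}$. The central computation I would carry out is the transformation formula
\[
 \Hol^v(\alpha \rhd_v \gamma) \;=\; \pi_-\bigl(\alpha \cdot \Hol^v(\gamma)\bigr),
\]
obtained by unfolding the definition \eqref{eq:vertex_action_explicit}, tracking the partial holonomies $c_k(\gamma)$ along $p_k(v)$, and repeatedly applying the identities $\pi_-(xg) = x\,\pi_-(g)$ and $\pi_-(g\alpha) = \pi_-(g)$ of Lemma \ref{lemma:global_double_projections_properties}. Under the hypothesis $v \in L$ this yields $\Hol^v(\alpha \rhd_v \gamma) = \pi_-(\alpha) = 1_G$, which is the case $l = v$. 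The analogous identity for edges not all incoming is obtained from the incoming case by the edge-reversal argument of Remark \ref{remark:edge_reversal}.

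For a face $f' \in L$ with $f' = f(v)$, the hypothesis forces $v \in L$ as well. If $(v,f(v))$ is a site, I would apply Lemma \ref{lemma:kitaev_actions_and_holonomies}(iii) with $x = 1_{G_-}$ to get
\[
 \Hol^{(v,f(v))}(\alpha \rhd_v \gamma) \;=\; \alpha\,\Hol^{(v,f(v))}(\gamma)\,\alpha^{-1} \;=\; 1_G,
\]
and combining with $\Hol^v(\alpha \rhd_v \gamma) = 1_G$ just derived forces $\Hol^{f(v)}(\alpha \rhd_v \gamma) = 1_G$. The remaining exotic case $v(f') = v$ with $f' \neq f(v)$ (still requiring $v \in L$ or $f(v) \notin L$) I would dispatch by first reducing to a paired graph via Corollary \ref{corollary:kitaev_transform_into_paired_graph}: on a paired graph, $v \mapsto f(v)$ and $f \mapsto v(f)$ are mutually inverse, so $v(f') = v$ implies $f' = f(v)$, collapsing this case into the one already handled. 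One checks that the Poisson map $\psi : K' \to K$ from that corollary intertwines vertex actions and holonomies on flat elements (Lemmas \ref{lemma:kitaev_gluing_trafos}(iii), (iv)), so stability downstairs follows from stability upstairs.

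The main obstacle is the computation $\Hol^v(\alpha \rhd_v \gamma) = \pi_-(\alpha \Hol^v(\gamma))$: although it is a ``direct calculation using Lemma \ref{lemma:global_double_projections_properties},'' it requires a careful inductive bookkeeping of the partial vertex holonomies $c_k$ as the action propagates along $p(v)$, together with uniform treatment of incoming versus outgoing edge ends (where the projections and inversions from \eqref{eq:kitaev_holonomy_functor} and \eqref{eq:vertex_action_explicit} conspire). Once this identity is in hand, everything else is either Lemma \ref{lemma:kitaev_actions_and_holonomies} or a reduction to the paired-graph setting, and the face-action statement follows by the dual computation.
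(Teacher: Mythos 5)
Your proposal is correct in substance and, in its skeleton, matches the paper's proof: both arguments dispose of the generic cases with Lemma \ref{lemma:kitaev_actions_and_holonomies} (i)--(ii) and reduce the remaining cases ($l=v$, $l=f(v)$, and faces $f'$ with $v(f')=v$) to the site identity \eqref{eq:holonomy_and_action_for_a_site} after adjusting cilia or transforming the graph. The differences are in how the degenerate cases are closed. For $l=v$ you derive $\Hol^v(\alpha\rhd_v\gamma)=\pi_-(\alpha\,\Hol^v(\gamma))$ by unfolding \eqref{eq:vertex_action_explicit} directly, whereas the paper gets the same conclusion more cheaply from $\Hol^v=\pi_-\0\Hol^{(v,f(v))}$ together with \eqref{eq:holonomy_and_action_for_a_site}, after cyclically permuting the cilium of $f(v)$ so that $(v,f(v))$ becomes a site (legitimate because $K_{f(v)}$ and $\rhd_v$ depend only on cyclic orderings); your direct computation is more work but equivalent. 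Two caveats on your treatment of the face cases. First, you leave the subcase $f'=f(v)$ with $(v,f(v))$ \emph{not} a site dangling; the fix is again the cyclic permutation of the face cilium, which the paper uses and which costs nothing since flatness at $f'$ is a cyclic notion. Second, your reduction of the case $v(f')=v$, $f'\neq f(v)$ via Corollary \ref{corollary:kitaev_transform_into_paired_graph} is not literally licensed: that corollary is stated for $L$ equal to the complement of a chosen collection of sites, while the lemma here allows arbitrary $L\subseteq V\,\dot\cup\,F$, and the wholesale transformation also relocates $f(v)$ to a newly created face, so the bookkeeping of which vertices and faces lie in $L$ must be redone. The paper instead makes a single targeted move --- re-ciliate $f'$ so that $v(f')\neq v$, inserting one bivalent vertex via Lemma \ref{lemma:kitaev_gluing_trafos} only when $v$ is the sole vertex adjacent to $f'$ --- which stays within the hypotheses and avoids these issues. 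Your argument goes through once you replace the appeal to the corollary by this lighter use of the gluing lemma (whose parts (iii)--(v) you already cite for the intertwining on flat elements).
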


\begin{proof}
  We only show the statement for the vertex action $\rhd_v$ as the proof for $\rhd_f$ is analogous.
  First we consider the case $v \in V \cap L$.
  We prove $G_+ \rhd_v (K_v \cap K_l) = (K_v \cap K_l)$ for all $l \in L$; as $v \in L$, this then implies $G_+ \rhd_v K_L = K_L$.

  To see that $G_+ \rhd_v K_v = K_v$, consider the face $f(v)$.
  If $(v, f(v))$ is a site, this follows from Equation \eqref{eq:holonomy_and_action_for_a_site} and the properties of the projection $\pi_-$ in Lemma \ref{lemma:global_double_projections_properties} because $\Hol^v = \pi_- \0 \Hol^{(v, f(v))}$.
  Otherwise, first cyclically permute the ordering of $f(v)$ so that $(v, f(v))$ becomes a site; then Equation \eqref{eq:holonomy_and_action_for_a_site} holds with respect to the new ordering.

  That $G_+ \rhd_v K_{v'} = K_{v'}$ for all other vertices $v' \neq v$ follows from Lemma \ref{lemma:kitaev_actions_and_holonomies} (i).

  It remains to prove $G_+ \rhd_v (K_v \cap K_{f'}) = (K_v \cap K_{f'})$ for $f' \in F$.
  If $f(v) = f'$, we can cyclically permute the ordering of the edge sides of $f'$ until $(v,f')$ becomes a site because the space $K_{f'}$ depends only on the cyclic ordering of $f'$.
  Then we can use the identity $\Hol^{f'} = \pi_+ \0 \Hol^{(v, f')}$ together with Lemma \ref{lemma:kitaev_actions_and_holonomies} (iii) to prove that $G_+ \rhd_v (K_v \cap K_{f'}) \subseteq K_{f'}$ holds.
  As we have already shown that $G_+ \rhd_v K_v = K_v$, this implies $G_+ \rhd_v (K_v \cap K_{f'}) = (K_v \cap K_{f'})$.

  If $f(v) \neq f'$ and there is another vertex $v'$ adjacent to $f'$, shift the ordering of $f'$ so that $v(f') = v'$.
  Then $v(f') \neq v$ and $f(v) \neq f'$, so that Lemma \ref{lemma:kitaev_actions_and_holonomies} (ii) applies and we have $G_+ \rhd_v K_{f'} = K_{f'}$.
  If $f(v) \neq f'$ and $v$ is the only vertex adjacent to $f'$, transform the graph $\Gamma$ by introducing a bivalent vertex $v'_m$ on an edge $e$ of $f'$, thus splitting it into the edges $e'_1$ and $e'_2$.
  From Lemma \ref{lemma:kitaev_gluing_trafos} we obtain the map $\psi_{v_m'}: K' \to K$ from the Poisson-Kitaev model $K'$ of the transformed graph $\Gamma'$ that implements gluing the edges $e'_1$ and $e'_2$ back together.
  In $\Gamma'$ we can shift the ordering of $f'$ so that $v(f') = v'_m$.
  Then we have $f(v) \neq f'$ and $v(f') \neq v$ in $\Gamma'$, so that $G_+ \rhd_v K'_{f'} = K'_{f'}$ by Lemma \ref{lemma:kitaev_actions_and_holonomies} (ii).
  To translate this into a statement for the graph $\Gamma$, let $\gamma \in K_{f'}$.
  By Lemma \ref{lemma:kitaev_gluing_trafos} (v) there is a $\gamma' \in K'_{f'} \cap K'_{v'_m}$ with $\psi_{v'_m}(\gamma') = \gamma$.
  For $\alpha \in G_+$ we obtain
  \[
    \Hol^{f'}( \alpha \rhd_v \gamma) = \Hol^{f'}(\alpha \rhd_v \psi_{v'_m}(\gamma')) \stackrel{\eqref{eq:kitaev_gluing_trafos_compatible_with_actions}}{=} \Hol^{f'} (\psi_{v'_m}(\alpha \rhd_v \gamma')) \, .
  \]
  Lemma \ref{lemma:kitaev_actions_and_holonomies} (i) implies $\alpha \rhd_v \gamma' \in K'_{v_m'}$, so we can apply \eqref{eq:kitaev_gluing_trafos_compatible_with_holonomies} to the right hand side to obtain 
  \[
    \Hol^{f'}( \alpha \rhd_v \gamma) = \Hol'^{f'}(\alpha \rhd_v \gamma') \, .
  \]
  For the graph $\Gamma'$ we have already shown that $G_+ \rhd_v K'_{f'} = K'_{f'}$, so the last term is equal to $1_{G_+}$ and we obtain $G_+ \rhd_v K_{f'} = K_{f'}$.

  We have thus shown that $G_+ \rhd_v (K_v \cap K_l) = K_v \cap K_l$ for all $l \in L$, which implies $G_+ \rhd_v K_L = K_L$.

  Now consider the case $v \notin L$ and $f(v) \notin L$.
  Lemma \ref{lemma:kitaev_actions_and_holonomies} (i) implies that $G_+ \rhd_v K_{v'} = K_{v'}$ for all vertices $v' \in V \cap L$.
  For $f' \in F \cap L$ we have $f(v) \neq f'$.
  We argue as before by shifting the ordering of $f'$ (after potentially creating a bivalent vertex adjacent to $f'$) so that $v(f') \neq v$.
  Then $G_+ \rhd_v K_{f'} = K_{f'}$ follows from Lemma \ref{lemma:kitaev_actions_and_holonomies} (ii).
\end{proof}

Next, we prove that the actions $\rhd_v, \rhd_f$ for a vertex $v$ and face $f$ are Poisson maps.
For this we use the graph transformations from Section \ref{subsection:graph_trafos_kitaev}.
To be able to do so, we first need a lemma that shows that the restriction of the Poisson bracket $\left\{ g, h \right\}$ to $K_L$ depends only on $g|_{K_L}$ and $h|_{K_L}$ if $g, h \in C^\infty(K, \R)^l_L$ for all $l \in L$.
This result will also be used in Proposition \ref{proposition:poisson_subalgebra_reduce_to_flat_subspace} to show that $\mathcal A(\Gamma, L)$ is a Poisson algebra.

To ensure that $K_L$ is a submanifold, we require that $\Gamma$ is connected and that there is a pair of a vertex and adjacent face that are not in $L \subseteq V \dot \cup F$.
Under these assumptions, we have:

\begin{lemma}[The Poisson bracket on $K_L$]~
  \label{lemma:locally_flat_poisson_submanifold}
  \begin{compactenum}
  \item
    The set $K_L \subseteq K$ is a submanifold.
  \item
    Let $g, g', h \in \bigcap_{l \in L} C^\infty(K, \R)^{l}_L$ with $g|_{K_L} = g'|_{K_L}$.
    The Poisson bracket satisfies:
    \[
      \left\{ g, h \right\}|_{K_L} = \left\{ g', h \right\}|_{K_L} \, .
    \]
  \end{compactenum}
\end{lemma}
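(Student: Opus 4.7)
The strategy is to first establish (i) by recognizing $K_L$ as the regular preimage of a point under the combined holonomy map
\[
  \Phi \colon K \to \prod_{l \in L} G_l\, , \qquad \gamma \mapsto \bigl( \Hol^l(\gamma) \bigr)_{l \in L}\, ,
\]
where $G_l = G_-$ for $l \in V$ and $G_l = G_+$ for $l \in F$, and then to use this together with the invariance hypotheses to reduce (ii) to a Poisson-bracket identity for the vertex and face operators from Definition~\ref{def:vertex_face_operators}.

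For (i), the main step is to show that $\Phi$ is a submersion at every $\gamma \in K_L = \Phi^{-1}((1_G)_{l \in L})$. For a single constraint, the first factor of $\Hol^l(\gamma)$ coming from the initial edge end or edge side in \eqref{eq:partial_vertex_path}/\eqref{eq:partial_face_path} has the form $\pi_\pm(\pi_e(\gamma)^{\pm 1})$ for some edge $e$ incident at $l$; Lemma~\ref{lemma:global_double_projections_properties} then shows that translating $\pi_e(\gamma)$ by a one-parameter subgroup of $G_\pm$ moves this factor through an arbitrary left- or right-invariant direction, so $T_\gamma \Hol^l$ is surjective onto $T_{1_G} G_l$. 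For general $L$, I would select, for each $l \in L$, such a dedicated edge whose variation alters $\Hol^l$ but leaves the other $\Hol^{l'}$ for $l' \in L$ unchanged. The connectedness of $\Gamma$ together with the existence of an adjacent vertex-face pair outside $L$ is precisely what rules out the single global algebraic dependency among the holonomies $\{\Hol^l\}_{l \in V \dot\cup F}$ (the relation reflecting Euler's identity on the closed surface obtained by filling in all faces), so the constraints are independent; together with the stability $G_\pm \rhd_l K_L = K_L$ from Lemma~\ref{lemma:kitaev_flat_subspace_stable}, this gives the submersion property.

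For (ii), set $\phi := g - g'$, so $\phi|_{K_L} = 0$ and $\phi \in \bigcap_{l \in L} C^\infty(K,\R)^l_L$, and the desired identity is $\{\phi, h\}|_{K_L} = 0$. By (i) the submanifold $K_L$ is locally cut out by functions of the form $\psi_i = f_i \circ \Hol^{l_i}$ with $l_i \in L$ and $f_i(1_G) = 0$, so on a neighbourhood of any $\gamma_0 \in K_L$ one may write $\phi = \sum_i b_i \psi_i$. The Leibniz rule yields
\[
  \{\phi, h\}|_{K_L} \;=\; \sum_i \bigl( b_i \{\psi_i, h\} + \psi_i \{b_i, h\} \bigr) \big|_{K_L} \;=\; \sum_i b_i \, \{\psi_i, h\}|_{K_L}\, ,
\]
so it suffices to establish $\{f \circ \Hol^l, h\}|_{K_L} = 0$ for every $l \in L$ and every $f \in C^\infty(G,\R)$ with $f(1_G) = 0$.

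This final step is the heart of the argument and the main obstacle. Using the product Poisson structure on $K = G_\He^{\x E}$, the explicit Heisenberg-double bivector \eqref{eq:bivector_heisenberg_double} on each edge, and the $r$-matrix identities of Lemma~\ref{lemma:global_double_projections_r_matrix}, a direct (if laborious) computation shows that for $\gamma \in K_L$ the Hamiltonian vector $X_{f \circ \Hol^l}(\gamma)$ lies in the linear span of the infinitesimal generators of $\rhd_l$ at $\gamma$. By Lemma~\ref{lemma:kitaev_flat_subspace_stable} the $\rhd_l$-orbit of $\gamma$ remains inside $K_L$, and $h \in C^\infty(K,\R)^l_L$ is $\rhd_l$-invariant on $K_L$, so its derivative along these generators vanishes at $\gamma$, giving $\{f \circ \Hol^l, h\}(\gamma) = 0$. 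The nontrivial bookkeeping that assembles the per-edge contributions of \eqref{eq:bivector_heisenberg_double} into the action generators visible in \eqref{eq:vertex_action_explicit}--\eqref{eq:face_action_explicit} is the classical analogue of the Hopf-algebraic relation in \cite{meusburger16} between vertex/face operators and the $D(H)$-module-algebra structure on $\He(H)^{\ox E}$.
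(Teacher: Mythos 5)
Your overall strategy is sound and, for the decisive step, coincides with the paper's: the identity that the Hamiltonian vector field of a holonomy function $f\0\Hol^l$ is a combination of the generators of $\rhd_l$ (weighted by $r$-matrix components) is exactly the computation the paper carries out in Equation \eqref{eq:locally_flat_poisson_submanifold0}, and combining it with $\rhd_l$-invariance of $h$ on $K_L$ is the same mechanism. Where you differ is in packaging. For (i), the paper does not present $K_L$ as a regular level set; it builds an explicit injective immersion $G_-^{\x E_-}\x G_+^{\x E_+}\to K$ by inductively ``solving'' each flatness constraint for the $G_-$- or $G_+$-component of one edge, starting from the excluded vertex--face pair and propagating along the connected graph. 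For (ii), the paper avoids your Hadamard-type decomposition $g-g'=\sum_i b_i\psi_i$ altogether: it shows directly that $(T\Hol^l\ox dh)\,w_K$ vanishes on $K_L$ for all $l\in L$, i.e.\ that the Hamiltonian vector field of $h$ is tangent to $K_L$, whence $\{g,h\}|_{K_L}$ depends only on $g|_{K_L}$. Your route buys an embedded presentation of $K_L$ and a conceptually transparent reduction, at the cost of needing the constraints to be a genuine regular defining system; the paper's route needs only an immersed parametrization and one antisymmetric bivector identity.

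One point in your sketch of (i) is too optimistic and should be repaired. You propose selecting, for each $l\in L$, a ``dedicated edge whose variation alters $\Hol^l$ but leaves the other $\Hol^{l'}$ unchanged.'' Such an edge does not exist in general: varying the copy of $G_\He$ at an edge $e$ changes $\Hol(f(e))$ \emph{and} $\Hol(b(e))$ (and likewise both edge sides), so it perturbs the vertex holonomies at both endpoints and the face holonomies on both sides simultaneously. What the inductive construction actually yields is a \emph{triangular} dependency: at each step the newly constrained vertex or face has no previously used edges incident to it, so its holonomy can be solved for the new edge's $G_\mp$-component as a graph over the remaining coordinates, even though that variation feeds back into already-handled constraints. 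Triangularity still gives surjectivity of the combined differential (equivalently, independence of the constraints once one vertex and one adjacent face are omitted, which removes the single global relation you correctly identify), but the argument must be organized as an induction over edges rather than a choice of pairwise-decoupled directions. With that adjustment, and granting the $r$-matrix computation you defer, your proof goes through.
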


\begin{proof}
  As a smooth manifold one has $K \cong G_-^{\x E} \x G_+^{\x E}$.
  We prove (i) by constructing an injective immersion $\phi: G_-^{\x E_-} \x G_+^{\x E_+} \to G_-^{\x E} \x G_+^{\x E}$ such that $K_L = \phi (G_-^{\x E_-} \x G_+^{\x E_+})$ for suitably chosen subsets $E_-, E_+ \subseteq E$.
  For this we parameterize $K_L$ by inductively implementing flatness at the vertices and faces in $L$.

  Let $v, f$ be a vertex and adjacent face that are not in $L$.
  With proper choice of orientation, there is an edge $e : v \to v'$ for some vertex $v'$ such that $f$ is left of $e$.
  Denote the face right of $e$ by $f'$.
  If $v' \in L$ ($f' \in L$), then for $\gamma \in K_L$  the value of $\Hol(f(e))$ ($\Hol(r(e))$)
  is uniquely determined by the values of $\pi_{e'}$ for the other edges $e'$ incident at $v'$ (adjacent to $f'$).
  Thus, flatness at $v'$ ($f'$) can be implemented by a parametrization of the $G_-$-component ($(G_+)$-component) of the copy of $G$ associated with $e$.
  Set $E_{done,1} := \left\{ e \right\}, V_{done, 1} := \left\{ v, v' \right\}$ and $F_{done, 1} := \left\{ f, f' \right\}$.

  Now suppose that for an $i \in \N$ we have sets $E_{done, i}, V_{done, i}, F_{done, i}$ such that for each $e' \in E_{done, i}$ the vertices $s(e'), t(e')$ are in $V_{done,i}$ and the faces left and right of $e'$ are in $F_{done, i}$.
  Suppose further that we can implement flatness for all vertices in $L \cap V_{done,i}$ and faces in $L \cap F_{done, i}$ by parametrizing the $G_-$- or $G_+$-components of some edges in $E_{done, i}$.
  These assumptions are true for $i=1$.

  Choose an edge $e \notin E_{done, i}$ such that, with proper choice of orientation of $e$, the next edge $e'$ in the cyclic ordering at $s(e)$ is in $E_{done, i}$.
  By assumption, we have $s(e'), t(e') \in V_{done, i}$ and the faces adjacent to $e'$ are in $F_{done, i}$.
  This implies that $s(e) \in V_{done, i}$ and that the face left of $e$ is in $F_{done, i}$.
  If $t(e) \in L \setminus V_{done,i}$, then no edges incident at $t(e)$ are in $E_{done,i}$.
  We can implement flatness at $t(e)$ by parametrizing $\Hol(f(e))$ in terms of the copies of $G$ associated with the other edges incident at $t(e)$.
  Likewise, if the face $f$ right of $e$ is in $L \setminus F_{done,i}$, no edges of $f$ are in $E_{done, i}$ and we obtain a parametrization of $\Hol(r(e))$.

  Set $E_{done,i+1} := E_{done, i} \cup \left\{ e \right\}, V_{done, i+1} := V_{done, i} \cup \left\{ t(e) \right\}$ and $F_{done, i+1} := F_{done, i} \cup \left\{ f \right\}$.
  Then for all $e' \in E_{done,i+1}$ one has $s(e), t(e) \in V_{done, i+1}$ and the faces left and right of $e'$ are in $F_{done, i+1}$.
  Furthermore, we can implement flatness for all vertices in $L \cap V_{done, i+1}$ and faces in $L \cap F_{done, i+1}$ by parametrizing the $G_-$- or $G_+$-components of the copies of $G$ associated to some edges in $E_{done, i+1}$.

  Because $\Gamma$ is connected, we can repeat this procedure until $E_{done, i+1} = E$.
  We obtain a parametrization of $K_L$ in terms of $G_-^{E_-} \x G_+^{E_+}$, where the set $E_-$ ($E_+$) consists of the edges that have not been used to implement flatness at a vertex (face).
  This proves (i).

  To prove (ii), note that for $\gamma \in K_L$ the tangent space $T_\gamma K_L$ is given by
  \[
    T_\gamma K_L = \left\{ X \in T_\gamma K \, | \, T\Hol^v \, X = T\Hol^f \, X = 0 \Forall v, f \in L \right\} \, .
  \]
  We show that for all $v \in V \cap L, f \in F \cap L$ and $h \in \bigcap_{l \in L} C^\infty(K, \R)^{l}_L$ the following equations hold for the Poisson bivector $w_K$ of $K$:
  \[
    (T \Hol^v \ox \, dh) \, w_K(\gamma) = 0 = (T \Hol^f \ox \, dh) \, w_K(\gamma) \quad \Forall \gamma \in K_L \, .
  \]
  This implies that $(\id \ox dh) \, w_K |_{K_L}$ is a vector field on $K_L$, so that $\left\{ g, h \right\}|_{K_L} = \left\{ g',h \right\}|_{K_L}$ follows from $g|_{K_L} = g'|_{K_L}$.

  Assume that $v \in V \cap L$ is a vertex without loops and that the edges $e_1 < \dots < e_m$ incident at $v$ are incoming.
  Set $d_i := \pi_{e_i}(\gamma)$ for $i = 1, \dots, m$.
  For $\gamma \in K_L$ the map $G_+ \to \R, \alpha \mapsto h(\alpha \rhd_v \gamma)$ is constant.
  Hence, we obtain
  \begin{equation}
    \label{eq:locally_flat_poisson_submanifold_proof0}
    0 = \langle d(h \circ (- \rhd_v \gamma)), \beta \rangle = \langle dh, - V (\beta) \rangle (\gamma)
  \end{equation}
  for all elements $\beta \in \g_+$, where $V (\beta) \in \Gamma(T K)$ is the vector field that generates the action $\rhd_v$.
  Define for $e \in E$ the inclusion map $\iota_e(\gamma) : G \to K$ by
  \begin{equation}
    \label{eq:inclusion_map}
    (\pi_{h} \0 \iota_e (\gamma)) (g) =
    \begin{cases}
      g & \text{if } h = e \\
      \pi_{h}(\gamma) & \text{otherwise.}
    \end{cases}
  \end{equation}
  By differentiating the map $(-\rhd_v \gamma)$ in Equation \eqref{eq:kitaev_vertex_action_easy} we obtain an explicit expression for $V(\beta)$
  \begin{equation}
    \label{eq:locally_flat_poisson_submanifold_proof1}
    V(\beta) (\gamma) = -\sum_{i =1}^m T (\iota_{e_i}(\gamma) \0 R_{d_i} \circ \pi_+ \circ R_{c_i(\gamma)}) \, \beta \, ,
  \end{equation}
  where $c_i(\gamma) = \Hol(p_{i-1}(v))(\gamma)$.
  Set $\tilde c_i (\gamma) := \Hol(p_{i-1}(v)^{-1} \0 p(v)) (\gamma)$, so that $c_i(\gamma) \tilde c_i(\gamma) = \Hol^v(\gamma) $.
  Consider the Poisson bivector  $w_{\He}$  from \eqref{eq:bivector_heisenberg_double}.
  The bivector of the product Poisson manifold $K =  G_\He ^{\x E}$ is given by $w_K(\gamma) = \sum_{e \in E} T(\iota_e(\gamma))^{\ox 2} \, w_{\He}  (\pi_e(\gamma))$.
  The map $\Hol^v : K \to G$ depends only on the copies of $ G_\He $ associated with the edges $e_1, \dots, e_m$, so that
  \[
    (T \Hol^v \ox \, dh) \, w_K (\gamma) = \sum_{i=1}^m (T\Hol^v \ox dh) \0 T\iota_{e_i}(\gamma)^{\ox 2} \, w_\He  (d_i) \, .
  \]
  As $\Hol^v(\gamma) = \pi_-(d_1) \cdots \pi_-(d_m)$, one has $\Hol^v \0 \, \iota_{e_i} = L_{c_i(\gamma)} \0 R_{\tilde c_{i+1}(\gamma)} \0 \pi_-$, which implies
  \[
    (T \Hol^v \ox \, dh) \, w_K (\gamma) = \sum_{i=1}^m T(L_{c_i(\gamma)} \circ R_{\tilde c_{i+1}(\gamma)} \circ \pi_-) \ox d(h \0 \iota_{e_i}(\gamma)) \, w_\He  (d_i) \, .
  \]
  Use Equation \eqref{eq:global_double_projections_r_matrix_i_2} from Lemma \ref{lemma:global_double_projections_r_matrix} (i) to obtain:
  \begin{equation*}
    (T \Hol^v \ox \, dh) \, w_K (\gamma) = - \sum_{i=1}^m \left( T(L_{c_i(\gamma)} \circ R_{\tilde c_{i+1}(\gamma)} \circ \pi_-) \ox d(h \0 \iota_{e_i}(\gamma)) \right) \0 TR_{d_i}^{\ox 2} \, r \, .
  \end{equation*}
  The identity $(R_{\tilde c_{i+1}(\gamma)} \0 \pi_- \0 R_{d_i})(x) = R_{\tilde c_i(\gamma)} (x)$ for $x \in G_-$ together with the fact that $r \in \g_- \ox \g_+$ implies
  \begin{align*}
    (T \Hol^v \ox \, dh) \, w_K (\gamma) =& - \sum_{i=1}^m \left( T(L_{c_i(\gamma)} \circ R_{\tilde c_{i}(\gamma)} ) \ox d(h \0 \iota_{e_i}(\gamma) \0 R_{d_i}) \right) \, r \\
    =& - \sum_{i=1}^m \left( (TR_{\Hol^v(\gamma)} \0 \Ad_{c_i(\gamma)}) \ox d(h \0 \iota_{e_i}(\gamma) \0 R_{d_i}) \right) \, r \, ,
  \end{align*}
  where we inserted the term $T(R_{c_i(\gamma)} \0 R_{c_i(\gamma)^{-1}})$ in the second equation and used $\Hol^v(\gamma) = c_i(\gamma) \tilde c_i(\gamma)$.
  Equation \eqref{eq:global_double_projections_r_matrix_ii_1} from Lemma \ref{lemma:global_double_projections_r_matrix} (ii) implies:
  \begin{align}
    (T \Hol^v \ox \, dh) \, w_K (\gamma) =& - \sum_{i=1}^m \left( TR_{\Hol^v(\gamma)} \ox d \left(h \0 \iota_{e_i}(\gamma) \0 R_{d_i} \0 \pi_+ \0 R_{c_i(\gamma)} \right) \right) \, r \nonumber \\
    \stackrel{\eqref{eq:locally_flat_poisson_submanifold_proof1}}{=} & \sum_{(r)} \langle dh, V(r_{(2)}) (\gamma) \rangle \, TR_{\Hol^v(\gamma)} \, r_{(1)} \stackrel{\eqref{eq:locally_flat_poisson_submanifold_proof0}}{=} 0 \label{eq:locally_flat_poisson_submanifold0} \, .
  \end{align}
  In the case with loops one proceeds in the same way, with the exceptions that loops occur twice in the holonomy $\Hol^v$ and one obtains two associated contributions to the vector field $V(\beta)$ in \eqref{eq:locally_flat_poisson_submanifold_proof1}.

  The proof of the statement $(T\Hol^f \ox dh) \, w_K (\gamma) = 0$ for $\gamma \in K_L, f \in F \cap L$ is analogous to that for vertex holonomies.
\end{proof}

\begin{proposition}
  \label{proposition:vertex_face_actions_are_poisson}
  The maps $\rhd_v : G_+ \x K \to K$ for $v \in V$ and $\rhd_f: G_- \x K \to K$ for $f \in F$ are Poisson.
\end{proposition}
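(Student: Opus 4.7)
The plan is to decompose each action into a composition of basic Poisson maps built from multiplications on the Heisenberg double and the projections $\pi_\pm$. First I would reduce to a cleaner setting: by Remark \ref{remark:edge_reversal} I may reverse edges so that all edges incident at $v$ are incoming, and by introducing bivalent vertices on loops (as in Step 1 of the proof of Corollary \ref{corollary:kitaev_transform_into_paired_graph}) I may assume no loops occur at $v$. The action then takes the simpler form \eqref{eq:kitaev_vertex_action_easy}. Since $\rhd_v$ acts as the identity on edges not incident at $v$ and $K$ is a product Poisson manifold, it suffices to check the Poisson property on the factors associated with the edges $e_1, \dots, e_m$ at $v$.

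Next I would factor $\rhd_v$ as a composition $\tilde\phi_m \circ \cdots \circ \tilde\phi_1$ of maps $\tilde\phi_k : G_+ \x K \to G_+ \x K$, where $\tilde\phi_k$ left-multiplies the component $\pi_{e_k}(\gamma)$ by $\pi_+(\alpha \, c_k(\gamma))$ and replaces $\alpha$ by the updated element $\pi_+(\alpha \, \pi_-(\pi_{e_k}(\gamma))) \in G_+$, so that the next step automatically sees the correct partial holonomy; this reformulation is justified by the identities $\pi_+(\pi_+(g)h) = \pi_+(gh)$ of Lemma \ref{lemma:global_double_projections_properties}. Each $\tilde\phi_k$ then splits as (a) an auxiliary map on $G_+ \x G_-$ obtained from the dressing interpretation of $\pi_\pm$, whose Poisson character is governed by Lemma \ref{lemma:projectionLocallyPoisson}, followed by (b) the left action of $G_+ \subseteq G$ on the Heisenberg-double factor $G_\He$, which is Poisson by Theorem \ref{theorem:heisenberg_double_inversion_poisson_actions}.

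The main obstacle is to verify that each elementary step $\tilde\phi_k$ is itself Poisson, because the multiplier $\pi_+(\alpha \, c_k(\gamma))$ couples $\alpha \in G_+$ to the partial holonomy $c_k(\gamma) \in G_-$ in a way that produces cross-terms involving the $r$-matrix. These cross-terms are precisely of the shape controlled by Lemma \ref{lemma:global_double_projections_r_matrix}, and the opposite signs in the Poisson versus anti-Poisson behavior of $\pi_+$ recorded in Lemma \ref{lemma:projectionLocallyPoisson} (iii) are designed to make the net contribution cancel. Once $\rhd_v$ is Poisson, the statement for $\rhd_f$ follows by the dual argument, using \eqref{eq:kitaev_face_action_easy} in place of \eqref{eq:kitaev_vertex_action_easy} and interchanging the roles of $G_+$ with $G_-$ and of edge ends with edge sides; the only additional input needed is the symmetric identity \eqref{eq:global_double_projections_r_matrix_ii_2} from Lemma \ref{lemma:global_double_projections_r_matrix}.
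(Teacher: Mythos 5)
Your overall strategy --- factoring $\rhd_v$ into elementary maps $\tilde\phi_k$ that update one edge at a time while carrying along an updated element of $G_+$ --- is genuinely different from the paper's proof, which instead pushes the full bivector forward and shows that every cross-term $(T\pi_{e_i}\ox T\pi_{e_j})\0 T(\rhd_v)^{\ox 2}\,w_\pi$ with $i\neq j$ vanishes through an explicit telescoping computation with the $r$-matrix. The problem is that your argument stops exactly where the work begins. On the two factors it touches, each $\tilde\phi_k$ restricts to the map $\Psi : G_+\x G_\He \to G_+ \x G_\He$, $(\alpha,d)\mapsto\bigl(\pi_+(\alpha\,\pi_-(d)),\,\alpha d\bigr)$, and for $\tilde\phi_k$ to be Poisson with respect to the \emph{product} structures you need not only that each component of $\Psi$ is Poisson (which does follow from Lemma \ref{lemma:projectionLocallyPoisson} and Theorem \ref{theorem:heisenberg_double_inversion_poisson_actions}), but also that the cross-term between the two output components vanishes, i.e. $(TF\ox TG)\,\bigl(w_{G_+}(\alpha)\oplus w_\He(d)\bigr)=0$ for $F(\alpha,d)=\pi_+(\alpha\,\pi_-(d))$ and $G(\alpha,d)=\alpha d$. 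Two maps being individually Poisson never implies that their pairing into a product Poisson manifold is Poisson, and this vanishing is precisely the nontrivial $r$-matrix identity that the paper establishes (in aggregated, telescoped form) using Lemma \ref{lemma:global_double_projections_r_matrix} and the $\Ad$-invariance of $r_s$. Declaring that the cross-terms ``are designed to cancel'' is a statement of faith, not a proof; you must carry out a computation of the same type as the one the paper performs, and that computation is the entire content of the proposition.

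A second, smaller gap: the reduction ``I may assume no loops occur at $v$'' is not free. The gluing map $\psi:K'\to K$ from Lemma \ref{lemma:kitaev_gluing_trafos} intertwines the vertex actions only on the flat subset $K'_{L'}$, not on all of $K'$, so the Poisson property of $\rhd_v$ on $K'$ cannot simply be transported along $\psi$ by naturality. The paper needs Lemma \ref{lemma:locally_flat_poisson_submanifold} (ii) --- that the bracket of functions invariant under the actions at the new vertices, restricted to $K'_{L'}$, depends only on their restrictions to $K'_{L'}$ --- together with the stability of $K'_{L'}$ under these actions (Lemma \ref{lemma:kitaev_flat_subspace_stable}) to make this transfer. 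Your proof should either reproduce that argument or supply a substitute.
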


\begin{proof}
  First let $v \in V$ be a vertex without loops and assume that the linearly ordered edges $e_1 < \dots < e_n$ are all incoming at $v$.
  Denote the product Poisson bivector of $G_+ \x K$ by $w_\pi$ and set $d_i := \pi_{e_i} (\gamma)$.
  To see that $\rhd_v : G_+ \x K \to K$ is Poisson, first note that the maps for the individual edges from Equation \eqref{eq:kitaev_vertex_action_easy}
  \[
    \pi_{e_i} \0 \rhd_v: G_+ \x K \to  G_\He  \qquad (\alpha, \gamma) \mapsto \pi_+(\alpha \, c_k (\gamma)) \, d_i  \quad \text{ with } \quad c_k(\gamma) = \Hol(p_{k-1}(v))(\gamma)
  \]
  are Poisson as a consequence of Lemma \ref{lemma:projectionLocallyPoisson} and Theorem \ref{theorem:heisenberg_double_inversion_poisson_actions} (ii).
  This implies
  \[
    T\pi_{e_i}^{\ox 2} \0 T(\rhd_v)^{\ox 2} \, w_\pi = w_\He  (\pi_{e_i} (\alpha \rhd_v \gamma)) = T\pi_{e_i}^{\ox 2} \, w_K (\alpha \rhd_v \gamma) \, ,
  \]
  where $w_K$ is the Poisson bivector of the product Poisson manifold $K =  G_\He ^{\x E}$.

  It remains to show that $(T\pi_{e_i} \ox T\pi_{e_j}) \0 T(\rhd_v)^{\ox 2} \, w_\pi = 0$ for $i \neq j$.
  Let $i < j$ and define
  \[
    c_{l,k} := \Hol(p_{l-1}(v))(\gamma)^{-1} \, \Hol(p_{k-1}(v))(\gamma)
  \]
  with the partial vertex path $p_k(v)$ from \eqref{eq:partial_vertex_path}.
  This is the value of the partial vertex holonomy from the $(k-1)$-th to the $l$-th edge end and satisfies $c_{1,k}(\gamma) = c_{k}(\gamma)$.
  We compute:
  \begin{align}
    & (T\pi_{e_i} \ox T\pi_{e_j}) \0 T(\rhd_v)^{\ox 2} \, w_\pi(\alpha, \gamma) = T(R_{d_i} \circ \pi_+ \circ R_{c_{1,i}}) \ox T(R_{d_j} \circ \pi_+ \circ R_{c_{1,j}}) \, w_{G_+} (\alpha) \nonumber \\
    \label{eq:kitaev_actions_poisson_proof0}
    & + \sum_{k=1}^{i-1} T(R_{d_i} \circ \pi_+ \circ L_{\alpha \, c_{1,k}} \circ R_{c_{k+1, i}} \circ \pi_-) \ox T(R_{d_j} \circ \pi_+ \circ L_{\alpha \, c_{1, k}} \circ R_{c_{k+1, j}} \circ \pi_-) \, w_\He  (d_k) \\
    & + TL_{\pi_+(\alpha \, c_{1, i})} \ox T(R_{d_j} \circ \pi_+ \circ L_{\alpha \, c_{1, i}} \circ R_{c_{i+1, j}} \circ \pi_-) \, w_\He   (d_i) \, . \nonumber
  \end{align}
  The first term here is obtained from the contribution of $G_+$ to $w_\pi (\alpha, d)$, while the other terms are associated with the edges $e_1, \dots, e_i$.
  As $G_+ \subseteq G$ is a Poisson-Lie subgroup of the quasi-triangular Poisson-Lie group $G$, we obtain the Poisson bivector $w_{G_+}$ of $G_+$ from Equation \eqref{eq:SklyaninBivector}.
  The $\Ad$-invariance of the symmetric component $r_s$ of the classical $r$-matrix of $G$ implies:
  \[
    w_{G_+}(\alpha) = w (\alpha) = (TL_{\alpha}^{\ox 2} - TR_{\alpha}^{\ox 2}) \, r_a = - (TL_{\alpha}^{\ox 2} - TR_{\alpha}^{\ox 2}) \, r_{21} \, .
  \]
  Lemma \ref{lemma:global_double_projections_properties} (ii) implies that $T(\pi_+ \circ R_{g}) \, y = 0 \Forall y \in \g_-, g \in G$.
  Applying this to the first term of \eqref{eq:kitaev_actions_poisson_proof0} and Equation \eqref{eq:global_double_projections_r_matrix_i_2} from Lemma \ref{lemma:global_double_projections_r_matrix} (i) to the other terms yields:
  \begin{align*}
    & (T\pi_{e_i} \ox T\pi_{e_j}) \0 T(\rhd_v)^{\ox 2} \, w_\pi(\alpha, \gamma) = \; - T(R_{d_i} \circ \pi_+ \circ R_{c_{1,i}}) \ox T(R_{d_j} \circ \pi_+ \circ R_{c_{1,j}}) \0 TL_{\alpha}^{\ox 2} \, r_{21} \\
    & + \sum_{k=1}^{i-1} T(R_{d_i} \circ \pi_+ \circ L_{\alpha \, c_{1,k}} \circ R_{c_{k+1, i}} \circ \pi_-) \ox T(R_{d_j} \circ \pi_+ \circ L_{\alpha c_{1, k}} \circ R_{c_{k+1, j}} \circ \pi_-) \0 TR_{d_k}^{\ox 2} \, r_{21} \\
    & + TL_{\pi_+(\alpha \, c_{1, i})} \ox T(R_{d_j} \circ \pi_+ \circ L_{\alpha \, c_{1, i}} \circ R_{c_{i+1, j}} \circ \pi_-) \0 TR_{d_i}^{\ox 2} \, r_{21} \, .
  \end{align*}
  Using $r_{21} \in \g_+ \ox \g_-$ and Lemma \ref{lemma:global_double_projections_properties} we can simplify this expression to:
  \begin{equation}
    \label{eq:kitaev_actions_poisson_proof1}
    \begin{split}
      & (T\pi_{e_i} \ox T\pi_{e_j}) \0 T(\rhd_v)^{\ox 2} \, w_\pi(\alpha, \gamma) = (TR_{d_i} \ox TR_{d_j}) \circ T(\pi_+ \circ L_\alpha)^{\ox 2} \\
      & \quad \Big[ - TR_{c_{1, i}} \ox TR_{c_{1, j}} + \sum_{k=1}^{i-1} T(L_{c_{1, k}} \circ R_{c_{k+1, i}} \circ \pi_- \circ R_{\pi_-(d_k)}) \ox T(L_{c_{1, k}} \circ R_{c_{k, j}}) \\
      & \quad \:\, + TL_{c_{1, i}} \ox T(L_{c_{1, i}} \circ R_{c_{i, j}}) \Big] \, r_{21} \, .
    \end{split}
  \end{equation}
  Let $k \in \left\{ 1, \dots, i-1 \right\}$.
  The equation $g \, \pi_-(d_k) = \pi_-(g \, \pi_-(d_k)) \, \pi_+(g \, \pi_-(d_k)) \Forall g \in G$ implies
  \begin{equation*}
    TR_{\pi_-(d_k)}^{\ox 2} \, r_{21} = \; T(\pi_- \circ R_{\pi_-(d_k)}) \ox TR_{\pi_-(d_k)} \, r_{21} + T( L_{\pi_-(d_k)} \circ \pi_+ \circ R_{\pi_-(d_k)}) \ox TR_{\pi_-(d_k)} \, r_{21} \, ,
  \end{equation*}
  and applying Equation \eqref{eq:global_double_projections_r_matrix_ii_1} from Lemma \ref{lemma:global_double_projections_r_matrix} (ii) to the second term yields:
  \begin{align*}
  & TR_{\pi_-(d_k)}^{\ox 2} \, r_{21} = \; T(\pi_- \circ R_{\pi_-(d_k)}) \ox TR_{\pi_-(d_k)} \, r_{21} + T L_{\pi_-(d_k)} \ox (TR_{\pi_-(d_k)} \0 \Ad_{\pi_-(d_k)}) \, r_{21} \\
    =& \; T(\pi_- \circ R_{\pi_-(d_k)}) \ox TR_{\pi_-(d_k)} \, r_{21} + T L_{\pi_-(d_k)}^{\ox 2} \, r_{21} \, .
  \end{align*}
  Apply the map $TL_{c_{1, k}}^{\ox 2} \0 (TR_{c_{k+1, i}} \ox TR_{c_{k+1, j}})$ to this equation to obtain
  \begin{align*}
    & TL_{c_{1, k}}^{\ox 2} \0 (TR_{c_{k, i}} \ox TR_{c_{k, j}}) \, r_{21} = \; T(L_{c_{1, k}} \0 R_{c_{k+1, i}} \0 \pi_- \circ R_{\pi_-(d_k)}) \ox T(L_{c_{1, k}} \0 R_{c_{k, j}}) \, r_{21} \\
    & + TL_{c_{1, k+1}}^{\ox 2} \0 (TR_{c_{k+1, i}} \ox TR_{c_{k+1, j}}) \, r_{21} \, .
  \end{align*}
  Thus, for the sum in \eqref{eq:kitaev_actions_poisson_proof1} we obtain
  \begin{align*}
    & \sum_{k=1}^{i-1} T(L_{c_{1, k}} \circ R_{c_{k+1, i}} \circ \pi_- \circ R_{\pi_-(d_k)}) \ox T(L_{c_{1, k}} \circ R_{c_{k, j}}) \, r_{21} \\
    =& \; \sum_{k=1}^{i-1} TL_{c_{1, k}}^{\ox 2} \0 (TR_{c_{k, i}} \ox TR_{c_{k, j}}) \, r_{21} - \sum_{k=1}^{i-1} TL_{c_{1, k+1}}^{\ox 2} \0 (TR_{c_{k+1, i}} \ox TR_{c_{k+1, j}}) \, r_{21} \\
    =& \; TR_{c_{1, i}} \ox TR_{c_{1, j}} \, r_{21} - TL_{c_{1, i}} \ox T(L_{c_{1, i}} \0 R_{c_{i, j}}) \, r_{21} \, .
  \end{align*}
  Insert this for the sum on the right hand side of \eqref{eq:kitaev_actions_poisson_proof1} to obtain
  \begin{equation*}
    (T\pi_{e_i} \ox T\pi_{e_j}) \0 T(\rhd_v)^{\ox 2} \, w_\pi(\alpha, \gamma) = 0 \, .
  \end{equation*}

  Now we relax the assumption that $v$ is a vertex without loops.
  So let $v \in V$ be a general vertex and $\Gamma'$ the graph obtained from introducing a bivalent vertex on every loop at $v$.
  Consider the Poisson map $\psi: K' \to K$ obtained from Lemma \ref{lemma:kitaev_gluing_trafos} that corresponds to gluing the split loops back together.
  Denote the set of newly introduced vertices by $L' \subseteq V'$.
  For $\gamma' \in K'_{L'}, \alpha \in G_+$ and $f_1, f_2 \in C^\infty(K, \R)$ we compute
  \begin{align}
    & \left\{ f_1 \0 \rhd_v, f_2 \0 \rhd_v \right\}_{G_+ \x K} (\alpha, \psi(\gamma')) \nonumber \\
    = & \left\{ f_1 \0 (- \rhd_v \psi(\gamma')), f_2 \0 (- \rhd_v \psi(\gamma')) \right\}_{G_+} (\alpha) + \left\{ f_1 \0 (\alpha \rhd_v -), f_2 \0 (\alpha \rhd_v -) \right\}_{K} (\psi(\gamma')) \nonumber \\
    =& \left\{ f_1 \0  (- \rhd_v \psi(\gamma')), f_2 \0  (- \rhd_v \psi(\gamma')) \right\}_{G_+} (\alpha) \label{eq:kitaev_actions_properties_proof0} \\
    & + \left\{ f_1 \0  (\alpha \rhd_v -) \0 \psi , f_2 \0  (\alpha \rhd_v -) \0 \psi \right\}_{K'} (\gamma') \nonumber \, ,
  \end{align}
  where we used the Poisson property of $\psi$ in the second equation.
  From Lemma \ref{lemma:kitaev_gluing_trafos} (iii) one obtains $\psi(\hat\alpha \rhd_v \gamma') = \hat\alpha \rhd_v \psi(\gamma') \Forall \hat\alpha \in G_+$ because $\gamma' \in K'_{L'}$.
  Apply this to \eqref{eq:kitaev_actions_properties_proof0} to obtain
  \begin{align}
    & \left\{ f_1 \0 \rhd_v, f_2 \0 \rhd_v \right\}_{G_+ \x K} (\alpha, \psi(\gamma')) = \left\{ f_1 \0  \psi \0 (- \rhd_v \gamma'), f_2 \0 \psi \0 (- \rhd_v \gamma') \right\}_{G_+} (\alpha) \nonumber \\
    &+ \left\{ f_1 \0  (\alpha \rhd_v -) \0 \psi , f_2 \0  (\alpha \rhd_v -) \0 \psi \right\}_{K'} (\gamma') \, . \label{eq:kitaev_actions_properties_proof1}
  \end{align}

  Let $g_i := f_i \0 (\alpha \rhd_v -) \0 \psi $ and $g_i' := f_i \0 \psi \0 (\alpha \rhd_v -)$ for $i =1,2$.
  By Lemma \ref{lemma:kitaev_gluing_trafos} (iii), the restrictions $g_i|_{K'_{L'}}$ and $g'_i|_{K'_{L'}}$ coincide.
  As $G_+ \rhd_{v'} K'_{L'} = K'_{L'}$ by Lemma \ref{lemma:kitaev_flat_subspace_stable}, the maps $g_i, g'_i$ also coincide on elements of the form $\hat\alpha \rhd_{v'} \hat \gamma'$ for $\hat\alpha \in G_+, \hat \gamma' \in K'_{L'}$.
  The map $\psi$ is invariant under $\rhd_{v'}$ for $v' \in L'$ by Lemma \ref{lemma:kitaev_gluing_trafos} (ii).
  Hence, both $g_i$ and $g'_i$ are elements of $\bigcap_{l' \in L'} C^\infty(K', \R)^{l'}_{L'}$.
  We can therefore apply Lemma \ref{lemma:locally_flat_poisson_submanifold} (ii) to substitute $g_i$ for $g'_i$ in \eqref{eq:kitaev_actions_properties_proof1}.
  We obtain
  \begin{align*}
    & \left\{ f_1 \0 \rhd_v, f_2 \0 \rhd_v \right\}_{G_+ \x K} (\alpha, \psi(\gamma')) \\
    =& \left\{ f_1 \0 \psi \0 (- \rhd_v \gamma'), f_2 \0 \psi \0 (- \rhd_v \gamma') \right\}_{G_+} (\alpha) + \left\{ f_1 \0 \psi \0 (\alpha \rhd_v -) , f_2 \0 \psi \0 (\alpha \rhd_v -) \right\}_{K'} (\gamma') \\
    =& \left\{ f_1 \0 \psi \0 \rhd_v, f_2 \0 \psi \0 \rhd_v \right\}_{G_+ \x K'} (\alpha, \gamma') = \left\{ f_1 , f_2 \right\}_{K} \0 \psi \0 \rhd_v \, (\alpha , \gamma') = \left\{ f_1 , f_2 \right\}_{K} \0 \rhd_v \, (\alpha ,\psi(\gamma')) \, ,
  \end{align*}
  where we used that $\rhd_v: G_+ \x K' \to K'$ and $\psi: K' \to K$ are Poisson in the third equation and Lemma \ref{lemma:kitaev_gluing_trafos} (iii) in the fourth.
  Because $\psi|_{K'_{L'}} : K'_{L'} \to K$ is surjective by Lemma \ref{lemma:kitaev_gluing_trafos} (v), this implies that $\rhd_v : G_+ \x K \to K$ is Poisson.
  The proof for $\rhd_f : G_- \x K \to K$ is analogous.
\end{proof}

In particular, Proposition \ref{proposition:vertex_face_actions_are_poisson} shows that $K$ is a Poisson $G_+$-space with respect to every vertex action and a Poisson $G_-$-space for every face action.
The maps $\rhd: G_+ \x G_- \to G_-, (\alpha, x) \mapsto \pi_-(\alpha x)$ and $\rhd' : G_- \x (G_+, -w_{G_+}) \to (G_+, -w_{G_+}), (x, \alpha) \mapsto \pi_+(\alpha x^{-1})$ are the dressing actions of $G_+$ and $G_-$ on each other \cite[Theorem 3.14]{luweinstein1990}.
These equip $G_-$ with the structure of a Poisson $G_+$-space and vice versa.
Vertex and face holonomies intertwine these actions:

\begin{proposition}[Holonomies and dressing transformations]
  For all $v \in V, f \in F$ the maps $\Hol^v : K \to G_-$ and $\Hol^f: K \to (G_+, -w_{G_+})$  from Definition \ref{def:vertex_face_paths_and_holonomies} are homomorphisms  of Poisson $G_+$-spaces and Poisson $G_-$-spaces, respectively.
\end{proposition}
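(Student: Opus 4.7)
The statement comprises a Poisson-map assertion and an equivariance assertion for each of $\Hol^v$ and $\Hol^f$; I will describe the strategy for $\Hol^v$, the face case being entirely parallel with the roles of $G_+$ and $G_-$, vertices and faces, and $\pi_+$ and $\pi_-$ swapped, and the signs of the Poisson bivectors adjusted accordingly.

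\emph{Equivariance.} I first cyclically shift the cilium of $f(v)$ so that $(v,f(v))$ becomes a site; this affects neither $\rhd_v$ (which depends only on the ordering at $v$) nor $\Hol^v$ (which depends only on the cyclic ordering at $v$). Since $\Hol^{(v,f(v))}(\gamma)=\Hol^v(\gamma)\cdot\Hol^{f(v)}(\gamma)$ is then the unique $G_-\cdot G_+$ factorization, I have $\Hol^v=\pi_-\circ\Hol^{(v,f(v))}$. Applying Lemma \ref{lemma:kitaev_actions_and_holonomies} (iii) with $x=1$ yields
\[
\Hol^{(v,f(v))}(\alpha\rhd_v\gamma)=\alpha\,\Hol^v(\gamma)\,\Hol^{f(v)}(\gamma)\,\alpha^{-1},
\]
and because $\Hol^{f(v)}(\gamma)\alpha^{-1}\in G_+$, Lemma \ref{lemma:global_double_projections_properties} (ii) gives $\Hol^v(\alpha\rhd_v\gamma)=\pi_-(\alpha\,\Hol^v(\gamma))$, which is precisely the left dressing action of $\alpha$ on $\Hol^v(\gamma)$.

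\emph{Poisson property when $v$ has no loops.} In this case I factor $\Hol^v$ as
\[
K\xrightarrow{\;(\pi_{e_1},\ldots,\pi_{e_m})\;}G_\He^{\times m}\xrightarrow{\;(\phi_1,\ldots,\phi_m)\;}G_-^{\times m}\xrightarrow{\;\mu_{G_-}\;}G_-,
\]
where $e_1<\cdots<e_m$ are the distinct edges at $v$ and $\phi_k$ is $\pi_-$ or $\pi_-\circ\eta$ depending on whether the $k$-th edge end at $v$ is $f(e_k)$ or $b(e_k)$. The first map is a product projection and hence Poisson; the second is Poisson because $\pi_-\colon G_\He\to G_-$ is Poisson (Lemma \ref{lemma:projectionLocallyPoisson}) and $\eta\colon G_\He\to G_\He$ is Poisson (Theorem \ref{theorem:heisenberg_double_inversion_poisson_actions}); the third is Poisson because $G_-$ is a Poisson-Lie subgroup of $G$.

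\emph{Poisson property in general.} To handle loops at $v$, I introduce a bivalent vertex on every loop at $v$ via the construction of Lemma \ref{lemma:kitaev_gluing_trafos}, obtaining a graph $\Gamma'$ with no loops at $v$ and a Poisson surjection $\psi\colon K'\to K$ such that $\psi(K'_{L'})=K$, where $L'$ is the set of new bivalent vertices. Both $\Hol^v_\Gamma\circ\psi$ and $\Hol^v_{\Gamma'}$ are invariant under $\rhd_{v_m}$ for every $v_m\in L'$ (by Lemma \ref{lemma:kitaev_gluing_trafos} (ii) and Lemma \ref{lemma:kitaev_actions_and_holonomies} (i), respectively), and they coincide on $K'_{L'}$ by Lemma \ref{lemma:kitaev_gluing_trafos} (iv). Applying Lemma \ref{lemma:locally_flat_poisson_submanifold} (ii) to exchange $\Hol^v_\Gamma\circ\psi$ for $\Hol^v_{\Gamma'}$ inside Poisson brackets restricted to $K'_{L'}$, and combining with the Poisson property of $\Hol^v_{\Gamma'}$ from the previous paragraph and of $\psi$ from Lemma \ref{lemma:kitaev_gluing_trafos} (i), I obtain
\[
\bigl\{f_1\circ\Hol^v_\Gamma,\;f_2\circ\Hol^v_\Gamma\bigr\}_K\circ\psi \;=\; \bigl\{f_1,f_2\bigr\}_{G_-}\circ\Hol^v_\Gamma\circ\psi \quad\text{on } K'_{L'}
\]
for all $f_1,f_2\in C^\infty(G_-,\R)$, and surjectivity of $\psi|_{K'_{L'}}$ upgrades this identity to all of $K$. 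The main obstacle is precisely this loop case: the factorization from the no-loop argument breaks down because two edge ends at $v$ are attached to the same edge and the diagonal $G_\He\to G_\He^{\times 2}$ is not Poisson; the detour through $\Gamma'$ together with the invariance argument of Lemma \ref{lemma:locally_flat_poisson_submanifold} is what restores Poissonness in the presence of loops.
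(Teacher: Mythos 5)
Your proof is correct and takes essentially the same route as the paper: the equivariance is obtained exactly as there, by shifting the cilium so that $(v,f(v))$ is a site and applying $\pi_-$ to the conjugation formula \eqref{eq:holonomy_and_action_for_a_site}, and the Poisson property comes from decomposing $\Hol^v$ into the Poisson maps $\pi_\pm$, the inversion, and the multiplication of the Poisson--Lie subgroup. Your graph-transformation detour for loops is valid but heavier than necessary: the direct factorization already covers loops, since Lemma \ref{lemma:omegaHGbothsideszero} shows that the holonomies along the two ends of a single edge Poisson-commute, so the map $K \to G_-^{\times m}$ collecting all edge-end holonomies at $v$ is Poisson for the product structure even when two of its components come from the same loop.
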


\begin{proof}
  The projections $\pi_- :   G_\He  \to G_-$ and $\pi_+ :   G_\He  \to (G_+, -w_{G_+})$ are Poisson by Lemma \ref{lemma:projectionLocallyPoisson}.
  The inversion map $  G_\He  \to   G_\He $ is Poisson by Theorem \ref{theorem:heisenberg_double_inversion_poisson_actions} (ii).
  This implies that the maps $\Hol^v : K \to G_-, \, \Hol^f: K \to (G_+, -w_{G_+})$ are Poisson.
  It remains to show that they intertwine the respective group actions, that is:
  \begin{equation*}
    \Hol^v(\alpha \rhd_v \gamma) =  \alpha \rhd \Hol^v(\gamma) \qquad \Hol^f(x \rhd_f \gamma) =  x \rhd' \Hol^f(\gamma) \qquad \Forall x \in G_-, \alpha \in G_+, \gamma \in K \, .
  \end{equation*}
  We show this for the vertex holonomy $\Hol^v$ as the proof for $\Hol^f$ is analogous.
  Assume that $(v, f(v))$ is a site.
  Then this follows from Equation \eqref{eq:holonomy_and_action_for_a_site} and Lemma \ref{lemma:global_double_projections_properties} because $\Hol^v = \pi_- \0 \Hol^{(v,f(v))}$.
  Otherwise cyclically permute the ordering at $f(v)$ so that $(v, f(v))$ becomes a site, and use Equation \eqref{eq:holonomy_and_action_for_a_site} with respect to the new ordering.
\end{proof}

The vertex and face actions $\rhd_v$ and $\rhd_f$ can be viewed as the Poisson-Lie counterparts of the $H$- and $H^*$-module algebra structures for quantum Kitaev models from Theorem \ref{theorem:quantum_kitaev_module_algebra}, which are defined by vertex and face operators.
For each site the latter combine into a $D(H)$-module algebra structure.
We now show that a similar statement holds for the Poisson actions associated with a site.
This yields the following counterpart of Lemma \ref{lemma:quantum_kitaev_commutation_relations}:

\begin{proposition}[Commutation relations for vertex and face actions]~
  \label{proposition:kitaev_actions_properties}
  \begin{compactenum}
  \item
    For two vertices $v \neq v' \in V$ (two faces $f \neq f' \in F$) the actions $\rhd_v, \rhd_{v'}$ ($\rhd_f, \rhd_{f'}$) commute.
  \item
    For $v \in V, f \in F$ with $v(f) \neq v$ and $f(v) \neq f$ the actions $\rhd_v, \rhd_f$ commute.
  \item
    For each site $(v,f)$ we obtain a Poisson action
    \begin{equation}
      \label{eq:site_action}
      \rhd_{(v,f)} : G \x K \to K \qquad (g, \gamma) \mapsto \pi_-(g) \rhd_f (\pi_+(g) \rhd_v \gamma) \, .
    \end{equation}
  \end{compactenum}
\end{proposition}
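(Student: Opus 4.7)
For parts (i) and (ii), both commutation results can be verified by direct computation on each copy of $G_\He$ at an edge, using the explicit formulas \eqref{eq:vertex_action_explicit} and \eqref{eq:face_action_explicit}. On edges not affected by one of the two actions, commutativity is trivial. On a shared edge, one action contributes a left multiplication by a factor in $G_+$ (or $G_-$) and the other a right multiplication; these multiplications commute once the multiplicative factors themselves are shown to transform compatibly. The key input is Lemma \ref{lemma:kitaev_actions_and_holonomies}, which guarantees that the full vertex and face holonomies appearing in the definitions of $\rhd_v$ and $\rhd_f$ are invariant under the ``other'' action. Combined with the factorization rules of Lemma \ref{lemma:global_double_projections_properties}, this controls how the partial holonomies $c_k(\gamma), d_k(\gamma)$ change and yields the required commutation. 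To keep the case analysis manageable one may first reduce, via the graph transformations of Section \ref{subsection:graph_trafos_kitaev} (Lemma \ref{lemma:kitaev_gluing_trafos} and Corollary \ref{corollary:kitaev_transform_into_paired_graph}), to a graph without loops or multi-edges.

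For part (iii), I verify the group-action axioms and the Poisson property separately. The unit axiom is immediate from $\pi_\pm(1) = 1$. For associativity, factor $g = g_-g_+$, $h = h_-h_+$ with $g_\pm = \pi_\pm(g)$ and $h_\pm = \pi_\pm(h)$. Lemma \ref{lemma:global_double_projections_properties} (i) yields $\pi_-(gh) = g_-\pi_-(g_+h_-)$ and $\pi_+(gh) = \pi_+(g_+h_-)h_+$. Substituting these into both sides of $\rhd_{(v,f)}(gh, \gamma) = \rhd_{(v,f)}(g, \rhd_{(v,f)}(h, \gamma))$ and using that $\rhd_v, \rhd_f$ are group actions, one may cancel the outer $G_-$-action by $g_-$ and reduce to the matched-pair identity
\[
  g_+ \rhd_v \bigl(h_- \rhd_f \gamma\bigr) \;=\; \pi_-(g_+ h_-) \rhd_f \bigl(\pi_+(g_+ h_-) \rhd_v \gamma\bigr)
  \qquad \Forall g_+ \in G_+,\, h_- \in G_-,\, \gamma \in K.
\]
This identity is the matched-pair compatibility of the actions $\rhd_v$ and $\rhd_f$ at the shared edge of the site $(v,f)$; it is verified edge by edge, the crucial input being the factorization $g_+ h_- = \pi_-(g_+h_-)\pi_+(g_+h_-)$ at the shared edge and Lemma \ref{lemma:global_double_projections_properties} at the remaining edges.

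For the Poisson property, Proposition \ref{proposition:vertex_face_actions_are_poisson} says that $\rhd_v$ and $\rhd_f$ are Poisson actions of $G_+$ and $G_-$, and Lemma \ref{lemma:projectionLocallyPoisson} that the individual projections $\pi_\pm: (G, w_G) \to G_\pm$ are Poisson. However, the decomposition map $g \mapsto (\pi_+(g), \pi_-(g))$ from $G$ to the product Poisson manifold $G_+ \x G_-$ is \emph{not} Poisson; the cross-bracket $\{\phi_+ \circ \pi_+, \phi_-\circ \pi_-\}_G$ is governed by the $r$-matrix $r = \id \in \g_- \ox \g_+$ of $G$. The main obstacle is to show that this obstruction is exactly canceled by the non-commutativity of $\rhd_v$ and $\rhd_f$ at the shared edge, so that the composite $\rhd_{(v,f)}$ is Poisson. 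I would handle this by expressing both sides of the Poisson-map identity in terms of the Heisenberg-double bivector $w_\He$ on the shared edge and invoking Lemma \ref{lemma:global_double_projections_r_matrix} to rewrite the mixed $r$-matrix contributions; the matched-pair identity above is then the group-theoretic shadow of this Poisson-theoretic cancellation.
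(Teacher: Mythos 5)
Your handling of the group-action axioms in (iii) matches the paper: the reduction to $g_i \in G_+ \cup G_-$ and then to the single mixed identity $\alpha \rhd_v (x \rhd_f \gamma) = \pi_-(\alpha x) \rhd_f (\pi_+(\alpha x) \rhd_v \gamma)$, checked edge by edge after a graph transformation that removes loops at $v$ and edges traversed twice by $p(f)$, is exactly Equation \eqref{eq:kitaev_actions_properties_site_action_proof1} and the surrounding argument. Your sketch of (i) and (ii) is likewise consistent with the paper, which only says these are ``shown similarly'' by direct computation with Lemma \ref{lemma:global_double_projections_properties}.

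The Poisson-property part of (iii), however, rests on a false premise. You assert that $g \mapsto (\pi_+(g), \pi_-(g))$ from $(G,w)$ to the product Poisson manifold $G_+ \x G_-$ is \emph{not} Poisson and that an $r$-matrix obstruction in the cross-bracket must be cancelled against the non-commutativity of $\rhd_v$ and $\rhd_f$. In fact this map \emph{is} Poisson, and this is precisely the first (and essentially only) step of the paper's proof. Both projections are Poisson for the Sklyanin bivector $w$ by Lemma \ref{lemma:projectionLocallyPoisson}, and the cross term $(T\pi_+ \ox T\pi_-)\, w(g)$ vanishes identically: since $r \in \g_- \ox \g_+$ and $g = g_-g_+$, one has $T(\pi_- \0 L_g)|_{\g_+} = 0$ and $T(\pi_+ \0 R_g)|_{\g_-} = 0$ (the identities used in the proof of Lemma \ref{lemma:global_double_projections_r_matrix}), which kill both the $TL_g^{\ox 2}\,r$ and the $TR_g^{\ox 2}\,r$ contributions. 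Once this is in place, $\rhd_{(v,f)}$ is a composite of Poisson maps $G \x K \to G_- \x G_+ \x K \to G_- \x K \to K$ (using Proposition \ref{proposition:vertex_face_actions_are_poisson} for the last two arrows) and is therefore Poisson; no cancellation is needed, and the non-commutativity of $\rhd_v$ and $\rhd_f$ at a site plays no role because the composite is evaluated in a fixed order. As written, your strategy sends you looking for a compensation mechanism that does not exist; carrying out the computation you propose would simply reveal that the mixed contribution is zero.
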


\begin{proof}
  We only prove (iii) as the other statements are shown similarly.
  First note that the map $(\pi_+ \x \pi_-) \0 \Delta : G \to G_+ \x G_-$ is Poisson, where $\Delta: G \to G \x G$ is the diagonal map.
  This is shown by a computation using Lemma \ref{lemma:global_double_projections_properties} and the fact that the maps $\pi_\pm : G \to G_\pm$ are Poisson (Lemma \ref{lemma:projectionLocallyPoisson}).
  This implies that the map $\rhd_{(v,f)} : G \x K \to K$ is Poisson.

  We show that $\rhd_{(v,f)}$ is a group action, that is
  \begin{equation}
    \label{eq:kitaev_actions_properties_site_action_proof0}
    g_1 \rhd_{(v,f)} (g_2 \rhd_{(v,f)} \gamma) = (g_1 g_2) \rhd_{(v,f)} \gamma \qquad \Forall g_1, g_2 \in G, \gamma \in K \, .
  \end{equation}
  Because $g_i = \pi_-(g_i) \pi_+(g_i), i =1,2$, it suffices to show this for $g_i \in G_+ \cup G_-$.
  In the cases $g_1, g_2 \in G_-$ and $g_1, g_2 \in G_+$ Equation \eqref{eq:kitaev_actions_properties_site_action_proof0} follows from the fact that $\rhd_v, \rhd_f$ are group actions.
  For $g_1 \in G_-, g_2 \in G_+$ it follows from the definition of $\rhd_{(v,f)}$ in \eqref{eq:site_action}.
  It thus remains to show
  \begin{equation}
    \label{eq:kitaev_actions_properties_site_action_proof1}
    \alpha \rhd_v  (x \rhd_f \gamma) = \pi_-(\alpha x) \rhd_f (\pi_+(\alpha x) \rhd_v \gamma)
    \qquad \Forall x \in G_-, \alpha \in G_+, \gamma \in K \, .
  \end{equation}
  If there are no loops at $v$ and at most one edge side of any edge belongs to $f$, then Equation \eqref{eq:kitaev_actions_properties_site_action_proof1} follows from a direct computation using the properties of $\pi_\pm$ from Lemma \ref{lemma:global_double_projections_properties}.
  In the general case, transform the graph $\Gamma$ by adding a bivalent vertex on each loop of $v$ and doubling every edge whose left and right face is $f$.
  Denote by $(K', \Gamma')$ the Poisson-Kitaev model for the transformed graph $\Gamma'$ and by $L' \subseteq V' \dot \cup F'$ the set of vertices and edges created by this procedure.
  From Lemma \ref{lemma:kitaev_gluing_trafos} we obtain the map $\psi: K' \to K$ that corresponds to gluing back together the edges that we split into two.
  By the surjectivity of $\psi$ (Lemma \ref{lemma:kitaev_gluing_trafos} (v)), for $\gamma \in K$ there is a $\gamma' \in K'_{L'}$ with $\gamma = \psi(\gamma')$.
  Equation \eqref{eq:kitaev_actions_properties_site_action_proof1} holds for elements of $K'$ as $\Gamma'$ satisfies our previous assumptions.
  We compute:
  \begin{align*}
    & \alpha \rhd_v  (x \rhd_f \gamma) = \alpha \rhd_v ( x \rhd_f \psi(\gamma')) \stackrel{\eqref{eq:kitaev_gluing_trafos_compatible_with_actions}}= \psi (\alpha \rhd_v (x \rhd_f \gamma')) \stackrel{\eqref{eq:kitaev_actions_properties_site_action_proof1}}= \psi(\pi_-(\alpha x) \rhd_f (\pi_+(\alpha x) \rhd_v \gamma')) \\
    \stackrel{\eqref{eq:kitaev_gluing_trafos_compatible_with_actions}}=&  \pi_-(\alpha x) \rhd_f (\pi_+(\alpha x) \rhd_v \psi(\gamma')) = \pi_-(\alpha x) \rhd_f (\pi_+(\alpha x) \rhd_v \gamma) \, .
  \end{align*}
  While applying Equation \eqref{eq:kitaev_gluing_trafos_compatible_with_actions}, we used $G_+ \rhd_v K'_{L'} = G_- \rhd_f K'_{L'} = K'_{L'}$, which follows from Lemma \ref{lemma:kitaev_flat_subspace_stable} because $v(f) \notin L'$ and $f(v) \notin L'$.
\end{proof}

We show that the set $\mathcal A(\Gamma, L)$ of invariant functions on $K_L$ from Definition \ref{def:functions_on_flat_elements} (ii) inherits a Poisson bracket from $C^\infty(K, \R)$.
For this we use Lemma \ref{lemma:locally_flat_poisson_submanifold}, which requires that the graph $\Gamma$ is connected and that there is a vertex and adjacent face not contained in $L$.
We require the vertices and faces without a flatness condition to be paired into sites.

\begin{proposition}[Poisson algebra structure on $\mathcal A(\Gamma, L)$]
  \label{proposition:poisson_subalgebra_reduce_to_flat_subspace}
  Let $\Gamma$ be a connected doubly-ciliated ribbon graph with $n \geq 1$ sites $(v_1, f_1), \dots, (v_n, f_n)$ and 
  $L = (V \setminus \left\{ v_1, \dots, v_n \right\}) \dot \cup (F \setminus \left\{ f_1, \dots, f_n \right\})$.
  \begin{compactenum}
  \item
    The set $C^\infty(K, \R)^{inv}_L$ is a Poisson subalgebra of $C^\infty(K, \R)$.
  \item
    The Poisson bracket on $C^\infty(K, \R)^{inv}_L$ induces a Poisson bracket on the quotient $\mathcal A(\Gamma, L)$.
  \end{compactenum}
\end{proposition}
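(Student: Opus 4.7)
The plan is to deduce (ii) directly from Lemma \ref{lemma:locally_flat_poisson_submanifold} (ii) once (i) is established, and to prove (i) by combining the Poisson property of the vertex and face actions with the same lemma.

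For (i), I would fix $g, h \in C^\infty(K,\R)^{inv}_L$, a vertex $v \in V$, an element $\alpha \in G_+$, and $\gamma \in K_L$; the case of face actions is analogous. By Lemma \ref{lemma:kitaev_flat_subspace_stable}, $K_L$ is stable under $\rhd_v$: either $v \in L$ or $v = v_i$, in which case $f(v) = f_i \notin L$. Since $\rhd_v : G_+ \x K \to K$ is Poisson by Proposition \ref{proposition:vertex_face_actions_are_poisson}, decomposing according to the product Poisson structure on $G_+ \x K$ yields
\begin{equation*}
\{g,h\}(\alpha \rhd_v \gamma) = \{\beta \mapsto g(\beta \rhd_v \gamma),\, \beta \mapsto h(\beta \rhd_v \gamma)\}_{G_+}(\alpha) + \{g^\alpha, h^\alpha\}_K(\gamma),
\end{equation*}
where $g^\alpha(\gamma') := g(\alpha \rhd_v \gamma')$ and likewise for $h^\alpha$. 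Stability of $K_L$ together with the invariance of $g, h$ under $\rhd_v$ on $K_L$ makes $\beta \mapsto g(\beta \rhd_v \gamma)$ constant, so the first bracket vanishes. Applying Lemma \ref{lemma:locally_flat_poisson_submanifold} (ii) twice then reduces the claim to checking $g^\alpha, h^\alpha \in \bigcap_{l \in L} C^\infty(K,\R)^l_L$ together with $g^\alpha|_{K_L} = g|_{K_L}$; the latter is immediate from stability and invariance.

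To verify $g^\alpha \in C^\infty(K,\R)^{v'}_L$ for $v' \in V \cap L$, I would use Proposition \ref{proposition:kitaev_actions_properties} (i) to commute $\rhd_{v'}$ past $\rhd_v$ when $v' \neq v$, and the group-action identity $(\alpha\beta) \rhd_v \gamma' = \alpha \rhd_v (\beta \rhd_v \gamma')$ when $v' = v$. For the invariance of $g^\alpha$ under $\rhd_{f'}$ with $f' \in F \cap L$, I would distinguish two configurations: if $v(f') \neq v$ and $f(v) \neq f'$, Proposition \ref{proposition:kitaev_actions_properties} (ii) allows commuting the two actions and invariance of $g$ on $K_L$ concludes; if $(v, f')$ is a site, Equation \eqref{eq:kitaev_actions_properties_site_action_proof1} from the proof of Proposition \ref{proposition:kitaev_actions_properties} (iii) rewrites $\alpha \rhd_v (x \rhd_{f'} \gamma') = \pi_-(\alpha x) \rhd_{f'}\bigl(\pi_+(\alpha x) \rhd_v \gamma'\bigr)$, under each of whose factors $g$ is invariant on $K_L$. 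For (ii), if $g \sim g'$ in $C^\infty(K,\R)^{inv}_L \subseteq \bigcap_{l \in L} C^\infty(K,\R)^l_L$ and $h \in C^\infty(K,\R)^{inv}_L$, then Lemma \ref{lemma:locally_flat_poisson_submanifold} (ii) gives $\{g,h\}|_{K_L} = \{g',h\}|_{K_L}$, and symmetrically in the second argument; combined with (i) this shows the induced bracket on $\mathcal A(\Gamma, L)$ is well-defined, with the Jacobi identity and Leibniz rule inherited from $C^\infty(K,\R)$.

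The main obstacle I anticipate is the asymmetric configuration in which exactly one of $v(f') = v$ or $f(v) = f'$ holds, since neither Proposition \ref{proposition:kitaev_actions_properties} (ii) nor the site formula applies directly. I would resolve this by cyclically permuting the cilium of $f'$ (allowed by Lemma \ref{lemma:kitaev_trafo_moving_cilium} (iii) since $f' \in L$) to make $v(f') \neq v$, and similarly for $v$ if $v \in L$; in the degenerate case where $f'$ is bounded only by loops at $v$, a preliminary gluing transformation from Lemma \ref{lemma:kitaev_gluing_trafos}, inserting a bivalent vertex on one such loop, supplies an alternative vertex to which $v(f')$ can be moved before applying the above argument.
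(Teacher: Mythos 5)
Your proof is correct and its overall skeleton coincides with the paper's: both use the Poisson property of $\rhd_v$ (Proposition \ref{proposition:vertex_face_actions_are_poisson}) to split the bracket over the product $G_+ \x K$, kill the $G_+$-contribution because $\beta \mapsto g(\beta \rhd_v \gamma)$ is constant on $K_L$, and then invoke Lemma \ref{lemma:locally_flat_poisson_submanifold} (ii) to replace $g^\alpha, h^\alpha$ by $g, h$; part (ii) is handled identically. Where you genuinely diverge is in verifying the hypothesis $g^\alpha \in \bigcap_{l \in L} C^\infty(K,\R)^l_L$: you do this by commuting $\rhd_v$ past the other actions via Proposition \ref{proposition:kitaev_actions_properties}, which forces you into the case analysis over sites, non-sites, and the asymmetric configurations, and ultimately into cilium shifts and an auxiliary gluing transformation for the degenerate loop case. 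The paper avoids all of this with a shorter observation: since $g \in C^\infty(K,\R)^{inv}_L$ is invariant under $\rhd_v$ on $K_L$, the functions $g^\alpha$ and $g$ coincide on $K_L$; and since $K_L$ is stable under every vertex and face action (Lemma \ref{lemma:kitaev_flat_subspace_stable}, using that $v' \in L$ or $f(v') = f_k \notin L$), they also coincide at every point $\beta \rhd_{v'} \gamma$ or $x \rhd_{f'} \gamma$ with $\gamma \in K_L$, whence $g^\alpha(\beta \rhd_{v'}\gamma) = g(\beta\rhd_{v'}\gamma) = g(\gamma) = g^\alpha(\gamma)$ with no commutation relations at all. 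Your route buys nothing extra here and is the more fragile one — in particular, the asymmetric case where exactly one of $v(f') = v$ or $f(v) = f'$ holds only closes because either $v \in L$ (so its cilium may be moved by Lemma \ref{lemma:kitaev_trafo_moving_cilium}) or $v = v_i$ forces $f(v) = f_i \notin L$, a point worth stating explicitly if you keep this argument; I would recommend replacing the whole verification by the stability argument.
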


\begin{proof}
  Let $g_1, g_2 \in C^\infty(K, \R)^{inv}_L$ and $v \in V$.
  For $\alpha \in G_+, \gamma \in K_L$ we obtain from Proposition \ref{proposition:vertex_face_actions_are_poisson}
  \begin{align}
    \left\{ g_1, g_2 \right\}_K (\alpha \rhd_v \gamma) =& \left\{ g_1 \circ \rhd_v, g_2 \circ \rhd_v \right\}_{G_+ \x K} (\alpha, \gamma) \nonumber \\
    =& \left\{ g_1 \circ (\alpha \rhd_v -), g_2 \circ (\alpha \rhd_v -) \right\}_K (\gamma) \nonumber \\
    &+ \left\{ g_1 \circ (- \rhd_v \gamma), g_2 \circ (- \rhd_v \gamma) \right\}_{G_+} (\alpha) \nonumber \\
    =& \left\{ g_1 \circ (\alpha \rhd_v -), g_2 \circ (\alpha \rhd_v -) \right\}_K (\gamma) \, ,
    \label{eq:poisson_subalgebra_reduce_to_flat_subspace_proof0}
  \end{align}
  where we used that $g_i \0 (- \rhd_v \gamma)$ is constant in the third equation.
  For $i=1,2$ the identity $g_i \0 (\alpha \rhd_v -)|_{K_L} = g_i|_{K_L}$ holds as $g_i \in C^\infty(K, \R)^{inv}_L$.
  Lemma \ref{lemma:kitaev_flat_subspace_stable} implies that $G_+ \rhd_{v'} K_L = K_L$  for all $v' \in V$ because either $v' \in L$ or $v' = v_k$ for some $k \in \left\{ 1, \dots, n \right\}$, so that $f(v') = f_k \notin L$.
  Therefore, the maps $g_i \0 (\alpha \rhd_v -)$ and $g_i$ coincide on all elements of the form $\alpha \rhd_{v'} \gamma$ with $ \alpha \in G_+, \gamma \in K_L$.
  Likewise, they coincide on elements of the form $x \rhd_{f'} \gamma$ with $f' \in F, x \in G_-, \gamma \in K_L$.
  This implies $g_i \0 (\alpha \rhd_v -) \in C^\infty(K, \R)^{inv}_L$.
  We can thus apply Lemma \ref{lemma:locally_flat_poisson_submanifold} (ii) to \eqref{eq:poisson_subalgebra_reduce_to_flat_subspace_proof0} and obtain
  \begin{equation*}
    \left\{ g_1, g_2 \right\}_K (\alpha \rhd_v \gamma) = \left\{ g_1 , g_2 \right\}_K (\gamma) \, ,
  \end{equation*}
  so that $\left\{ g_1, g_2 \right\}_K \in C^\infty(K, \R)^{v}_L$.
  The proof for $\left\{ g_1, g_2 \right\}_K \in C^\infty(K, \R)^f_L$ for $f \in F$ is analogous.
  We conclude $\left\{ g_1, g_2 \right\} \in C^\infty(K, \R)^{inv}_L$, so that $C^\infty(K, \R)^{inv}_L$ is a Poisson subalgebra.

  To prove Statement (ii), it remains to show that the induced Poisson bracket on $\mathcal A(\Gamma, L)$ is well-defined.
  This follows from Lemma \ref{lemma:locally_flat_poisson_submanifold} (ii).
\end{proof}

\subsection{Vertex and face operators}
\label{subsection:vertex_face_operators}
We show an analogue of the commutation relations for vertex and face operators from Lemma \ref{lemma:quantum_kitaev_commutation_relations} for their Poisson counterparts from Definition \ref{def:vertex_face_operators}.
Then we prove that the derivation $\left\{ g, - \right\}$ for a vertex or face operator $g$ is related to the vector fields that generate the respective vertex or face \emph{action}.

First we need two technical lemmas.
Consider two different edges $e_1, e_2$ incident at a vertex $v$ such that $t(e_1) = v = s(e_2)$ and $i_s(e_2)$ comes directly after $i_t(e_1)$ in the ordering of the edge ends at $v$.
(If $e_1, e_2$ are no loops, this simply means that $e_1$ is incoming, $e_2$ outgoing and that $e_2$ is the next edge after $e_1$.)
Consider the paths  $p_1 := r(e_2) \0 r(e_1) ,\; p_2 := f(e_1) \0 b(e_2)^{-1}$. 
These are segments of a face path and the vertex path $p(v)$ (see Definition \ref{def:vertex_face_paths_and_holonomies}) that meet at $v$.

\begin{lemma}
  \label{lemma:kitaev_reidemeister_ii}
  For these paths and all $f_1, f_2 \in C^\infty(G)$ the following equation holds:
  \begin{equation}
    \label{eq:kitaev_reidemeister_ii}
    \left\{ f_1 \0 \Hol(p_1), f_2 \0 \Hol(p_2) \right\} = 0
  \end{equation}
\end{lemma}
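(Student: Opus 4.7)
The plan is to reduce this identity to a local computation on the two Heisenberg doubles at $e_1$ and $e_2$ by exploiting the product Poisson structure on $K = G_\He^{\x E}$, then to evaluate each local contribution via the $r$-matrix identities of Lemma~\ref{lemma:global_double_projections_r_matrix} and observe that the two contributions cancel.

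First, I would expand the holonomies using \eqref{eq:kitaev_holonomy_functor} and $\Hol(b(e_2)^{-1}) = (\eta \0 \pi_- \0 \eta \0 \pi_{e_2})^{-1} = \pi_- \0 \eta \0 \pi_{e_2}$ to obtain
\[
  \Hol(p_1) = (\pi_+ \0 \pi_{e_2}) \cdot (\pi_+ \0 \pi_{e_1}), \qquad \Hol(p_2) = (\pi_- \0 \pi_{e_1}) \cdot (\pi_- \0 \eta \0 \pi_{e_2}).
\]
Since the bivector on $K$ is $w_K(\gamma) = \sum_{e \in E} T\iota_e(\gamma)^{\ox 2}\, w_\He(\pi_e(\gamma))$ for the inclusions $\iota_e$ from \eqref{eq:inclusion_map}, and both holonomies depend only on $\pi_{e_1}$ and $\pi_{e_2}$, the left-hand side of \eqref{eq:kitaev_reidemeister_ii} splits as
\[
  \{f_1 \0 \Hol(p_1), f_2 \0 \Hol(p_2)\}(\gamma) = A_{e_1}(\gamma) + A_{e_2}(\gamma),
\]
where each $A_{e_i}(\gamma)$ pairs $df_1(\Hol(p_1)(\gamma)) \ox df_2(\Hol(p_2)(\gamma))$ with the image of $w_\He(\pi_{e_i}(\gamma))$ under the tangent maps of the $e_i$-factors of the two holonomies.

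Next, I would evaluate each $A_{e_i}$ using Lemma~\ref{lemma:global_double_projections_r_matrix}. For $A_{e_1}$ the relevant tangent maps are $T(L_X \0 \pi_+)$ for $X := \pi_+(\pi_{e_2}(\gamma))$ (coming from $\Hol(p_1)$) and $T(R_Y \0 \pi_-)$ for $Y := \pi_-(\pi_{e_2}(\gamma)^{-1})$ (coming from $\Hol(p_2)$). Applying \eqref{eq:global_double_projections_r_matrix_i_1} and \eqref{eq:global_double_projections_r_matrix_i_2}, together with $r \in \g_- \ox \g_+$ which kills mixed terms, rewrites $A_{e_1}$ as a single contraction against $r$. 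The analysis of $A_{e_2}$ is analogous, with $X' := \pi_+(\pi_{e_1}(\gamma))$ the right factor of $\Hol(p_1)$ and $Y' := \pi_-(\pi_{e_1}(\gamma))$ the left factor of $\Hol(p_2)$; the extra inversion in $\pi_- \0 \eta \0 \pi_{e_2}$ is handled by Theorem~\ref{theorem:heisenberg_double_inversion_poisson_actions} (which guarantees that $\eta : G_\He \to G_\He$ is Poisson) together with the chain rule $T_g \eta = -TR_{g^{-1}} \0 TL_{g^{-1}}$.

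The final step is to check $A_{e_1}(\gamma) + A_{e_2}(\gamma) = 0$. The sign introduced by $T\eta$ at $e_2$, combined with the swapped left/right roles of $\pi_+$ and $\pi_-$ on the two edges (reflecting that $e_1$ is incoming and $e_2$ outgoing at $v$), makes the two $r$-matrix contractions negatives of one another. The hard part will be the tangent-space bookkeeping: the identities in Lemma~\ref{lemma:global_double_projections_r_matrix} couple left and right translations, and tracking these cleanly alongside the inversion $\eta$ is error-prone. Organizing the computation purely in terms of $r \in \g_- \ox \g_+$ and the adjoint action (via Lemma~\ref{lemma:global_double_projections_r_matrix}~(ii)) should expose the cancellation transparently.
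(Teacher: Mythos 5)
Your proposal follows essentially the same route as the paper's proof: restrict the product bivector $w_K$ to the contributions of $e_1$ and $e_2$, identify the tangent maps $T(L_{\pi_+(d_2)}\0\pi_+)\ox T(R_{\pi_-(d_2^{-1})}\0\pi_-)$ and $T(R_{\pi_+(d_1)}\0\pi_+)\ox T(L_{\pi_-(d_1)}\0\pi_-\0\eta)$, rewrite each term via Lemma \ref{lemma:global_double_projections_r_matrix}~(i) as a contraction against the $r$-matrix, and cancel the two using the sign from $T\eta$ together with the adjoint-action identity of Lemma \ref{lemma:global_double_projections_r_matrix}~(ii). The remaining work is exactly the bookkeeping you flag, and it goes through as you describe.
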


\begin{proof}
  For $\gamma \in K$ set $d_i := \pi_{e_i}(\gamma), i = 1,2$.
  The holonomies along $p_1$ and $p_2$ are given by (see Equation \eqref{eq:kitaev_holonomy_functor}):
  \[
    \Hol(p_1)(\gamma) = \pi_+(d_2) \pi_+(d_1) \, , \qquad \Hol(p_2)(\gamma) = \pi_-(d_1) \pi_-(d_2^{-1}) \, .
  \]
  Using the inclusion map $\iota_e(\gamma) : G \to K$  from Equation \eqref{eq:inclusion_map}, the Poisson bivector of the product Poisson manifold $K= G_\He ^{\x E}$ can be written as $w_K (\gamma) = \sum_{e \in E} T\iota_e(\gamma)^{\ox 2}  \, w_\He  (\pi_e (\gamma))$ with the bivector $w_\He$ from \eqref{eq:bivector_heisenberg_double}.
  In the Poisson bracket \eqref{eq:kitaev_reidemeister_ii} only the terms associated with $e_1$ and $e_2$ contribute.
  We compute these terms.
  Let $\eta: G \to G$ be the inversion map.
  We obtain
  \begin{align*}
    & (T\Hol(p_1) \ox T\Hol(p_2)) \, w_K (\gamma) = (T(L_{\pi_+(d_2)} \0 \pi_+) \ox T(R_{\pi_-(d_2^{-1})} \0 \pi_-)) \, w_\He  (d_1) \\
    & + (T(R_{\pi_+(d_1)} \0 \pi_+) \ox T(L_{\pi_-(d_1)} \0 \pi_- \0 \eta)) \, w_\He  (d_2) \, ,
  \end{align*}
  where the first and second term is obtained from the contribution to $w_K$ associated with $e_1$ and $e_2$, respectively.
  Apply Equation \eqref{eq:global_double_projections_r_matrix_i_1} from Lemma \ref{lemma:global_double_projections_r_matrix} (i) to obtain:

  \begin{align*}
    & (T\Hol(p_1) \ox T\Hol(p_2)) \, w_K (\gamma) = (T(L_{\pi_+(d_2)} \0 \pi_+) \ox T(R_{\pi_-(d_2^{-1})} \0 \pi_-)) \, (TL_{d_1}^{\ox 2} \, r_{21}) \\
    & + (T(R_{\pi_+(d_1)} \0 \pi_+) \ox T(L_{\pi_-(d_1)} \0 \pi_- \0 \eta)) \, (TL_{d_2}^{\ox 2} \, r_{21}) \, .
    \end{align*}
    The computation rules for $\pi_\pm$ from Lemma \ref{lemma:global_double_projections_properties} together with the fact $r \in \g_- \ox \g_+$ imply:
    \begin{align*}
    & (T\Hol(p_1) \ox T\Hol(p_2)) \, w_K (\gamma) = \big[ T(L_{\pi_+(d_2)} \0 L_{\pi_+(d_1)}) \ox T(R_{\pi_-(d_2^{-1})} \0 L_{\pi_-(d_1)} \0 \pi_- \0 L_{\pi_+(d_1)}) \\
    & - T(R_{\pi_+(d_1)} \0 L_{\pi_+(d_2)}) \ox T(L_{\pi_-(d_1)} \0 R_{\pi_-(d_2^{-1})}) \big] \, r_{21} \\
    \stackrel{\eqref{eq:global_double_projections_r_matrix_ii_2}}=& \Big[ T(L_{\pi_+(d_2)} \0 L_{\pi_+(d_1)} \0 \Ad_{\pi_+(d_1)^{-1}}) \ox T(R_{\pi_-(d_2^{-1})} \0 L_{\pi_-(d_1)}) \\
    & - T(R_{\pi_+(d_1)} \0 L_{\pi_+(d_2)}) \ox T(L_{\pi_-(d_1)} \0 R_{\pi_-(d_2^{-1})}) \Big] \, r_{21} = 0 \, .
  \end{align*}
\end{proof}

\begin{lemma}
  \label{lemma:omegaHGbothsideszero}
  For any edge $e \in E$ and $f_1, f_2 \in C^\infty(G, \R)$ the holonomies along opposite edge ends or edge sides commute:
  \[
    \left\{ f_1 \0 \Hol(f(e)), f_2 \0 \Hol(b(e)) \right\} = \left\{ f_1 \0 \Hol(r(e)), f_2 \0 \Hol(l(e)) \right\} = 0 \, .
  \]
\end{lemma}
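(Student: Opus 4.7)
Both Poisson brackets in the statement involve only functions depending on the copy of $  G_\He $ at the edge $e$. Since $w_K(\gamma) = \sum_{e' \in E} T\iota_{e'}(\gamma)^{\otimes 2}\,w_\He(\pi_{e'}(\gamma))$ is a product Poisson bivector and $\pi_e \circ \iota_e(\gamma) = \id_G$, only the summand at $e' = e$ contributes. Writing $d = \pi_e(\gamma)$, the claim reduces to verifying, on $G_\He$ and for every $d \in G$, the two identities
\[
(T\pi_- \otimes T(\eta \0 \pi_- \0 \eta))\,w_\He(d) = 0 = (T\pi_+ \otimes T(\eta \0 \pi_+ \0 \eta))\,w_\He(d).
\]

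I would prove each identity by a direct computation closely modelled on the proof of Lemma~\ref{lemma:kitaev_reidemeister_ii}. Expand $w_\He(d) = -\tfrac{1}{2}(TL_d^{\otimes 2} + TR_d^{\otimes 2})(r - r_{21})$ and use $r \in \g_- \otimes \g_+$, $r_{21} \in \g_+ \otimes \g_-$. Each of the four resulting summands is reorganized by moving the projections $\pi_\pm$ through the left and right translations via the computation rules of Lemma~\ref{lemma:global_double_projections_properties}. The new ingredient relative to Lemma~\ref{lemma:kitaev_reidemeister_ii} is the tangent map of $\eta \0 \pi_\pm \0 \eta$, which by the chain rule and $T_d \eta = -TL_{d^{-1}} \0 TR_{d^{-1}}$ can be rewritten at $d$ in terms of $T\pi_\pm$ at $d^{-1}$, translated and $\Ad$-twisted to account for the inversion. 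Geometrically, $\eta \0 \pi_\pm \0 \eta$ extracts the factor of $d$ in the opposite-order factorization $G = G_+ G_-$ rather than $G = G_- G_+$.

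After this rewriting, Equations~\eqref{eq:global_double_projections_r_matrix_i_1}, \eqref{eq:global_double_projections_r_matrix_i_2} of Lemma~\ref{lemma:global_double_projections_r_matrix}~(i) convert the $w_\He(d)$-contribution into pure $r$-contributions, and Equations~\eqref{eq:global_double_projections_r_matrix_ii_1}, \eqref{eq:global_double_projections_r_matrix_ii_2} of Lemma~\ref{lemma:global_double_projections_r_matrix}~(ii) align the $TL_d^{\otimes 2}$ and $TR_d^{\otimes 2}$ terms up to an $\Ad$-conjugation. This $\Ad$-shift is then absorbed by the $\Ad$-invariance of the symmetric component $r_s = \tfrac12(r + r_{21})$, producing exactly the cancellation pattern of the final display in the proof of Lemma~\ref{lemma:kitaev_reidemeister_ii} and leaving zero. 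The calculation for $\pi_-$ is symmetric to that for $\pi_+$ under swapping the roles of $r$ and $r_{21}$.

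The main obstacle will be the bookkeeping required for $T(\eta \0 \pi_\pm \0 \eta)$: because inversion interchanges the two factorizations $G = G_- G_+$ and $G = G_+ G_-$, one must carefully track the dressing actions relating $\pi_\pm(d)$ and $\pi_\pm(d^{-1})^{-1}$ before the four terms can be matched. Once the tangent maps are expressed in terms of $\pi_\pm$ at $d^{-1}$ and the translations $TL_{d^{-1}}$, $TR_{d^{-1}}$ are brought into a common form, the vanishing is forced purely by $r \in \g_- \otimes \g_+$ together with the $\Ad$-invariance of $r_s$, in direct analogy with Lemma~\ref{lemma:kitaev_reidemeister_ii}.
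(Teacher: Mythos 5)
Your proposal is correct and follows essentially the same route as the paper: the paper's proof likewise reduces the statement to the vanishing of $T\pi_\pm \otimes T(\pi_\pm \circ \eta)\,w_\He$ (equivalent to your version, since the outer inversion on $G_\pm$ is a diffeomorphism) and establishes it by the same direct computation from Lemma \ref{lemma:global_double_projections_properties} and $r \in \g_- \otimes \g_+$. Your sketch merely spells out in more detail the bookkeeping that the paper leaves implicit.
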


\begin{proof}
  This follows from $T\pi_- \ox T(\pi_- \0 \eta) \, w_\He  = T\pi_+ \ox T(\pi_+ \0 \eta) \, w_\He  = 0$, which can be shown by a direct computation using Lemma \ref{lemma:global_double_projections_properties} and the fact that $r \in \g_- \ox \g_+$.
\end{proof}

Now we are ready to prove an analogue of Lemma \ref{lemma:quantum_kitaev_commutation_relations} for Poisson-Kitaev models:

\begin{proposition}[Commutation relations for vertex and face operators] Let $g_1, g_2 \in C^\infty(G, \R)$.
  \label{proposition:kitaev_commutation_relations}
  \begin{enumerate}[nolistsep,noitemsep,label=(\roman*)]
    \item
      Vertex (face) operators for distinct vertices $v_1 \neq v_2$ (faces $f_1 \neq f_2$) Poisson commute:
      \[
        \left\{ g_1 \0 \Hol^{v_1}, g_2 \0 \Hol^{v_2} \right\} = 0 \qquad \left( \left\{ g_1 \0 \Hol^{f_1}, g_2 \0 \Hol^{f_2} \right\} = 0 \right) \, .
      \]
    \item
      For a vertex $v$ and face $f$ with $v(f) \neq v$ and $f(v) \neq f$ the associated operators commute:
      \[
        \left\{ g_1 \0 \Hol^{v}, g_2 \0 \Hol^{f} \right\} = 0 \, .
      \]
    \item
      Let $v \in V$ and $f \in F$ form a site $(v,f)$.
      The combined face and vertex holonomy from Definition \ref{def:vertex_face_paths_and_holonomies} (iii) induces an injective homomorphism of Poisson algebras
      \begin{equation}
        \label{eq:kitaev_site_operator_poisson_map}
        \Hol^{(v,f)*} :C^\infty( (G, w_{G^*}), \R) \to C^\infty(K, \R) \qquad g \mapsto g \0 \Hol^{(v,f)}
      \end{equation}
       with regard to the Poisson bivector $w_{G^*}$ on $G$ from Equation \eqref{eq:bivector_dual_poisson_lie_group}.
  \end{enumerate}
\end{proposition}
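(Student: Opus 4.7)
The common thread is that the Poisson bivector on $K = G_\He^{\x E}$ is a sum of single-edge contributions, $w_K(\gamma) = \sum_{e \in E} T\iota_e(\gamma)^{\ox 2} \, w_\He(\pi_e(\gamma))$, so any Poisson bracket of holonomy pullbacks receives a contribution from an edge $e$ only when $e$ appears in both relevant paths. Throughout, I may reduce to graphs without loops at $v$ and without edges of $f$ traversed twice in $p(f)$ by splitting them via the gluing transformation of Lemma \ref{lemma:kitaev_gluing_trafos}: its associated map $\psi$ is Poisson, intertwines vertex and face holonomies on $K'_{L'}$, and maps $K'_{L'}$ onto $K$; as in the proofs of Lemma \ref{lemma:kitaev_actions_and_holonomies} and Proposition \ref{proposition:kitaev_actions_properties} this allows proving the identities on the simpler graph and pushing them down to $\Gamma$.

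For (i), an edge $e$ shared by $v_1 \neq v_2$ cannot be a loop at either vertex, and after orienting so that $s(e) = v_1, t(e) = v_2$, the path $p(v_1)$ uses only the edge end $b(e)$ while $p(v_2)$ uses only $f(e)$; by Lemma \ref{lemma:omegaHGbothsideszero} these opposite edge-end holonomies Poisson-commute, so the bracket vanishes. The face case is dual, using that the two edge sides of a shared edge are opposite. For (ii), each edge $e$ shared by $v$ and $f$ contributes in two ways: an edge end of $e$ at one end together with an edge side of $e$ at the opposite end commutes by Lemma \ref{lemma:omegaHGbothsideszero}; and at each corner of $v$ where two consecutive incident edges $e_1, e_2$ both border $f$, the relevant segments of $p(v)$ and $p(f)$ are exactly the paths $p_2 = f(e_1) \0 b(e_2)^{-1}$ and $p_1 = r(e_2) \0 r(e_1)$ of Lemma \ref{lemma:kitaev_reidemeister_ii}, which commute. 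The hypothesis $v(f) \neq v, f(v) \neq f$ prevents any remaining ``site-corner'' pairing falling outside these two patterns.

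For (iii), the same accounting shows that $\Hol^{(v,f)} : K \to G$ is Poisson into $(G, w_{G^*})$: exactly one site-corner pairing is present, and its contribution, excluded in (ii), is precisely what produces the cross terms $-TL_g \ox TR_g\, r_{21} + TR_g \ox TL_g\, r$ in \eqref{eq:bivector_dual_poisson_lie_group}. Concretely, after reducing to a paired graph without loops via Corollary \ref{corollary:kitaev_transform_into_paired_graph} and arranging that $b(e_1)$ and $r(e_1)$ are the first edge end of $v$ and first edge side of $f$ for some edge $e_1$ realising the site, I compute $(T\Hol^{(v,f)})^{\ox 2} \, w_K(\gamma)$ edge by edge. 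Using Lemma \ref{lemma:global_double_projections_r_matrix}, contributions from edges appearing only in $p(v)$ (resp.\ only in $p(f)$) collect into $-TL_g^{\ox 2} r_a$ (resp.\ $-TR_g^{\ox 2} r_a$) with $g := \Hol^{(v,f)}(\gamma)$; the site-corner contribution from $e_1$ supplies the $r$ and $r_{21}$ cross terms; and interior-corner and opposite-side contributions cancel as in (ii). The sum equals $w_{G^*}(g)$. For injectivity of $\Hol^{(v,f)*}$ it suffices that $\Hol^{(v,f)}$ has dense image: varying $\pi_{e_1}(\gamma) \in G_\He$ while holding the other edge values fixed makes $\Hol^{(v,f)}$ a local submersion, because independent infinitesimal variations of the $G_-$- and $G_+$-components of $\pi_{e_1}(\gamma)$ perturb the $\pi_-$- and $\pi_+$-factors of $g$ separately and $\g_- \oplus \g_+ = \g$. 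The principal obstacle will be the book-keeping in the edge-by-edge computation of (iii): one must carefully track the signs produced by $\eta$ in \eqref{eq:kitaev_holonomy_functor}, distribute the symmetric and antisymmetric parts of the $r$-matrix, and use $\Ad$-invariance of $r_s$ to recombine the terms into exactly the form of $w_{G^*}$.
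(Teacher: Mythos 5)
Your parts (i) and (ii) follow the paper's route: (i) is exactly the argument via Lemma \ref{lemma:omegaHGbothsideszero} applied to opposite edge ends, and the core of (ii) — grouping contributions into corner pairs $p_1 = r(e_2)\0 r(e_1)$, $p_2 = f(e_1)\0 b(e_2)^{-1}$ and invoking Lemma \ref{lemma:kitaev_reidemeister_ii}, with the reduction to loop-free graphs via the gluing maps and Lemma \ref{lemma:locally_flat_poisson_submanifold} — is correct. One caveat: your first ``pattern'' in (ii), that ``an edge end of $e$ at one end together with an edge side of $e$ at the opposite end commutes by Lemma \ref{lemma:omegaHGbothsideszero}'', is a misreading of that lemma, which only concerns the pairs $(f(e),b(e))$ and $(r(e),l(e))$; an edge end and an edge side of the \emph{same} edge do not Poisson-commute in general (that is precisely the nontrivial part of $w_\He$). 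You are saved because every shared edge in fact belongs to a corner pair (the face path always turns at $v$ onto the next edge in the cyclic order), so the second pattern is exhaustive and the first is never needed — but as written your case analysis rests partly on a false commutation.

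The genuine gap is in (iii), which is the substantive part of the proposition. Your plan is a direct edge-by-edge evaluation of $(T\Hol^{(v,f)})^{\ox 2}\, w_K$, and the intermediate grouping you assert — that edges appearing only in $p(v)$ collect into $-TL_g^{\ox 2}\, r_a$ and edges only in $p(f)$ into $-TR_g^{\ox 2}\, r_a$ — is not correct. A quick sanity check: if $f$ has only the two edges shared with $v$, your grouping would produce no $-TR_g^{\ox 2}\, r_a$ term at all, yet $w_{G^*}$ still contains it. The term $-(TL_g^{\ox 2}+TR_g^{\ox 2})\, r_a = w_\He(g)$ arises from \emph{all} edges collectively, because the ordered product $g(\gamma) = \pi_-(d_{v_2})\cdots \pi_-(d_{v_{n-1}})\, d_{f_m}\, \pi_+(d_{f_{m-1}})\cdots\pi_+(d_{f_2})$ is a Poisson map $K \to G_\He$ (Lemma \ref{lemma:projectionLocallyPoisson} and Theorem \ref{theorem:heisenberg_double_inversion_poisson_actions}); this is the paper's key structural step. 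It then writes $\Hol^{(v,f)}(\gamma) = \pi_-(d_{v_1})\, g(\gamma)\, \pi_+(d_{v_1}^{-1})$ and reduces everything to showing that the single map $H: G_\He^2 \to (G, w_{G^*})$, $(d_1,d_2)\mapsto \pi_-(d_1)\, d_2\, \pi_+(d_1^{-1})$, is Poisson — the cross terms $TR_g\ox TL_g\, r - TL_g \ox TR_g\, r_{21}$ come solely from the double occurrence of $d_1$, handled via the diagonal map and Lemma \ref{lemma:global_double_projections_r_matrix}. Without either this factorization or a fully executed (and corrected) telescoping computation, your (iii) remains an unproven sketch. Your injectivity argument via submersivity/density is fine, though surjectivity of $\Hol^{(v,f)}$ is immediate from the global decomposition $G = G_-G_+$ and is what the paper uses.
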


\begin{proof}
  We prove the statement for vertex operators in (i).
  Note that only those copies of $G$ associated with edges that connect $v_1$ with $v_2$ contribute to the Poisson bracket $\left\{ g_1 \0 \Hol^{v_1}, g_2 \0 \Hol^{v_2} \right\}$.
  For any such edge $e$ the map $g_1 \0 \Hol^{v_1}$ depends on $\Hol(f(e))$ and $g_2 \0 \Hol^{v_2}$ on $\Hol(b(e))$ or vice versa.
  Lemma \ref{lemma:omegaHGbothsideszero} implies that the contribution to $\left\{ g_1 \0 \Hol^{v_1}, g_2 \0 \Hol^{v_2} \right\}$ associated with $e$ vanishes.
  The proof for face operators is analogous.

  To see (ii), first assume that there are no loops at $v$ and that every edge is traversed at most once by the face path $p(f)$.
  Then edges with non-vanishing contribution to the Poisson bracket can be grouped into pairs $e_1 \neq e_2 \in E$ of consecutive edges on the face path $p(f)$. 
  With appropriate choice of orientation we have $t(e_1) = s(e_2) = v$.
  Then one argument of the Poisson bracket is a function of $\Hol (r_{e_2} \0 r_{e_1})$ and the other of $\Hol (f_{e_1} \0 b_{e_2}^{-1})$.
  By Lemma \ref{lemma:kitaev_reidemeister_ii} any such contribution vanishes.

  In the general case, we transform the graph $\Gamma$ by adding a bivalent vertex on every loop at $v$ and doubling any edge that occurs twice in the face path $p(f)$.
  Denote by $(K', \Gamma')$ the Poisson-Kitaev model for the transformed graph and the set of new vertices and faces by $L' \subseteq V' \dot \cup F'$.
  From Lemma \ref{lemma:kitaev_gluing_trafos} we obtain a Poisson map $\psi: K' \to K$ that corresponds to gluing the split edges back together.
  The map $\psi$ is invariant under actions at $l \in L'$ and satisfies by Equation \eqref{eq:kitaev_gluing_trafos_compatible_with_holonomies}:
  \[
    \Hol^v \0 \psi \big|_{K'_{L'}} = \Hol'^v \big|_{K'_{L'}}  \qquad \Hol^f \0 \psi \big|_{K'_{L'}} = \Hol'^f \big|_{K'_{L'}} \, .
  \]
  By Lemma \ref{lemma:kitaev_flat_subspace_stable}, the subspace $K'_{L'}$ is stable under the actions $\rhd_{l'}$ for $l' \in L'$ which implies that $\Hol'^v, \Hol'^f \in \bigcap_{l' \in L'} C^\infty(K, \R)^{l'}_{L'}$.
  Use the Poisson property of $\psi$ and Lemma \ref{lemma:locally_flat_poisson_submanifold} (ii) to compute
  \begin{align*}
    & \left\{ g_1 \0 \Hol^v, g_2 \0 \Hol^f \right\}_K \0 \psi \big|_{K'_{L'}} = \left\{ g_1 \0 \Hol^v \0 \psi, g_2 \0 \Hol^f  \0 \psi \right\}_{K'} \big|_{K'_{L'}} \\
    =& \left\{ g_1 \0 \Hol'^v, g_2 \0 \Hol'^f \right\}_{K'} \big|_{K'_{L'}} = 0 \, ,
  \end{align*}
  where we used Statement (ii) for the transformed graph $\Gamma'$.
  The map $\psi |_{K'_{L'}}$ is surjective by Lemma \ref{lemma:kitaev_gluing_trafos} (v), so this concludes the proof of (ii).

  Now we prove (iii).
  The map $\Hol^{(v,f)}$ is surjective, which implies the injectivity of $\Hol^{(v,f)*}$.
  It remains to show that $\Hol^{(v,f)} : K \to (G, w_{G^*})$ is a Poisson map.
  Let $(v,f)$ be a site and consider first the simpler case where there are no loops at $v$ and no edges whose left and right sides belong both to $f$.
  Applying $(T\Hol^{(v,f)})^{\ox 2}$ to $w_K$ yields:
  \begin{align}
    & (T \Hol^{(v,f)})^{\ox 2} \, w_K (\gamma) = \Big( (TR_{\Hol^f(\gamma)} \0 T\Hol^v)^{\ox 2} + (TL_{\Hol^v(\gamma)} \0 T\Hol^f)^{\ox 2} \nonumber \\
    & + (TR_{\Hol^f(\gamma)} \0 T\Hol^v) \ox  (TL_{\Hol^v(\gamma)} \0 T\Hol^f) \label{eq:kitaev_commutation_relations_proof-2} \\
    & + (TL_{\Hol^v(\gamma)} \0 T\Hol^f) \ox (TR_{\Hol^f(\gamma)} \0 T\Hol^v) \Big) \, w_K(\gamma) \label{eq:kitaev_commutation_relations_proof-1} \, .
  \end{align}
  We can assume that the only two edges that occur both in the vertex path $p(v)$ and the face path $p(f)$ are the first and last edge in the ordering of $v$ (which coincide with the first and last edge of $f$).
  To see this, recall that all other such edges form pairs $(e_1, e_2)$ where $e_2$ comes directly after $e_1$ on both $p(f)$ and $p(v)$.
  By Lemma \ref{lemma:kitaev_reidemeister_ii}, the pair $(e_1, e_2)$ does not contribute to \eqref{eq:kitaev_commutation_relations_proof-2} and \eqref{eq:kitaev_commutation_relations_proof-1}.

  Denote by $f_1 < \dots < f_m$ the edges of $f$ and the edges incident at $v$ by $v_1 < \dots < v_n$.
  We can assume without loss of generality that all $v_1, \dots, v_n$ are incoming at $v$ and that $f_2, \dots, f_{m-1}$ are oriented clockwise.
  This implies that $f_1=v_1$ is oriented counterclockwise and $f_m = v_n$ clockwise.
  Denote the elements associated to the edges by $d_{v_i} := \pi_{v_i} (\gamma)$ and $d_{f_j} := \pi_{f_j}(\gamma)$.
  Then
  \begin{align*}
    & \Hol^{(v,f)}(\gamma) = \pi_-(d_{v_1}) \cdots \pi_-({d_{v_n}}) \pi_+(d_{f_m}) \cdots \pi_+(d_{f_2}) \pi_+(d_{f_1}^{-1}) \\
    =&\; \pi_-(d_{v_1}) \pi_-(d_{v_2}) \cdots \pi_-({d_{v_{n-1}}}) \, d_{f_m} \, \pi_+(d_{f_{m-1}}) \cdots \pi_+(d_{f_2}) \pi_+(d_{v_1}^{-1}) = \pi_-(d_{v_1}) \, g(\gamma) \, \pi_+(d_{v_1}^{-1}) \, ,
  \end{align*}
  where $g : K \to  G_\He $ is given by $g(\gamma) := \pi_-(d_{v_2}) \cdots \pi_-({d_{v_{n-1}}}) \, d_{f_m} \, \pi_+(d_{f_{m-1}}) \cdots \pi_+(d_{f_2})$.
  It is Poisson by Lemma \ref{lemma:projectionLocallyPoisson} and Theorem \ref{theorem:heisenberg_double_inversion_poisson_actions} (ii).
  Therefore, it suffices to show that the map
  \[
    H :  G_\He  \x  G_\He  \to (G, w_{G^*}) \qquad (d_1, d_2) \mapsto \pi_-(d_1) \, d_2 \, \pi_+(d_1^{-1})
  \]
  is Poisson.
  Consider the $n$-fold product $ G_\He ^n$ equipped with the product Poisson structure.
  We can write $H = h \0 \Delta_{13}$ with the map  $\Delta_{13}:  G_\He ^2 \to  G_\He ^3, (d_1, d_2) \mapsto (d_1, d_2, d_1)$ and the map  $h:  G_\He ^{3}  \to  G_\He , (d_1, d_2, d_3) \mapsto \pi_-(d_1) \, d_2 \, \pi_+(d_3^{-1})$.
  We obtain for the bivector $w_{G_\He^2}$ of $G_\He^2$
  \begin{equation}
    \label{eq:kitaev_commutation_relations_proof-0.5}
    \begin{split}
      & TH^{\ox 2} w_{ G_\He ^2 } (d_1, d_2) = Th^{\ox 2} \Big( (T\iota_1^{\ox 2} + T\iota_3^{\ox 2}) \, w_\He   (d_1) + T\iota_2^{\ox 2} \, w_\He (d_2)  \\
      & + (T\iota_1 \ox T\iota_3 + T\iota_3 \ox T\iota_1) \, w_\He (d_1) \Big) \, ,
    \end{split}
  \end{equation}
  where for $i =1,2,3$ the map $\iota_i : G \to G^3$ is the inclusion of $G$ into the $i$-th copy of $G^3$ at the element $(d_1, d_2, d_1)$.
  One has $ (T\iota_1^{\ox 2} + T\iota_3^{\ox 2}) \, w_\He  (d_1) + T\iota_2^{\ox 2} \, w_\He   (d_2) = w_{G_\He^{3}}  (d_1, d_2, d_1)$ and the map $h$ is Poisson by Lemma \ref{lemma:projectionLocallyPoisson} and Theorem \ref{theorem:heisenberg_double_inversion_poisson_actions}.
  Applying this to \eqref{eq:kitaev_commutation_relations_proof-0.5} yields
  \begin{equation}
    \label{eq:kitaev_commutation_relations_proof0}
    TH^{\ox 2} w_{G_\He^{2}}  (d_1, d_2) = w_\He  (H(d_1, d_2)) + Th^{\ox 2} \, (T\iota_1 \ox T\iota_3 + T\iota_3 \ox T\iota_1) \, w_\He   (d_1) \, .
  \end{equation}
  Denote the inversion map on $G$ by $\eta$.
  We have
  \begin{align*}
    & Th^{\ox 2} \0 (T\iota_1 \ox T\iota_3 + T\iota_3 \ox T\iota_1) \, w_\He   (d_1) = \big( T(R_{d_2 \pi_+(d_1^{-1})} \0 \pi_-) \ox T(L_{\pi_-(d_1) d_2} \0 \pi_+ \0 \eta) \\
    & + T(L_{\pi_-(d_1) d_2} \0 \pi_+ \0 \eta) \ox T(R_{d_2 \pi_+(d_1^{-1})} \0 \pi_-) \big) \, w_\He  (d_1) \, .
  \end{align*}
  Lemma \ref{lemma:global_double_projections_r_matrix} (i), the computation rules for $\pi_\pm$ from Lemma \ref{lemma:global_double_projections_properties}, and $r \in \g_- \ox \g_+$ imply:
  \begin{align*}
    & Th^{\ox 2} \0 (T\iota_1 \ox T\iota_3 + T\iota_3 \ox T\iota_1) \, w_\He   (d_1) \\
    \stackrel{\eqref{eq:global_double_projections_r_matrix_i_2}}= & - (T(R_{d_2 \pi_+(d_1^{-1})} \0 \pi_-) \ox T(L_{\pi_-(d_1) d_2} \0 \pi_+ \0 \eta)) \0 TR_{d_1}^{\ox 2} \, r \\
    & + (T(L_{\pi_-(d_1) d_2} \0 \pi_+ \0 \eta) \ox T(R_{d_2 \pi_+(d_1^{-1})} \0 \pi_-) ) \0 TR_{d_1}^{\ox 2} \, r_{21} \\
    = & \, (TR_{\pi_-(d_1) d_2 \pi_+(d_1^{-1})} \ox TL_{\pi_-(d_1) d_2 \pi_+(d_1^{-1})}) \, r - (TL_{\pi_-(d_1) d_2 \pi_+(d_1^{-1})} \ox TR_{\pi_-(d_1) d_2 \pi_+(d_1^{-1})}) \, r_{21} \\
    = & \, (TR_{H(d_1, d_2)} \ox TL_{H(d_1, d_2)}) \, r - (TL_{H(d_1,d_2)} \ox TR_{H(d_1, d_2)}) \, r_{21} \, .
  \end{align*}
  Inserting this into \eqref{eq:kitaev_commutation_relations_proof0} and comparing with $w_{G^*}$ from Equation \eqref{eq:bivector_dual_poisson_lie_group} proves the claim:
  \[
    TH^{\ox 2} w_{G_\He^{2}}  (d_1, d_2) = w_{G^*} (H(d_1, d_2)) \, .
  \]

  In the general case, we again transform $\Gamma$ by adding bivalent vertices to loops at $v$ and doubling edges where both sides belong to $f$. From Lemma \ref{lemma:kitaev_gluing_trafos} we obtain the map $\psi: K' \to K$ that is associated with gluing the split edges back together.
  Then we argue as in the proof of (ii).
\end{proof}

\begin{remark}
  \label{remark:kitaev_operator_algebra_dual_of_double}
  The homomorphism $\Hol^{(v,f)*} : C^\infty((G, w_{G^*}), \R) \to C^\infty(K, \R)$ of Poisson algebras from \eqref{eq:kitaev_site_operator_poisson_map} is an analogue of the algebra homomorphism $\tau: D(H) \to \End_\C(H^{\ox E})$ from Lemma \ref{lemma:quantum_kitaev_commutation_relations} (iv).
  The Poisson algebra $C^\infty(K, \R)$ is an analogue of the endomorphism algebra $\End_\C(H^{\ox E})$ for a quantum Kitaev model and the Poisson algebra $C^\infty((G, w_{G^*}), \R)$ takes the role of the algebra structure on $D(H)$.

  In fact, by Lemma \ref{lemma:dual_poisson_lie_group} there is a Poisson map $G^* \to (G, w_{G^*})$ that is a diffeomorphism near the units.
  The Poisson algebra structure on $C^\infty(G^*, \R)$ for the dual $G^*$ of the double Poisson-Lie group $G$ is a Poisson counterpart of the algebra structure on the Drinfeld double $D(H)$.
\end{remark}

By Equation \eqref{eq:quantum_kitaev_right_action}, the $D(H)$-right module algebra structure at a site $(v,f)$ is obtained from the vertex and face operators for $v$ and $f$, respectively.
We show an analogue of this relation by expressing the derivation $\left\{ g, - \right\}$ for a vertex or face operator $g$ in terms of the vector fields obtained from the respective vertex or face action.
For $\beta \in \g_+, y \in \g_-$ the vector fields generating the vertex action at $v$ and face action at $f$ are given by
\[
  V_v(\beta)(\gamma) := -T(- \rhd_v \gamma) \, \beta \qquad V_f(y)(\gamma) := -T(- \rhd_f \gamma) \, y \qquad \Forall \gamma \in K \, .
\]

\begin{proposition}[Vector fields for operators and actions.]
  Let $v \in V, f \in F$ and $g \in C^\infty(G, \R)$.
  The vector fields generated by the vertex operator $g \0 \Hol^v$ and face operator $g \0 \Hol^f$ satisfy
  \begin{align}
    \left\{ g \0 \Hol^v , - \right\} (\gamma) &= \sum_{(r)} \langle dg, TR_{\Hol^v (\gamma)} \, r_{(1)} \rangle \, V_v (r_{(2)})(\gamma) \label{eq:kitaev_vector_field_vertex_operator}\\
    \left\{ g \0 \Hol^f , - \right\} (\gamma) &= - \sum_{(r)} \langle dg, TL_{\Hol^f (\gamma)} \, r_{(2)} \rangle \, V_f (r_{(1)})(\gamma) \label{eq:kitaev_vector_field_face_operator} \, ,
  \end{align}
  where $r \in \g_- \ox \g_+$ is the classical $r$-matrix of the double Poisson-Lie group $G$.
\end{proposition}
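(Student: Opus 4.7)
The identity for vertex operators is essentially already contained in the computation at the end of the proof of Lemma \ref{lemma:locally_flat_poisson_submanifold}. There, in Equation \eqref{eq:locally_flat_poisson_submanifold0}, it was shown that for a vertex $v$ without loops whose incident edges are all incoming, and for any $h \in C^\infty(K,\R)$,
\[
  (T\Hol^v \otimes dh)\, w_K(\gamma) = \sum_{(r)} \langle dh, V_v(r_{(2)})(\gamma)\rangle \; TR_{\Hol^v(\gamma)}\, r_{(1)} \, .
\]
The key point is that this identity holds for all $\gamma \in K$ and all $h$; only the subsequent step of concluding that both sides vanish required $\gamma \in K_L$ and $h \in C^\infty(K,\R)^{inv}_L$. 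My plan is therefore to extract this intermediate identity and pair it with $dg$ to obtain \eqref{eq:kitaev_vector_field_vertex_operator}.

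More precisely, I would first observe that $d(g \circ \Hol^v) = dg \circ T\Hol^v$ and hence
\[
  \{g \circ \Hol^v, h\}(\gamma) = \langle dg, (T\Hol^v \otimes dh)\, w_K(\gamma)\rangle \, ,
\]
so the above identity immediately yields \eqref{eq:kitaev_vector_field_vertex_operator} after pairing with $dg$ (since $h$ was arbitrary). To re-derive the key identity I would repeat the calculation from the proof of Lemma \ref{lemma:locally_flat_poisson_submanifold}: write $w_K(\gamma)$ as a sum $\sum_{e \in E} T\iota_e(\gamma)^{\otimes 2}\, w_\He(\pi_e(\gamma))$, observe that only edges incident at $v$ contribute because $\Hol^v$ depends only on the corresponding copies of $G_\He$, use the factorization $\Hol^v \circ \iota_{e_i} = L_{c_i(\gamma)} \circ R_{\tilde c_{i+1}(\gamma)} \circ \pi_-$ together with Equation \eqref{eq:global_double_projections_r_matrix_i_2} of Lemma \ref{lemma:global_double_projections_r_matrix}(i), then rewrite the result using the $\Ad$-identity \eqref{eq:global_double_projections_r_matrix_ii_1} of Lemma \ref{lemma:global_double_projections_r_matrix}(ii), and finally recognize the explicit form \eqref{eq:locally_flat_poisson_submanifold_proof1} of the generating vector field $V_v(\beta)$ in the resulting expression.

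For the face-operator statement \eqref{eq:kitaev_vector_field_face_operator} I would carry out the analogous calculation. Here $\Hol^f$ is a product of factors $\pi_+$ and $\pi_+ \circ \eta$ coming from edge sides, and $V_f(y)$ is obtained by differentiating the face action \eqref{eq:face_action_explicit}, which involves left-multiplication by $\pi_-(d_k(\gamma)x^{-1})$. The same scheme applies, but now Equations \eqref{eq:global_double_projections_r_matrix_i_1} and \eqref{eq:global_double_projections_r_matrix_ii_2} are used, and the product structure of $\Hol^f$ produces a $TL_{\Hol^f(\gamma)}$ prefactor together with the opposite tensor leg of $r$ and the overall minus sign in \eqref{eq:kitaev_vector_field_face_operator}.

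The main technical obstacle is the same as in the preceding lemmas: vertices with loops and faces traversed twice by some edge in their face path. Following the strategy used repeatedly in Sections \ref{subsection:gauge_invariance_flatness} and \ref{subsection:vertex_face_operators} (for instance in the proofs of Lemma \ref{lemma:kitaev_actions_and_holonomies}(iii) and Proposition \ref{proposition:kitaev_commutation_relations}(ii)), I would first establish the identities under the simplifying assumption (no loops at $v$; every edge of $f$ traversed at most once by $p(f)$), and then reduce the general case to this one via the gluing transformations of Lemma \ref{lemma:kitaev_gluing_trafos}. Concretely, split each loop at $v$ by a bivalent vertex and double each edge of $f$ that occurs twice in $p(f)$; the associated Poisson map $\psi : K' \to K$ intertwines vertex and face actions on the flat subspace $K'_{L'}$ by \eqref{eq:kitaev_gluing_trafos_compatible_with_actions} and vertex/face holonomies by \eqref{eq:kitaev_gluing_trafos_compatible_with_holonomies}, so the identity on $K'$ pushes down to $K$ using the surjectivity of $\psi|_{K'_{L'}}$ from Lemma \ref{lemma:kitaev_gluing_trafos}(v). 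The compatibility with edge reversal (Remark \ref{remark:edge_reversal}) additionally allows one to assume WLOG that all edges at $v$ are incoming and all edges on the face path $p(f)$ are traversed in the clockwise direction, simplifying the bookkeeping of signs.
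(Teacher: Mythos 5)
Your proposal is correct and follows essentially the same route as the paper, whose proof of this proposition consists precisely of the observation that Equation \eqref{eq:kitaev_vector_field_vertex_operator} was already established as the intermediate identity \eqref{eq:locally_flat_poisson_submanifold0} in the proof of Lemma \ref{lemma:locally_flat_poisson_submanifold} (that identity holds for all $\gamma$ and $h$; only the final vanishing step used flatness and invariance), with the face case analogous. The only cosmetic difference is that you reduce the case of loops at $v$ to the loop-free case via the gluing transformations of Lemma \ref{lemma:kitaev_gluing_trafos}, whereas the paper handles loops directly in the proof of Lemma \ref{lemma:locally_flat_poisson_submanifold} by noting that a loop contributes twice to $\Hol^v$ and correspondingly twice to the vector field $V(\beta)$; both are valid.
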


\begin{proof}
  Equation \eqref{eq:kitaev_vector_field_vertex_operator} has been shown in Equation \eqref{eq:locally_flat_poisson_submanifold0} as part of the proof for Lemma \ref{lemma:locally_flat_poisson_submanifold}.
  The proof for Equation \eqref{eq:kitaev_vector_field_face_operator} is analogous.
\end{proof}

\begin{corollary}
  All invariant functions $h \in C^\infty(K, \R)^{inv}$ Poisson-commute with all vertex and face operators:
  \begin{equation}
    \label{eq:corollary_commutation_with_operators_equivalent_to_invariant}
    \left\{ h, g \0 \Hol^v \right\} = \left\{ h, g \0 \Hol^f \right\} = 0 \qquad \Forall g \in C^\infty(G, \R) \, , v \in V \, , f \in F \, .
  \end{equation}
  If $G$ is connected, then Equation \eqref{eq:corollary_commutation_with_operators_equivalent_to_invariant} for a function $h \in C^\infty(K, \R)$ in turn implies invariance under all vertex and face actions.
\end{corollary}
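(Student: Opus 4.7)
The main tool will be the explicit formulas for $\left\{ g \circ \Hol^v, -\right\}$ and $\left\{ g \circ \Hol^f, -\right\}$ proved in the preceding proposition, expressing these derivations in terms of the vector fields $V_v(\beta)$, $V_f(y)$ generating the vertex and face actions. Since the classical $r$-matrix of the double satisfies $r \in \g_- \otimes \g_+$, the $\g$-components appearing in those formulas fall naturally into the subalgebras associated with the relevant action.

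For the forward direction, suppose $h \in C^\infty(K,\R)^{inv}$. Applying the previous proposition and antisymmetry of the bracket, I would compute
\begin{equation*}
  \left\{ h, g \circ \Hol^v \right\}(\gamma) = -\sum_{(r)} \langle dg, TR_{\Hol^v(\gamma)}\, r_{(1)} \rangle \, \langle dh, V_v(r_{(2)})(\gamma) \rangle.
\end{equation*}
Because $r_{(2)} \in \g_+$ and $h$ is invariant under $\rhd_v$, each $V_v(r_{(2)})$ is tangent to a $G_+$-orbit along which $h$ is constant, so $\langle dh, V_v(r_{(2)})(\gamma)\rangle = 0$. Hence the bracket vanishes. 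The argument for face operators is completely analogous, using $r_{(1)} \in \g_-$ and invariance of $h$ under $\rhd_f$.

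For the converse, assume $G$ is connected and \eqref{eq:corollary_commutation_with_operators_equivalent_to_invariant} holds. Fix $v \in V$ and $\gamma \in K$. The same formula gives
\begin{equation*}
  0 = -\sum_{(r)} \langle dg, TR_{\Hol^v(\gamma)}\, r_{(1)} \rangle \, \langle dh, V_v(r_{(2)})(\gamma) \rangle
  \qquad \Forall g \in C^\infty(G,\R).
\end{equation*}
As $g$ ranges over $C^\infty(G,\R)$, the covector $dg|_{\Hol^v(\gamma)}$ ranges over all of $T^*_{\Hol^v(\gamma)}G$, so the linear form $X \mapsto \langle dg, TR_{\Hol^v(\gamma)} X\rangle$ on $\g$ is an arbitrary element of $\g^*$. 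Picking a basis $\{e_i\}$ of $\g_-$ with dual basis $\{e^i\}$ of $\g_+$, so that $r = \sum_i e_i \otimes e^i$, and varying $g$ to isolate each index separately, I conclude $\langle dh, V_v(e^i)(\gamma)\rangle = 0$ for all $i$. Hence $dh$ annihilates every $V_v(\beta)(\gamma)$ with $\beta \in \g_+$, which is infinitesimal invariance of $h$ under $\rhd_v$ at $\gamma$. The analogous argument with $\Hol^f$ gives infinitesimal invariance under $\rhd_f$.

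The last step is upgrading infinitesimal invariance to invariance under the full Poisson-Lie group action. Since $G$ is a global double Poisson-Lie group and is connected, the diffeomorphism $\mu : G_+ \times G_- \to G$ of Definition \ref{def:global_double_poisson_lie_group} implies that both $G_+$ and $G_-$ are connected. A connected Lie group is generated by a neighbourhood of the identity, so any $\alpha \in G_+$ can be written as a finite product of exponentials $\exp(\beta_k)$ with $\beta_k \in \g_+$; the function $t \mapsto h(\exp(t\beta_k) \rhd_v \gamma)$ has vanishing derivative everywhere by the previous paragraph, hence is constant, and composing these shows $h(\alpha \rhd_v \gamma) = h(\gamma)$. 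The same argument with $G_-$ handles the face actions. The main subtlety here is the separation-of-coefficients step, which relies on the fact that pulling back one-forms on $G$ along a given point produces all of the cotangent space --- a standard consequence of the existence of coordinate functions near any point.
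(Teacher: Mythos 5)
Your proof is correct and follows essentially the same route as the paper: the forward direction reads the vanishing of the bracket off the vector-field formulas \eqref{eq:kitaev_vector_field_vertex_operator} and \eqref{eq:kitaev_vector_field_face_operator}, and the converse uses the non-degeneracy encoded in $r \in \g_- \ox \g_+$ (your basis argument is exactly the paper's observation that $r$ induces linear isomorphisms $\g_-^* \to \g_+$ and $\g_+^* \to \g_-$) to conclude $\langle dh, V_v(\beta)\rangle = 0$ for all $\beta \in \g_+$. The paper passes from this infinitesimal statement directly to invariance without comment; your explicit integration step via connectedness of $G_\pm$ fills in that gap but does not constitute a different method.
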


\begin{proof}
  Equations \eqref{eq:corollary_commutation_with_operators_equivalent_to_invariant} follow from Equations \eqref{eq:kitaev_vector_field_vertex_operator} and \eqref{eq:kitaev_vector_field_face_operator}.
  To prove the converse statement, note that the $r$-matrix of $G$ induces linear isomorphisms $\g_-^* \to \g_+: \varphi \mapsto (\varphi \ox \id) \, r $ and $\g_+^* \to \g_-: \varphi \mapsto (\varphi \ox \id) \, r_{21}$.
  By Equation, \eqref{eq:kitaev_vector_field_vertex_operator} the identity $\left\{ h, g \0 \Hol^v \right\} = 0 \Forall g \in C^\infty(G, \R)$ thus implies $\langle dh, V_v(\beta) \rangle = 0 \Forall \beta \in g_+$, so that $h \in C^\infty(K, \R)^v$.
  The argument for faces $f \in F$ is similar.
\end{proof}

\subsection{Poisson-Kitaev models and Fock-Rosly spaces}
\label{subsection:poisson_kitaev_models_fock_rosly_spaces}

In this section we show that a Poisson-Kitaev model on the graph $\Gamma$ is Poisson-isomorphic to a Fock-Rosly space on $\Gamma$ if all vertices and faces can be paired into sites.
This isomorphism will be used in Section \ref{subsection:relation_with_moduli_spaces} to prove that the Poisson algebra $\mathcal A(\Gamma, L)$ is isomorphic to the canonical Poisson algebra of functions on the moduli space of flat $G$-bundles.

In this section we assume that $\Gamma$ is a \emph{paired} doubly ciliated ribbon graph (see Definition \ref{def:double_graph_basics} (iv)).

We consider the Poisson-Kitaev model $K =  G_\He ^{\x E}$ for the graph $\Gamma$ and the Fock-Rosly space $\FR = (G^{\x E}, w_{\FR})$ from Definition \ref{def:fock_rosly_space}.
The latter is equipped with the Poisson bivector $w_{\FR}$ from Equation \eqref{eq:fockrosly_vertex_bivector} where we set $r(v) = r$ for all $v \in V$ and $r$ is the classical $r$-matrix of $G$.

We construct a Poisson isomorphism $\Phi: K\to \FR$.
It is obtained from the holonomies of certain paths associated to the edges of $\Gamma$.
More specifically, we associate to an edge $e: v_s \to v_t$ the path $p(e)$ in the thickening $\Gamma_D$ (see Definition \ref{def:thickening}) that starts at the cilium of $v_s$, turns counterclockwise around $v_s$, then along $e$ and clockwise along $v_t$ until it meets the cilium of $v_t$.
This path is depicted in Figure \ref{fig:trafo_kitaev_to_fr}.

Suppose that $b(e)$ is the $j_s$-th edge end in the linear ordering at $v_s$ and $f(e)$ the $j_t$-th edge end at $v_t$.
The path $p(e)$ is given by
\begin{equation}
  \label{eq:kitaev_fock_rosly_iso_path_for_an_edge}
  p(e) :=  p_f(e) \0 f(e) \0 r(e) \0 p_b(e) \quad \text{with} \quad p_b(e) := p_{j_s-1}(v_s)^{-1} \quad p_f(e):= p_{j_t-1}(v_t) \, ,
\end{equation}
where $p_{j_s -1}(v_s)$ and $p_{j_t -1}(v_t)$ are partial vertex paths as defined by Equation \eqref{eq:partial_vertex_path}.
With the help of the holonomy functor $\Hol : \G(\Gamma_D) \to C^\infty(K, G)$ from \eqref{eq:kitaev_holonomy_functor} we define the map $\Phi: K \to \FR$ by
\begin{equation}
  \label{eq:kitaev_fock_rosly_iso}
  \pi_e \0 \Phi := \Hol (p(e)) \, .
\end{equation}

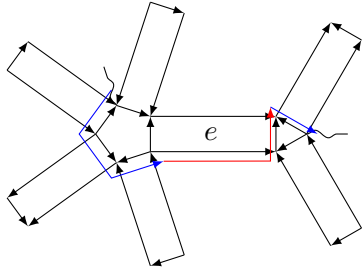
\begin{figure}[h]
\centering
\begin{tikzpicture}[vertex/.style={circle, fill=black, inner sep=0pt, minimum size=2mm}, plain/.style={draw=none, fill=none}, scale=0.8]
  \begin{scope}[scale=0.9]
    
  \coordinate (p) at (0,0);
  \coordinate (p1) at ($(p) + (36:0.553)$);
  \coordinate (p2) at ($(p) + (108:0.553)$);
  \coordinate (p3) at ($(p) + (180:0.553)$);
  \coordinate (p4) at ($(p) + (252:0.553)$);
  \coordinate (p5) at ($(p) + (-36:0.553)$);

  \coordinate (q) at ($(p) + (2,0) + (0.553,0) + (0.375,0)$);
  \coordinate (q1) at ($(q) + (0:0.375)$);
  \coordinate (q2) at ($(q) + (120:0.375)$);
  \coordinate (q3) at ($(q) + (240:0.375)$);

  \draw [-latex] (p1)--(q2);
  \draw [-latex] (q3)--(q2);
  \draw [-latex] (p5)--(p1);
  \draw [-latex] (p5)--(q3);
  \node at ($(p)!1.553cm!(q)$) {$e$};

  \draw [-latex] (q1) -- (q2);
  \draw [-latex] (q1) -- ($(q1)!2cm!-90:(q2)$);
  \draw [-latex] (q2) -- ($(q2) + (q1)!2cm!-90:(q2) - (q1)$);
  \draw [-latex] ($(q1)!2cm!-90:(q2)$) -- ($(q2) + (q1)!2cm!-90:(q2) - (q1)$);

  \draw [-latex] (q1) -- (q3);
  \draw [-latex] ($(q1)!2cm!90:(q3)$) -- (q1);
  \draw [-latex] ($(q3)!2cm!-90:(q1)$) -- (q3);
  \draw [-latex] ($(q1)!2cm!90:(q3)$) -- ($(q3)!2cm!-90:(q1)$);

  \draw [decorate,decoration={snake, amplitude=0.5mm}] (q1) -- ($(q1)+(0.75,0)$);

  \draw [-latex] (p2) -- (p1);
  \draw [-latex] ($(p1)!2cm!-90:(p2)$) -- (p1);
  \draw [-latex] ($(p2)!2cm!90:(p1)$) -- (p2);
  \draw [-latex] ($(p2)!2cm!90:(p1)$) -- ($(p1)!2cm!-90:(p2)$);

  \draw [-latex] (p3) -- (p2);
  \draw [-latex] ($(p2)!2cm!-90:(p3)$) -- (p2);
  \draw [-latex] ($(p3)!2cm!90:(p2)$) -- (p3);
  \draw [-latex] ($(p3)!2cm!90:(p2)$) -- ($(p2)!2cm!-90:(p3)$);

  \draw [-latex] (p3) -- (p4);
  \draw [-latex] (p3) -- ($(p3)!2cm!-90:(p4)$);
  \draw [-latex] (p4) -- ($(p4)!2cm!90:(p3)$);
  \draw [-latex] ($(p3)!2cm!-90:(p4)$) -- ($(p4)!2cm!90:(p3)$);

  \draw [-latex] (p5) -- (p4);
  \draw [-latex] ($(p4)!2cm!-90:(p5)$) -- (p4);
  \draw [-latex] ($(p5)!2cm!90:(p4)$) -- (p5);
  \draw [-latex] ($(p5)!2cm!90:(p4)$) -- ($(p4)!2cm!-90:(p5)$);

  \draw [decorate,decoration={snake, amplitude=0.5mm}] (p2) -- ($(p2) + (p)!0.75cm!(p2)$);

  \draw [-latex,color=blue] ($(p2) + (p)!0.3cm!(p2)$) -- ($(p3) + (p)!0.3cm!(p3)$) -- ($(p4) + (p)!0.3cm!(p4)$) -- ($(p5) + (p)!0.3cm!(p5)$);
  \draw [-latex,color=red] ($(p5) + (p)!0.3cm!(p5)$) -- ($(q3) + (q)!0.2cm!(q3) - (q)$) -- ($(q2) + (q)!0.2cm!(q2) - (q)$);
  \draw [-latex,color=blue] ($(q2) + (q)!0.2cm!(q2) - (q)$) -- ($(q1) + (q)!0.2cm!(q1) - (q)$);

  \end{scope}
\end{tikzpicture}
\caption{The paths $p_b(e)$ and $p_f(e)$ are shown in blue, the path $f(e) \circ r(e)$ in red}
\label{fig:trafo_kitaev_to_fr}
\end{figure}

This map can be viewed as the Poisson-Lie counterpart of the isomorphism of $D(H)$-module algebras in \cite{meusburger16} that was used to relate the endomorphism algebra $\End_\C(H^{\otimes E})$ of a Kitaev model to the quantum moduli algebra.

We show that for each site $(v,f)$ the map $\Phi: K \to \FR$ is an isomorphism of Poisson $G$-spaces with respect to the Poisson actions $\rhd_{(v,f)}: G \x K \to K$ from \eqref{eq:site_action} and $\rhd_v^{\FR} : G \x \FR \to \FR$ from \eqref{eq:fock_rosly_vertex_action}.
We also prove that it intertwines the holonomies along the combined vertex and face path $p(v) \0 p(f)$ with respect to the holonomy functors $\Hol$ from \eqref{eq:kitaev_holonomy_functor} and $\Hol_{\FR}$ from \eqref{eq:fock_rosly_holonomy_functor}.

\begin{theorem}[Poisson-Kitaev models and Fock-Rosly spaces]
  \label{theorem:KitaevFockRoslyIso}
  ~
  \begin{compactenum}
  \item
    The map $\Phi: K \to \FR$ is a Poisson isomorphism.
  \item
    For each site $(v, f)$ the map $\Phi$ is a homomorphism  of Poisson $G$-spaces with respect to the Poisson actions $\rhd_{(v, f)} : G \x K \to K$ and $\rhd_v^{\FR} : G \x \FR \to \FR$:
    \begin{equation}
      \label{eq:KitaevFockRoslyIso_compatible_site_actions}
      g \rhd^{FR}_v \Phi(\gamma) = \Phi ( \pi_-(g) \rhd_{f} ( \pi_+(g) \rhd_v \gamma) ) \qquad \Forall g \in G, \gamma \in K \, .
    \end{equation}
  \item
    The map $\Phi$ intertwines the holonomies associated with any site $(v, f)$:
    \begin{equation}
      \label{eq:KitaevFockRoslyIso_compatible_holonomies}
      \Hol_{\FR}(p(v) \0 p(f)) \0 \Phi = \Hol(p(v) \0 p(f)) \, .
    \end{equation}
  \end{compactenum}
\end{theorem}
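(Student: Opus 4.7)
The plan is to prove the three parts in the order (i: bijectivity), (iii), (ii), and (i: Poisson property), since each step supports the next. For bijectivity, observe that $\Hol(f(e) \0 r(e)) = \pi_-(d) \pi_+(d) = d$ for $d = \pi_e(\gamma)$, so
\[ \pi_e \0 \Phi = \Hol(p_f(e)) \cdot \pi_e \cdot \Hol(p_b(e)). \]
Since $\Gamma$ is paired and hence loopless, $p_b(e)$ and $p_f(e)$ involve only edges at $s(e)$ and $t(e)$ other than $e$. One can therefore order the edges so that each $\pi_e(\gamma)$ is recovered from $\Phi(\gamma)$ and from previously-recovered components, giving an explicit smooth inverse.

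For (iii), I would exploit that $\Hol_\FR$ sends every edge end to $1_G$, so $\Hol_\FR(p(v)) = 1$ and $\Hol_\FR(p(v) \0 p(f)) \0 \Phi$ reduces to a product of factors $\pi_{e_j}(\Phi(\gamma))^{\varepsilon_j}$ indexed by the edge sides of $p(f)$. Substituting the factorization of $\pi_e \0 \Phi$ above and telescoping, the partial-vertex-path pieces $p_f(e_{j+1})$ and $p_b(e_j)^{-1}$ at the shared vertex of consecutive face-path edges combine to fill in the segments of $p(v)$; what remains is exactly $\Hol(p(v) \0 p(f))$.

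For (ii), I would verify \eqref{eq:KitaevFockRoslyIso_compatible_site_actions} edge by edge, reducing to the two generating cases $g = \alpha \in G_+$ and $g = x \in G_-$ by invoking Proposition \ref{proposition:kitaev_actions_properties} (iii), which identifies $\rhd_{(v,f)}$ as the Poisson $G$-action generated by $\rhd_v$ and $\rhd_f$. For $g = \alpha$, the $\FR$-action multiplies incoming edges at $v$ by $\alpha$ on the left and outgoing ones by $\alpha^{-1}$ on the right, while $p_b(e)$ is unaffected whenever $s(e) \neq v$. A case analysis comparing $\Phi(\alpha \rhd_v \gamma)_e$ with $\alpha \rhd_v^\FR \Phi(\gamma)_e$ via \eqref{eq:vertex_action_explicit} and the computation rules of Lemma \ref{lemma:global_double_projections_properties} shows that the prefactors $\pi_+(\alpha c_k(\gamma))$ assemble with the partial-vertex-holonomy segments to reproduce the $\FR$-action. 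The face case is analogous via \eqref{eq:face_action_explicit}.

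Finally, the Poisson property in (i) is the main obstacle. I would decompose $w_K = \sum_{e \in E} T\iota_e^{\ox 2} w_\He(\pi_e(\gamma))$ and $w_\FR \0 \Phi = \sum_{v \in V} w_v \0 \Phi$, then compute $T\Phi^{\ox 2} w_K$ by chasing the projections $T\pi_\pm$ through Lemma \ref{lemma:global_double_projections_r_matrix}. The diagonal $V_i \ox V_i$ contributions to $w_v$ should come from the Heisenberg-double bivector $w_\He$ on each edge, producing terms of the form $(TL \ox TR - TR \ox TL)\, r_a$. The off-diagonal $V_i \ox V_j$ contributions weighted by $r_a + s_v^{ij} r_s$ should emerge from the interaction of $w_\He$ at edge $e$ with the partial-vertex-path prefixes $p_b(e), p_f(e)$, which involve neighboring edges at the same vertex; the ordering-dependent sign $s_v^{ij}$ arises from the direction of composition of the partial paths. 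Organizing these many contributions into the precise form of $w_v$, while tracking the antisymmetric/symmetric split of $r$ together with the signs $s_v^{ij}$, is the technical heart of the argument.
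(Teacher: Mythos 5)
Your outlines for parts (ii) and (iii) match the paper's proof (reduction to the generating cases $g\in G_+$ and $g\in G_-$, then edge-by-edge verification with Lemma \ref{lemma:global_double_projections_properties}), but both halves of part (i) have genuine gaps. For bijectivity, the proposed induction --- ``order the edges so that each $\pi_e(\gamma)$ is recovered from previously-recovered components'' --- need not terminate: recovering $\pi_e(\gamma)$ from $\pi_e(\Phi(\gamma)) = \Hol(p_f(e))(\gamma)\,\pi_e(\gamma)\,\Hol(p_b(e))(\gamma)$ requires the edges preceding $e$ in the linear orderings at \emph{both} endpoints, and these precedence relations can form a directed cycle (three edges forming a triangle whose local orderings are rotationally consistent already do this), so no valid processing order exists in general. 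The paper sidesteps this entirely by writing a closed-form inverse $\Psi$ in terms of the partial \emph{face} paths $p_l(e), p_r(e)$ evaluated through $\Hol_{\FR}$ on $\Phi(\gamma)$ itself (Equation \eqref{eq:iso_kitaev_fock_rosly_inverse}), so no induction is needed.

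More seriously, the Poisson property --- which you correctly identify as the technical heart --- is not proved but only described as a bookkeeping task (``organizing these many contributions into the precise form of $w_v$''). The paper does not carry out any such global bivector comparison. Instead it reduces to functions of the form $\hat f_i \circ \pi_{e_i}$ for pairs of edges, splits into the three cases $e_1=e_2$, edges with no common vertex, and edges sharing a source, and in the last case factors the relevant component of $\Phi$ as a composition $\phi_4 \circ \phi_3 \circ \phi_2 \circ \phi_1$ of maps already known to be Poisson (holonomy reparametrizations from Lemma \ref{lemma:projectionLocallyPoisson} and Theorem \ref{theorem:heisenberg_double_inversion_poisson_actions}, the map $h(d_1,d_2)=(d_1,\, d_2\,\pi_-(d_1^{-1})^{-1})$, and the Fock--Rosly vertex actions of Proposition \ref{proposition:fockrosly_poisson_g_space}); the only genuinely new computation is that $h$ is Poisson, which is exactly where the cross-term $TL_{d_1}\ox TL_{d_2}\,r_{21}$ of $w_{\FR}$ is produced. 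Loops, multiple edges and triangles are then handled by subdividing edges and transporting the result back via Lemmas \ref{lemma:kitaev_gluing_trafos} and \ref{lemma:locally_flat_poisson_submanifold} --- a step your outline omits and which is needed because the pairwise case analysis assumes short closed paths are absent. As written, your plan would require independently discovering all the cancellations that these reductions package, so it cannot be accepted as a proof of (i).
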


\begin{proof}
  \textbf{Statement (i):} We construct an inverse $\Psi: \FR \to K$ of $\Phi$.
  For an edge $e \in E$, let $f_l, f_r$ be the faces to the left and right of $e$, respectively.
  Suppose that $r(e)$ is the $k_r$-th edge side in the linear ordering at $f_r$ and $l(e)$ the $k_l$-th one at $f_l$.
  Consider the partial face paths $p_l(e) := p_{k_l-1}(f_l)$ and $p_r(e) := p_{k_r-1}(f_r)$ from Equation \eqref{eq:partial_face_path} as illustrated in Figure \ref{fig:trafo_fr_to_kitaev}.
  We define $\Psi$ with the help of the holonomy functor $\Hol_{\FR}$ from \eqref{eq:fock_rosly_holonomy_functor} by:
  \begin{equation}
    \label{eq:iso_kitaev_fock_rosly_inverse}
    (\pi_e \0 \Psi) (\gamma) := \pi_-(\Hol_{\FR} (p_l(e))(\gamma))^{-1} \, \pi_e (\gamma) \, \pi_-(\Hol_{\FR}(p_r(e))(\gamma)) \, .
  \end{equation}

  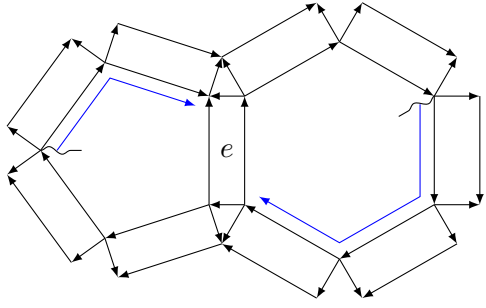
\begin{figure}[h]
  \centering
  \begin{tikzpicture}[vertex/.style={circle, fill=black, inner sep=0pt, minimum size=2mm}, plain/.style={draw=none, fill=none}, scale=0.8]
    \begin{scope}[scale=0.9]

    \coordinate (pl) at (0,-1);
    \coordinate (pr) at (0.65,-1);
    \coordinate (ql) at (0,1);
    \coordinate (qr) at (0.65,1);
    \node at (0.325,0) {$e$};

    \draw [-latex] (pr) -- (qr);
    \draw [-latex] (pl) -- (ql);
    \draw [-latex] (pr) -- (pl);
    \draw [-latex] (qr) -- (ql);

    \coordinate (r) at ($(1.732, 0) + (0.65, 0)$);
    \coordinate (r1) at ($(r) + (90:2cm)$);
    \coordinate (r2) at ($(r) + (30:2cm)$);
    \coordinate (r3) at ($(r) + (-30:2cm)$);
    \coordinate (r4) at ($(r) + (-90:2cm)$);

    \draw [-latex] (qr) -- (r1);
    \draw [-latex] (r1) -- (r2);
    \draw [-latex] (r2) -- (r3);
    \draw [-latex] (r3) -- (r4);
    \draw [-latex] (r4) -- (pr);

    \draw [decorate,decoration={snake, amplitude=0.5mm}] (r2) -- ($(r2)!0.75cm!(r)$);

    \draw [-latex,color=blue] ($(r2)!0.3cm!(r)$) -- ($(r3)!0.3cm!(r)$) -- ($(r4)!0.3cm!(r)$) -- ($(pr)!0.3cm!(r)$);

    \coordinate (l) at (-1.376,0);
    \coordinate (l1) at ($(l) + (-108:1.701cm)$);
    \coordinate (l2) at ($(l) + (-180:1.701cm)$);
    \coordinate (l3) at ($(l) + (108:1.701cm)$);

    \draw [-latex] (pl) -- (l1);
    \draw [-latex] (l1) -- (l2);
    \draw [-latex] (l2) -- (l3);
    \draw [-latex] (l3) -- (ql);

    \draw [decorate,decoration={snake, amplitude=0.5mm}] (l2) -- ($(l2)!0.75cm!(l)$);

    \draw [-latex,color=blue] ($(l2)!0.3cm!(l)$) -- ($(l3)!0.3cm!(l)$) -- ($(ql)!0.3cm!(l)$);

    \coordinate (ql') at ($(ql)!1cm!-90:(l3)$);
    \coordinate (qr') at ($(qr)!1cm!90:(r1)$);
    \coordinate (q') at (intersection of ql--ql' and qr--qr');

    \draw [-latex] (qr) -- (q');
    \draw [-latex] (r1) -- ($(r1) + (q') - (qr)$);
    \draw [-latex] (q') -- ($(r1) + (q') - (qr)$);

    \draw [-latex] (r1) -- ($(r1) + (qr)!1!-60:(q') - (qr)$);
    \draw [-latex] (r2) -- ($(r2) + (qr)!1!-60:(q') - (qr)$);
    \draw [-latex] ($(r1) + (qr)!1!-60:(q') - (qr)$) -- ($(r2) + (qr)!1!-60:(q') - (qr)$);

    \draw [-latex] (r2) -- ($(r2) + (qr)!1!-120:(q') - (qr)$);
    \draw [-latex] (r3) -- ($(r3) + (qr)!1!-120:(q') - (qr)$);
    \draw [-latex] ($(r2) + (qr)!1!-120:(q') - (qr)$) -- ($(r3) + (qr)!1!-120:(q') - (qr)$);
  
    \draw [-latex] (r3) -- ($(r3) + (qr)!1!-180:(q') - (qr)$);
    \draw [-latex] (r4) -- ($(r4) + (qr)!1!-180:(q') - (qr)$);
    \draw [-latex] ($(r3) + (qr)!1!-180:(q') - (qr)$) -- ($(r4) + (qr)!1!-180:(q') - (qr)$);

    \draw [-latex] (r4) -- ($(r4) + (qr)!1!-240:(q') - (qr)$);
    \draw [-latex] (pr) -- ($(pr) + (qr)!1!-240:(q') - (qr)$);
    \draw [-latex] ($(r4) + (qr)!1!-240:(q') - (qr)$) -- ($(pr) + (qr)!1!-240:(q') - (qr)$);

    \draw [-latex] (ql) -- (q');
    \draw [-latex] (l3) -- ($(l3) + (q') - (ql)$);
    \draw [-latex] ($(l3) + (q') - (ql)$) -- (q');

    \draw [-latex] (l3) -- ($(l3) + (ql)!1!72:(q') - (ql)$);
    \draw [-latex] (l2) -- ($(l2) + (ql)!1!72:(q') - (ql)$);
    \draw [-latex] ($(l2) + (ql)!1!72:(q') - (ql)$) -- ($(l3) + (ql)!1!72:(q') - (ql)$);

    \draw [-latex] (l2) -- ($(l2) + (ql)!1!144:(q') - (ql)$);
    \draw [-latex] (l1) -- ($(l1) + (ql)!1!144:(q') - (ql)$);
    \draw [-latex] ($(l1) + (ql)!1!144:(q') - (ql)$) -- ($(l2) + (ql)!1!144:(q') - (ql)$);

    \draw [-latex] (l1) -- ($(l1) + (ql)!1!216:(q') - (ql)$);
    \draw [-latex] (pl) -- ($(pl) + (ql)!1!216:(q') - (ql)$);
    \draw [-latex] ($(pl) + (ql)!1!216:(q') - (ql)$) -- ($(l1) + (ql)!1!216:(q') - (ql)$);
      
    \end{scope}
  \end{tikzpicture}
  \caption{The paths $p_l (e)$ (left) and $p_r(e)$ (right) for the edge $e$}
  \label{fig:trafo_fr_to_kitaev}
  \end{figure}

  That $\Psi$ is the inverse to $\Phi$ can be checked directly using the properties of the projections $\pi_\pm$ in Lemma \ref{lemma:global_double_projections_properties}.

  It remains to show that $\Phi: K \to \FR$ is Poisson.
  To simplify presentation, we prove the equation
  \begin{equation}
    \left\{ f_1 \0 \Phi, f_2 \0 \Phi \right\}_K = \left\{ f_1 , f_2 \right\}_{FR} \circ \Phi
    \label{eq:PhiPoisson}
  \end{equation}
  for functions of the form $f_i = \hat f_i \0 \pi_{e_i}$ with edges $e_1, e_2 \in E$ and $\hat f_1, \hat f_2 \in C^\infty(G, \R)$.
  It implies $T\pi_{e_1} \ox T\pi_{e_2} \, (T\Phi^{\ox 2} \, w_K) = T\pi_{e_1} \ox T\pi_{e_2} \, (w_{\FR} \0 \Phi)$ for all edges $e_1, e_2$ and hence that $\Phi$ is Poisson.

  First we prove Equation \eqref{eq:PhiPoisson} for graphs $\Gamma$ which satisfy the additional assumption that for all vertices $v$ any non-trivial path $p: v \to v$ in $\Gamma$ traverses at least four edges.
  This excludes loops, pairs of vertices connected by more than one edge, and triangles.
  Recall from Remark \ref{remark:edge_reversal} and Proposition \ref{proposition:fockrosly_graph_trafos}
  that a change of the orientation of an edge $e$ amounts to an inversion of the element $\pi_e(\gamma)$ in both the Poisson-Kitaev model $K$ and the Fock-Rosly space $\FR$.
  In both cases this is a Poisson isomorphism and one can check easily that it commutes with $\Phi$.
  Thus, we can choose the orientations of any edges and only  have to consider the following cases:
  \begin{enumerate}[nolistsep, noitemsep, label=(\alph*)]
    \item
      $e_1 = e_2$.
    \item
      The edges $e_1$ and $e_2$ have no vertex in common.
    \item
      $s(e_1) = s(e_2)$ with $e_1 < e_2$ at $s(e_1)$ and $t(e_1) \neq t(e_2)$.
  \end{enumerate}

  \textbf{Case (a):}
  Let $e := e_1 = e_2$.
  The map $\pi_e \0 \Phi : K \to  G_\He $ is Poisson by Lemma \ref{lemma:projectionLocallyPoisson} (ii) and Theorem \ref{theorem:heisenberg_double_inversion_poisson_actions} (ii).
  Equation \eqref{eq:fockrosly_vertex_bivector} implies that $T\pi_e^{\ox 2} \, w_{\FR} = w_\He  \0 \pi_e$.
  We obtain:
  \begin{align*}
    \left\{ f_1 \0 \Phi, f_2 \0 \Phi \right\}_K &= \left\{ \hat f_1 \0 \pi_e \0 \Phi, \hat f_2 \0 \pi_e \0 \Phi \right\}_K = \left\{ \hat f_1, \hat f_2 \right\}_{ G_\He } \0 \pi_e \0 \Phi \\
    &= \left\{ \hat f_1 \0 \pi_e, \hat f_2 \0 \pi_e \right\}_{\FR} \0 \Phi = \left\{ f_1, f_2 \right\}_{\FR} \0 \Phi \, .
  \end{align*}

  \textbf{Case (b):}
  If the edges $e_1, e_2$ have no common vertex, then $\left\{ f_1 , f_2 \right\}_{FR} = 0$ by Equation \eqref{eq:fockrosly_vertex_bivector}.
  The only edges $e$ that potentially contribute to $\left\{ f_1 \circ \Phi, f_2 \circ \Phi \right\}_K$ connect the edges $e_1$ and $e_2$.
  With proper choice of orientation for $e_1$ and $e_2$, the edge ends of $e$ are traversed by the paths $p_b(e_i)$ for $i=1,2$, but not by $p_f(e_i)$ because our assumptions on $\Gamma$ imply that the vertices $s(e_i)$ and $t(e_i)$ are only connected by $e_i$.
  Then $f_1$ is a function of $\Hol(b(e))$ and $f_2$ of $\Hol(f(e))$ or vice versa. 
  Lemma \ref{lemma:omegaHGbothsideszero} implies $\left\{ f_1 \0 \Phi, f_2 \0 \Phi \right\}_K = 0 = \left\{ f_1, f_2 \right\}_{\FR} \0 \Phi$.

  \textbf{Case (c):}
  Let $v := s(e_1) = s(e_2)$.
  We can assume that there are no edges in $\Gamma$ that connect $t(e_1)$ with $t(e_2)$ as otherwise $e_1, e_2$ would be part of a triangle.
  This situation is pictured in Figure \ref{fig:trafo_kitaev_to_fr_proof} along with the paths $p(e_i) = p_f(e_i) \0 f(e_i) \0 r(e_i) \0 p_b(e_i)$ for $i=1,2$.

  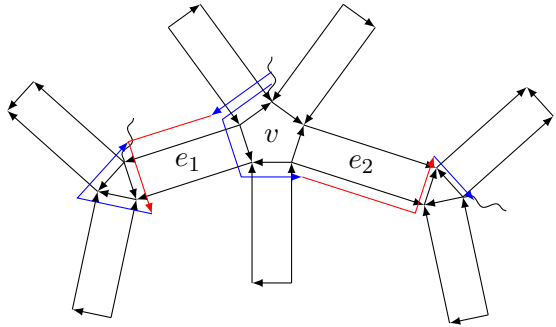
\begin{figure}[h]
  \centering
  \begin{tikzpicture}[vertex/.style={circle, fill=black, inner sep=0pt, minimum size=2mm}, plain/.style={draw=none, fill=none}, scale=0.8]
    \begin{scope}[rotate=-18]
    \node (p) at (0,0) {$v$};
    \coordinate (p1) at ($(p) + (36:0.553)$);
    \coordinate (p2) at ($(p) + (108:0.553)$);
    \coordinate (p3) at ($(p) + (180:0.553)$);
    \coordinate (p4) at ($(p) + (252:0.553)$);
    \coordinate (p5) at ($(p) + (-36:0.553)$);

    \coordinate (q) at ($(p) + (2,0) + (0.553,0) + (0.375,0)$);
    \coordinate (q1) at ($(q) + (0:0.375)$);
    \coordinate (q2) at ($(q) + (120:0.375)$);
    \coordinate (q3) at ($(q) + (240:0.375)$);

    \draw [-latex] (p1)--(q2);
    \draw [-latex] (q3)--(q2);
    \draw [-latex] (p5)--(p1);
    \draw [-latex] (p5)--(q3);
    \node at ($(p)!1.553cm!(q)$) {$e_2$};

    \draw [-latex] (q1) -- (q2);
    \draw [-latex] (q1) -- ($(q1)!2cm!-90:(q2)$);
    \draw [-latex] (q2) -- ($(q2) + (q1)!2cm!-90:(q2) - (q1)$);
    \draw [-latex] ($(q1)!2cm!-90:(q2)$) -- ($(q2) + (q1)!2cm!-90:(q2) - (q1)$);

    \draw [-latex] (q1) -- (q3);
    \draw [-latex] ($(q1)!2cm!90:(q3)$) -- (q1);
    \draw [-latex] ($(q3)!2cm!-90:(q1)$) -- (q3);
    \draw [-latex] ($(q1)!2cm!90:(q3)$) -- ($(q3)!2cm!-90:(q1)$);

    \draw [decorate,decoration={snake, amplitude=0.5mm}] (q1) -- ($(q1)+(0.75,0)$);

    \draw [-latex] (p2) -- (p1);
    \draw [-latex] ($(p1)!2cm!-90:(p2)$) -- (p1);
    \draw [-latex] ($(p2)!2cm!90:(p1)$) -- (p2);
    \draw [-latex] ($(p2)!2cm!90:(p1)$) -- ($(p1)!2cm!-90:(p2)$);

    \draw [-latex] (p3) -- (p2);
    \draw [-latex] ($(p2)!2cm!-90:(p3)$) -- (p2);
    \draw [-latex] ($(p3)!2cm!90:(p2)$) -- (p3);
    \draw [-latex] ($(p3)!2cm!90:(p2)$) -- ($(p2)!2cm!-90:(p3)$);

    \draw [-latex] (p3) -- (p4);
    \draw [-latex] (p3) -- ($(p3)!2cm!-90:(p4)$);
    \draw [-latex] (p4) -- ($(p4)!2cm!90:(p3)$);
    \draw [-latex] ($(p3)!2cm!-90:(p4)$) -- ($(p4)!2cm!90:(p3)$);
    \node at ($(p3)!0.5!(p4) + (p3)!1cm!-90:(p4) - (p3)$) {$e_1$};

    \draw [-latex] (p5) -- (p4);
    \draw [-latex] ($(p4)!2cm!-90:(p5)$) -- (p4);
    \draw [-latex] ($(p5)!2cm!90:(p4)$) -- (p5);
    \draw [-latex] ($(p5)!2cm!90:(p4)$) -- ($(p4)!2cm!-90:(p5)$);

    \draw [decorate,decoration={snake, amplitude=0.5mm}] (p2) -- ($(p2) + (p)!0.75cm!(p2)$);

    \coordinate (r1) at ($(p3)!2cm!-90:(p4)$);
    \coordinate (r2) at ($(p4)!2cm!90:(p3)$);
    \coordinate (r3) at ($(r2)!0.65cm!60:(r1)$);
    \coordinate (r) at ($0.333*(r1) + 0.333*(r2) + 0.333*(r3)$);

    \draw [-latex] (r2) -- (r3);
    \draw [latex-] (r2) -- ($(r2)!2cm!90:(r3)$);
    \draw [latex-] (r3) -- ($(r3)!2cm!-90:(r2)$);
    \draw [-latex] ($(r2)!2cm!90:(r3)$) -- ($(r3)!2cm!-90:(r2)$);

    \draw [-latex] (r1) -- (r3);
    \draw [-latex] (r1) -- ($(r1)!2cm!-90:(r3)$);
    \draw [-latex] (r3) -- ($(r3)!2cm!90:(r1)$);
    \draw [-latex] ($(r1)!2cm!-90:(r3)$) -- ($(r3)!2cm!90:(r1)$);

    \draw [decorate,decoration={snake, amplitude=0.5mm}] (r1) -- ($(r1)!0.75cm!150:(r2)$);

    \draw [-latex,color=blue] ($(p2) + (p)!0.3cm!(p2)$) -- ($(p3) + (p)!0.3cm!(p3)$) -- ($(p4) + (p)!0.3cm!(p4)$) -- ($(p5) + (p)!0.3cm!(p5)$);
    \draw [-latex,color=red] ($(p5) + (p)!0.3cm!(p5)$) -- ($(q3) + (q)!0.2cm!(q3) - (q)$) -- ($(q2) + (q)!0.2cm!(q2) - (q)$);
    \draw [-latex,color=blue] ($(q2) + (q)!0.2cm!(q2) - (q)$) -- ($(q1) + (q)!0.2cm!(q1) - (q)$);

    \draw [-latex,color=blue] ($(p2) + (p)!0.5cm!(p2)$) -- ($(p3) + (p)!0.5cm!(p3)$);
    \draw [-latex,color=red] ($(p3) + (p)!0.5cm!(p3)$) -- ($(r1) + (r)!0.35cm!(r1) -(r)$) -- ($(r2) + (r)!0.35cm!(r2) -(r)$);
    \draw [-latex,color=blue] ($(r2) + (r)!0.35cm!(r2) -(r)$) -- ($(r3) + (r)!0.35cm!(r3) -(r)$) -- ($(r1) + (r)!0.35cm!(r1) -(r)$);
    \end{scope}
  \end{tikzpicture}
  \caption{The paths $p_b(e_i)$ and $p_f(e_i)$ (blue) and the paths $f(e_i) \circ r(e_i)$ (red)}
  \label{fig:trafo_kitaev_to_fr_proof}
  \end{figure}
  First we consider the case when $e_1$ is the first edge at $v$ and $e_2$ comes directly after $e_1$ with respect to the linear ordering, so that $p_b(e_1) = 1_{s(b(e_1))}$ is the identity morphism of $s(b(e_1))$ and $p_b(e_2) = b(e_1)$.
  Additionally, we require that $p_f(e_1) = 1_{t(f(e_1))}$ and $p_f(e_2) = 1_{t(f(e_2))}$.
  Then one has
  \[
    (\pi_{e_1} \0 \Phi) (\gamma) = \pi_{e_1}(\gamma) \qquad (\pi_{e_2} \0 \Phi)(\gamma) = \pi_{e_2}(\gamma) \, \pi_-(\pi_{e_1}(\gamma)^{-1})^{-1} \, .
  \]
  Thus, we have to prove that the following map is Poisson
  \begin{equation}
    \label{eq:theorem_KitaevFockRoslyIso_proof-2}
    h :  G_\He ^2 \to (G^2, w_{\FR}') \qquad (d_1, d_2) \mapsto (d_1, d_2 \, \pi_-(d_1^{-1})^{-1}) \, .
  \end{equation}
  Here, $ G_\He ^2$ is equipped with the bivector $w_\pi$ of the product Poisson structure and $w_{\FR}'$ is the Poisson bivector from Equation \eqref{eq:fockrosly_vertex_bivector} for the case when the graph consists only of the two edges $e_1: v \to t(e_1)$ and $e_2 : v \to t(e_2)$ with $e_1 < e_2$ at $v$ and $t(e_1) \neq t(e_2)$.
  Denote by $\pi_1, \pi_2 : G^2 \to G$ the projections on the components that correspond to $e_1$ and $e_2$.
  Then $w_{\FR}'$ is explicitly given by
  \begin{equation}
    \label{eq:theorem_KitaevFockRoslyIso_proof-1}
    \begin{split}
      (T\pi_i \ox T\pi_i) \, w'_{\FR} (d_1, d_2) &= - (TL_{d_i}^{\ox 2} + TR_{d_i}^{\ox 2}) \, r_a = w_\He   (d_i) \qquad i=1,2 \\
      (T\pi_1 \ox T\pi_2) \, w'_{\FR} (d_1, d_2) &=  TL_{d_1} \ox TL_{d_2} \, r_{21} \, ,
    \end{split}
  \end{equation}
  and the bivector $w_\pi$ by
  \begin{equation*}
    (T\pi_1 \ox T\pi_2) \, w_{\pi} (d_1, d_2) = 0 \qquad (T\pi_i \ox T\pi_i) \, w_{\pi} (d_1, d_2) = w_\He   (d_i) \qquad i = 1,2 \, .
  \end{equation*}
  We decompose $h$ into Poisson maps.
  For this we write $h$ as
  \begin{equation}
    \label{eq:theorem_KitaevFockRoslyIso_proof-0.5}
    h = (\id_{G} \x (\mu \circ ( \id_{G} \x f) \0 \tau)) \0 ( \Delta \x \id_{G}) \, ,
  \end{equation}
  where $\Delta: G \to G \x G, d \mapsto (d,d)$ is the diagonal map, $\tau: G^2 \to G^2, (d_1, d_2) \mapsto (d_2, d_1)$ the flip, $\mu: G \x G \to G$ the multiplication and $f := \eta \0 \pi_- \0 \eta$ with the inversion map $\eta : G \to G$.

  Let $\pi_i' : G^3 \to G, i =1,2,3$ be the projections on the components of $G^3$ and $\iota_i': G \to G^3, i=1,2,3$ the inclusion maps at the element $(\Delta \x \id_G) (d_1, d_2) = (d_1, d_1, d_2)$.
  They satisfy:
  \[
    \pi_i' \0 \iota_j' =
    \begin{cases}
      \id_G & \text{if } i=j \\
      (g \mapsto \pi_i' (d_1, d_1, d_2)) & \text{otherwise.}
    \end{cases}
  \]
  Denote by $w_\pi'$ the Poisson bivector of the product Poisson manifold $ G_\He ^3$.
  Then one has
  \begin{equation}
    \label{eq:theorem_KitaevFockRoslyIso_proof0}
    T(\Delta \x \id_G)^{\ox 2} \; w_{\pi} (d_1, d_2) = \, w_\pi' (d_1, d_1, d_2) + (T\iota_1' \ox T\iota_2' + T\iota_2' \ox T\iota_1') \, w_\He   (d_1) \, .
  \end{equation}
  This implies for the map $h$ from Equation \eqref{eq:theorem_KitaevFockRoslyIso_proof-0.5}:
  \begin{equation}
    \label{eq:theorem_KitaevFockRoslyIso_proof0.5}
    \begin{split}
      & Th^{\ox 2} \, w_{\pi} (d_1, d_2) = \, T(\id_G \x (\mu \circ (\id_{ G } \x f) \0 \tau) )^{\ox 2} \, w_\pi' (d_1, d_1, d_2) \\
      & + T(\id_G \x (\mu \circ (\id_{ G } \x f) \0 \tau) )^{\ox 2} \0 (T\iota_1' \ox T\iota_2' + T\iota_2' \ox T\iota_1') \, w_\He  (d_1) \, .
    \end{split}
  \end{equation}
  By Theorem \ref{theorem:heisenberg_double_inversion_poisson_actions} (ii) and Lemma \ref{lemma:projectionLocallyPoisson} (ii) the following map is Poisson
  \[
    \mu \circ ( \id_G \x f) \0 \tau:  G_\He ^2 \to  G_\He  \qquad (d_1, d_2) \mapsto d_2 \, \pi_-(d_1^{-1})^{-1} \, ,
  \]
  and applying this fact to the first term on the right hand side of \eqref{eq:theorem_KitaevFockRoslyIso_proof0.5} yields:
  \begin{equation}
    \label{eq:theorem_KitaevFockRoslyIso_proof1}
    \begin{split}
      & Th^{\ox 2} \, w_{\pi} (d_1, d_2) = \, w_\pi(h(d_1, d_2)) \\
      & + T(\id_G \x (\mu \0 (\id_{ G } \x f) \0 \tau))^{\ox 2} \0 (T\iota_1' \ox T\iota_2' + T\iota_2' \ox T\iota_1') \, w_\He  (d_1) \, .
    \end{split}
  \end{equation}
  Apply the projection $T\pi_i^{\ox 2}$ and use \eqref{eq:theorem_KitaevFockRoslyIso_proof-1} to obtain:
  \begin{equation}
    \label{eq:theorem_KitaevFockRoslyIso_proof_h_is_Poisson1}
    T\pi_i^{\ox 2} ( Th^{\ox 2} \, w_{\pi} (d_1, d_2) ) = T\pi_i^{\ox 2} (w_\pi(h(d_1, d_2))) = T\pi_i^{\ox 2} \, w'_{\FR} (h(d_1, d_2)) \, .
  \end{equation}
  By the anti-symmetry of the Poisson bivectors $w_\pi$ and $w_{\FR}'$ it only remains to show the following equation to prove that $h$ is Poisson:
  \[
    (T\pi_1 \ox T\pi_2) \, ( Th^{\ox 2} \, w_{\pi} (d_1, d_2) ) = (T\pi_1 \ox T\pi_2) \, w'_{\FR} (h(d_1, d_2)) \, .
  \]
  Applying the projection $T\pi_1 \ox T\pi_2$ to Equation \eqref{eq:theorem_KitaevFockRoslyIso_proof1} yields
  \begin{align*}
    & (T\pi_1 \ox T\pi_2) \0 Th^{\ox 2} \, w_\pi (d_1, d_2) = \, (T\id_G \ox T(L_{d_2} \0 f)) \, w_\He  (d_1) \\
    = & \, (T(\eta \0 \eta) \ox T(L_{d_2} \0 \eta \0 \pi_- \0 \eta)) \, w_\He  (d_1) \, ,
  \end{align*}
  where we used the identity $f = \eta \0 \pi_- \0 \eta$ and replaced $\id_G$ by $\eta \0 \eta$ in the second step.
  The inversion $\eta:  G_\He  \to  G_\He $ is Poisson by Theorem \ref{theorem:heisenberg_double_inversion_poisson_actions} (ii), so that
  \begin{align*}
    & (T\pi_1 \ox T\pi_2) \0 Th^{\ox 2} \, w_\pi (d_1, d_2) = (T\eta \ox T(L_{d_2} \0 \eta \0 \pi_-)) \, w_\He  (d_1^{-1}) \\
    \stackrel{\eqref{eq:global_double_projections_r_matrix_i_2}}=& (T\eta \ox T(L_{d_2} \0 \eta \0 \pi_-)) \0 TR_{d_1^{-1}}^{\ox 2} \, r_{21} \, .
  \end{align*}
  Use $r_{21} \in \g_+ \ox \g_-$ and the computation rules from Lemma \ref{lemma:global_double_projections_properties} to obtain
  \begin{equation}
    \label{eq:theorem_KitaevFockRoslyIso_proof_h_is_Poisson2}
    (T\pi_1 \ox T\pi_2) \0 Th^{\ox 2} \, w_\pi (d_1, d_2)  = (TL_{d_1} \ox TL_{d_2 \, \pi_-(d_1^{-1})^{-1}}) \, r_{21}  \stackrel{\eqref{eq:theorem_KitaevFockRoslyIso_proof-1}}{=} (T\pi_1 \ox T\pi_2) \, w'_{\FR} (h(d_1,d_2)) \, .
  \end{equation}
  This shows that $h: G_\He ^2 \to (G^2, w'_{\FR})$ is Poisson.

  Now we consider Case (c) without our earlier assumptions,  as pictured in Figure \ref{fig:trafo_kitaev_to_fr_proof}.
  This means that the paths $p_b(e_1), p_f(e_1)$ and $p_f(e_2)$ are not necessarily identity morphisms and there may be edges between $e_1$ and $e_2$ in the linear ordering of $v = s(e_1) = s(e_2)$.
  We have to show that the following map is Poisson
  \begin{equation*}
    \Phi': K \to (G^2, w'_{\FR}) \qquad \gamma \mapsto (\pi_{e_1}(\Phi(\gamma)), \pi_{e_2}(\Phi(\gamma))) \, ,
  \end{equation*}
  where $w'_{\FR}$ is again the Fock-Rosly bivector for a graph that consists only of the edges $e_1 : v \to t(e_1)$ and $e_2: v \to t(e_2)$ with $e_1 < e_2$ at $v$ and $t(e_1) \neq t(e_2)$.

  For this we decompose $\Phi' = \phi_4 \0 \phi_3 \0 \phi_2 \0 \phi_1$ into Poisson maps $\phi_i, i=1,\dots,4$.

  Let $e_{s,1} < \dots < e_{s,m} < e_1$ be the edges at $v = s(e_1) = s(e_2)$ that come before $e_1$ in the linear ordering at $v$.
  Denote the edges at $t(e_1)$ that come before $e_1$ with regard to the ordering at $t(e_1)$ by $e_{t,1} < \dots < e_{t, n} < e_1$.
  To the edges $e_{s, 1}, \dots, e_{s, m}, e_1, e_2, e_{t, 1}, \dots, e_{t, n}$ we associate the product Poisson manifold $ G_\He ^m \x  G_\He ^2 \x  G_\He ^n$.
  Define the path
  \[
    p(e_2)' := p_f(e_2) \0 f(e_2) \0 r(e_2) \0 p_b(e_2) \0 (b(e_1) \0 p_b(e_1))^{-1} \, .
  \]
  This path is the same as $p(e_2)$ from Equation \eqref{eq:kitaev_fock_rosly_iso_path_for_an_edge} for the shifted linear ordering at $v$ that corresponds to a cilium at $t(b(e_1))$.
  The path $p(e_2)'$ and the edges $e_{s, 1}, \dots, e_{s, m}$ and $e_{t, 1}, \dots, e_{t, n}$ are illustrated in Figure \ref{fig:trafo_kitaev_to_fr_proof1}.

  \begin{figure}[h]
  \centering
  \begin{tikzpicture}[vertex/.style={circle, fill=black, inner sep=0pt, minimum size=2mm}, plain/.style={draw=none, fill=none}, scale=0.8]
    \begin{scope}[rotate=-18]
    \node (p) at (0,0) {$v$};
    \coordinate (p1) at ($(p) + (36:0.553)$);
    \coordinate (p2) at ($(p) + (108:0.553)$);
    \coordinate (p3) at ($(p) + (180:0.553)$);
    \coordinate (p4) at ($(p) + (252:0.553)$);
    \coordinate (p5) at ($(p) + (-36:0.553)$);

    \coordinate (q) at ($(p) + (2,0) + (0.553,0) + (0.375,0)$);
    \coordinate (q1) at ($(q) + (0:0.375)$);
    \coordinate (q2) at ($(q) + (120:0.375)$);
    \coordinate (q3) at ($(q) + (240:0.375)$);

    \draw [-latex] (p1)--(q2);
    \draw [-latex] (q3)--(q2);
    \draw [-latex] (p5)--(p1);
    \draw [-latex] (p5)--(q3);
    \node at ($(p)!1.553cm!(q)$) {$e_2$};

    \draw [-latex] (q1) -- (q2);
    \draw [-latex] (q1) -- ($(q1)!2cm!-90:(q2)$);
    \draw [-latex] (q2) -- ($(q2) + (q1)!2cm!-90:(q2) - (q1)$);
    \draw [-latex] ($(q1)!2cm!-90:(q2)$) -- ($(q2) + (q1)!2cm!-90:(q2) - (q1)$);

    \draw [-latex] (q1) -- (q3);
    \draw [-latex] ($(q1)!2cm!90:(q3)$) -- (q1);
    \draw [-latex] ($(q3)!2cm!-90:(q1)$) -- (q3);
    \draw [-latex] ($(q1)!2cm!90:(q3)$) -- ($(q3)!2cm!-90:(q1)$);

    \draw [decorate,decoration={snake, amplitude=0.5mm}] (q1) -- ($(q1)+(0.75,0)$);

    \draw [-latex] (p2) -- (p1);
    \draw [-latex] ($(p1)!2cm!-90:(p2)$) -- (p1);
    \draw [-latex] ($(p2)!2cm!90:(p1)$) -- (p2);
    \draw [-latex] ($(p2)!2cm!90:(p1)$) -- ($(p1)!2cm!-90:(p2)$);

    \draw [-latex, color=green] (p3) -- (p2);
    \draw [-latex, color=green] ($(p2)!2cm!-90:(p3)$) -- (p2);
    \draw [-latex, color=green] ($(p3)!2cm!90:(p2)$) -- (p3);
    \draw [-latex, color=green] ($(p3)!2cm!90:(p2)$) -- ($(p2)!2cm!-90:(p3)$);
    \node at ($(p)!1.4cm!36:(p2)$) {$e_{s,1}$};

    \draw [-latex] (p3) -- (p4);
    \draw [-latex] (p3) -- ($(p3)!2cm!-90:(p4)$);
    \draw [-latex] (p4) -- ($(p4)!2cm!90:(p3)$);
    \draw [-latex] ($(p3)!2cm!-90:(p4)$) -- ($(p4)!2cm!90:(p3)$);
    \node at ($(p3)!0.5!(p4) + (p3)!1cm!-90:(p4) - (p3)$) {$e_1$};

    \draw [-latex] (p5) -- (p4);
    \draw [-latex] ($(p4)!2cm!-90:(p5)$) -- (p4);
    \draw [-latex] ($(p5)!2cm!90:(p4)$) -- (p5);
    \draw [-latex] ($(p5)!2cm!90:(p4)$) -- ($(p4)!2cm!-90:(p5)$);

    \draw [decorate,decoration={snake, amplitude=0.5mm}] (p2) -- ($(p2) + (p)!0.75cm!(p2)$);

    \coordinate (r1) at ($(p3)!2cm!-90:(p4)$);
    \coordinate (r2) at ($(p4)!2cm!90:(p3)$);
    \coordinate (r3) at ($(r2)!0.65cm!60:(r1)$);
    \coordinate (r) at ($0.333*(r1) + 0.333*(r2) + 0.333*(r3)$);

    \draw [-latex, color=orange] (r2) -- (r3);
    \draw [latex-, color=orange] (r2) -- ($(r2)!2cm!90:(r3)$);
    \draw [latex-, color=orange] (r3) -- ($(r3)!2cm!-90:(r2)$);
    \draw [-latex, color=orange] ($(r2)!2cm!90:(r3)$) -- ($(r3)!2cm!-90:(r2)$);
    \node at ($(r)!1.2cm!60:(r1)$) {$e_{t,1}$};

    \draw [-latex, color=orange] (r1) -- (r3);
    \draw [-latex, color=orange] (r1) -- ($(r1)!2cm!-90:(r3)$);
    \draw [-latex, color=orange] (r3) -- ($(r3)!2cm!90:(r1)$);
    \draw [-latex, color=orange] ($(r1)!2cm!-90:(r3)$) -- ($(r3)!2cm!90:(r1)$);
    \node at ($(r)!1.2cm!60:(r3)$) {$e_{t,2}$};

    \draw [decorate,decoration={snake, amplitude=0.5mm}] (r1) -- ($(r1)!0.75cm!150:(r2)$);

    \draw [-latex,color=blue] ($(p4) + (p)!0.3cm!(p4)$) -- ($(p5) + (p)!0.3cm!(p5)$);
    \draw [-latex,color=blue] ($(p5) + (p)!0.3cm!(p5)$) -- ($(q3) + (q)!0.2cm!(q3) - (q)$) -- ($(q2) + (q)!0.2cm!(q2) - (q)$);
    \draw [-latex,color=blue] ($(q2) + (q)!0.2cm!(q2) - (q)$) -- ($(q1) + (q)!0.2cm!(q1) - (q)$);

    \end{scope}
  \end{tikzpicture}
  \caption{The path $p(e_2)'$ is pictured in blue, the edges $e_{s, 1}, \dots, e_{s, m}$ in green and $e_{t, 1}, \dots, e_{t, n}$ in orange}
  \label{fig:trafo_kitaev_to_fr_proof1}
  \end{figure}
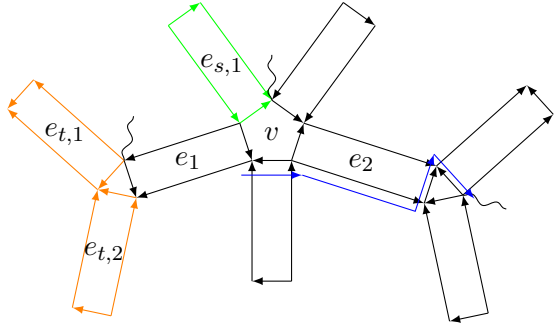

  The map $\phi_1 : K \to  G_\He ^{m} \x  G_\He ^2 \x  G_\He ^n$ assigns the holonomy along $p(e_2)'$ to the copy of $ G_\He $ that corresponds to $e_2$:
  \begin{equation}
    \label{eq:iso_kitaev_fock_rosly_phi1}
    \pi_e \0 \phi_1 =
    \begin{cases}
      \Hol(p(e_2)') & \text{if } e = e_2 \\
      \pi_e & \text{otherwise} \, .
    \end{cases}
  \end{equation}
  It is Poisson by Lemma \ref{lemma:projectionLocallyPoisson} (ii) and Theorem \ref{theorem:heisenberg_double_inversion_poisson_actions} (ii).

  The map $\phi_2:  G_\He ^m \x  G_\He ^2 \x  G_\He ^n \to  G_\He ^m \x (G^2, w'_{\FR}) \x  G_\He ^n$ is defined by
  \begin{equation}
    \label{eq:iso_kitaev_fock_rosly_phi2}
    \phi_2 := \id_{G^m} \x h \x \id_{G^n}
  \end{equation}
  with $h:  G_\He ^2 \to (G^2, w'_{\FR})$ from Equation \eqref{eq:theorem_KitaevFockRoslyIso_proof-2}.
  We have shown that $h$ is Poisson in Equations \eqref{eq:theorem_KitaevFockRoslyIso_proof_h_is_Poisson1} and \eqref{eq:theorem_KitaevFockRoslyIso_proof_h_is_Poisson2}.
  This implies that $\phi_2$ is a Poisson map.

  Denote by $\pi_s : K \to  G_\He ^m$ the projection on the components associated with the edges $e_{s,1}, \dots, e_{s,m}$ incident at $v$ and by $\pi_{t}: K \to  G_\He ^n$ the projection associated with the edges $e_{t,1}, \dots, e_{t,n}$ incident at $t(e_1)$.
  Let $hol_b :  G_\He ^m \to G_-$ be the map that is uniquely defined by $hol_b \0 \pi_s = \Hol(p_b(e_1))$ with the path $p_b(e_1)$ from \eqref{eq:kitaev_fock_rosly_iso_path_for_an_edge}.
  Similarly, let $hol_{f}:  G_\He ^n \to G_-$ be the unique map with $hol_{f} \0 \pi_{t} = \Hol(p_f(e_1))$.
  The map $\phi_3 :  G_\He ^m \x (G^2, w'_{\FR}) \x  G_\He ^n \to G_- \x (G^2, w'_{\FR}) \x G_-$ is defined by
  \begin{equation}
    \label{eq:iso_kitaev_fock_rosly_phi3}
    \phi_3 = (\eta_{G_-} \0 hol_b) \x \id_{G^2} \x hol_f \, ,
  \end{equation}
  where $\eta_{G_-}: G_- \to G_-$ is the inversion map.
  By Lemma \ref{lemma:projectionLocallyPoisson} (ii) and Theorem \ref{theorem:heisenberg_double_inversion_poisson_actions} (ii), the maps $hol_f:  G_\He ^n \to G_-$ and $\eta_{G_-} \0 hol_b:  G_\He ^m \to G_-$ are Poisson.
  Therefore, $\phi_3$ is Poisson.

  We define $\phi_4: G_- \x (G^2, w'_{\FR}) \x G_- \to (G^2, w'_{\FR})$ by
  \begin{equation}
    \label{eq:iso_kitaev_fock_rosly_phi4}
    (x_b, d_1, d_2, x_f) \mapsto (x_f d_1 x_b^{-1}, d_2 x_b^{-1}) \, .
  \end{equation}
  Recall that $(G^2, w'_{\FR})$ is the Fock-Rosly space for the graph that has only the edges $e_1: v \to t(e_1)$ and $e_2 : v \to t(e_2)$ with $t(e_1) \neq t(e_2)$ and $e_1 < e_2$ in the ordering at $v$.
  Using the associated vertex actions $\rhd_v^{\FR}, \rhd_{t(e_1)}^{\FR} : G \x (G^2, w'_{\FR})$ at $v$ and $t(e_1)$ from Equation \eqref{eq:fock_rosly_vertex_action}, we can write
  \[
    \phi_4 (x_f, d_1, d_2, x_b) = x_b \rhd_v^{\FR} (x_f \rhd_{t(e_1)}^{\FR} (d_1, d_2)) \, .
  \]
  The actions $\rhd_v^{\FR}, \rhd_{t(e_1)}$ are Poisson by Proposition \ref{proposition:fockrosly_poisson_g_space}, and therefore so is $\phi_4$.

  One can check easily that $\Phi' = \phi_4 \0 \phi_3 \0 \phi_2 \0 \phi_1$ by inserting the definitions of $\phi_i, i=1,\dots, 4$ from Equations \eqref{eq:iso_kitaev_fock_rosly_phi1}, \eqref{eq:iso_kitaev_fock_rosly_phi2}, \eqref{eq:iso_kitaev_fock_rosly_phi3} and \eqref{eq:iso_kitaev_fock_rosly_phi4}.
  Hence, the map $\Phi'$ is Poisson.

  Now consider the general case, where a path $p: v \to v$ in $\Gamma$ may traverse any number of edges. 
  We transform the graph $\Gamma$ by introducing a bivalent vertex $v_m$ on every edge $e$.
  This splits $e$ into the two edges $e_1: s(e) \to v_m$ and $e_2: v_m \to t(e)$.
  Choose the linear ordering at $v_m$ such that $e_2 > e_1$.
  By repeating this procedure one more time, we obtain a graph $\Gamma'$ so that any non-trivial closed path in $\Gamma'$ traverses at least four edges. 
  Denote the transformed Poisson-Kitaev model by $(K', \Gamma')$ and the set of newly created vertices by $L' \subseteq V'$. 
  From Lemma \ref{lemma:kitaev_gluing_trafos} we obtain a Poisson map $\psi : K' \to K$ that corresponds to gluing the split edges back together.

  The resulting graph $\Gamma'$ is not paired anymore.
  However, the map $\Phi$ from Equation \eqref{eq:kitaev_fock_rosly_iso} can be defined for any doubly ciliated ribbon graph.
  So far our proof for its Poisson property did not require a paired graph, either.
  Denote by $\FR'$ the Fock-Rosly space for $G$  associated to $\Gamma'$, where we set $r(v)=r$ for all vertices $v$ in Equation \eqref{eq:fockrosly_vertex_bivector}. 
  As $\Gamma'$ fulfils our previous assumptions,  we thus obtain the Poisson map $\Phi' : K' \to \FR'$ defined by \eqref{eq:kitaev_fock_rosly_iso}.

  There is a map $\psi_{\FR} : \FR' \to \FR$ that implements gluing the split edges of $\Gamma'$ back together given by Equation \eqref{eq:fock_rosly_gluing_edges}.
  It is Poisson by Proposition \ref{proposition:fockrosly_graph_trafos}.
  A direct computation shows that the equation
  \begin{equation}
  \label{eq:trafo_kitaev_to_fr_proof_gluing}
    (\Phi \0 \psi)(\gamma') = (\psi_{\FR} \0 \Phi')(\gamma')
  \end{equation}
  holds for all $\gamma' \in K'_{L'}$, where $K'_{L'}$ is the subset of $K'$ consisting of elements flat at $L'$.

  Let $f_1, f_2 \in C^\infty(\FR, \R)$ and $\gamma \in K$.
  As the map $\psi|_{K'_{L'}} : K'_{L'} \to K$ is surjective by Lemma \ref{lemma:kitaev_gluing_trafos} (v), we can choose $\gamma' \in K'_{L'}$ with $\psi(\gamma')=\gamma$.
  We compute
  \begin{align}
    & \left\{ f_1, f_2 \right\}_{\FR} \0 \Phi (\gamma) = \left\{ f_1, f_2 \right\}_{\FR} \0 (\Phi \0 \psi) (\gamma') \nonumber \\
    \stackrel{\eqref{eq:trafo_kitaev_to_fr_proof_gluing}}=& \left\{ f_1, f_2 \right\}_{\FR} \0 (\psi_{\FR} \0 \Phi') (\gamma') = \left\{ f_1 \0 \psi_{\FR} \0 \Phi', f_2 \0 \psi_{\FR} \0 \Phi' \right\}_{K'} (\gamma') \, , \label{eq:trafo_kitaev_to_fr_proof_gluing1}
  \end{align}
  where we used that $\psi_{\FR} : \FR' \to \FR$ and $\Phi': K' \to \FR'$ are Poisson in the last equation.
  By Lemma \ref{lemma:kitaev_gluing_trafos} (ii), one has $f_i \0 \Phi \0 \psi \in C^\infty(K', \R)^v_{L'}$ for all $v \in L'$ and $i=1,2$.
  Because $\psi_{\FR} \0 \Phi'$ coincides with $\Phi \0 \psi$ on $K'_{L'}$ and $K'_{L'}$ is stable under the vertex action $\rhd_v$ for $v \in L'$ by Lemma \ref{lemma:kitaev_flat_subspace_stable}, the maps $f_i \0 \psi_{\FR} \0 \Phi'$ are also in $C^\infty(K', \R)^v_{L'}$ for $i=1,2$.
  We can therefore apply Lemma \ref{lemma:locally_flat_poisson_submanifold} (ii) to the right hand side of \eqref{eq:trafo_kitaev_to_fr_proof_gluing1} to obtain
  \begin{align*}
    & \left\{ f_1, f_2 \right\}_{\FR} \0 \Phi (\gamma) = \left\{ f_1 \0 \Phi \0 \psi, f_2 \0 \Phi \0 \psi \right\}_{K'} (\gamma') \\
    =& \left\{ f_1 \0 \Phi, f_2 \0 \Phi \right\}_K (\psi(\gamma')) = \left\{ f_1 \0 \Phi, f_2 \0 \Phi \right\}_K (\gamma) \, ,
  \end{align*}
  where we used that $\psi : K' \to K$ is Poisson in the second equation and $\psi(\gamma') = \gamma$ in the third.

  \textbf{Statement (ii):}
  Equation \eqref{eq:KitaevFockRoslyIso_compatible_site_actions} is equivalent to the equations
  \begin{align}
    \label{eq:KitaevFockRoslyIso_actions_compatible_proof0}
    \alpha \rhd_v^{\FR} \Phi(\gamma) &= \Phi(\alpha \rhd_v \gamma) \qquad \Forall \alpha \in G_+, \gamma \in K \\
    x \rhd_v^{\FR} \Phi(\gamma) &= \Phi(x \rhd_f \gamma) \qquad \Forall x \in G_-, \gamma \in K \, .
    \label{eq:KitaevFockRoslyIso_actions_compatible_proof1}
  \end{align}
  As $\Gamma$ is paired, there are no loops at $v$ and we can assume that all edges are incoming.
  Then $\rhd_v$ is given by Formula \eqref{eq:kitaev_vertex_action_easy} and Equation \eqref{eq:KitaevFockRoslyIso_actions_compatible_proof0} can be checked directly using Lemma \ref{lemma:global_double_projections_properties}.
  To see \eqref{eq:KitaevFockRoslyIso_actions_compatible_proof1}, use the fact that there are no edges that occur twice in the face path $p(f)$ to assume that all edges of $f$ are oriented clockwise.
  Then $\rhd_f$ is given by \eqref{eq:kitaev_face_action_easy} and \eqref{eq:KitaevFockRoslyIso_actions_compatible_proof1} follows from the properties of $\pi_\pm$ in Lemma \ref{lemma:global_double_projections_properties}.

  \textbf{Statement (iii)}:
  This follows from a direct computation using Lemma \ref{lemma:global_double_projections_properties}.
\end{proof}

\begin{remark}~
  \begin{compactenum}
  \item
    As $K$ is the product Poisson manifold $K =  G_\He ^{\x E}$, the pullback along the transformation $\Phi : K \to \FR$ from Theorem \ref{theorem:KitaevFockRoslyIso} decouples the contributions of different edges to the Poisson bivector $w_{\FR}$ of $\FR$.
    However, while the gauge transformation $\rhd^{\FR}_v : G \x \FR \to \FR$ at a vertex is given by the diagonal action, the corresponding action $\rhd_{(v,f)} : G \x K \to K$ involves all  edges associated with the site $(v,f)$.
  \item
    Theorem \ref{theorem:KitaevFockRoslyIso} requires a global double Poisson-Lie group $G$.
    However, a weaker version of Theorem \ref{theorem:KitaevFockRoslyIso} can be formulated for general double Poisson-Lie groups.
    In the general case, the projections $\pi_\pm$ from Lemma \ref{lemma:projectionLocallyPoisson} are only defined on an open neighbourhood of the unit $U \subseteq G$.
    We can choose $U$ small enough so that Formula \eqref{eq:kitaev_fock_rosly_iso} is well-defined on $U^{\x E}$ and use this formula to define the map $\Phi': U^{\x E} \to G^{\x E}$.
    For a sufficiently small open neighbourhood of the unit $U' \subseteq G^{\x E}$ we can also define the map $\Psi': U'^{\x E} \to G^{\x E}$ by Formula \eqref{eq:iso_kitaev_fock_rosly_inverse}.
    Then $O := \Phi'^{-1}(U'^{\x E})$ is an open neighbourhood of the unit and a direct computation using a local version of Lemma \ref{lemma:global_double_projections_properties} shows that $\Psi' (\Phi'(x)) = x$ for all $x \in O$.
    In particular, the tangent map $T_x \Phi'$ is a linear isomorphism, so that $\Phi'|_O: O \to G^{\x E}$ is a local diffeomorphism.
    Thus, $\Phi'|_O$ is open and the corestriction $\Phi'|_O^{\Phi'(O)} : O \to \Phi'(O)$ is a diffeomorphism onto the open neighbourhood $\Phi'(O)$ of the unit.
    The proof for the Poisson property of $\Phi$ can also be applied to $\Phi' : U^{\x E} \to G^{\x E}$ (using local versions of the lemmas that were utilized).
    It  follows that $\Phi'|_O^{\Phi' (O)}$ is a Poisson isomorphism with respect to the product Poisson bivector $w_K$ on $O$ and the bivector $w_{\FR}$ from \eqref{eq:fockrosly_vertex_bivector} on $\Phi'(O)$.

    Most of our constructions for Poisson-Kitaev models are based on the properties of the projections $\pi_\pm$ from Lemma \ref{lemma:projectionLocallyPoisson}.
    Therefore, one could generalize Poisson-Kitaev models to (general) double Poisson-Lie groups $G$, but many definitions, such as the holonomy functor or vertex and face actions, would have to be replaced by local versions.
  \end{compactenum}
\end{remark}

\subsection{Relation with moduli spaces of flat \texorpdfstring{$G$}{G}-bundles for surfaces with boundary}
\label{subsection:relation_with_moduli_spaces}

In Section \ref{subsection:graph_trafos_kitaev}, we introduced graph transformations that allow us to transform the graph $\Gamma$ of a Poisson-Kitaev model $(K, \Gamma)$ into a paired doubly-ciliated ribbon graph $\Gamma'$.
By Theorem \ref{theorem:KitaevFockRoslyIso}, the Poisson manifold $K'$ associated with the paired graph $\Gamma'$ is isomorphic to the Fock-Rosly space $\FR'$ on $\Gamma'$.
In \cite{fockrosly98}, Fock and Rosly have shown that the Poisson structure on $\FR'$ induces the canonical symplectic structure on the moduli space of flat $G$-bundles $\Hom(\pi_1(S'),G)/G$ for the oriented  surface $S'$ obtained by gluing annuli to the faces of $\Gamma'$.
We use this to relate the Poisson algebra $\mathcal A(\Gamma, L)$ from Definition \ref{def:functions_on_flat_elements} (ii) to moduli spaces of flat $G$-bundles.

Recall from Proposition \ref{proposition:fockrosly_goldman} that the subalgebra of invariant functions on $\FR'$ is isomorphic to the canonical Poisson algebra of functions on $\Hom(\pi_1(S'), G)/G$ defined by the non-degenerate symmetric component $r_s$ of the classical $r$-matrix of $G$.
In our situation, $r$ is the classical $r$-matrix of the double Lie bialgebra $\g$ from Theorem \ref{theorem:double_lie_bialgebra}. 

Let $\Gamma$ be a connected doubly-ciliated ribbon graph and $(v_1, f_1), \dots, (v_n, f_n)$ pairwise distinct sites  with $n \geq 1$.
Consider the oriented  surface $S$ obtained by gluing annuli to $f_1, \dots, f_n$ and disks to all other faces.
Denote the set of flat vertices and faces by $L := (V \setminus\left\{ v_1, \dots, v_n \right\}) \dot \cup (F \setminus \left\{  f_1, \dots, f_n \right\})$.
By Proposition \ref{proposition:poisson_subalgebra_reduce_to_flat_subspace} (ii), the set $\mathcal A(\Gamma, L) = C^\infty(K, \R)^{inv}_L / \sim$ inherits a Poisson algebra structure from the Poisson subalgebra $C^\infty(K, \R)^{inv}_L \subseteq C^\infty(K, \R)$.
We obtain:

\begin{theorem}[Poisson-Kitaev models and moduli spaces of flat $G$-bundles]
  \label{theorem:kitaev_moduli_space}
  The Poisson algebra $\mathcal A(\Gamma, L)$ is isomorphic to the Poisson algebra of functions on the moduli space of flat $G$-bundles $\Hom(\pi_1(S), G)/G$ for the bilinear form dual to $r_s$.
\end{theorem}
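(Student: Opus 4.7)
The strategy is to combine the reduction to a paired graph (Corollary~\ref{corollary:kitaev_transform_into_paired_graph}), the Poisson isomorphism between Poisson-Kitaev and Fock--Rosly spaces (Theorem~\ref{theorem:KitaevFockRoslyIso}), and the identification of invariant functions on Fock--Rosly spaces with moduli space functions (Proposition~\ref{proposition:fockrosly_goldman}), while tracking how flatness conditions translate under each step.

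First I would apply Corollary~\ref{corollary:kitaev_transform_into_paired_graph} to obtain a paired doubly-ciliated ribbon graph $\Gamma'$ with sites $(v_1',f_1'),\dots,(v_n',f_n')$ whose associated surface is homeomorphic to $S$, together with a Poisson map $\psi:K'\to K$ inducing a bijection $\psi^*_{/\sim}:\mathcal{A}(\Gamma,L)\to\mathcal{A}(\Gamma',L')$. Because $\psi$ is Poisson, the identity $\{h_1,h_2\}_K\circ\psi=\{h_1\circ\psi,h_2\circ\psi\}_{K'}$ holds on $K'$, hence on $K'_{L'}$, so $\psi^*_{/\sim}$ upgrades to a Poisson algebra isomorphism for the brackets supplied by Proposition~\ref{proposition:poisson_subalgebra_reduce_to_flat_subspace}. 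It therefore suffices to prove the theorem for the paired graph $\Gamma'$.

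Next I would pull back functions along the Poisson isomorphism $\Phi:K'\to\FR'$ of Theorem~\ref{theorem:KitaevFockRoslyIso}, where $\FR'$ is the Fock--Rosly space on $\Gamma'$ built from the classical $r$-matrix $r\in\g\ox\g$ of $G$. Since $\Gamma'$ is paired, every vertex $v\in V'$ and every face $f\in F'$ belongs to exactly one site, and by statement~(ii) of Theorem~\ref{theorem:KitaevFockRoslyIso} the Kitaev site action $\rhd_{(v,f(v))}$ intertwines under $\Phi$ with the Fock--Rosly vertex action $\rhd^{\FR}_v$; by Proposition~\ref{proposition:kitaev_actions_properties}~(iii) this $G$-action is jointly generated by the Kitaev vertex and face actions. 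Hence invariance under all Kitaev vertex and face actions on $K'$ is equivalent under $\Phi$ to invariance under all Fock--Rosly vertex actions on $\FR'$. Moreover, statement~(iii) of Theorem~\ref{theorem:KitaevFockRoslyIso} yields $\Hol^{(v,f)}=\Hol_{\FR}(p(f))\circ\Phi$, since $\Hol_{\FR}$ assigns $1_G$ to edge ends. For the paired graph $\Gamma'$ the sites partition $V'\,\dot\cup\,F'$, so $v\in L'\Leftrightarrow f(v)\in L'$, and thus $K'_{L'}$ is characterized by triviality of the combined site holonomies for sites entirely contained in $L'$. Under $\Phi$ this condition translates to triviality of the Fock--Rosly face holonomies at exactly the faces $f\in F'\setminus\{f_1',\dots,f_n'\}$. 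Writing $\FR'_F$ for the corresponding submanifold, $\Phi$ identifies $\mathcal{A}(\Gamma',L')$ with the quotient of $\rhd^{\FR}$-invariant functions in $C^\infty(\FR',\R)$ by the relation of agreement on $\FR'_F$.

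Finally, I would identify this last Poisson algebra with functions on the moduli space. Gluing a disk to a face $f$ imposes the relation that the face holonomy $\Hol_{\FR}(p(f))$ is trivial, so the surface $S$—annuli at $f_1',\dots,f_n'$, disks elsewhere—has $\Hom(\pi_1(S),G)/G$ in natural bijection with $\FR'_F$ modulo the simultaneous vertex conjugations $\rhd^{\FR}_v$. Restricting the argument underlying Proposition~\ref{proposition:fockrosly_goldman} to this flat submanifold gives the desired Poisson algebra isomorphism with $C^\infty(\Hom(\pi_1(S),G)/G,\R)$ for the bilinear form dual to $r_s$. Composing $\mathcal{A}(\Gamma,L)\cong\mathcal{A}(\Gamma',L')\cong C^\infty(\FR'_F,\R)^{inv}\cong C^\infty(\Hom(\pi_1(S),G)/G,\R)$ proves the claim. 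The main obstacle is the last identification: extending Proposition~\ref{proposition:fockrosly_goldman} to partial face flatness requires verifying that $\FR'_F$ is a Poisson submanifold on which the vertex actions descend to the smooth moduli space and that the Fock--Rosly bracket restricts to the Atiyah--Bott symplectic form dual to $r_s$. The required transversality and the parametrization of moduli space functions by edge holonomies closely parallel the unrestricted case but must be redone with the additional flatness constraints.
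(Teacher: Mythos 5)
Your first two steps coincide with the paper's proof: reduce to a paired graph via Corollary \ref{corollary:kitaev_transform_into_paired_graph}, then transport everything along the Poisson isomorphism $\Phi:K'\to\FR'$ of Theorem \ref{theorem:KitaevFockRoslyIso}, using the intertwining of site actions with $\rhd_v^{\FR}$ and the identity $\Hol_{\FR}(p(f))=\Hol_{\FR}(p(v)\0 p(f))$ to match the flat loci and the invariance conditions. Up to that point your argument is sound and is essentially the paper's.

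The final step is a genuine gap. You propose to ``restrict the argument underlying Proposition \ref{proposition:fockrosly_goldman}'' to the partially flat submanifold $\FR'_F$ and you correctly flag this as the main obstacle, but you do not resolve it, and the difficulties are real: $\FR'_F$ is not a Poisson submanifold of $\FR'$ (the correct statement is a coisotropic-type reduction, analogous to Lemma \ref{lemma:locally_flat_poisson_submanifold}), and Fock--Rosly's identification of invariant functions with moduli-space functions is proved only for the surface obtained by gluing annuli to \emph{all} faces, so it cannot simply be ``restricted'' to a mixed disks-and-annuli situation. The paper closes this gap with a combinatorial device you are missing: it erases, one at a time, an edge separating each flat face $f_{flat,i}\in F'\cap L'$ from a distinct adjacent face, thereby merging the flat faces away until only the $n$ annulus faces remain. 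The flatness constraint at $f_{flat,i}$ determines the group element on the erased edge uniquely from the remaining ones, which yields a right inverse $\psi_i$ to the erasure map with image in the flat locus; composing these gives mutually inverse maps between $\mathcal A_{\FR}(\Gamma',L')$ and the \emph{unrestricted} invariant algebra $C^\infty(\FR'',\R)^{inv}$ of the smaller graph $\Gamma''$, to which Proposition \ref{proposition:fockrosly_goldman} applies verbatim (and one checks at each stage that the associated surface stays homeomorphic to $S$). Without this reduction, or a full reworking of Fock--Rosly's proof under partial flatness constraints, your chain of isomorphisms does not terminate at the moduli space.
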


\begin{proof}
  By Corollary \ref{corollary:kitaev_transform_into_paired_graph}, there is a paired doubly ciliated ribbon graph $\Gamma'$ together with $n$ sites $(v_1', f_1'), \dots, (v_n', f_n')$ such that the Poisson algebras $\mathcal A(\Gamma, L)$ and $\mathcal A(\Gamma', L')$ are isomorphic, where $L' = (V' \setminus\left\{ v_1', \dots, v_n' \right\}) \dot \cup (F' \setminus \left\{ f_1', \dots, f_n' \right\})$.
  The oriented  surface obtained by gluing annuli to $f_1', \dots, f'_n$ and disks to all other faces of $\Gamma'$ is homeomorphic to $S$.
  
  From Theorem \ref{theorem:KitaevFockRoslyIso} we obtain a Poisson isomorphism $\Phi: K' \to \FR'$, where $\FR'$ is the Fock-Rosly space for the Poisson-Lie group $G$ and the graph $\Gamma'$.
  The element $r(v)$ in Equation \eqref{eq:fockrosly_vertex_bivector} for the Poisson bivector $w_{\FR}$ is given by the classical $r$-matrix of $G$ for all $v \in V'$.
  Define the subspace of $\FR'$ of elements flat at $L'$ by
  \[
    \FR'_{L'} := \left\{ \gamma \in \FR' \, \mid \, \Hol_{\FR}(p(f)) (\gamma) = 1_G \Forall f \in L' \cap F' \right\} \, .
  \]
  Consider the set of functions invariant on $\FR'_{L'}$:
  \[
    C^\infty(\FR', \R)^{inv}_{L'} := \left\{ f \in C^\infty(\FR', \R) \, \mid \, f(g \rhd^{\FR}_v \gamma) = f(\gamma) \Forall \gamma \in \FR'_{L'}, v \in V', g \in G \right\} \, .
  \]
  Similarly to $\mathcal A(\Gamma', L')$ from Definition \ref{def:functions_on_flat_elements} (ii), we define the quotient
  \[
    \mathcal A_{\FR} (\Gamma', L') := C^\infty(\FR', \R)^{inv}_{L'} / \sim
  \]
  with respect to the relation $h_1 \sim h_2 \Leftrightarrow h_1 |_{\FR'_{L'}} = h_2 |_{\FR'_{L'}}$.

  Note that $\Hol_{\FR}(p(f)) = \Hol_{\FR}(p(v) \0 p(f))$ for each site $(v,f)$ (see Equation \eqref{eq:fock_rosly_holonomy_functor}).
  By Equation \eqref{eq:KitaevFockRoslyIso_compatible_holonomies}, one has $\Phi(K'_{L'}) = \FR'_{L'}$ for the set $K'_{L'}$ from Definition \ref{def:flatness}.
  At each site $(v,f)$ the map $\Phi$ intertwines the $G$-actions $\rhd_{(v,f)}$ and $\rhd_{v}^{\FR}$ (Equation \eqref{eq:KitaevFockRoslyIso_compatible_site_actions}).
  Therefore, $\Phi$ induces a bijection $\Phi^*: C^\infty(\FR', \R)^{inv}_{L'} \to C^\infty(K', \R)^{inv}_{L'}, h \mapsto h \0 \Phi$ for the Poisson subalgebra $C^\infty(K', \R)^{inv}_{L'}$ from Definition \ref{def:functions_on_flat_elements} (i).
  Because $\Phi(K'_{L'}) = \FR'_{L'}$, this bijection satisfies $\Phi^*(h_1)|_{K'_{L'}} = \Phi^{*}(h_2)|_{K'_{L'}}$ if $h_1|_{\FR'_{L'}} = h_2|_{\FR'_{L'}}$.
  It thus induces a bijection
  \[
    \Phi^*_{/\sim} : \mathcal A_{\FR}(\Gamma', L') \to \mathcal A(\Gamma', L') \qquad [h] \mapsto [\Phi^*(h)] \, .
  \]
  One can equip $\mathcal A_{\FR}(\Gamma', L')$ with a Poisson bracket such that the bijection $\Phi^*_{/\sim}$ becomes an isomorphism of Poisson algebras.
  Because $\Phi: K' \to \FR'$ is a Poisson isomorphism, this Poisson bracket on $\mathcal A_{\FR}(\Gamma', L')$ is the one induced by the Poisson bracket on $\FR'$.

  We have shown that the Poisson algebra $\mathcal A(\Gamma, L)$ is isomorphic to $\mathcal A_{\FR}(\Gamma', L')$.
  It remains to show that $\mathcal A_{\FR}(\Gamma', L')$ is isomorphic to the canonical Poisson algebra of functions on $\Hom(\pi_1(S),G)/G$.
  For this we show that $\mathcal A_{\FR}(\Gamma', L')$ is isomorphic to the Poisson subalgebra $C^\infty(\FR'', \R)^{inv}$ of invariant functions on the Fock-Rosly space $\FR''$ for a suitable graph $\Gamma''$, and that the oriented  surface $S''$ obtained by gluing annuli to the faces of $\Gamma''$ is homeomorphic to $S$.
  By Proposition \ref{proposition:fockrosly_goldman}, the Poisson algebra $C^\infty(\FR'', \R)^{inv}$ is isomorphic to the Poisson algebra of functions on the moduli space $\Hom(\pi_1(S''), G)/G$ for the bilinear form dual to $r_s$, which then proves Theorem \ref{theorem:kitaev_moduli_space}.

  We construct the graph $\Gamma''$ from $\Gamma'$ by merging the faces in $F' \cap L'$ into adjacent faces until only the selected faces $f_1', \dots, f_n'$ are left.
  More specifically, we transform $\Gamma'$ as follows.

  Choose a flat face $f_{flat, 1} \in F' \cap L'$ and choose an edge $e_1 \in E$ adjacent to $f_{flat, 1}$ such that the faces to the left and right of $e_1$ differ.
  Such an edge $e_1$ exists, since otherwise $\Gamma'$ would have only one face or several connected components in contradiction to our assumptions.
  Remove the edge $e_1$, thus merging the face $f_{flat, 1}$ into a different, adjacent face $f$.
  Denote the transformed graph by $\Gamma'_1$.
  The topological spaces obtained by gluing a disc to $\Gamma'$ along $f_{flat,1}$ and an annulus (disc) along $f$, and by gluing an annulus (disc) to $\Gamma'_1$ along $f$ are homeomorphic.
  Thus, the oriented  surface obtained from $\Gamma'_1$ by gluing annuli to $f'_1, \dots, f'_n$ and disks to the faces in $(F' \cap L') \setminus \left\{ f_{flat, 1} \right\}$ is homeomorphic to $S$.

  Then choose another face $f_{flat, 2}$ in $(F' \cap L') \setminus \left\{ f_{flat, 1} \right\}$.
  The graph $\Gamma'_1$ remains connected, and by the same argument as for $e_1$, there is an edge $e_2$ adjacent to $f_{flat,2}$ and some other face.
  Remove $e_{2}$ and thus $f_{flat, 2}$ from $\Gamma'_1$ to obtain a graph $\Gamma'_2$ with face set $F' \setminus \left\{ f_{flat, 1}, f_{flat, 2} \right\}$.
  The oriented  surface obtained from $\Gamma'_2$ by gluing annuli to $f'_1, \dots, f'_n$ and disks to the faces in $(F' \cap L') \setminus \left\{ f_{flat,1}, f_{flat, 2} \right\}$ is again homeomorphic to $S$.
  Repeat this procedure for all remaining faces in $(F' \cap L') \setminus \left\{ f_{flat, 1}, f_{flat, 2} \right\}$.

  The resulting graph $\Gamma''$ is a subgraph of $\Gamma'$ with $n$ faces, and the oriented  surface $S''$ obtained from $\Gamma''$ by gluing annuli to its faces is homeomorphic to $S$.
  Denote the associated Fock-Rosly space by $\FR''$.

  To the transformation $\Gamma' \to \Gamma''$ we associate a Poisson map $\FR' \to \FR''$ as follows.
  Denote by $\FR'_i$ the Fock-Rosly space associated to the graph $\Gamma'_i$, which we obtain by removing the edges $e_1, \dots, e_i$ from $\Gamma'$.
  Set $\FR'_0 := \FR'$ and let $\phi_i: \FR_{i-1}' \to \FR'_i$ be the map from Equation \eqref{eq:fock_rosly_erasing_edge} associated with erasing the edge $e_i$.
  There is a right inverse $\psi_i: \FR'_i \to \FR'_{i-1}$ of $\phi_i$ that is uniquely defined by
  \begin{equation}
    \label{eq:kitaev_moduli_space_proof0}
    \Hol_{\FR}(f_{flat, i}) \0 \psi_i = (\gamma \mapsto 1_G) \, .
  \end{equation}
  It satisfies $\pi_e \0 \psi_i = \pi_e$ for all edges $e$ of $\Gamma'_{i-1}$ other than $e_i$.
  This implies
  \begin{equation}
    \label{eq:kitaev_moduli_space_proof1}
    (\psi_i \0 \phi_i) (\gamma) = \gamma \qquad \Forall \gamma \in \FR'_{i-1} \quad \text{with} \quad \Hol_{\FR}(f_{flat, i})(\gamma) = 1_G \, .
  \end{equation}
  A direct computation shows:
  \begin{align}
    \label{eq:kitaev_moduli_space_proof2}
    \Hol_{\FR}(f) (\psi_i(\gamma)) &= \Hol_{\FR}(f)(\gamma) \qquad \Forall \gamma \in \FR_{i}', f \in F' \setminus \left\{ f_{flat, 1}, \dots, f_{flat, i} \right\} \\
    \label{eq:kitaev_moduli_space_proof3}
    \psi_i(g \rhd_v^{\FR} \gamma) &= g \rhd_v^{\FR} \psi_i(\gamma) \qquad\, \Forall v \in V', g \in G, \gamma \in \FR'_{i} \, .
  \end{align}
  By composing the maps $\phi_i$, we obtain the map $\phi: \FR' \to \FR''$ that satisfies
  \[
    \pi''_e (\phi(\gamma)) = \pi'_e(\gamma) \qquad \Forall e \in E'' \, ,
  \]
  where $\pi''_e, \pi'_e$, respectively, are the projections on the component of $\FR'', \FR'$ that is associated with the edge $e$.
  By Proposition \ref{proposition:fockrosly_graph_trafos}, this is a Poisson map and it intertwines the Poisson actions $\rhd_v^{\FR}$ for $v \in V' = V''$.
  It therefore induces a homomorphism of Poisson algebras
  \[
    \phi^*_{/\sim}: C^\infty(\FR'', \R)^{inv} \to \mathcal A_{\FR}(\Gamma', L') \qquad f \mapsto [f \0 \phi] \, .
  \]
  By composing the right inverses $\psi_i$ we obtain a right inverse $\psi: \FR'' \to \FR'$ of $\phi$.
  Equations \eqref{eq:kitaev_moduli_space_proof0} and \eqref{eq:kitaev_moduli_space_proof2} imply that $\psi(\FR'') \subseteq \FR'_{L'}$.
  Together with Equation \eqref{eq:kitaev_moduli_space_proof3} this implies that the map $\psi$ induces a map
  \[
    \psi^*_{/\sim}: \mathcal A_{\FR}(\Gamma', L') \to C^\infty(\FR'', \R)^{inv} \qquad [f] \mapsto f \0 \psi \, .
  \]
  A direct computation shows that the map $\phi_i : \FR'_{i-1} \to \FR'_i$ associated with erasing the edge $e_i$ satisfies for all $\gamma \in \FR'_{i-1}$ with $\Hol_{\FR}(f_{flat,i}) = 1_G$:
  \[
    \Hol_{\FR}(f)(\phi_i (\gamma)) = \Hol_{\FR}(f)(\gamma) \qquad \Forall f \in F' \setminus \left\{ f_{flat, 1} , \dots, f_{flat, i} \right\} \, .
  \]
  Therefore, one has for all $i$ and $\gamma \in \FR'_{L'}$
  \[
    \Hol_{\FR}(f) (\phi_i \0 \dots \0 \phi_1 (\gamma) ) = 1_G \qquad \Forall f \in (F' \cap L') \setminus\left\{ f_{flat, 1}, \dots, f_{flat, i} \right\} \, .
  \]
  Together with Equation \eqref{eq:kitaev_moduli_space_proof1} this implies for all $i$ and $\gamma \in \FR'_{L'}$
  \[
    (\psi_1 \0 \dots \0 \psi_i \0 \phi_i \0 \dots \0 \phi_1) (\gamma) = \gamma \, .
  \]
  This implies $\psi \0 \phi |_{\FR'_{L'}} = \id_{\FR'_{L'}}$ and the map $\psi^*_{/\sim}$ is an inverse to $\phi^*_{/\sim}$.
  This concludes the proof.
\end{proof}

\begin{remark}~
  \label{remark:to_theorem_kitaev_moduli_space}
  \begin{compactenum}
  \item
    Theorem \ref{theorem:kitaev_moduli_space} implies in particular that the Poisson algebra $\mathcal A(\Gamma, L)$ does not depend on the graph $\Gamma$, but only on the homeomorphism class of the oriented  surface $S$ obtained by gluing annuli to $f_1, \dots, f_n$ and disks to all other faces.
  \item
    By Remark \ref{remark:excitations}, the Poisson algebra $\mathcal A(\Gamma,L)$ corresponds to a subalgebra of the endomorphism algebra of the space $\mathcal L_L \subseteq H^{\ox E}$ from \eqref{eq:quantum_kitaev_protected_space_with_excitations}, which describes a quantum Kitaev model with excitations at the sites $(v_1, f_1), \dots, (v_n, f_n)$.
    Theorem \ref{theorem:kitaev_moduli_space} thus relates the moduli space $\Hom(\pi_1(S), G)/G$ to a Kitaev model with excitations, and the excitations correspond to the boundary components of $S$. 
  \item
    Alekseev and Malkin have introduced a decoupling transformation for the moduli space of flat $H$-bundles for a semi-simple quasi-triangular Poisson-Lie group $H$ in \cite{alekseevmalkin95}.
    Their construction is based on a set of generators of the fundamental group $\pi_1(S)$ of the compact oriented surface $S$. 
    They showed that the symplectic structure on the moduli space $\Hom(\pi_1(S), H)/H$ can be obtained from the product Poisson manifold
    \[
      (H^*)^{n} \x \He(H)^{g}  \, .
    \]
    It consists of a copy of the dual Poisson-Lie group $H^*$ for each of the $n$ boundary components of $S$ and $g$ copies of the Heisenberg double $\He(H)$ of $H$  (Definition \ref{def:classical_heisenberg_double}), where $g$ is the genus of $S$ \cite[Theorem 2]{alekseevmalkin95}.
    If $H= G$ is a global double Poisson-Lie group, then we obtain the symplectic structure on $\Hom(\pi_1(S), G)/G$ from the product Poisson manifold $K= G_\He ^{\x E}$ by Theorem \ref{theorem:kitaev_moduli_space}.
    This can be viewed as  a generalization of the decoupling transformation from \cite{alekseevmalkin95} to paired doubly ciliated ribbon graphs.
    Note that we do not require $G$ to  be semi-simple, but that $G$ is a global double. 

    That each boundary component of $S$ is associated with a copy of $G^*$ can also be seen for Poisson-Kitaev models.
    Let $(v,f)$ be one of the selected sites $(v_1, f_1), \dots, (v_n, f_n)$ from Theorem \ref{theorem:kitaev_moduli_space} that corresponds to a boundary component of the surface $S$.
    In Proposition \ref{proposition:kitaev_commutation_relations} (iii) we have shown that the combined holonomy around the site $(v,f)$ is a Poisson map $\Hol^{(v,f)} : K \to (G, w_{G^*})$.
    By Lemma \ref{lemma:dual_poisson_lie_group} there is a Poisson map $G^* \to (G, w_{G^*})$ that is a local diffeomorphism in a neighbourhood of the unit $1_{G^*}$.
  \end{compactenum}
\end{remark}


\urlstyle{same}
\bibliographystyle{abbrv}

\bibliography{bibliography}

\end{document}